\DeclareMathOperator*{\argmax}{arg\,max}
\DeclareMathOperator*{\argmin}{arg\,min}
\newtheorem{theorem}{Theorem}
\newtheorem{definition}{Definition}
\newtheorem{proposition}{Proposition}
\newtheorem{corollary}{Corollary}
\newtheorem{remark}{Remark}
\newtheorem{assumption}{Assumption}
\newcommand\norm[1]{\left\lVert#1\right\rVert}
\let\emptyset\varnothing
\newcommand{\vast}{\bBigg@{4}}
\newcommand{\Vast}{\bBigg@{5}}
\begin{document}

\title{Multi-Stage Hybrid Federated Learning over Large-Scale D2D-Enabled Fog Networks}
\author{Seyyedali Hosseinalipour,~\IEEEmembership{Member,~IEEE},~Sheikh Shams Azam, Christopher G. Brinton,~\IEEEmembership{Senior Member,~IEEE}, Nicol\`{o} Michelusi, \IEEEmembership{Senior Member,~IEEE}, Vaneet Aggarwal,~\IEEEmembership{Senior Member,~IEEE},\\ David~J.~Love,~\IEEEmembership{Fellow,~IEEE}, and Huaiyu~Dai,~\IEEEmembership{Fellow,~IEEE}
\thanks{\scriptsize S. Hosseinalipour, S. Azam, C. Brinton, V.~Aggarwal, and D. Love are with Purdue University: \{hosseina,azam1,cgb,vaneet,djlove\}@purdue.edu. 
N. Michelusi is with Arizona State University: nicolo.michelusi@asu.edu.
H. Dai is with NC State University: hdai@ncsu.edu.}
\thanks{\scriptsize {C. Brinton was supported in part by ONR under grant N00014-21-1-2472, and NSC grant W15QKN-15-9-1004. Part of Michelusi's research has been funded by NSF under grants CNS-1642982 and CNS-2129015. D. Love was supported in part by the NSF under grant EEC1941529. H. Dai was supported by NSF CNS-1824518.}}
}
\maketitle

\begin{abstract}
Federated learning has generated significant interest, with nearly all works focused on a ``star" topology where nodes/devices are each connected to a central server. We migrate away from this architecture and extend it through the \textit{network} dimension to the case where there are multiple layers of nodes between the end devices and the server. Specifically, we develop multi-stage hybrid federated learning ({\tt MH-FL}), a hybrid of intra- and inter-layer model learning that considers the network as a \textit{multi-layer cluster-based structure.} {\tt MH-FL} considers the  \textit{topology structures} among the nodes in the clusters, including local networks formed via device-to-device (D2D) communications, and {presumes a \textit{semi-decentralized architecture} for federated learning}. It orchestrates the devices at different network layers in a collaborative/cooperative manner (i.e., using D2D interactions) to form \textit{local consensus} on the model parameters and combines it with multi-stage parameter relaying between layers of the tree-shaped hierarchy. 
We derive the upper bound of convergence for {\tt MH-FL} with respect to parameters of the network topology (e.g., the spectral radius) and the learning algorithm (e.g., the number of D2D rounds in different clusters). We obtain a set of policies for the D2D rounds at different clusters to guarantee either a finite optimality gap or convergence to the global optimum. We then develop a distributed control algorithm for {\tt MH-FL} to tune the D2D rounds in each cluster over time to meet specific convergence criteria. Our experiments on real-world datasets verify our analytical results and demonstrate the advantages of {\tt MH-FL} in terms of resource utilization metrics.

\end{abstract}
\vspace{-1mm}
\begin{IEEEkeywords}
Fog learning, device-to-device communications, peer-to-peer learning, cooperative learning, distributed machine learning, semi-decentralized federated learning.
\end{IEEEkeywords}

\vspace{-4mm}
\section{Introduction}
\noindent 
Machine learning (ML) has produced automated solutions to problems ranging from natural language processing to object detection/tracking~\cite{ciregan2012multi,collobert2008unified}.
Traditionally, ML model training has been carried out at a central node (e.g., a server). In many contemporary applications of ML, however, the relevant data is generated at the end user devices. As these devices generate larger volumes of data, transferring it to a central server for model training has several drawbacks: (i) it may require significant energy consumption from battery-powered devices; (ii) the round-trip-times between data generation and model training may incur prohibitive delays; and (iii) in privacy-sensitive applications, end users may not be willing to transmit their raw data in the first place.

Federated learning has emerged as a technique for distributing model training across devices while keeping the devices' datasets local~\cite{mcmahan2017communication}. Its conventional architecture consists of a main server connected to multiple devices in a \textit{star} topology (see Fig.~\ref{fig:simpleFL}). Each round of model training consists of two steps: (i) \textit{local updating}, where each device updates its local model based on its  dataset and the global model, e.g., using gradient descent, and (ii) \textit{global aggregation}, where the server gathers devices' local models  and computes a new global model, which is then synchronized across the  devices to begin the next round.

\begin{figure}[t]
\centering
\includegraphics[width=.47\textwidth]{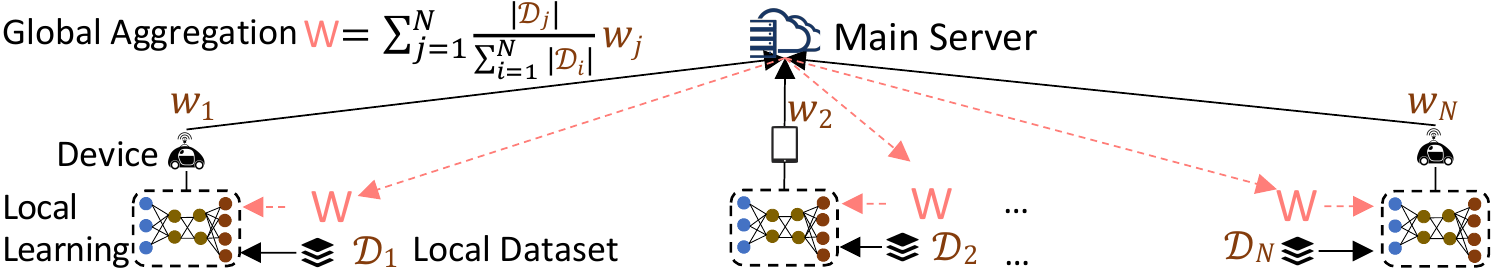}
\caption{Conventional \textit{star} topology architecture of federated learning.}
\label{fig:simpleFL}
\vspace{-1mm}
\end{figure}

In conventional federated learning, only device-to-server (in step (i)) and server-to-device (in step (ii)) communications occur. This is  limiting and prohibitive in contemporary large-scale network scenarios, where there are several layers of nodes between the end devices and the cloud (see Fig.~\ref{diag:scenario22}(\subref{fig:multi_stage})). In particular, it can lead to long delays, large bandwidth utilization, and high power consumption for aggregations~\cite{hosseinalipour2020federated}. We migrate federated learning from its star structure to a more distributed structure that accounts for the multi-layer network dimension and leverages \textit{topology structures} among the devices.

\begin{figure*}[t]
\centering
\subcaptionbox{A schematic of model transfer stages for a large-scale ML task in a fog network. The parameters of the end devices are carried through multiple layers of the network consisting of base stations (BSs), road side units (RSUs), unmanned aerial vehicles (UAVs), high altitude platforms (HAPs), edge servers, and cloud servers before reaching the main server. Devices located at different layers of the network can engage in direct communications via mobile-mobile (M2M), vehicle-vehicle (V2V), UAV-UAV (U2U), inter-edge, and inter-cloud links.\label{fig:multi_stage}}[.45\linewidth][c]{		\includegraphics[width=2.8in,height=1.4in]{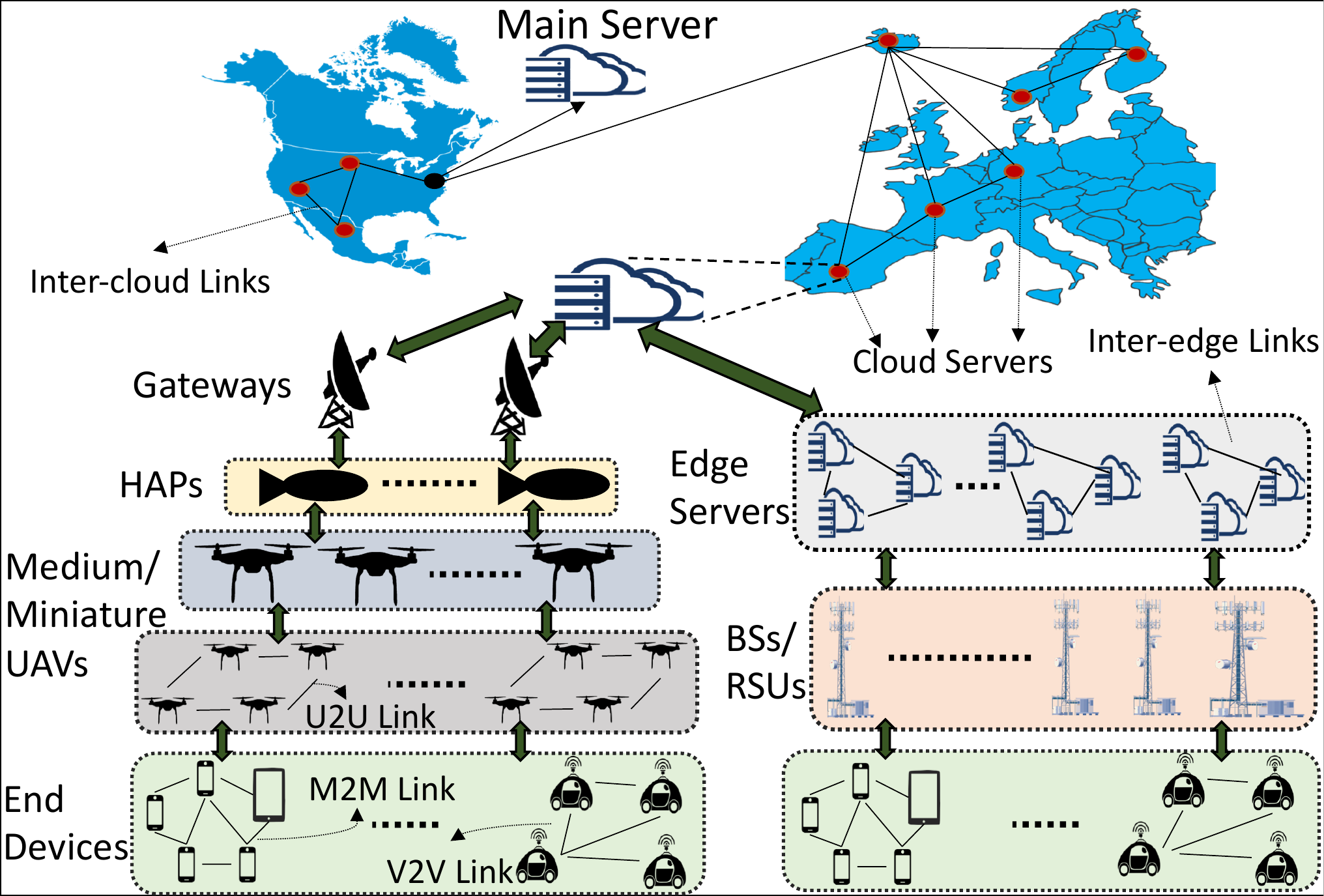}}\qquad
\subcaptionbox{Partitioning the network layers into multiple LUT/EUT clusters for FogL, introducing a hybrid model training framework consisting of both horizontal and vertical parameter transfer. Inside each LUT cluster, the devices engage in collaborative/cooperative D2D communications through a time varying network topology and exchange their parameters.
The parents of LUT clusters obtain the consensus of their children parameters by sampling the model parameter of one node, while the parents of EUT clusters receive all the parameters of their children.\label{fig:multi_stage_Clusters}}[.45\linewidth][c]{%
\includegraphics[width=2.8in,height=1.4in]{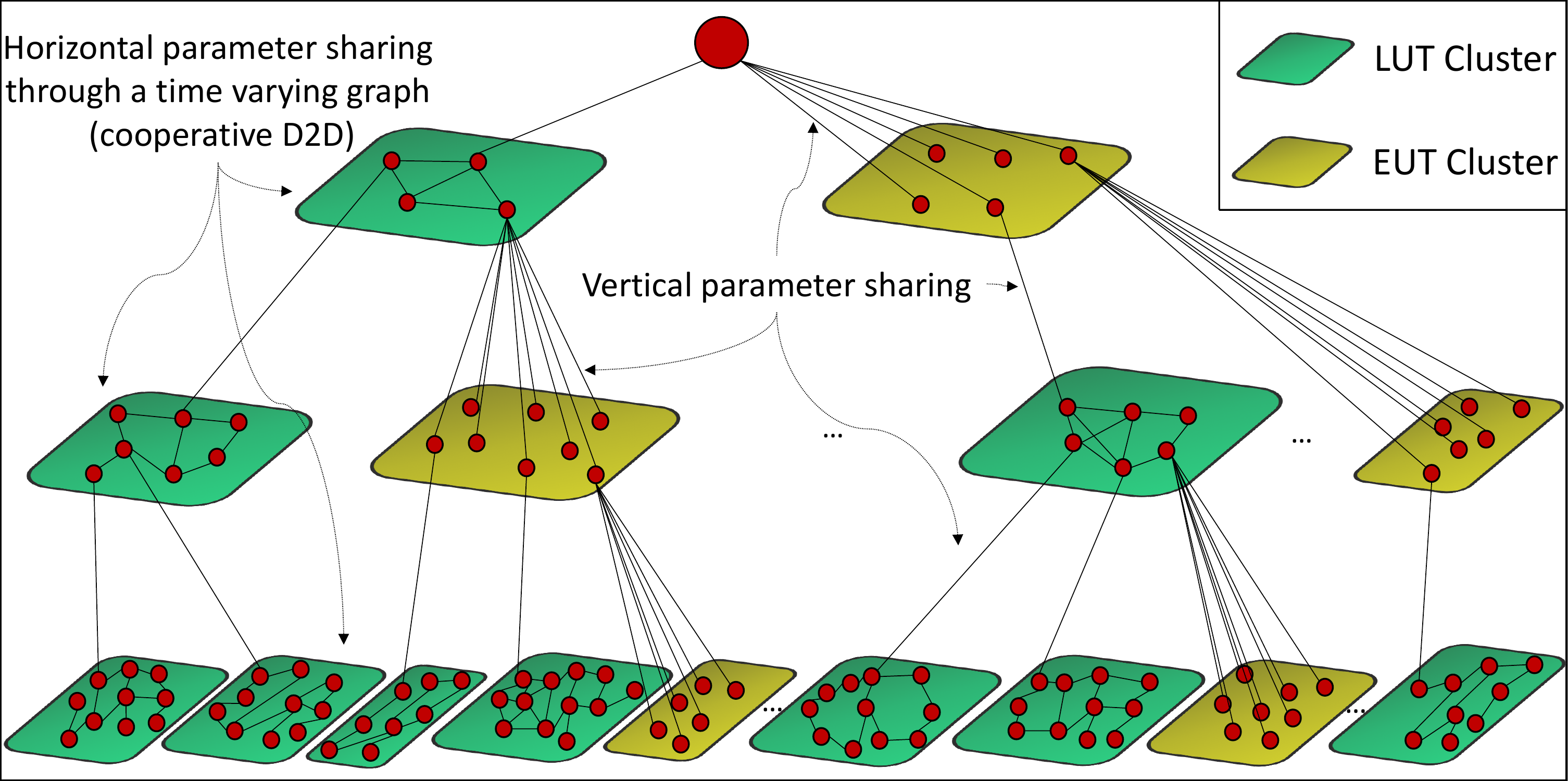}}
\vspace{-1.5mm}
\caption{Architecture of (a) the multi-layer network structure of fog computing systems and (b) the network layers and parameters transfer for FogL.}
\label{diag:scenario22}
\vspace{-7.5mm}
\end{figure*}

\vspace{-4mm}
\subsection{Fog Learning: Federated Learning in Fog Environments}
\label{ssec:fogL}
\vspace{-.5mm}
Fog computing is an emerging technology which aims to manage computation resources across the cloud-to-things continuum, encompassing the cloud, core, edge, metro, clients, and things~\cite{7901470}. We recently introduced the fog learning (FogL) paradigm~\cite{hosseinalipour2020federated}, which advocates leveraging the fog computing architecture to handle ML tasks. Specifically, FogL requires extending federated learning to (i) incorporate fog network structures, (ii) account for device computation heterogeneity, and (iii) manage the proximity between resource-abundant and resource-constrained nodes. Our focus in this paper is (i), i.e., extending federated learning along the \textit{network} dimension.

We consider the sample network architecture of FogL in Fig.~\ref{diag:scenario22}(\subref{fig:multi_stage}). There are multiple layers between user devices and the cloud, including base stations (BSs) and edge servers. Compared with federated learning, FogL has two key characteristics: (i) It assumes a multi-layer cluster-based structure with local parameter aggregations at different layers. (ii) In addition to inter-layer communications, it includes intra-layer communications via device-to-device (D2D) connectivity, which is promoted in 5G and IoT~\cite{6815897}. There exists a literature on D2D communication protocols for ad-hoc and sensor networks including vehicular ad-hoc networks (VANET), mobile ad-hoc networks (MANET),  and flying ad-hoc networks (FANET)~\cite{abolhasan2004review,zeadally2012vehicular,bekmezci2013flying,akkaya2005survey}. Exploiting D2D communications has also been promoted in agriculture and rural use cases~\cite{zhang2021challenges}, making FogL a promising model training strategy in such environments. FogL  also considers server-to-server interactions~\cite{maheshwari2018scalability} and other types of peer-to-peer (P2P) interactions under the umbrella of D2D. 

Characteristic (ii) mentioned above orchestrates the devices at each layer in a cooperative framework, introducing a set of \textit{local networks} to the learning paradigm. This motivates studying the learning performance with explicit consideration of \textit{topology} structure among the devices. To do this, we partition network layers into clusters of two types as depicted in Fig.~\ref{diag:scenario22}(\subref{fig:multi_stage_Clusters}): (i) limited uplink transmission (LUT), where D2D communications are enabled, and (ii) extensive uplink transmission (EUT), where, similar to the conventional federated learning, 
all nodes only communicate with their upper layer.

Accommodating both inter- and intra-layer communications introduces a \textit{semi-decentralized learning architecture}, which is a \textit{hybrid} model for training that considers conventional server-device interactions (i.e., centralized ``star" topology) in conjunction with collaborative/cooperative D2D communications (i.e., fully decentralized ``mesh" topology). Thus, the methodology we develop in this paper is called \textit{multi-stage hybrid federated learning} ({\tt MH-FL}) and considers both intra-cluster consensus formation and inter-cluster aggregations for distributed ML. In developing {\tt MH-FL}, we incorporate the time-varying local network topologies among the devices as a dimension of federated learning, and demonstrate how it impacts ML model convergence and accuracy.

\vspace{-3mm}
\subsection{Related Work}
\vspace{-.5mm}
Researchers have considered the effects of limited and imperfect communication capabilities in wireless networks -- such as channel fading, packet loss, and limited bandwidth -- on the operation of federated learning~\cite{chen2019joint,8737464,9014530}. Also, communication techniques such as quantization~\cite{9054168,lee2021finite} and sparsification of model updates (i.e., when only a fraction of the model parameters are shared during model training)~\cite{renggli2019sparcml} have been studied. Recently, \cite{8664630} analyzed the convergence bounds in the presence of edge network resource constraints. 


Research has also considered the computation aspects of federated learning in wireless networks~\cite{8737464,8964354,tu2020network,9488906}. Part of this literature has focused on learning in the presence of \textit{stragglers}, i.e., when a node has significantly lower computation capabilities than others~\cite{8737464,8964354}. Another emphasis has been reducing the computation requirements through intelligent  raw data offloading between devices~\cite{tu2020network} and judicious selection of device  participation~\cite{ji2020dynamic,9488906}. Other techniques for mitigating compute limitations, e.g., through model compression, have also been applied to distributed ML~\cite{wang2018wide}. 





There exist recent works on hierarchical federated learning~\cite{TobeAdded4,TobeAdded3,TobeAdded2}. These works are mainly focused on specific use cases of two-tiered network structures above wireless cellular devices, e.g., edge clouds connected to a main server~\cite{TobeAdded4,TobeAdded3} or small cell and macro cell base stations~\cite{TobeAdded2}. As compared to all the aforementioned works, which consider the star model training topology (or \textit{tree} in case of hierarchical considerations), our work is distinct for several reasons, including that: (i) we introduce a \textit{multi-layer cluster-based structure} with an arbitrary height that encompasses all IoT elements between the end devices and the main server, which generalizes all the prior models; and (ii) more importantly, we explicitly consider the \textit{network} dimension and topology structure among the devices at each network layer formed via cooperative/collaborative D2D communications. This migrates us from prior models and enables new analysis and considerations for hybrid intra- and inter-layer model learning over large-scale fog networks.

There also exist recent works on \textit{fully decentralized (server-less)} federated learning~\cite{8950073,hu2019decentralized,lee2021finite}.
{These architectures require a well-connected D2D communication graph among all the devices in the network, which becomes less feasible to maintain as the geographical span of the devices increases (e.g., the end devices across multiple regions in Fig.~\ref{diag:scenario22}(\subref{fig:multi_stage})).}
We establish an intermediate learning architecture that couples the star topology {assumed in conventional federated learning} with fully decentralized {architectures} to provide a \textit{scalable model training}. In particular, we propose a {novel} semi-decentralized learning architecture that (i) uses a cluster-based representation of devices with local communications only  among the D2D-enabled devices inside the same cluster; (ii) reduces the reliance on resource-intensive uplink model transmissions via sampling only one device from D2D-enabled clusters; and (iii) is based on a layered coordination of global aggregations across the fog learning hierarchy facilitated by a main server.

Beyond federated learning, there is a well developed literature on other distributed ML techniques (e.g.,~\cite{elgabli2019gadmm,smith2017cocoa,richtarik2016distributed}). Our proposed framework for FogL inherits its model aggregation rule from federated learning, i.e., local gradient descent at the devices and weighted averaging to obtain the global model. We choose this due to specific characteristics that make it better suited for fog: keeping the user data local, handling non-iid datasets across devices, and handling imbalances between sizes of local datasets~\cite{mcmahan2017communication}. These capabilities have made federated learning the most widely acknowledged distributed learning framework for future wireless systems~\cite{niknam2019federated,8970161}. The multi-stage hybrid architecture of FogL could also be studied in the context of other distributed ML techniques, e.g., ADMM~\cite{elgabli2019gadmm}.

Finally, there exist a literature on distributed  consensus with applications in multi-agent systems~\cite{6733336,li2010consensus}, sensor networks~\cite{4663899,manfredi2013design}, and  optimization~\cite{4749425,6945888,6748910,johansson2008subgradient}. Our scenario is unique given its multi-layer network structure and focus on a hybrid ML model training, where the goal is to propagate an expectation of the nodes' parameters through the hierarchy to train an ML model. The results we obtain have thus not yet appeared in either consensus-related or ML-related literature. 


\vspace{-3.5mm}
\subsection{Summary of Contributions}\label{ssec:contribution}
\vspace{-.1mm}
Our contributions in this work can be summarized as follows:
\begin{itemize}[leftmargin=4mm]
     \item We formalize multi-stage hybrid federated learning ({\tt MH-FL}), a new methodology for distributed ML. {\tt MH-FL} extends federated learning along the network dimension, relaying the local updates of end devices through the network hierarchy via a novel multi-stage, cluster-based parameter aggregation technique. This paradigm introduces local aggregations achieved by an interplay between cooperative D2D communications and distributed consensus formation. 
     
     

     
     \item We analytically characterize an upper bound of convergence of {\tt MH-FL}. We demonstrate how this bound depends on characteristics of the ML model, the network topology, and the learning algorithm, including the number of model parameters, the communication graph structure, and the number of D2D rounds at different device clusters.
     
     \item We demonstrate that the model loss achieved by {\tt MH-FL} under unlimited D2D rounds coincides with that of federated learning. Under the finite D2D rounds regime, we obtain a condition under which a constant optimality gap can be achieved asymptotically. We further show that under limited finely-tuned D2D rounds, {\tt MH-FL} converges linearly to the optimal ML model. We further introduce a practical cluster sampling technique and investigate its convergence behavior.

     \item We obtain analytical relationships for tuning (i) the number of D2D rounds in different clusters at different layers and (ii) the number of global iterations to meet certain convergence criteria. We use these relationships to develop distributed control algorithms that each cluster can employ individually to adapt the number of D2D rounds it uses over time.
    
    \item Our experimental results on real-world datasets verify our theoretical findings and show that {\tt MH-MT} can improve network resource utilization significantly with negligible impact on model training convergence speed and accuracy.

 \end{itemize}

\vspace{-3mm}
\section{System Model and Problem Formulation}\label{sec:FogArchitecture}
\vspace{-.5mm}
\noindent In this section, we formalize the FogL network architecture (Sec.~\ref{sec:augm}) and the ML problem (Sec.~\ref{ssec:ml}). Then, we formalize the hybrid learning paradigm via intra- and inter-cluster communications (Sec.~\ref{ssec:intra-inter}) and parameter sharing (Sec.~\ref{ssec:sharing}).
\vspace{-8mm}
\subsection{Network Architecture and Graph Model}
\label{sec:augm}
We consider the network architecture of FogL as depicted on the left in Fig.~\ref{fig:treeD2D}. In FogL, both inter-layer and intra-layer communications  take place to conduct ML model training. The inter-layer communications are captured via a tree graph, with the main server at the root and end devices as the leaves. Each layer is partitioned into multiple clusters, with each device in a cluster sharing the same parent node in the next layer up. In general, each node may have a parent node located multiple layers above (e.g., an edge device can be directly connected to an edge server), and multiple clusters can share the same parent node (e.g., multiple groups of cellular devices share the same BS). Except at the bottom layer, each node in the hierarchy is the \textit{parent} for a subset of the nodes, and is responsible for gathering the model parameters of its \textit{children} nodes.


From this FogL network representation, we construct the \textit{FogL augmented network graph} shown on the right of Fig.~\ref{fig:treeD2D}, where several virtual nodes and clusters have been added. Virtual nodes are added in such a way that each node has a single parent node in its immediate upper layer. Also, when multiple clusters share the same parent node, a layer is added to the FogL augmented graph that consists of multiple intermediate virtual nodes forming a virtual cluster, such that there is always a one-to-one mapping between the clusters and parent nodes. If necessary, the nodes in the highest layer before the main server will form a virtual cluster to preserve this one-to-one mapping. A node without any neighbors in its layer is also assumed to form a virtual singleton cluster. For convenience, we refer to the structure of Fig.~\ref{fig:treeD2D} as a \textit{tree} since in macro-scale it resembles a tree structure. However, in the micro-scale, nodes inside the clusters form \textit{connected} graphs through which D2D communications are performed, differentiating the structure from  a tree graph. 

\begin{figure}[t]
\vspace{-.05mm}
\includegraphics[width=3.5in,height=1.22in]{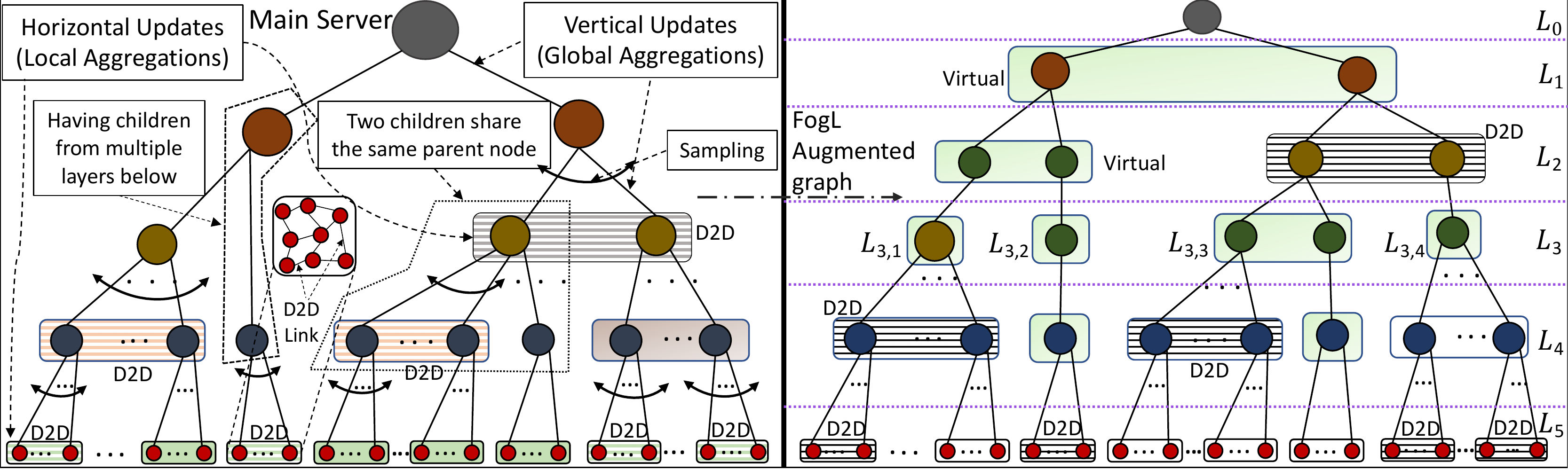}
\centering
\caption{Left: An example of FogL Network representation. The root corresponds to the main server, the leaves are the end devices, and the nodes in-between are intermediate fog nodes.  In D2D-enabled clusters, the nodes form a certain topology over which the devices communicate with their neighbors for distributed model consensus. Right: The corresponding FogL augmented graph for analysis. Virtual nodes and clusters are highlighted in green.}
\label{fig:treeD2D}
\vspace{-1mm}
\end{figure}

The FogL augmented graph has the following properties: (i) each parent node has a single cluster associated with it in the layer immediately below it;
(ii) the length of the paths from the root to each of the leaf nodes are the same; and (iii) the layer below the root always consists of one cluster. The network consists of $|\mathcal{L}| + 1$ layers, where $\mathcal{L}\triangleq \{ L_1,\cdots,L_{|\mathcal{L}|}\}$ denotes the set of layers below the server. The root/server is located at ${L}_{0}$ and the end devices are contained in ${L}_{|\mathcal{L}|}$. Inside layer ${L}_{j}$, $1\leq j \leq |\mathcal{L}|$, there exists a set of clusters indexed by ${L}_{j,1}, {L}_{j,2}, \cdots$ (see Fig.~\ref{fig:treeD2D}). 
We let the nodes move between the clusters in the same layer. To capture these dynamics, we use $\mathcal{L}^{(k)}_{j,i}$ (i.e., calligraphic font) to refer to the set of nodes inside cluster ${L}_{j,i}$ at learning iteration $k$ (described in Sec.~\ref{ssec:ml}). For ease of presentation, we assume that the  number of clusters at each layer is time invariant, i.e., each cluster always contains at least one node and nodes do not form new clusters. We let $\mathcal{N}_j$ denote the set of nodes in layer ${L}_j$ and $N_j\triangleq |\mathcal{N}_j|$.

For convenience, we will sometimes use ${C}$ to denote an arbitrary cluster (i.e., any ${L}_{j,i}$) and  $\mathcal{C}^{(k)}$ (i.e., calligraphic font) to refer to its set of nodes at iteration $k$.  
We also sometimes use $n$ to denote an arbitrary node.
For each node $n$ located in layers ${L}_{|\mathcal{L}|-1},\cdots,{L}_0$, we let ${Q}(n)$ denote the index of its child cluster and $\mathcal{Q}^{(k)}(n)$ denote the set of its children nodes (i.e., the nodes in $Q(n)$) at global iteration $k$.


\vspace{-3mm}
\subsection{Machine Learning Task}
\label{ssec:ml}
Each end device $n$ is associated with a dataset $\mathcal{D}_{n}$.
Each element $d_i \in \mathcal{D}_{n}$ of a dataset, called a training sample, is represented via a feature vector $\textbf{x}_i$ and a label $y_i$ for the ML task of interest. For example, in image classification, $\mathbf{x}_i$ may be the RGB colors of all pixels in the image, and $y_i$ may be the identity of the person in the image sample. The goal of the ML task is to learn the parameters $\mathbf{w} \in \mathbb{R}^M$ of a particular $M$-dimensional model (e.g., an SVM or a neural network) that are expected to maximize the accuracy in mapping from $\mathbf{x}_i$ to $y_i$ across any sample in the network. The model is associated with a loss $\widetilde{f}(\textbf{w},\textbf{x}_i,y_i)$, referred to as $\widetilde{f}_i(\textbf{w})$
for brevity, that quantifies the error of parameter realization $\mathbf{w}$ on $d_i$. We refer to Table 1 in~\cite{8664630} for a list of common ML loss functions. 

The global loss of the ML model is formulated as
\vspace{-1.2mm}
\begin{equation}\label{eq:globlossinit}
    F(\mathbf{w})= \frac{1}{D} \sum_{n \in \mathcal{N}_{|\mathcal{L}|}} |\mathcal{D}_n| f_{n}(\textbf{w}),~~~D=\sum_{n \in \mathcal{N}_{|\mathcal{L}|}} |\mathcal{D}_n|,
\end{equation}
    \vspace{-4mm}
    
\noindent where $f_{n}$ is the local loss at node $n$, i.e., $f_{n}(\textbf{w}) = \frac{1}{|\mathcal{D}_n|} \sum_{d_i \in \mathcal{D}_n} \widetilde{f}_i(\textbf{w})$.
The goal of model training is to identify the optimal parameter $\mathbf{w}^*$ that minimizes the global loss:
\vspace{-2.5mm}
\begin{equation}
\mathbf{w}^*= \underset{\mathbf{w}\in \mathbb{R}^M}{\argmin} \; F(\mathbf{w}).
\end{equation}
  \vspace{-4.8mm}
    
\noindent To achieve this in a distributed manner, training is conducted through consecutive global iterations. At the start of global iteration $k\in \mathbb{N}$, the main server possesses a parameter vector $\mathbf{w}^{(k-1)} \in \mathbb{R}^M$, which propagates downstream through the hierarchy to the end devices.
Each end device $n$ overrides its current local model parameter vector $\mathbf{w}^{(k-1)}_{n}$ according to
  \vspace{-2.1mm}
\begin{equation}\label{eq:localupdateOverrride}
    \mathbf{w}^{(k-1)}_{n}=\mathbf{w}^{(k-1)},
\end{equation} 
\vspace{-5.2mm}
    
\noindent
and then performs a local update using gradient descent~\cite{mcmahan2017communication} as
  \vspace{-2.5mm}
\begin{equation}\label{eq:localupdate}
     \mathbf{w}^{(k)}_{n}=\mathbf{w}^{(k-1)}_{n} - \beta \nabla f_{n} (\mathbf{w}^{(k-1)}_{n}),~n\in\mathcal{N}_{|\mathcal{L}|},
 \end{equation}
  where $\beta$ is the step-size.
The main server wishes to obtain the global model used for initiating the next global iteration, which is defined as a weighted average of end devices' parameters:
\vspace{-2mm}
\begin{equation}\label{eq:weightBaisc}
   \mathbf{w}^{(k)}= \frac{\sum_{n\in \mathcal{N}_{|\mathcal{L}|}}|\mathcal D_n| \mathbf{w}^{(k)}_{n}}{D}.
\end{equation}

The value of $D$ in \eqref{eq:weightBaisc} is assumed to be known at the main server, since it only requires uplink transmission of the scalar $|\mathcal{D}_n|$ by each end device $n \in \mathcal{N}_{|\mathcal{L}|}$, which can be aggregated and propagated upstream by each parent node in the hierarchy. As we will see, our method does not require knowledge of each individual $|\mathcal{D}_n|$ at the server to conduct the parameter averaging given in \eqref{eq:weightBaisc}, since the number of data points at each end device will be encapsulated in a scaled model parameter vector shared to its parent node (see Sec.~\ref{ssec:traversing}).
On the downlink from the server to the edge devices, we assume that the (common) global parameter $\mathbf{w}^{(k-1)}$ can be readily shared through the hierarchy to reach to the end devices, e.g., through a broadcasting protocol. These devices will then conduct~\eqref{eq:localupdateOverrride} and~\eqref{eq:localupdate} locally. The challenge, then, is computing~\eqref{eq:weightBaisc} at the main server. To do this, in federated learning, the devices will directly upload~\eqref{eq:localupdate} to the server. However, this is prohibitive in a large-scale fog computing system. First, it may require \textit{high energy consumption}: uplink transmissions from battery-limited devices to nodes at a higher layer typically correspond to long physical distances, and can deplete individual device batteries. Second, it may induce \textit{high network traffic and long latencies}: a neural network with even hundreds of parameters, which would be small by today's standards~\cite{wang2018wide}, would require transmission of billions of parameters in the upper layers during each iteration over a network with millions of nodes. Additionally, it may \textit{overload current cellular and vehicular architectures}: these infrastructures are not designed to handle large jumps in the number of active users \cite{clarke2014expanding}, which would be the case with simultaneous uplink transmissions at the bottom-most layer. These issues require a novel approach to parameter aggregations, which we develop in this paper.

\vspace{-4mm}
\subsection{Hybrid Learning via Intra- and Inter-Layer Communications}
\label{ssec:intra-inter}
\vspace{-0.5mm}
The main server in FogL is only interested in the weighted average of the local parameters~\eqref{eq:weightBaisc}. Consequently, we propose \textit{local aggregations} at each network layer. To achieve this, each cluster in Fig.~\ref{fig:treeD2D} follows one of two mechanisms:
\vspace{-.5mm}
\begin{enumerate}[leftmargin=5mm]
    \item \textit{Distributed aggregation}: The nodes engage in a cooperative scheme facilitated by D2D communications to realize the consensus/average of their local model parameters. The parent node then samples parameters of one of the children and scales it by the number of children to calculate an approximate sum of the children nodes' parameters.
    \item \textit{Instant aggregation}: Each node instead uploads its local model to the parent node. The parent computes the aggregation directly as a sum of the children nodes' parameters.
\end{enumerate}
\vspace{-.5mm}
The communication requirement of the instant aggregation is often significantly higher than the distributed aggregation since D2D communications generally occur over much shorter distances, which makes them less power/energy consuming.
 We refer to mechanisms 1 and 2 mentioned above as limited uplink transmission (LUT) and extensive uplink transmission (EUT), respectively.\footnote{We assume that each node belongs either to an EUT or to an LUT cluster. If some portion of the nodes in a cluster are capable of D2D communications while the rest are not, the cluster can be broken down into two clusters (i.e., an LUT and an EUT cluster) with the same parent node, based on which the fogL augmented network graph in Sec.~\ref{sec:augm} is then constructed.} Not all clusters can operate in LUT mode, since not all are D2D enabled, e.g., due to sparse connections between devices. Still, we can expect significant advantages in terms of the volume of parameters uploaded through the system with a combination of EUT and LUT clusters, as depicted in Fig.~\ref{fig:treeDimension}. At the bottom layers where communication is mostly over the air, D2D can also be implemented through the out-band mode \cite{kar2018overview}. This has the additional advantage of not occupying the licensed spectrum, which results in bandwidth savings that can lead to an improved quality of service.

We allow a cluster to switch between EUT and LUT over time. For instance, the connectivity between a fleet of miniature UAVs will vary as they travel, necessitating EUT when D2D is not feasible. To capture this dynamic, for each cluster ${C}$, at global iteration $k$, $\mathbbm{1}^{(k)}_{\{{C}\}}$ captures the operating mode, which is $1$ if the cluster operates in LUT mode, and $0$ otherwise.
In each LUT cluster, a node will only communicate with its neighboring devices, which may not include the whole cluster. Further, each node's neighborhood may evolve over time. For example, when the communications are conducted over the air as in Fig.~\ref{diag:scenario22}(\subref{fig:multi_stage}), the neighbors in one aggregation interval are identified based on the distances among the nodes and their transmit powers. We will explicitly consider such evolutions in cluster topology in our distributed consensus model in Sec.~\ref{sec:propCons}. 



In the simple case where there is only one layer below the server, i.e., $|\mathcal{L}|=1$, consisting of a single EUT cluster, the FogL architecture reduces to federated learning. If instead there is just one layer of one LUT cluster with no server, FogL resembles fully distributed learning~\cite{hu2019decentralized,8950073,lee2021finite}. One of our contributions is developing and analyzing this generalized cluster-based multi-layer hybrid learning paradigm for FogL.

\begin{figure}[t]
\vspace{-.05mm}
\includegraphics[width=3.5in,height=1.2in]{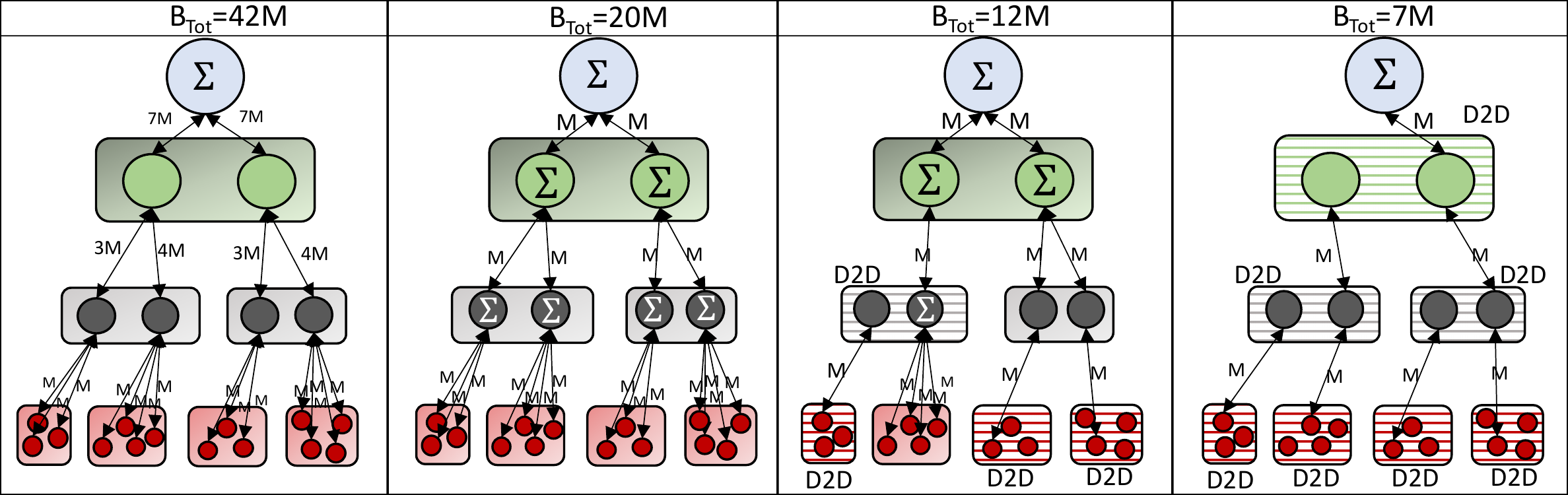}
\centering
\caption{Example of the network traffic reduction provided by multi-layer aggregations in FogL, where each device trains a model with parameter length $M$. The sum of the length of parameters transferred among the layers ($\textrm{B}_{\textrm{Tot}}$) is depicted at the top. (a) Network consisting of all EUT clusters, where the parent nodes upload all received parameters from their children. (b) Network consisting of all EUT clusters, where the parent nodes sum all received parameters and upload to the next layer. (c) Network with a portion of clusters in LUT mode, where each parent node only samples the parameters of one device after consensus formation. (d) Network consisting of all LUT clusters.
}\label{fig:treeDimension}
\end{figure}

 \vspace{-2mm}
\subsection{Parameter Sharing vs. Gradient Sharing}
 \label{ssec:sharing}
 Note that the parameter update in~\eqref{eq:weightBaisc} can be written as
 \begin{equation}\label{eq:gradBasic}
 \begin{aligned}
        \mathbf{w}^{(k)}\hspace{-.2mm}&=\hspace{-.2mm} \frac{\sum_{n\in \mathcal{N}_{|\mathcal{L}|}} |\mathcal{D}_n|\left( \mathbf{w}^{(k-1)}_{n}- \beta \left(\nabla f_{n} (\mathbf{w}^{(k-1)}_{n})\right)\right)}{D} \hspace{-.1mm}\\&=\hspace{-.1mm}
         \mathbf{w}^{(k-1)}-\beta\frac{\sum_{n\in \mathcal{N}_{|\mathcal{L}|}} |\mathcal{D}_n| \nabla f_{n} (\mathbf{w}^{(k-1)}_{n})}{D}.
          \end{aligned}
 \end{equation}
 This asserts that the global parameters can also be obtained via the gradients of the devices, implying that the devices can either share their gradients or their parameters during training. This equivalence arises from the one-step update in~\eqref{eq:localupdate}, which is a common assumption in federated learning~\cite{chen2019joint,8737464,zeng2020federated}. However, recent work~\cite{8664630,tu2020network} has advocated conducting multiple rounds of local updates between global aggregations to reduce communication costs; in this case, the parameters are required for each aggregation. In this paper, we focus on the more general case of parameter sharing, although we obtain one theoretical result (Proposition~\ref{prop:conv_0_dem_step}) based on gradient sharing.

In LUT clusters, devices leverage D2D communications to obtain an approximate value of the average of their parameters. A basic approach would be to implement a message passing algorithm where nodes exchange parameters with their neighbors until each node in the cluster has all parameters stored locally. Each node can then readily calculate the aggregated value, and one of them can be sampled by the parent node. Collecting a table of parameters at each node may not be feasible, however, given how large these vectors can be for contemporary ML models, as discussed in Sec.~\ref{ssec:ml}. Instead, we desire a technique that (i) does not require any node to store a table of all model parameters in the cluster, (ii) can be implemented in a distributed manner via D2D, and (iii) is generalizable across different network layers. In the following, we leverage \textit{distributed average consensus} methods for this.

\vspace{-2mm}
\section{\hspace{-1.27mm}Multi-Stage Hybrid Federated Learning \hspace{-1.1mm}({\tt MH-FL})}\label{sec:propCons}
\vspace{-0.7mm}
\noindent In this section, we develop our {\tt MH-FL} methodology  (Sec.~\ref{ssec:algo}\&\ref{ssec:traversing}). Then, we conduct a detailed performance analysis of our method (Sec.~\ref{ssec:analysis}). Finally, based on this analysis, we develop online control algorithms for tuning the number of D2D rounds in each cluster over time to guarantee the convergence properties for our method (Sec.~\ref{ssec:control}).


\vspace{-3mm}
 \subsection{Distributed Average Consensus within Clusters}
 \label{ssec:algo}
 \vspace{-.5mm}
 Referring to Fig.~\ref{fig:treeD2D}, during each iteration $k$ of training, the LUT clusters engage in D2D communications, where each node desires estimating the average value of the parameters inside its cluster. For each cluster ${C}$ that is LUT during the $k$th global aggregation iteration (i.e., for which $\mathbbm{1}^{(k)}_{\{{C}\}}=1$), we let $G^{(k)}_{{C}}= \big(\mathcal{C}^{(k)}, \mathcal{E}^{(k)}_{{C}} \big)$ denote its communication graph. In this graph, $\mathcal{C}^{(k)}$ is the set of nodes belonging to the cluster, and there is an edge $(m,n) \in \mathcal{E}^{(k)}_{{C}}$ between nodes $m,n \in \mathcal{C}^{(k)}$ iff they can communicate via D2D during $k$. We assume that $G^{(k)}_{{C}}$ is undirected, connected, and static for the duration of one iteration $k$, although it may vary in different iterations. 
 Note that we have abstracted the physical layer wireless/wired communication medium of each LUT cluster $C$ to a  general graph topology $G_C^{(k)}$. In a wireless cluster, different channel conditions and communication configurations among the devices will manifest as different topologies for $G_C^{(k)}$.


 
 \begin{table*}[t]
\begin{minipage}{0.99\textwidth}
{\footnotesize
\begin{equation}\label{eq:firstWeightUpdate}
 \widehat{\mathbf{w}}^{(k)}_{n'}=\frac{\displaystyle \sum_{j
    \in \mathcal{Q}^{(k)}(n)}|\mathcal D_j| \mathbf{w}^{(k-1)}_{j}} {{|\mathcal{Q}^{(k)}(n)|}}-   \frac{\displaystyle \sum_{j
    \in \mathcal{Q}^{(k)}(n)}\beta |\mathcal D_j|\nabla f_{j}(\mathbf{w}^{(k-1)}_{j})} {|\mathcal{Q}^{(k)}(n)|}  +\mathbbm{1}^{(k)}_{\left\{{Q}(n)\right\}}  \mathbf{c}^{(k)}_{n'},~n' \in \mathcal{Q}^{(k)}(n)
 \end{equation}
 \vspace{-1mm}
 \hrulefill
 \begin{equation}\label{eq:secondWeightUpdate}
  \hspace{-11mm}
 \begin{aligned}
 \widehat{\mathbf{w}}^{(k)}_{n'}\hspace{-1mm}=\frac{\displaystyle \sum_{i
    \in \mathcal{Q}^{(k)}(n)} \sum_{j
    \in \mathcal{Q}^{(k)}(i)} \hspace{-2.5mm}|\mathcal D_j| \mathbf{w}^{(k-1)}_{j}} {|\mathcal{Q}^{(k)}(n)|} - \frac{\displaystyle \sum_{i
    \in \mathcal{Q}^{(k)}(n)} \sum_{j
    \in \mathcal{Q}^{(k)}(i)} \hspace{-2.5mm}\beta |\mathcal D_j|\nabla f_{j}(\mathbf{w}^{(k-1)}_{j})} {|\mathcal{Q}^{(k)}(n)|} + \frac{\displaystyle\sum_{i
    \in \mathcal{Q}^{(k)}(n)}\hspace{-2.5mm}\mathbbm{1}^{(k)}_{\left\{{{Q}(i)}\right\}}|\mathcal{Q}^{(k)}(i)|  \mathbf{c}^{(k)}_{i'}}{|\mathcal{Q}^{(k)}(n)|}+\mathbbm{1}^{(k)}_{\left\{{Q}(n)\right\}}  \mathbf{c}^{(k)}_{n'}, n'\hspace{-.5mm} \in\hspace{-.5mm} \mathcal{Q}^{(k)}(n)
     \end{aligned}
     \hspace{-7mm}
     \vspace{-2mm}
 \end{equation}
 \vspace{-1mm}
\hrulefill
}
\end{minipage}\vspace{-6mm}
\end{table*}

We employ the family of \textit{linear distributed consensus} algorithms~\cite{xiao2004fast}, where during the global iteration $k$, the nodes inside LUT cluster ${C}$ conduct $\theta^{(k)}_{{C}}\in \mathbb{N}$
\textit{rounds} of D2D (number of D2D rounds is a design parameter obtained in Sec.~\ref{ssec:analysis}). Each round of D2D consists of parameter transfers between neighboring nodes. Formally, during global iteration $k$, each node $n \in \mathcal{C}^{(k)}$  engages in the following rounds of iterative updates for $t=0,...,\theta^{(k)}_{{C}}-1$: 
\vspace{-2mm}
    \begin{equation}\label{eq:ConsCenter}
       \textbf{z}_{n}^{(t+1)}= v^{(k)}_{n,n} \textbf{z}_{n}^{(t)}+\hspace{-2mm} \sum_{m\in \mathcal{\zeta}^{(k)}(n)} v^{(k)}_{n,m}\textbf{z}_{m}^{(t)},
  \end{equation}
  
  \vspace{-3mm}
  \noindent where $\textbf{z}_{n}^{(0)} = \mathbf{w}_{n}^{(k)}$ corresponds to node $n$'s initial parameter, and $\textbf{z}_{n}^{\big(\theta^{(k)}_{{C}}\big)}$ denotes the parameter after the D2D consensus process. $\mathcal{\zeta}^{(k)}(n)$ denotes the set of neighbors of node $n$ during global iteration $k$, and $v^{(k)}_{n,p}$, $p\in \{n\} \cup \mathcal{\zeta}^{(k)}(n)$
are the \textit{consensus weights} associated with node $n$  during $k$. 
There are several potential choices for these weights that guarantee convergence of the distributed consensus iterations, provided that the cluster graph $G^{(k)}_{{C}}$ is connected~\cite{xiao2004fast} (if it is not, we can partition the cluster into multiple connected subgraphs with the same parent node). We will detail the conditions required for convergence of local model aggregations in Assumption~\ref{assump:cons} of Sec.~\ref{ssec:analysis}. One choice that satisfies the conditions is {\small$\textbf{z}_{n}^{(t+1)} = \textbf{z}_{n}^{(t)}+d^{(k)}_{{C}}\sum_{m\in \mathcal{\zeta}^{(k)}(n)} (\textbf{z}_{m}^{(t)}-\textbf{z}_{n}^{(t)})$}, {\small$0 < d^{(k)}_{{C}} < 1 / D^{(k)}_{{C}}$}, where {\small$D^{(k)}_{{C}}$} is the maximum degree of the nodes in $G^{(k)}_{{C}}$~\cite{xiao2004fast}.
With this constant edge weight implementation, the nodes inside LUT cluster ${C}$ only need to have knowledge of the parameter $d^{(k)}_{{C}}$ to conduct local aggregations, which can be broadcast by the respective parent node. This choice of consensus weights is used in our simulations in Sec.~\ref{sec:num-res}.

Due to time constraints (i.e., depending on the required time between global aggregations), the number of D2D rounds cannot be arbitrarily large, and thus the nodes inside a cluster often do not have a perfect estimate of the average value of their parameters. In Sec.~\ref{ssec:analysis}, we analyze the effect of a finite number of D2D rounds on the {\tt MH-FL} performance.


\vspace{-4mm}
 \subsection{Local Aggregations and Parameters Propagation}
 \label{ssec:traversing}
In the following, we introduce a \textit{scaled parameter} for each node $n$ denoted by $\widetilde{\mathbf{w}}_n$ and develop an approach to perform global aggregations based on manipulation and relaying of these parameters across different layers of the network. The definition of $\widetilde{\mathbf{w}}_n$ depends on the layer where node $n$ is located.
 
 \subsubsection{Nodes' parameters in the bottom-most layer}
  Given $\mathbf{w}^{(k-1)}$, the end devices in layer ${L}_{|\mathcal{L}|}$ first perform the local update described in~\eqref{eq:localupdate}. Since the nodes' parameters go through multiple stages of aggregations (see Fig.~\ref{fig:treeDimension}), the server cannot recover the individual parameters from the aggregated ones to calculate~\eqref{eq:weightBaisc}. To overcome this issue, each device $n \in \mathcal{N}_{|\mathcal{L}|}$ obtains its scaled parameter as $\widetilde{\mathbf{w}}^{(k)}_{n}= |\mathcal D_n| \mathbf{w}^{(k)}_{n}$
and shares it with its neighbors during the D2D process for global iteration $k$, i.e., its parameters weighted by its number of datapoints.\footnote{Nodes inside EUT clusters of layer ${L}_{|\mathcal{L}|}$ directly share their scaled parameters with their parents.} Using this weighting technique, the number of datapoints of the nodes is encoded in the multi-layer aggregations. Specifically, each node $n \in \mathcal{C}^{(k)}$ belonging to cluster ${C}$ located in ${L}_{|\mathcal{L}|}$ engages in the local iterations described by \eqref{eq:ConsCenter}, where $\textbf{z}_{n}^{(0)}=\widetilde{\mathbf{w}}^{(k)}_{n}$. Finally, the node stores {\small $\widehat{\mathbf{w}}_{n}^{(k)}=\textbf{z}_{n}^{\big(\theta^{(k)}_{{C}}\big)}$}, which corresponds to the final weighted local parameter value after the D2D process.

 \subsubsection{Nodes' parameters in the middle layers} Once D2D communications are finished in layer ${L}_{|\mathcal{L}|}$, each parent node $n\in \mathcal{N}_{|\mathcal{L}|-1}$ of a cluster that operated in LUT mode selects a cluster head $n'$ among its children $\mathcal{Q}^{(k)}(n)$ in layer ${L}_{|\mathcal{L}|}$ based on a selection/sampling distribution. This child $n'$ uploads its parameter vector $\widehat{\mathbf{w}}^{(k)}_{n'}$ to the parent node. The resulting sampled parameter vector at the parent node is given by~\eqref{eq:firstWeightUpdate}, where the first two terms correspond to the true average of the parameters of the nodes inside cluster ${Q}(n)$, and $\mathbf{c}^{(k)}_{n'} \in \mathbb{R}^{M}$ is the error arising from the consensus. This error is only applicable to the LUT clusters and is concerned with the deviation from the true cluster average (which would be obtained from an EUT cluster). The parent node $n \in \mathcal{N}_{|\mathcal{L}|-1}$  then computes its scaled parameter $\widetilde{\mathbf{w}}^{(k)}_{n}$ by scaling the received vector by the number of its children, and stores the corresponding vector:
 \vspace{-3mm}
 
 {\small
 \begin{equation}\label{eq:multiWeight}
    \hspace{-5mm}\widetilde{\mathbf{w}}^{(k)}_{n}\hspace{-.2mm}=|\mathcal{Q}^{(k)}(n)| \widehat{\mathbf{w}}^{(k)}_{n'}, n \hspace{-.3mm}\in\hspace{-.3mm} \mathcal{N}_{|\mathcal{L}|-1}, n' \hspace{-.3mm}\in\hspace{-.3mm} \mathcal{Q}^{(k)}(n), \mathbbm{1}^{(k)}_{\{{Q}(n)\}}=1.\hspace{-5mm}
 \end{equation}
 }
 
  \vspace{-4mm}
\noindent Also, in layer $\mathcal{L}_{|\mathcal{L}|-1}$, each parent node $n$ of a cluster that operated in EUT mode receives all the parameters of its children and computes $\widetilde{\mathbf{w}}^{(k)}_{n}=  \sum_{i \in \mathcal{Q}^{(k)}(n)} \widetilde{\mathbf{w}}^{(k)}_i$. Once this is completed, the LUT clusters located in layer $\mathcal{L}_{|\mathcal{L}|-1}$ engage in distributed consensus formation via cooperative D2D.

This procedure continues up the hierarchy at each layer ${L}_{j}$, $j = |\mathcal{L}|-2,...,1$. More precisely, for an LUT cluster ${C}$ located in one of the middle layers, each node $n \in \mathcal{C}^{(k)}$ performs the local iterations described by \eqref{eq:ConsCenter} with initialization $\textbf{z}_{n}^{(0)}={\widetilde{\mathbf{w}}}^{(k)}_{n}$.
At the end of the D2D rounds, each of these nodes stores the last obtained parameter ${\widehat{\mathbf{w}}_{n}}^{(k)}=\textbf{z}_{n}^{\big(\theta^{(k)}_{{C}}\big)}$, and passes it up if sampled by its parent. At the same time, each node $n$ inside an EUT cluster located in one of the middle layers directly shares $\widetilde{\mathbf{w}}^{(k)}_n$ for the calculation of the local aggregation.
%

Traversing up the layers, the consensus  errors accumulate. For example,~\eqref{eq:secondWeightUpdate} gives the expression for the final consensus parameter vector of a node $n'\in \mathcal{Q}^{(k)}(n)$ inside an LUT cluster at layer ${L}_{|\mathcal{L}|-1}$, which will be sampled by a parent node $n\in \mathcal{N}_{|\mathcal{L}|-2}$ located in the upper layer. In this expression, $i'$ denotes the sampled node chosen by parent node $i$. 

\subsubsection{Main server} Let $\widetilde{\mathbf{w}}^{(k)}_{0}$ denote the scaled parameter at the main server. If the cluster in layer ${L}_1$ operates in LUT mode, then $\widetilde{\mathbf{w}}^{(k)}_{0}= {|\mathcal{L}^{(k)}_{1,1}| \widehat{\mathbf{w}}^{(k)}_{m}}$,
where $m \in \mathcal{L}^{(k)}_{1,1}$ denotes the node sampled by the main server with parameter $\widehat{\mathbf{w}}^{(k)}_{m}$ obtained after performing D2D rounds\footnote{According to FogL augmented graph properties, ${L}_1$ consists of one cluster,  and thus $|\mathcal{L}^{(k)}_{1,1}|=|\mathcal{N}_{1}|$ which is inherently time invariant. The super-index $k$ is added for consistency in calligraphic notations.}. Otherwise, the main server sums all the received parameters, i.e., $\widetilde{\mathbf{w}}^{(k)}_{0}= {\sum_{m \in \mathcal{L}^{(k)}_{1,1}} {\widetilde{\mathbf{w}}}^{(k)}_{m}}$. The main server then computes the global parameter vector as
\vspace{-1.5mm}
  \begin{equation}\label{eq:updateRootCons}
    \mathbf{w}^{(k)}=  {\widetilde{\mathbf{w}}^{(k)}_{{0}}}/{D},
 \end{equation}
which is then broadcast down the hierarchy to start the next global  round $k + 1$, beginning with local updates at the devices.

The {\tt MH-FL} methodology we developed throughout this section  is summarized in~Algorithm~\ref{alg:GenCons}. The lines beginning with ** and \#\# are enhancements for tuning the D2D rounds over time at different clusters, which we present in Sec.~\ref{ssec:control}.

\begin{algorithm}[t]
 	\caption{\small Multi-stage hybrid federated learning \hspace{-.6mm}({\tt MH-FL})}\label{alg:GenCons}
 	 {\scriptsize
 	\SetKwFunction{Union}{Union}\SetKwFunction{FindCompress}{FindCompress}
 	\SetKwInOut{Input}{input}\SetKwInOut{Output}{output}
    \Input{number of global aggregations $K$, default number of D2D rounds $\theta^{(k)}_{{L}_{j,i}}$ $\forall i,j,k$.}
    \Output{Final global model $\mathbf{w}^{(K)}$.}
    \textbf{Initial operations at main server:} Initialize the global parameter $\mathbf{w}^{(0)}$ and synchronize the edge devices with it.\\
    ** \textbf{Initial operations at main server:} If asymptotic convergence to optimal is desired: (i)  server randomly sets $\Vert\nabla\widetilde{F(\mathbf{w}^{0})}\Vert$ and broadcasts it, (ii) Server sets $\delta$ either arbitrarily or according to~\eqref{eq:deltaLinear} to guarantee a certain accuracy, and broadcasts it. Otherwise, the server sets D2D control parameters $\{\sigma_j\}_{j=1}^{|\mathcal{L}|}$ as described in Sec.~\ref{ssec:control} and broadcasts them. \label{alg:2innerstart} \\
    \For{$k=1$ to $K$}{
       \For{$l=|\mathcal{L}|$  down to $l=0$}{
       \If{$l={|\mathcal{L}|}$}{
       Given $\mathbf{w}^{(k-1)}$, each node $n$  obtains $\mathbf{w}^{(k)}_{n}$ using~\eqref{eq:localupdate}.\\
       Each node $n$ 
       obtains its scaled parameter $\widetilde{\mathbf{w}}^{(k)}_{n}=|\mathcal{D}_n|\mathbf{w}^{(k)}_{n}$.\label{alg:inner1} \\
  ** Each cluster operating in LUT mode runs Algorithm~\ref{alg:tuneCons}.\\
  \#\# Each cluster operating in LUT mode runs Algorithm~\ref{alg:tuneConsNN}.\\
  Nodes inside LUT clusters update their parameters using~\eqref{eq:ConsCenter}.\label{alg:inner2}\\
       }\ElseIf{$1\leq l \leq |\mathcal{L}|-1$}{
       Each parent node $n$ of an LUT cluster ${Q}(n)$ samples a child $n' \in \mathcal{Q}^{(k)}(n)$ and computes $\widetilde{\mathbf{w}}^{(k)}_{n}=|\mathcal{Q}^{(k)}(n)| \widehat{\mathbf{w}}^{(k)}_{n'}$. \\
          Each parent node $n$ of an LUT cluster ${Q}(n)$ uses the received parameters to compute $\widetilde{\mathbf{w}}^{(k)}_{n}=  \sum_{i \in \mathcal{Q}^{(k)}(n)} \widetilde{\mathbf{w}}^{(k)}_i$.\\
  ** Each cluster operating in LUT mode runs Algorithm~\ref{alg:tuneCons}.\\
  \#\# Each cluster operating in LUT mode runs Algorithm~\ref{alg:tuneConsNN}.\\
  Nodes inside LUT clusters update their parameters using~\eqref{eq:ConsCenter}.\label{alg:inner4}\\
       }
       \ElseIf{$l=0$}{
       \If{$\mathbbm{1}^{(k)}_{\left\{{{L}_{1,1}}\right\}}=1$}{
       \hspace{-3.5mm} The server  computes~$\widetilde{\mathbf{w}}^{(k)}_{0}\hspace{-.8mm}= \hspace{-.8mm}{|\mathcal{L}^{(k)}_{1,1}| \widehat{\mathbf{w}}^{(k)}_{m}}$, where $m \hspace{-.8mm}\in\hspace{-.8mm} \mathcal{L}^{(k)}_{1,1}$.
       }\Else{
        \hspace{-3mm}The server computes $\widetilde{\mathbf{w}}^{(k)}_{0}= {\sum_{m \in \mathcal{L}^{(k)}_{1,1}} {\widetilde{\mathbf{w}}}^{(k)}_{m}}$.
        \vspace{-1mm}
       }
      The server computes ${\mathbf{w}}^{(k)}$  using~\eqref{eq:updateRootCons} and broadcasts it.\label{alg:inner5}\\
          ** If asymptotic convergence to optimal is desired, the main server approximates the gradient of the loss function used for the next iteration as in Sec.~\ref{ssec:control} and broadcasts it.
        }
       }
    }
    }
 \end{algorithm}
 
 \vspace{-4mm}
\subsection{Theoretical Analysis of {\tt MH-FL}}\label{ssec:analysis}
One of the key contributions of {\tt MH-FL} is the integration of D2D communications with multi-layer parameter transfers. As discussed, D2D communications are conducted through time varying topology structures among the nodes, which introduce a network dimension to model training. Studying this effect under limited D2D rounds regime is the main theme of our theoretical analysis. Before presenting our main results,
we first introduce a few assumptions and a definition. Henceforth, $\Vert.\Vert$ denotes the 2-norm unless otherwise stated.

 \vspace{-1mm}
\begin{assumption}\label{assum:genConv}
The global ML loss function \eqref{eq:globlossinit} has the following properties: (i) $\mu$-strong convexity, i.e., $F(\mathbf{y})\geq F(\mathbf{x})+(\mathbf{y-x})^\top \nabla F(\mathbf{x}) +\frac{\mu}{2}\norm{\mathbf{y}-\mathbf{x}}^2$ for some $\mu > 0,~\forall \mathbf{x},\mathbf{y}$, and (ii) $\eta$-smoothness, i.e, $\norm{\nabla F(\mathbf{x})-\nabla F(\mathbf{y})}\leq \eta \norm{\mathbf{x}-\mathbf{y}}$ for some $\eta > \mu,~\forall \mathbf{x},\mathbf{y}$.
\end{assumption}
\vspace{-1mm}
The above properties are common assumptions in federated learning and ML literature~\cite{zeng2020federated,chen2019joint,reisizadeh2019fedpaq,dinh2019federated,friedlander2012hybrid}. Commonly encountered ML models with convex loss functions are linear regression, logistic regression, squared SVM, and single layer neural networks with convex activation functions. In practice, these models are implemented with an additional regularization term to improve the convergence and avoid model overfitting, which makes them strongly convex~\cite{friedlander2012hybrid}. We will conduct our convergence analysis based on this assumption and design control algorithms in Sec.~\ref{ssec:control} for both the convex (Algorithm~\ref{alg:tuneCons}) and non-convex (Algorithm~\ref{alg:tuneConsNN}) cases.

We will also find it useful to write the consensus algorithm in matrix form. Letting $\widetilde{\mathbf{W}}^{(k)}_{{C}} \in \mathbb{R}^{|\mathcal{C}^{(k)}| \times M}$ denote the matrix of scaled  parameters across all nodes in an LUT cluster ${C}$ prior to consensus in iteration $k$, the evolution of the nodes' parameters described by~\eqref{eq:ConsCenter} can be written as
\vspace{-1.5mm}
  \begin{equation}\label{eq:consensus}
      \widehat{\mathbf{W}}^{(k)}_{{C}}= \left(\mathbf{V}^{(k)}_{{C}}\right)^{\theta^{(k)}_{{C}}} \widetilde{\mathbf{W}}^{(k)}_{{C}},
      \vspace{-1mm}
  \end{equation}
  \vspace{-3mm}
  
\noindent where 
$\widehat{\mathbf{W}}^{(k)}_{{C}}\hspace{-1.5mm} \in \hspace{-1mm} \mathbb{R}^{|\mathcal{C}^{(k)}| \times M}$ denotes the matrix of node parameters after the consensus, and $\mathbf{V}^{(k)}_{{C}}=[v^{(k)}_{n,m}]_{n,m\in \mathcal{C}^{(k)}}$ is the consensus matrix applied to the parameter vector to realize \eqref{eq:ConsCenter}.

\vspace{-1.5mm}
\begin{assumption}\label{assump:cons}
The consensus matrix $\mathbf{V}^{(k)}_{{C}}$ for each LUT cluster ${C}$ has the following properties~\cite{xiao2004fast,xiao2007distributed}: (i) $\left(\mathbf{V}^{(k)}_{{C}}\right)_{m,n}=0~~\textrm{if}~~ \left({m},{n}\right)\notin \mathcal{E}^{(k)}_{{C}}$, (ii) $\mathbf{V}^{(k)}_{{C}}\textbf{1} = \textbf{1}$, (iii) $\mathbf{V}^{(k)}_{{C}} = {\mathbf{V}^{(k)}_{{C}}}^\top$, and (iv)~$ \rho \left(\mathbf{V}^{(k)}_{{C}}-\frac{\textbf{1} \textbf{1}^\top}{|\mathcal{C}^{(k)}|}\right) \leq \lambda^{(k)}_{{C}} < 1$, where $\textbf{1}$ is the vector of 1s and $\rho(\mathbf{A})$ is the spectral radius of matrix $\mathbf{A}$.
\end{assumption}
\vspace{-1mm}
In Assumption~\ref{assump:cons}, $\lambda^{(k)}_{{C}}$ can be interpreted as an upper bound on the spectral radius, which plays a key role in our results. 
\vspace{-1.5mm}
\begin{definition}\label{def:clustDiv}
The divergence of parameters in cluster ${C}$ at iteration $k$, denoted by $\Upsilon^{(k)}_{{C}}$, is defined as an upper bound on the difference of its nodes' scaled parameters as follows:
\vspace{-1mm}\begin{equation}
  \big\Vert\widetilde{\mathbf{w}}^{(k)}_{{q}} -\widetilde{\mathbf{w}}^{(k)}_{{q'}} \big\Vert \leq \Upsilon^{(k)}_{{C}} , ~\forall {q},{q'}\in\mathcal{C}^{(k)}.
\end{equation} 
\end{definition}
\vspace{-.5mm}
\begin{table*}[t]
\vspace{-1mm}
\begin{minipage}{0.99\textwidth}
{\small
\begin{align}\label{eq:ConsTh1}
     \hspace{-15mm}
     &F(\mathbf{w}^{(k)})-F(\mathbf{w}^{*})\leq \underbrace{ \left(1-\frac{\mu}{\eta}\right)^{k}\left(F(\mathbf{w}^{(0)})- F(\mathbf{w}^*)\right) }_{\textrm{(a)}}+ \frac{\eta{\Phi}}{2D^2} \sum_{t=0}^{k-1}
     \left(1-\frac{\mu}{\eta}\right)^{t}\Xi^{(k-t)}
     \end{align}    \vspace{-2mm}
     \hrulefill
     \begin{equation}\label{eq:Xi}
    \begin{aligned}
    &\Xi^{(k-t)} = \underbrace{\sum_{{a_{1}}
    \in \mathcal{L}^{(k-t)}_{{1},{1}}} \sum_{a_{2}
    \in \mathcal{Q}^{(k-t)}(a_{1})}\cdots  \sum_{a_{|\mathcal{L}|-1}
    \in \mathcal{Q}^{(k-t)}({a_{|\mathcal{L}|-2}})}}_{(b)} \underbrace{\mathbbm{1}^{(k-t)}_{ \left\{{Q}(a_{|\mathcal{L}|-1})\right\}}|\mathcal{Q}^{(k-t)}(a_{|\mathcal{L}|-1})|^3
    \left(\lambda^{(k-t)}_{{Q}(a_{|\mathcal{L}|-1})}\right)^{2\theta^{(k-t)}_{{Q}(a_{|\mathcal{L}|-1})}} \left(\Upsilon^{(k-t)}_{{Q}(a_{|\mathcal{L}|-1})}\right)^2}_{(c)}
    \\&+\sum_{{a_{1}}
    \in \mathcal{L}^{(k-t)}_{{1},{1}}} \sum_{a_{2}
    \in \mathcal{Q}^{(k-t)}(a_{1})} \cdots  \sum_{a_{|\mathcal{L}|-2}
    \in \mathcal{Q}^{(k-t)}({a_{|\mathcal{L}|-3}})} \mathbbm{1}^{(k-t)}_{ \left\{{Q}(a_{|\mathcal{L}|-2})\right\}}|\mathcal{Q}^{(k-t)}(a_{|\mathcal{L}|-2})|^3  \left(\lambda^{(k-t)}_{{Q}(a_{|\mathcal{L}|-2})}\right)^{2\theta^{(k-t)}_{{Q}(a_{|\mathcal{L}|-2})}} \left(\Upsilon^{(k-t)}_{{Q}(a_{|\mathcal{L}|-2})}\right)^2 \\&+\cdots+\sum_{a_1
    \in \mathcal{L}^{(k-t)}_{1,1}} \mathbbm{1}^{(k-t)}_{ \left\{{Q}(a_1)\right\}}|\mathcal{Q}^{(k-t)}(a_1)|^3 \left(\lambda^{(k-t)}_{{Q}(a_1)}\right)^{2\theta^{(k-t)}_{{Q}(a_1)}} \left(\Upsilon^{(k-t)}_{{Q}(a_1)}\right)^2+\mathbbm{1}^{(k-t)}_{\left\{{{L}_{1,1}}\right\}}|\mathcal{L}^{(k-t)}_{1,1}|^3 \left(\lambda^{(k-t)}_{{L}_{1,1}}\right)^{2\theta^{(k-t)}_{\mathcal{L}_{1,1}}} \left(\Upsilon^{(k-t)}_{{L}_{1,1}}\right)^2
 \end{aligned}
      \hspace{-3mm}
  \vspace{-7mm}
 \end{equation}
}
\vspace{-4mm}
\hrulefill
\end{minipage}
\vspace{-3mm}
\end{table*}
At the bottom-most layer, the above defined quantity is indicative of the degree of data heterogeneity (i.e., the level of non-i.i.d) among the nodes in a cluster, whereas in the upper layers it captures the heterogeneity of data contained in sub-trees with their roots being the nodes in the cluster. We show in Theorem \ref{theo:consbase} how it impacts the convergence bound, and in the subsequent results how it dictates the number of D2D rounds. Then, in Sec.~\ref{ssec:control}, we develop control algorithms that approximate the divergence in a distributed manner at every cluster, and use it to control the convergence rate.
\subsubsection{General convergence bound}
In the following theorem, we study the convergence of {\tt MH-FL} (see Appendix~\ref{app:ConsGen}):

\vspace{-2mm}
\begin{theorem}\label{theo:consbase}
With a learning rate $\beta=1/\eta$, 
after $k$ global iterations of any realization of {\tt MH-FL}, an upper bound on $F(\mathbf{w}^{(k)}) - F(\mathbf{w}^{\star})$ is given in~\eqref{eq:ConsTh1}, where $\Phi = N_{{|\mathcal{L}|-1}}+N_{{|\mathcal{L}|-2}}+\cdots+N_{{1}}+1$ is the total number of nodes in the network besides the bottom layer and $\Xi^{(k-t)}$ is given by~\eqref{eq:Xi}.
\end{theorem}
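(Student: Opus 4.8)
The plan is to combine the standard descent lemma for strongly convex, smooth objectives with a careful accounting of the consensus errors that accumulate as the scaled parameters are relayed up the hierarchy. First I would write the actual global update executed by {\tt MH-FL}, namely $\mathbf{w}^{(k)} = \widetilde{\mathbf{w}}^{(k)}_0/D$, and expand $\widetilde{\mathbf{w}}^{(k)}_0$ recursively down through the layers using the sampling/scaling rules \eqref{eq:firstWeightUpdate}, \eqref{eq:secondWeightUpdate}, \eqref{eq:multiWeight}, and \eqref{eq:updateRootCons}. The key observation is that when all consensus errors $\mathbf{c}^{(k)}_{n'}$ vanish (e.g., all clusters EUT, or infinitely many D2D rounds), this telescoping collapses exactly to the ideal federated update $\mathbf{w}^{(k)} = \mathbf{w}^{(k-1)} - \beta \sum_{n\in\mathcal N_{|\mathcal L|}} |\mathcal D_n| \nabla f_n(\mathbf{w}^{(k-1)})/D = \mathbf{w}^{(k-1)} - \beta \nabla F(\mathbf{w}^{(k-1)})$ by \eqref{eq:gradBasic}. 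Hence I would write $\mathbf{w}^{(k)} = \mathbf{w}^{(k-1)} - \beta \nabla F(\mathbf{w}^{(k-1)}) + \mathbf{e}^{(k)}$, where $\mathbf{e}^{(k)}$ is a weighted sum of the per-cluster consensus errors, with the weight of a cluster $C$ at layer $L_j$ being the product of the children-counts $|\mathcal Q^{(k)}(\cdot)|$ along the path from the root to $C$ (this product is exactly what appears in the nested sums $(b)$ of \eqref{eq:Xi}), divided by $D$.

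Second, I would invoke the textbook one-step contraction: with $\beta = 1/\eta$, for the exact gradient step $\mathbf{y} := \mathbf{w}^{(k-1)} - \frac{1}{\eta}\nabla F(\mathbf{w}^{(k-1)})$ one has $F(\mathbf{y}) - F(\mathbf{w}^*) \le (1-\mu/\eta)(F(\mathbf{w}^{(k-1)}) - F(\mathbf{w}^*))$. Then, using $\eta$-smoothness to bound $F(\mathbf{w}^{(k)}) = F(\mathbf{y} + \mathbf{e}^{(k)}) \le F(\mathbf{y}) + \nabla F(\mathbf{y})^\top \mathbf{e}^{(k)} + \frac{\eta}{2}\|\mathbf{e}^{(k)}\|^2$, and absorbing the cross term appropriately (either by a Young-type inequality or by the cleaner route of bounding $F(\mathbf{w}^{(k)}) \le F(\mathbf{y}) + \frac{\eta}{2}\|\mathbf{w}^{(k)} - \mathbf{y} + \frac{1}{\eta}\nabla F(\mathbf{y})\|^2 - \frac{1}{2\eta}\|\nabla F(\mathbf{y})\|^2$ then dropping the negative term — whichever makes the constant $\eta/(2D^2)$ come out exactly), I would arrive at $F(\mathbf{w}^{(k)}) - F(\mathbf{w}^*) \le (1-\mu/\eta)(F(\mathbf{w}^{(k-1)}) - F(\mathbf{w}^*)) + \frac{\eta}{2}\|\mathbf{e}^{(k)}\|^2$. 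Unrolling this recursion over $k$ gives term $(a)$ plus $\frac{\eta}{2}\sum_{t=0}^{k-1}(1-\mu/\eta)^t \|\mathbf{e}^{(k-t)}\|^2$, so it remains to show $\|\mathbf{e}^{(k-t)}\|^2 \le \frac{\Phi}{D^2}\Xi^{(k-t)}$.

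Third, I would bound $\|\mathbf{e}^{(k)}\|^2$. Writing $\mathbf{e}^{(k)} = \frac{1}{D}\sum_{C} w_C \,\mathbbm 1^{(k)}_{\{C\}} \mathbf{c}^{(k)}_{\text{sampled in } C}$ with path-product weights $w_C$, apply Cauchy--Schwarz (or the inequality $\|\sum_{i=1}^\Phi \mathbf{a}_i\|^2 \le \Phi\sum_i \|\mathbf{a}_i\|^2$, which is where the factor $\Phi$ — the number of non-bottom-layer nodes, one per cluster — enters) to get $\|\mathbf{e}^{(k)}\|^2 \le \frac{\Phi}{D^2}\sum_C w_C^2\, \mathbbm 1^{(k)}_{\{C\}} \|\mathbf{c}^{(k)}_{\cdot}\|^2$. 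Then I bound each consensus error: by \eqref{eq:consensus} and Assumption~\ref{assump:cons}(iv), after $\theta^{(k)}_C$ rounds the deviation of any node's parameter from the cluster average is controlled by $(\lambda^{(k)}_C)^{\theta^{(k)}_C}$ times the initial spread of the scaled parameters, and by Definition~\ref{def:clustDiv} that initial spread is at most $\Upsilon^{(k)}_C$; squaring yields $\|\mathbf{c}^{(k)}_\cdot\|^2 \le (\lambda^{(k)}_C)^{2\theta^{(k)}_C}(\Upsilon^{(k)}_C)^2$. Since $w_C$ for a cluster whose parent is $Q(a_r)$ equals the path product $\prod |\mathcal Q^{(k)}(a_\ell)|$ and $w_C^2 \le |\mathcal Q^{(k)}(a_r)|^2 \cdot(\text{rest})$, combined with the one extra factor $|\mathcal Q^{(k)}(a_r)|$ absorbed from re-indexing the nested sum into a sum over nodes at the layer above gives the $|\mathcal Q|^3$ power, and summing over all layers $r = 1,\dots,|\mathcal L|-1$ together with the top cluster $L_{1,1}$ reproduces exactly the nested-sum structure of $\Xi^{(k-t)}$ in \eqref{eq:Xi}. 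Substituting back completes the bound.

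The main obstacle I anticipate is the bookkeeping in the third step: correctly identifying the path-product weight attached to each cluster's consensus error in the telescoped expansion of $\widetilde{\mathbf{w}}^{(k)}_0$, and verifying that the combination of Cauchy--Schwarz (contributing $\Phi$), the squared path product (contributing $|\mathcal Q|^2$ per factor), and the index change from ``sum over children'' to ``sum over the parent node'' (contributing the extra $|\mathcal Q|$, hence cubes) lines up precisely with \eqref{eq:Xi} — in particular handling the boundary cases at the top layer ($L_{1,1}$, with $|\mathcal L^{(k)}_{1,1}|^3$) and the absence of a consensus error at the bottom layer. The descent-lemma part is routine; all the subtlety is in the multi-stage error propagation and matching constants.
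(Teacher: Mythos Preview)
Your high-level plan---write $\mathbf{w}^{(k)}=\mathbf{w}^{(k-1)}-\beta\nabla F(\mathbf{w}^{(k-1)})+\mathbf{e}^{(k)}$, obtain a one-step recursion, unroll, then bound $\|\mathbf{e}^{(k)}\|^2$ via Cauchy--Schwarz over $\Phi$ summands and a per-cluster consensus-contraction estimate---matches the paper. However, two of your concrete steps are off, and a third is needlessly complicated.

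\textbf{(i) The per-cluster weights are not path products.} When you telescope \eqref{eq:firstWeightUpdate}--\eqref{eq:updateRootCons} up to the root, the intermediate scalings cancel: each parent multiplies by $|\mathcal{Q}^{(k)}(\cdot)|$ and the next consensus average divides by the next cluster size. The resulting coefficient on $\mathbf{c}^{(k)}_{a'_p}$ in $\mathbf{e}^{(k)}$ is simply $|\mathcal{Q}^{(k)}(a_{p-1})|/D$ (the size of that one cluster), not a product along the path. The nested sums in \eqref{eq:Xi} index \emph{which} cluster you are talking about, they do not contribute multiplicative factors.

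\textbf{(ii) Your consensus-error bound is missing a factor of $|\mathcal{C}^{(k)}|$.} The paper bounds $\|\mathbf{c}^{(k)}_{a'}\|^2$ by the trace of $(\mathbf{E}_C^{(k)})^\top\mathbf{E}_C^{(k)}$, yielding
\[
\|\mathbf{c}^{(k)}_{a'}\|^2 \;\le\; (\lambda^{(k)}_C)^{2\theta^{(k)}_C}\,|\mathcal{C}^{(k)}|\,(\Upsilon^{(k)}_C)^2,
\]
not just $(\lambda^{(k)}_C)^{2\theta^{(k)}_C}(\Upsilon^{(k)}_C)^2$. Consequently the cube $|\mathcal{Q}|^3$ in \eqref{eq:Xi} arises as (weight)$^2\times$(this extra $|\mathcal{C}|$) $=|\mathcal{Q}|^2\cdot|\mathcal{Q}|$, \emph{not} from any ``re-indexing'' of the nested sum as you suggest.

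\textbf{(iii) The descent step is simpler than you make it.} There is no need to introduce the intermediate point $\mathbf{y}$ and then wrestle with a cross term $\nabla F(\mathbf{y})^\top\mathbf{e}^{(k)}$. Expand the smoothness upper bound \emph{directly around $\mathbf{w}^{(k-1)}$}: the cross term then carries the coefficient $(1-\beta\eta)$, which is \emph{exactly zero} for $\beta=1/\eta$. This gives $F(\mathbf{w}^{(k)})-F(\mathbf{w}^{(k-1)})\le -\tfrac{1}{2\eta}\|\nabla F(\mathbf{w}^{(k-1)})\|^2+\tfrac{\eta}{2}\|\mathbf{e}^{(k)}\|^2$ with the correct constant immediately, after which PL and unrolling proceed as you describe.
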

\vspace{-3.5mm}
\begin{remark}
Each nested sum in~\eqref{eq:Xi}, e.g., (b),  encompasses the nodes located in a path starting from the main server and ending at a node in one of the layers. Different nested sums capture paths with different lengths. Also, each summand, e.g., (c), corresponds to the characteristics of the child cluster, e.g., $Q(a_{|\mathcal{L}|-1})$ in (c), of the node in the last index of its associated nested sum, e.g., $a_{|\mathcal{L}|-1}$ in (b). This contains the operating mode, number of nodes, upper bound of spectral radius, number of D2D rounds, and divergence of parameters.
\end{remark}
\vspace{-1.5mm}
\noindent \textbf{Main takeaways.} The bound in~\eqref{eq:ConsTh1},~\eqref{eq:Xi} quantifies how the convergence is dependent on several learning and system parameters. In particular, we see a dependence on (i) the characteristics of the loss function (i.e., $\eta, \mu$), (ii) the number of nodes and clusters at each network layer (through the $|\mathcal{Q}|$ terms), (iii) the topology and characteristics of the communication graph among the nodes inside the clusters, captured via the spectral radius bounds (i.e., $\lambda$), (iv) the number of D2D rounds performed at each cluster (i.e., the $\theta$), and (v) the divergence among the node parameters at each cluster (i.e., $\Upsilon$). Given a fixed set of parameters at iteration $k - 1$, we can observe that increasing the number of D2D rounds at each cluster in iteration $k$ results in a smaller bound (since $\theta$ is appeared as the exponent of $\lambda<1$), and thus a better expected model loss, as we would expect. Furthermore, for a fixed number of D2D rounds, a smaller spectral radius, corresponding to a better connected cluster, results in a smaller bound. On the other hand, larger parameter divergence results in a worse bound. 

Term (a) in \eqref{eq:ConsTh1} corresponds to the case with no consensus error in the system, i.e., when all LUT clusters have {\small $\theta^{(k)}_{{C}} \rightarrow \infty$} (infinite D2D rounds) or when the network consists of all EUT clusters. Since {\small $1-{\mu}/{\eta} < 1$}, the overall rate of convergence of {\tt MH-FL} is at best linear with rate {\small $1 - \mu/\eta$}.  However, achieving this would incur prohibitively long delays, motivating us to study the effects of the number of D2D rounds. Also, note that the terms {\small $1-\big({\mu}/{\eta}\big)^t$} inside the summation have a dampening effect: at global iteration $k$, the errors from global iteration $t < k$ are multiplied by {\small $1-\big({\mu}/{\eta}\big)^{k-t}$}, meaning the initial errors for $t \ll k$ are dampened by very small coefficients, while the final errors have a more pronounced effect on the bound. At first glance, this seems to suggest that at higher global iteration indices, more D2D rounds are needed to reduce the errors. However, especially upon having i.i.d datasets, we can expect the parameters of the end devices to become more similar to one another with increasing global iteration count, which in turn would decrease the divergence (i.e., $\Upsilon$) within clusters over time. Further, the connectivity of the clusters (captured by $\lambda$) will change at different layers of the hierarchy, causing the spectral radius to vary. This motivates us in Sec.~\ref{ssec:control} to consider adapting the D2D rounds over two dimensions: time (i.e., global iterations) and space (i.e., network layers).

\vspace{-.1mm}
\subsubsection{Asymptotic optimality}
  We now explicitly connect the number of D2D rounds performed at different clusters with the asymptotic optimality of {\tt MH-FL} (see Appendix~\ref{app:boundedConsConv}).
  
  \vspace{-1.5mm} 
  \begin{proposition}\label{prop:boundedConsConv}
  For any realization of {\tt MH-FL}, if the number of D2D rounds at different clusters in different layers of the network satisfies the following criterion ($ \forall k,i,j$):
  
  \vspace{-2mm}
  {\small
  \begin{equation}\label{eq:IterFogLCons}
  \hspace{-19mm}
  \begin{cases}
     \theta^{(k)}_{{L}_{j,i}} \hspace{-1mm}\geq\hspace{-1mm}\left\lceil{\hspace{-.5mm}\frac{\log\left({\sigma_{j}}\right)-2\log\left({\big\vert\mathcal{L}^{(k)}_{j,i}\big\vert^{\frac{3}{2}} \Upsilon^{(k)}_{{L}_{j,i}}} \right)}{2\log\left(\lambda^{(k)}_{{L}_{j,i}} \right)}\hspace{-.5mm}}\right\rceil\hspace{-1mm}, &\hspace{-2mm} \textrm{if}~ \sigma_{j}\leq {\big\vert\mathcal{L}^{(k)}_{j,i}\big\vert^3 \left(\Upsilon^{(k)}_{{L}_{j,i}}\right)^2}\\
      \theta^{(k)}_{{L}_{j,i}}\hspace{-1mm}\geq 0, &\hspace{-2mm} \textrm{otherwise}
     \end{cases}
     \hspace{-15mm}
  \end{equation} 
  }
  \vspace{-3mm}
  
  \noindent for non-negative constants $\sigma_1,\cdots,\sigma_{|\mathcal{L}|}$, then the asymptotic upper bound on the distance from the optimal is given by
  \vspace{-1.5mm}
  \begin{equation}\label{eq:asympGapFinite}
   \lim_{k\rightarrow \infty} F(\mathbf{w}^{(k)})-F(\mathbf{w}^{*})\leq \frac{\eta^2{\Phi}}{2\mu D^2} \sum_{j=0}^{|\mathcal{L}|-1}\sigma_{j+1} N_j.
  \end{equation}
  \end{proposition}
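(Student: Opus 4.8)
The plan is to substitute the D2D-rounds criterion \eqref{eq:IterFogLCons} directly into the general bound of Theorem~\ref{theo:consbase}, i.e.\ \eqref{eq:ConsTh1}--\eqref{eq:Xi}, reduce $\Xi^{(k-t)}$ to a constant, and then send $k\to\infty$. Concretely, the work splits into three elementary steps: (i) translate \eqref{eq:IterFogLCons} into a per-cluster bound on the summands appearing in \eqref{eq:Xi}; (ii) sum these bounds over the nested path sums in \eqref{eq:Xi} to obtain a $k$-independent upper bound on $\Xi^{(k-t)}$; (iii) evaluate the resulting geometric series in the limit.

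For step (i), I claim the criterion \eqref{eq:IterFogLCons} is designed so that, for every iteration $k$ and every cluster $L_{j,i}$,
\[
\mathbbm{1}^{(k)}_{\{L_{j,i}\}}\,\big\vert\mathcal{L}^{(k)}_{j,i}\big\vert^{3}\big(\lambda^{(k)}_{L_{j,i}}\big)^{2\theta^{(k)}_{L_{j,i}}}\big(\Upsilon^{(k)}_{L_{j,i}}\big)^{2}\;\le\;\sigma_{j}.
\]
In the first case of \eqref{eq:IterFogLCons} this follows by taking logarithms of the target inequality $\vert\mathcal{L}^{(k)}_{j,i}\vert^{3}(\lambda^{(k)}_{L_{j,i}})^{2\theta^{(k)}_{L_{j,i}}}(\Upsilon^{(k)}_{L_{j,i}})^{2}\le\sigma_j$, isolating $\theta^{(k)}_{L_{j,i}}$, and noting that $\lambda^{(k)}_{L_{j,i}}<1$ makes $\log\lambda^{(k)}_{L_{j,i}}<0$, which reverses the inequality and yields exactly the ceiling expression in \eqref{eq:IterFogLCons} (the ceiling is admissible because $\theta\in\mathbb{N}$ and $(\lambda^{(k)}_{L_{j,i}})^{2\theta}$ is nonincreasing in $\theta$). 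In the second case $\sigma_j>\vert\mathcal{L}^{(k)}_{j,i}\vert^{3}(\Upsilon^{(k)}_{L_{j,i}})^{2}$ and $\lambda^{(k)}_{L_{j,i}}<1$, so the left-hand side is $\le\vert\mathcal{L}^{(k)}_{j,i}\vert^{3}(\Upsilon^{(k)}_{L_{j,i}})^{2}<\sigma_j$ already with $\theta^{(k)}_{L_{j,i}}=0$; and if the cluster is in EUT mode then $\mathbbm{1}^{(k)}_{\{L_{j,i}\}}=0$ and the left-hand side is $0\le\sigma_j$ since the $\sigma$'s are nonnegative.

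For step (ii), I use the augmented-graph structure: the macro-structure is a tree and each parent node owns a single child cluster, so as $a_1$ ranges over $\mathcal{L}^{(k-t)}_{1,1}$, then $a_2$ over $\mathcal{Q}^{(k-t)}(a_1)$, etc., the innermost index $a_\ell$ ranges over exactly all nodes of $\mathcal{N}_\ell$ ($N_\ell$ of them), and the associated summand in \eqref{eq:Xi} is precisely the per-cluster quantity of step (i) for the unique child cluster $Q(a_\ell)$ in layer $L_{\ell+1}$, hence $\le\sigma_{\ell+1}$. Therefore the nested sum with innermost index at layer $L_\ell$ is $\le N_\ell\sigma_{\ell+1}$ for $\ell=1,\dots,|\mathcal{L}|-1$, while the lone final term of \eqref{eq:Xi} (the cluster $L_{1,1}$) is $\le\sigma_1=N_0\sigma_1$. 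Adding these, $\Xi^{(k-t)}\le\sum_{j=0}^{|\mathcal{L}|-1}\sigma_{j+1}N_j$, a constant independent of $k$ and $t$. Plugging this into \eqref{eq:ConsTh1} gives
\[
F(\mathbf{w}^{(k)})-F(\mathbf{w}^{*})\le\Big(1-\tfrac{\mu}{\eta}\Big)^{k}\!\big(F(\mathbf{w}^{(0)})-F(\mathbf{w}^{*})\big)+\frac{\eta\Phi}{2D^{2}}\Big(\sum_{j=0}^{|\mathcal{L}|-1}\sigma_{j+1}N_j\Big)\sum_{t=0}^{k-1}\Big(1-\tfrac{\mu}{\eta}\Big)^{t}.
\]
Since $\eta>\mu>0$ gives $0<1-\mu/\eta<1$, the first term vanishes as $k\to\infty$ and the geometric series converges to $(\mu/\eta)^{-1}=\eta/\mu$, so the limit is $\frac{\eta\Phi}{2D^2}\cdot\frac{\eta}{\mu}\cdot\sum_{j=0}^{|\mathcal{L}|-1}\sigma_{j+1}N_j$, which is \eqref{eq:asympGapFinite}.

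The only genuinely delicate part is the bookkeeping in step (ii): one must verify that each nested path sum in \eqref{eq:Xi} collapses to a plain sum over $\mathcal{N}_\ell$ (so the multiplicity is exactly $N_\ell$, not larger), and that the factors $\vert\mathcal{Q}^{(k-t)}(a_\ell)\vert$, $\lambda^{(k-t)}_{Q(a_\ell)}$, $\theta^{(k-t)}_{Q(a_\ell)}$, $\Upsilon^{(k-t)}_{Q(a_\ell)}$ inside the summand are exactly $\vert\mathcal{L}^{(k-t)}_{\ell+1,\cdot}\vert$, $\lambda^{(k-t)}_{L_{\ell+1,\cdot}}$, etc., so that the per-cluster estimate from step (i) applies verbatim with the layer index $j=\ell+1$. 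Everything else — the logarithm manipulation with the sign flip from $\lambda<1$, and the geometric-series limit — is routine.
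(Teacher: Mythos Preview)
Your proposal is correct and follows essentially the same route as the paper's proof: translate \eqref{eq:IterFogLCons} into the per-cluster bound $|\mathcal{L}^{(k)}_{j,i}|^3(\lambda^{(k)}_{L_{j,i}})^{2\theta^{(k)}_{L_{j,i}}}(\Upsilon^{(k)}_{L_{j,i}})^2\le\sigma_j$ via the sign flip from $\log\lambda<0$, substitute into \eqref{eq:Xi} so each nested sum collapses to $N_\ell\sigma_{\ell+1}$, and then sum the resulting geometric series in \eqref{eq:ConsTh1} to obtain the factor $\eta/\mu$. Your explicit remark that the tree structure makes each nested path sum a plain sum over $\mathcal{N}_\ell$ with multiplicity exactly $N_\ell$ is precisely the bookkeeping the paper uses (after bounding the indicator $\mathbbm{1}$ by $1$).
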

  \vspace{-1.5mm}
  
  Proposition~\ref{prop:boundedConsConv} gives a guideline for designing the number of D2D rounds at different network clusters over time to achieve a desired (finite) upper bound on the optimality gap. It can be seen that optimality is tied to our introduced auxiliary variables $\{\sigma_j\}_{j=1}^{|\mathcal{L}|}$, which we refer to as \textit{D2D control parameters}. 
  
  To eliminate the existence of a constant optimality gap in~\eqref{eq:asympGapFinite}, we obtain an extra condition on the tuning of these parameters and make them time-varying to guarantee a linear convergence to the optimal solution (see Appendix~\ref{app:ConLinCons}):

\vspace{-1mm}
 \begin{proposition} \label{prop:FogConv}
For any realization of {\tt MH-FL}, suppose that the number of D2D rounds  at different clusters of different network layers satisfies the following criterion ($\forall k,i,j$):
 \vspace{-2mm}
 
 {\small
 \begin{equation}\label{eq:IterFogLCons2}
  \hspace{-19mm}
  \begin{cases}  
   \hspace{-.5mm}  \theta^{(k)}_{{L}_{j,i}}\hspace{-.5mm}\hspace{-1mm} \geq\hspace{-1mm} \hspace{-.5mm}\left\lceil{\hspace{-.5mm} \frac{\log\left({\sigma^{(k)}_{j}}\right)-2\log\left({\big\vert\mathcal{L}^{(k)}_{j,i}\big\vert^{\frac{3}{2}} \Upsilon^{(k)}_{{L}_{j,i}}} \right)}{2\log\left(\lambda^{(k)}_{{L}_{j,i}} \right)} }\hspace{-.5mm}\right\rceil\hspace{-1mm}, & \hspace{-3mm}\textrm{if}~ \sigma^{(k)}_{j}\hspace{-.5mm}\leq\hspace{-.5mm} {\big\vert\mathcal{L}^{(k)}_{j,i}\big\vert^3\hspace{-1mm} \left(\hspace{-.2mm}\Upsilon^{(k)}_{{L}_{j,i}}\hspace{-.2mm}\right)^2}\\
         \hspace{-.5mm} \theta^{(k)}_{{L}_{j,i}}\hspace{-1mm}\geq 0, &\hspace{-2mm} \textrm{otherwise}
     \end{cases}
     \hspace{-15mm}
  \end{equation}
  }
  \vspace{-2mm}
  
  \noindent where the non-negative constants $\sigma^{(k)}_1,\cdots,\sigma^{(k)}_{|\mathcal{L}|}$ satisfy
  \vspace{-2.5mm}
\begin{equation}\label{eq:ineqWeightsofNodes}
  \sum_{j=0}^{|\mathcal{L}|-1}\sigma^{(k)}_{j+1} N_j\leq  \frac{D^2\mu  ({\mu}-\delta\eta)}{\eta^4{\Phi}} \norm{\nabla F(\mathbf{w}^{(k-1)})}^2
  \vspace{-2.5mm}
\end{equation}
\vspace{-3mm}

\noindent for $0<\delta\leq \mu/\eta$. Then, we have
\vspace{-1.2mm}
\begin{equation}\label{eq:FogLinCon}
   \hspace{-3mm} {F(\mathbf{w}^{(k+1)})-F(\mathbf{w}^*)}{} \leq \hspace{-.5mm}(1-\delta)\hspace{-.5mm}\left(\hspace{-.5mm}F(\mathbf{w}^{(k)})-F(\mathbf{w}^*)\hspace{-.5mm}\right)\hspace{-.7mm},~\hspace{-.5mm}\forall k,\hspace{-3mm}
   \vspace{-3mm}
\end{equation}
which implies a linear convergence of {\tt MH-FL} and $\lim_{k\rightarrow \infty}{F(\mathbf{w}^{(k)})-F(\mathbf{w}^*)}=0$.
\end{proposition}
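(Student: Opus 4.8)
\textbf{Proof proposal for Proposition~\ref{prop:FogConv}.}

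The plan is to start from the general convergence bound in Theorem~\ref{theo:consbase}, specialize it to a single-step recursion, and then choose the $\sigma^{(k)}_j$ so that the consensus-error term is dominated by a fraction of the gradient norm, which by strong convexity can be folded back into the optimality gap. Concretely, I would first observe that the one-step version of~\eqref{eq:ConsTh1} (taking $k-1$ as the starting point and running a single global iteration with $\beta = 1/\eta$) reads
\begin{equation*}
F(\mathbf{w}^{(k)}) - F(\mathbf{w}^{*}) \leq \left(1 - \tfrac{\mu}{\eta}\right)\left(F(\mathbf{w}^{(k-1)}) - F(\mathbf{w}^{*})\right) + \frac{\eta \Phi}{2D^2}\, \Xi^{(k)}.
\end{equation*}
The first step is then to bound $\Xi^{(k)}$ in~\eqref{eq:Xi}: the criterion~\eqref{eq:IterFogLCons2} on $\theta^{(k)}_{{L}_{j,i}}$ is chosen precisely so that, whenever the ``if'' branch is active, $|\mathcal{L}^{(k)}_{j,i}|^3 (\lambda^{(k)}_{{L}_{j,i}})^{2\theta^{(k)}_{{L}_{j,i}}} (\Upsilon^{(k)}_{{L}_{j,i}})^2 \leq \sigma^{(k)}_j$, and in the ``otherwise'' branch the same inequality holds trivially since then $\sigma^{(k)}_j$ already exceeds $|\mathcal{L}^{(k)}_{j,i}|^3 (\Upsilon^{(k)}_{{L}_{j,i}})^2 \geq |\mathcal{L}^{(k)}_{j,i}|^3 (\lambda^{(k)}_{{L}_{j,i}})^{2\theta^{(k)}_{{L}_{j,i}}}(\Upsilon^{(k)}_{{L}_{j,i}})^2$. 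Substituting this cluster-wise bound into each nested sum of~\eqref{eq:Xi}, every summand associated with a cluster at layer $L_{j+1}$ contributes at most $\sigma^{(k)}_{j+1}$, and the number of such summands (paths from the root down to a node in $L_j$) is exactly $N_j$; this collapses the whole of $\Xi^{(k)}$ to $\sum_{j=0}^{|\mathcal{L}|-1}\sigma^{(k)}_{j+1} N_j$, mirroring the argument already used for Proposition~\ref{prop:boundedConsConv}.

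The second step is to invoke the new condition~\eqref{eq:ineqWeightsofNodes}, which gives
\begin{equation*}
\frac{\eta \Phi}{2D^2}\sum_{j=0}^{|\mathcal{L}|-1}\sigma^{(k)}_{j+1} N_j \leq \frac{\mu(\mu - \delta\eta)}{2\eta^3}\,\norm{\nabla F(\mathbf{w}^{(k-1)})}^2.
\end{equation*}
Now I use $\mu$-strong convexity in the form $\norm{\nabla F(\mathbf{w}^{(k-1)})}^2 \geq 2\mu\big(F(\mathbf{w}^{(k-1)}) - F(\mathbf{w}^{*})\big)$ --- wait, that is the wrong direction; strong convexity actually gives the Polyak--Łojasiewicz-type inequality $\tfrac{1}{2}\norm{\nabla F(\mathbf{w}^{(k-1)})}^2 \geq \mu\big(F(\mathbf{w}^{(k-1)}) - F(\mathbf{w}^{*})\big)$, so I should instead upper-bound the gradient norm. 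The correct move is $\eta$-smoothness: $\norm{\nabla F(\mathbf{w}^{(k-1)})}^2 \leq 2\eta\big(F(\mathbf{w}^{(k-1)}) - F(\mathbf{w}^{*})\big)$. Plugging this in, the consensus term is at most $\tfrac{\mu(\mu-\delta\eta)}{\eta^2}\big(F(\mathbf{w}^{(k-1)}) - F(\mathbf{w}^{*})\big)$. Combining with the recursion,
\begin{equation*}
F(\mathbf{w}^{(k)}) - F(\mathbf{w}^{*}) \leq \left(1 - \tfrac{\mu}{\eta} + \tfrac{\mu(\mu - \delta\eta)}{\eta^2}\right)\big(F(\mathbf{w}^{(k-1)}) - F(\mathbf{w}^{*})\big),
\end{equation*}
and a direct simplification shows $1 - \tfrac{\mu}{\eta} + \tfrac{\mu(\mu-\delta\eta)}{\eta^2} = 1 - \tfrac{\mu}{\eta} + \tfrac{\mu^2}{\eta^2} - \tfrac{\mu\delta}{\eta} = 1 - \delta$ exactly when the cross terms cancel --- I will need to verify the algebra carefully, but the constants in~\eqref{eq:ineqWeightsofNodes} are clearly engineered so that this reduces to precisely $1-\delta$. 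This yields~\eqref{eq:FogLinCon}, and since $0 < \delta \leq \mu/\eta < 1$ the factor $1-\delta \in [0,1)$, so iterating gives $F(\mathbf{w}^{(k)}) - F(\mathbf{w}^{*}) \leq (1-\delta)^k\big(F(\mathbf{w}^{(0)}) - F(\mathbf{w}^{*})\big) \to 0$.

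The main obstacle I anticipate is not any single inequality but the bookkeeping in the first step: showing rigorously that the telescoping nested sums in~\eqref{eq:Xi}, taken over all path-lengths from $1$ up to $|\mathcal{L}|$, reassemble into $\sum_{j=0}^{|\mathcal{L}|-1}\sigma^{(k)}_{j+1} N_j$ requires carefully matching each layer's contribution with the count of root-to-node paths terminating at that layer, and handling the EUT clusters (where the indicator $\mathbbm{1}^{(k)}$ kills the term) consistently. A secondary subtlety is making sure the ``otherwise'' branch of~\eqref{eq:IterFogLCons2} is genuinely covered --- i.e., that $\theta \geq 0$ suffices there --- which relies on $\lambda^{(k)}_{{L}_{j,i}} < 1$ from Assumption~\ref{assump:cons}(iv) so that $(\lambda^{(k)}_{{L}_{j,i}})^{2\theta} \leq 1$. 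Once those combinatorial and edge-case points are pinned down, the remaining smoothness/convexity manipulation and the constant-chasing to land on exactly $1-\delta$ are routine. Much of this can presumably be imported wholesale from the proof of Proposition~\ref{prop:boundedConsConv}, which already performs the $\Xi^{(k)}$-collapsing step in the time-invariant setting.
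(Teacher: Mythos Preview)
Your plan is essentially the paper's own argument for the first half: the one-step recursion from Theorem~\ref{theo:consbase} together with the cluster-wise bound from~\eqref{eq:IterFogLCons2} collapses $\Xi^{(k)}$ to $\sum_{j=0}^{|\mathcal{L}|-1}\sigma^{(k)}_{j+1}N_j$, exactly as in the proof of Proposition~\ref{prop:boundedConsConv}, and your treatment of the two branches of~\eqref{eq:IterFogLCons2} (including the ``otherwise'' case via $\lambda<1$) matches the paper.

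The one place you diverge is the gradient-to-gap conversion. You use the smoothness bound $\norm{\nabla F(\mathbf{w}^{(k-1)})}^2\le 2\eta\big(F(\mathbf{w}^{(k-1)})-F(\mathbf{w}^{*})\big)$, which leads to a contraction factor $1-\tfrac{\mu}{\eta}+\tfrac{\mu(\mu-\delta\eta)}{\eta^2}$. Contrary to your expectation, this does \emph{not} simplify to exactly $1-\delta$ (that only happens at $\delta=\mu/\eta$); however, since $\mu\le\eta$ and $\mu-\delta\eta\ge 0$ one has $\tfrac{\mu(\mu-\delta\eta)}{\eta^2}\le\tfrac{\mu-\delta\eta}{\eta}=\tfrac{\mu}{\eta}-\delta$, so your factor is $\le 1-\delta$ and the proposition still follows. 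The paper instead chains $\eta$-smoothness with $\mu$-strong convexity through $\norm{\mathbf{w}^{(k-1)}-\mathbf{w}^*}$, obtaining the looser bound $\norm{\nabla F(\mathbf{w}^{(k-1)})}^2\le \tfrac{2\eta^2}{\mu}\big(F(\mathbf{w}^{(k-1)})-F(\mathbf{w}^{*})\big)$; with this, the consensus term becomes exactly $(\tfrac{\mu}{\eta}-\delta)\big(F(\mathbf{w}^{(k-1)})-F(\mathbf{w}^{*})\big)$ and the factor is precisely $1-\delta$. So your route is actually a hair sharper, while the paper's route is the one the constants in~\eqref{eq:ineqWeightsofNodes} were engineered for.
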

\vspace{-1mm}

\begin{table*}[t]
\begin{minipage}{0.99\textwidth}

\end{minipage}
\noindent\begin{minipage}{.65\linewidth}
{\scriptsize
 \begin{equation}\label{eq:kGapCons}
     \hspace{-8mm}\kappa\geq \left\lceil{\frac{\log\left({\epsilon-\frac{\eta^2{\Phi}}{2 \mu D^2}\sum_{j=0}^{|\mathcal{L}|-1}\sigma_{j+1} N_j}\right) -\log\left( {F(\mathbf{w}^{(0)})-F(\mathbf{w}^*)-\frac{\eta^2{\Phi}}{2\mu D^2}\sum_{j=0}^{|\mathcal{L}|-1}\sigma_{j+1} N_j} \right)}{\log\left( 1-\mu/\eta\right)}}\right\rceil
     \hspace{-4mm}
\end{equation}
}
\end{minipage}%
\begin{minipage}{.33\linewidth}
\vspace{1mm}
 {\scriptsize
 \begin{equation}\label{cor:klieaner}
\kappa\geq \left\lceil{\frac{\log(\epsilon)-\log(F(\mathbf{w}^{(0)})-F(\mathbf{w}^{*}))}{\log(1-\delta)}}\right\rceil \hspace{-4mm}
\end{equation} }
\end{minipage}
\vspace{-.1mm}
\begin{minipage}{.99\linewidth}
 \hrulefill
\end{minipage}
\vspace{-8mm}
\end{table*}

Proposition~\ref{prop:FogConv} asserts that under a stricter tuning of the number of D2D rounds at different layers, i.e., \eqref{eq:IterFogLCons2} and~\eqref{eq:ineqWeightsofNodes}, convergence to the optimal solution can be guaranteed with a rate that is at most $1 - \mu/\eta$ according to~\eqref{eq:FogLinCon}. Furthermore, the number of D2D rounds are always finite when all the D2D tuning variables are greater than zero;  according to~\eqref{eq:ineqWeightsofNodes}, this can be always satisfied until reaching the optimal point (the gradient of loss becomes zero) where the algorithm will stop. Note that Proposition~\ref{prop:FogConv}'s condition boosts the required $\theta^{(k)}_{{L}_{j,i}}$ over time, since the norm of the gradient in \eqref{eq:ineqWeightsofNodes} decreases over time, in turn lowering the values of $\{\sigma^{(k)}_j\}_{j=1}^{|\mathcal{L}|}$. In Proposition~\ref{prop:boundedConsConv}, by contrast, the D2D rounds will become \textit{tapered} over time (i.e, they will diminish over time), since $\sigma_j$ is fixed over $k$ and the divergence of the parameters is expected to decrease over global iterations, especially when dealing with i.i.d data. Our experiments in Sec.~\ref{sec:num-res} verify these effects.

Proposition~\ref{prop:FogConv}'s result also assumes knowledge of the global loss gradient $\norm{\nabla F(\mathbf{w}^{(k-1)})}$, which is not known at the beginning of global iteration $k$, where $\mathbf{w}^{(k-1)}$ is just sent down through the  layers. In Sec.~\ref{ssec:control}, we will develop an approximation technique for implementing this result in practice. Finally, note that in both Propositions~\ref{prop:boundedConsConv}\&\ref{prop:FogConv}, a  smaller spectral radius (corresponding to well connected clusters) is tied to a lower number of D2D rounds (note that $\lambda<1$).


\subsubsection{Relationship between global iterations and D2D rounds}
The following two corollaries to Propositions~\ref{prop:boundedConsConv}\&\ref{prop:FogConv} investigate the impact of the number of global iterations on the required D2D rounds, and vice versa. First, we obtain the number of D2D rounds required at different clusters to reach a desired accuracy in a desired global iteration (see Appendix~\ref{app:numIterCertainAccur}):

\vspace{-1.5mm}
\begin{corollary}
\label{cor:numIterCertainAccur}
Let {\small$\epsilon \in \big[(1-\frac{\mu}{\eta})^{\kappa}\big(F(\mathbf{w}^{(0)})-F(\mathbf{w}^{*})\big), \; F(\mathbf{w}^{(0)})-F(\mathbf{w}^{*}) \big)$}. To guarantee that {\tt MH-FL} obtains a solution to within $\epsilon$ of the optimal by global iteration $\kappa$, i.e., $F(\mathbf{w}^{(\kappa)}) - F(\mathbf{w}^{*}) \leq \epsilon$, a sufficient number of D2D rounds in different clusters of the network is given by either of the following conditions:
\begin{enumerate}[leftmargin=5mm]
    \item $\theta^{(k)}_{{L}_{j,i}}$, $\forall i,j,k$, given by~\eqref{eq:IterFogLCons}, where the values of $\sigma_1,\cdots,\sigma_{|\mathcal{L}|}$ satisfy the following inequality:
    \vspace{-1.5mm}
    \begin{equation}\label{eq:coeffgap}
    \hspace{-3mm}\sum_{j=0}^{|\mathcal{L}|-1} \sigma_{j+1} N_j \leq  {\frac{\epsilon-
     \left({1-\mu/\eta}\right)^{\kappa}(F(\mathbf{w}^{(0)})- F(\mathbf{w}^*)) }{ \left({1-(1-\mu/\eta)^{\kappa}}\right)\frac{\eta^2{\Phi}}{2\mu D^2} }}.
     \hspace{-2mm}
    \end{equation}
 \item $\theta^{(k)}_{{L}_{j,i}}$, $\forall i,j,k$, given by~\eqref{eq:IterFogLCons2}, where the values of $\sigma^{(k)}_1,\cdots,\sigma^{(k)}_{|\mathcal{L}|}$ satisfy~\eqref{eq:ineqWeightsofNodes} with $\delta$ given by
 \vspace{-1.5mm}
\begin{equation}\label{eq:deltaLinear}
\delta\geq 1-\sqrt[\leftroot{-3}\uproot{3}\kappa]{\frac{\epsilon}{F(\mathbf{w}^{(0)})-F(\mathbf{w}^*)}}.
\end{equation}
\end{enumerate}
\end{corollary}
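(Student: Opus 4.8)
\emph{Proof proposal.} The plan is to prove that each of the two conditions suffices by starting from an estimate already available in the excerpt and then solving for the free ``D2D control parameters''. For Condition~1 I would not use the asymptotic statement of Proposition~\ref{prop:boundedConsConv} directly, since we need a guarantee at the \emph{finite} iteration $\kappa$; instead I would go back to the general bound of Theorem~\ref{theo:consbase}, i.e.\ \eqref{eq:ConsTh1}--\eqref{eq:Xi}. The first step is to show that imposing \eqref{eq:IterFogLCons} controls every summand of $\Xi^{(k-t)}$ by the corresponding $\sigma$: rearranging the first branch of \eqref{eq:IterFogLCons} and using $\log\lambda^{(k)}_{{L}_{j,i}}<0$ (which flips the inequality when one multiplies through) gives $\big\vert\mathcal{L}^{(k)}_{j,i}\big\vert^{3}\big(\lambda^{(k)}_{{L}_{j,i}}\big)^{2\theta^{(k)}_{{L}_{j,i}}}\big(\Upsilon^{(k)}_{{L}_{j,i}}\big)^{2}\le\sigma_{j}$, while in the second branch $\theta^{(k)}_{{L}_{j,i}}\ge 0$ forces $\big(\lambda^{(k)}_{{L}_{j,i}}\big)^{2\theta}\le 1$, so the summand is at most $\big\vert\mathcal{L}^{(k)}_{j,i}\big\vert^{3}\big(\Upsilon^{(k)}_{{L}_{j,i}}\big)^{2}<\sigma_{j}$; either way a layer-$j$ cluster contributes at most $\sigma_j$. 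The second step is a counting argument on the nested sums of \eqref{eq:Xi}: each group of nested sums enumerates root-to-node paths whose endpoints lie in a fixed layer $L_j$, a path is determined by its endpoint, and the summand of that group depends on the child cluster of the endpoint (which sits in layer $L_{j+1}$); hence that group contributes at most $\sigma_{j+1}N_j$, and summing over groups gives the uniform-in-$t$ estimate $\Xi^{(k-t)}\le\sum_{j=0}^{|\mathcal L|-1}\sigma_{j+1}N_j$.

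Substituting this uniform bound into \eqref{eq:ConsTh1} with $\beta=1/\eta$ reduces the residual term to $\frac{\eta\Phi}{2D^{2}}\big(\sum_{j}\sigma_{j+1}N_j\big)\sum_{t=0}^{k-1}(1-\mu/\eta)^{t}$; evaluating the finite geometric series yields $F(\mathbf{w}^{(k)})-F(\mathbf{w}^{*})\le(1-\mu/\eta)^{k}\big(F(\mathbf{w}^{(0)})-F(\mathbf{w}^{*})\big)+\frac{\eta^{2}\Phi}{2\mu D^{2}}\big(1-(1-\mu/\eta)^{k}\big)\sum_{j}\sigma_{j+1}N_j$. Setting $k=\kappa$, requiring the right-hand side to be $\le\epsilon$, and isolating $\sum_{j}\sigma_{j+1}N_j$ gives exactly \eqref{eq:coeffgap}; the lower bound $\epsilon\ge(1-\mu/\eta)^{\kappa}(F(\mathbf{w}^{(0)})-F(\mathbf{w}^{*}))$ in the hypothesis is precisely what makes the numerator of \eqref{eq:coeffgap} nonnegative, so that admissible nonnegative $\sigma_j$ exist.

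For Condition~2 I would instead invoke Proposition~\ref{prop:FogConv}: under \eqref{eq:IterFogLCons2}--\eqref{eq:ineqWeightsofNodes} the one-step contraction $F(\mathbf{w}^{(k+1)})-F(\mathbf{w}^{*})\le(1-\delta)\big(F(\mathbf{w}^{(k)})-F(\mathbf{w}^{*})\big)$ holds for all $k$, so unrolling $\kappa$ times gives $F(\mathbf{w}^{(\kappa)})-F(\mathbf{w}^{*})\le(1-\delta)^{\kappa}\big(F(\mathbf{w}^{(0)})-F(\mathbf{w}^{*})\big)$. Demanding the right-hand side be $\le\epsilon$ and solving for $\delta$ yields $\delta\ge 1-\big(\epsilon/(F(\mathbf{w}^{(0)})-F(\mathbf{w}^{*}))\big)^{1/\kappa}$, which is \eqref{eq:deltaLinear}. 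It then remains to check that such a $\delta$ is admissible in the sense required by Proposition~\ref{prop:FogConv}, namely $0<\delta\le\mu/\eta$ (so that \eqref{eq:ineqWeightsofNodes} is meaningful): the upper bound $\epsilon<F(\mathbf{w}^{(0)})-F(\mathbf{w}^{*})$ gives $\delta>0$, and the lower bound on $\epsilon$ gives $\big(\epsilon/(F(\mathbf{w}^{(0)})-F(\mathbf{w}^{*}))\big)^{1/\kappa}\ge 1-\mu/\eta$, hence $\delta\le\mu/\eta$.

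\textbf{Main obstacle.} I expect the delicate part to be the reduction of $\Xi^{(k-t)}$: matching each nested-sum group in \eqref{eq:Xi} to its layer, verifying the per-summand bound in both branches of \eqref{eq:IterFogLCons} with the sign-reversal under $\log\lambda<0$, and confirming that the number of path-terms in the group associated with layer-$(j{+}1)$ clusters is exactly $N_j$ (so that the totals line up with $\Phi=\sum_{j=0}^{|\mathcal L|-1}N_j$). Everything after that is a geometric-series evaluation and an algebraic rearrangement. This reduction is essentially the one underlying the proof of Proposition~\ref{prop:boundedConsConv}, but stopped at finite $\kappa$ rather than passed to the limit, so it should largely be reusable.
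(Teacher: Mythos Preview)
Your proposal is correct and follows essentially the same approach as the paper. The paper's proof of the corollary simply quotes the finite-$\kappa$ inequality already derived en route to Proposition~\ref{prop:boundedConsConv} (namely \eqref{eq:simpConvCons}) and the unrolled contraction from Proposition~\ref{prop:FogConv}, then performs the same algebraic rearrangements you describe; your ``main obstacle'' step---bounding each summand of $\Xi^{(k-t)}$ by $\sigma_j$ and counting paths to get $\sum_j\sigma_{j+1}N_j$---is precisely that reduction, and you correctly note it is the proof of Proposition~\ref{prop:boundedConsConv} stopped before passing to the limit.
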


\vspace{-.8mm}
Second, we obtain the number of global iterations required to reach a desired accuracy for a predetermined policy of determining the D2D rounds in different clusters (see Appendix~\ref{app:numGlobIterCertainAccur}):

\vspace{-1.5mm}
\begin{corollary}
\label{cor:numGlobIterCertainAccur}
With $\epsilon$ as in Corollary~\ref{cor:numIterCertainAccur}, either of the following two conditions give a sufficient number of global iterations $\kappa$ to achieve $F(\mathbf{w}^{(\kappa)}) - F(\mathbf{w}^{*}) \leq \epsilon$:
\begin{enumerate}[leftmargin=5mm]
    \item If the $\theta^{(k)}_{{L}_{j,i}}$, $\forall i,j,k$, satisfy~\eqref{eq:IterFogLCons} given $\sigma_1,\cdots,\sigma_{|\mathcal{L}|}$, and $\epsilon\geq \frac{\eta^2{\Phi}}{2\mu D^2} \sum_{j=0}^{|\mathcal{L}|-1}\sigma_{j+1} N_j$, then $\kappa$ follows~\eqref{eq:kGapCons}.

 \item If the $\theta^{(k)}_{{L}_{j,i}}$, $\forall i,j,k$, satisfy~\eqref{eq:IterFogLCons2} and~\eqref{eq:ineqWeightsofNodes} given  $\sigma^{(k)}_1,\cdots,\sigma^{(k)}_{|\mathcal{L}|}$ and $\delta$, then $\kappa$ follows~\eqref{cor:klieaner}.
\end{enumerate}
\end{corollary}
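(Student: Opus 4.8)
\textbf{Proof proposal for Corollary~\ref{cor:numGlobIterCertainAccur}.}

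The plan is to treat the two cases separately, each time starting from the convergence guarantee established earlier in Sec.~\ref{ssec:analysis} and inverting it to solve for $\kappa$. For case~1, I would begin with Proposition~\ref{prop:boundedConsConv}: when the $\theta^{(k)}_{L_{j,i}}$ satisfy~\eqref{eq:IterFogLCons} with fixed constants $\sigma_1,\dots,\sigma_{|\mathcal{L}|}$, the per-iteration term $\Xi^{(k-t)}$ in~\eqref{eq:Xi} is uniformly bounded. Plugging this uniform bound into the general bound~\eqref{eq:ConsTh1} and summing the geometric series $\sum_{t=0}^{k-1}(1-\mu/\eta)^t$, I would obtain a non-asymptotic (finite-$k$) version of~\eqref{eq:asympGapFinite}, namely something of the form
\begin{equation*}
F(\mathbf{w}^{(k)})-F(\mathbf{w}^*)\leq \left(1-\tfrac{\mu}{\eta}\right)^{k}\!\left(F(\mathbf{w}^{(0)})-F(\mathbf{w}^*)\right) + \left(1-\left(1-\tfrac{\mu}{\eta}\right)^{k}\right)\frac{\eta^2\Phi}{2\mu D^2}\sum_{j=0}^{|\mathcal{L}|-1}\sigma_{j+1}N_j.
\end{equation*}
I would then impose that the right-hand side at $k=\kappa$ is at most $\epsilon$, isolate $(1-\mu/\eta)^{\kappa}$, take logarithms (noting $\log(1-\mu/\eta)<0$, which flips the inequality direction), and arrive at~\eqref{eq:kGapCons}. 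The hypothesis $\epsilon\geq \frac{\eta^2\Phi}{2\mu D^2}\sum_j \sigma_{j+1}N_j$ is exactly what makes the argument of the first logarithm in~\eqref{eq:kGapCons} positive, so the bound is well defined; I would flag this explicitly.

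For case~2, the work is lighter because Proposition~\ref{prop:FogConv} already gives the clean linear recursion $F(\mathbf{w}^{(k+1)})-F(\mathbf{w}^*)\leq(1-\delta)(F(\mathbf{w}^{(k)})-F(\mathbf{w}^*))$. Unrolling this over $\kappa$ steps yields $F(\mathbf{w}^{(\kappa)})-F(\mathbf{w}^*)\leq(1-\delta)^{\kappa}(F(\mathbf{w}^{(0)})-F(\mathbf{w}^*))$, and requiring the right side to be at most $\epsilon$, then taking logarithms (again with a sign flip since $\log(1-\delta)<0$), gives~\eqref{cor:klieaner}. I would also check the admissibility of $\epsilon$: the lower endpoint of the stated interval for $\epsilon$ in Corollary~\ref{cor:numIterCertainAccur}, $(1-\mu/\eta)^{\kappa}(F(\mathbf{w}^{(0)})-F(\mathbf{w}^*))$, together with $\delta\leq\mu/\eta$, guarantees $(1-\delta)^{\kappa}(F(\mathbf{w}^{(0)})-F(\mathbf{w}^*))\geq\epsilon$ is \emph{not} forced, i.e., the ceiling in~\eqref{cor:klieaner} produces a finite positive integer.

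The main obstacle is really in case~1: deriving the finite-$k$ bound above rigorously requires confirming that condition~\eqref{eq:IterFogLCons} caps each summand $(c)$ of $\Xi^{(k-t)}$ by the corresponding $\sigma_j$ uniformly in $k$ and $t$ — that is, verifying that in the regime $\sigma_j\leq |\mathcal{L}^{(k)}_{j,i}|^3(\Upsilon^{(k)}_{L_{j,i}})^2$ the choice of $\theta^{(k)}_{L_{j,i}}$ in~\eqref{eq:IterFogLCons} yields $|\mathcal{Q}^{(k)}(\cdot)|^3(\lambda^{(k)}_{\cdot})^{2\theta^{(k)}_{\cdot}}(\Upsilon^{(k)}_{\cdot})^2\leq\sigma_j$, and in the complementary regime the summand is automatically below $\sigma_j$ with $\theta\geq0$. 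This is precisely the computation already carried out in the proof of Proposition~\ref{prop:boundedConsConv} (Appendix~\ref{app:boundedConsConv}), so I would invoke it rather than redo it, and then the nested sums in~\eqref{eq:Xi} collapse to $\sum_{j=0}^{|\mathcal{L}|-1}\sigma_{j+1}N_j$ by counting that the $j$-th family of nested sums ranges over exactly the $N_j$ nodes of layer $L_j$. Everything after that is the elementary log-inversion described above. The details are in Appendix~\ref{app:numGlobIterCertainAccur}.
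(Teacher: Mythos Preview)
Your proposal is correct and follows essentially the same approach as the paper: for case~1 the paper obtains the finite-$k$ bound you wrote (it is derived as~\eqref{eq:suffGapk} in Appendix~\ref{app:numIterCertainAccur} and simply cited here), then isolates $(1-\mu/\eta)^\kappa$ and takes logarithms with the sign flip you describe; for case~2 it unrolls~\eqref{eq:FogLinCon} to get~\eqref{eq:linearK} and again takes logarithms. Your extra remarks on why the logarithm arguments are positive and on invoking the summand-capping argument from Appendix~\ref{app:boundedConsConv} are accurate and make explicit what the paper leaves implicit.
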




\subsubsection{Varying gradient step size}
If we design a time-varying step size $\beta_k$ that is decreasing over time, we can sharpen the convergence result in Proposition~\ref{prop:boundedConsConv}, when devices share gradients instead of parameters (see Sec.~\ref{ssec:sharing}). In particular,  we can guarantee that {\tt MH-FL} converges to the optimal solution, rather than having a finite optimality gap (see Appendix~\ref{app:conv_0_dem_step}):
\vspace{-2mm}
\begin{proposition}\label{prop:conv_0_dem_step}
Suppose that the nodes share gradients using the same procedure described in Algorithm~\ref{alg:GenCons}, and that each parent node samples one of its children uniformly at random. Also, assume that end devices use a step size $\beta_k=\frac{\alpha}{k+\lambda}$, where $\lambda>1$ and $\alpha>1/\mu$ at global iteration $k$, with $\beta_0\leq 1/\eta$. If the number of D2D rounds are performed according to~\eqref{eq:IterFogLCons} with non-negative constants $\sigma_1,\cdots,\sigma_{|\mathcal{L}|}$, we have
\vspace{-1mm}
 \begin{equation}\label{eq:taperGradStep}
  \mathbb{E} [F(\mathbf{w}^{(k)})-F(\mathbf{w}^*)]\leq \frac{\Gamma}{k+\lambda},
 \end{equation}
 \vspace{-4mm}
 
\noindent where 
\vspace{-1mm}
\small{
\begin{equation}\label{eq:gammadef}
\hspace{-4mm}\Gamma\hspace{-.5mm} =\hspace{-.5mm} \max \Bigg\{\hspace{-.5mm}\lambda(F(\mathbf{w}^{(0)})-F(\mathbf{w}^*)),  \frac{\eta\alpha^2 {{\Phi}}\sum_{j=0}^{|\mathcal{L}|-1} \sigma_{j+1} N_j }{2D^2(\alpha \mu -1)}\hspace{-.5mm}\Bigg\}.\hspace{-2mm}
\end{equation}
}
\vspace{-3mm}

\normalsize
\noindent Consequently, under such conditions, {\tt MH-FL} converges to the optimal solution: $
  \lim_{k\rightarrow \infty} \mathbb{E}[ F(\mathbf{w}^{(k)})-F(\mathbf{w}^*)]=0.
$
\end{proposition}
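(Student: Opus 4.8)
\textbf{Proof proposal for Proposition~\ref{prop:conv_0_dem_step}.}

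The plan is to first establish a one-step recursion on the expected optimality gap $\Delta^{(k)}\triangleq F(\mathbf{w}^{(k)})-F(\mathbf{w}^*)$ that mirrors the structure underlying Theorem~\ref{theo:consbase}, but adapted to (i) the time-varying step size $\beta_k$ and (ii) gradient sharing with uniform random cluster-head sampling, and then to solve that recursion by induction. For the first part, I would start from the descent inequality implied by $\eta$-smoothness together with $\mu$-strong convexity (Assumption~\ref{assum:genConv}): writing the effective server update as $\mathbf{w}^{(k)}=\mathbf{w}^{(k-1)}-\beta_k\nabla F(\mathbf{w}^{(k-1)})+\mathbf{e}^{(k)}$ in analogy with~\eqref{eq:gradBasic} but with gradients (rather than parameters) traversing the hierarchy, where $\mathbf{e}^{(k)}$ collects the consensus-and-sampling error accumulated across the layers, one obtains $\Delta^{(k)}\le(1-\mu\beta_k)\,\Delta^{(k-1)}+\tfrac{\eta}{2}\|\mathbf{e}^{(k)}\|^2$, using $\beta_k\le\beta_0\le1/\eta$ so that $1-\mu\beta_k\in(0,1)$. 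The key point is that because the nodes now aggregate \emph{scaled gradients} $|\mathcal{D}_n|\nabla f_n$ and these enter the model update multiplied by $\beta_k$, the model-space error satisfies $\|\mathbf{e}^{(k)}\|^2\le\beta_k^2\,\tfrac{\Phi}{D^2}\,\Xi^{(k)}$ with $\Xi^{(k)}$ the same layered quantity as in~\eqref{eq:Xi} (now with $\Upsilon$ measuring gradient divergence within clusters); the uniform-random choice of each cluster head makes the sampled parameter an unbiased estimate of its cluster average, so that upon taking expectations the cross term vanishes and only this variance-type contribution remains. Re-running the path-counting/telescoping argument from the proof of Theorem~\ref{theo:consbase} then yields $\mathbb{E}[\Delta^{(k)}]\le(1-\mu\beta_k)\,\mathbb{E}[\Delta^{(k-1)}]+\tfrac{\eta\beta_k^2}{2}\cdot\tfrac{\Phi}{D^2}\,\mathbb{E}[\Xi^{(k)}]$.

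Next I would invoke the D2D-round criterion~\eqref{eq:IterFogLCons}: exactly as in the proof of Proposition~\ref{prop:boundedConsConv}, the prescribed $\theta^{(k)}_{L_{j,i}}$ forces $|\mathcal{L}^{(k)}_{j,i}|^{3}\big(\lambda^{(k)}_{L_{j,i}}\big)^{2\theta^{(k)}_{L_{j,i}}}\big(\Upsilon^{(k)}_{L_{j,i}}\big)^{2}\le\sigma_j$ for every cluster, hence $\Xi^{(k)}\le\sum_{j=0}^{|\mathcal{L}|-1}\sigma_{j+1}N_j$. Setting $B\triangleq\tfrac{\Phi}{2D^2}\sum_{j=0}^{|\mathcal{L}|-1}\sigma_{j+1}N_j$, the recursion becomes $\mathbb{E}[\Delta^{(k)}]\le(1-\mu\beta_k)\,\mathbb{E}[\Delta^{(k-1)}]+\eta\beta_k^2 B$ with $\beta_k=\alpha/(k+\lambda)$.

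Finally I would prove $\mathbb{E}[\Delta^{(k)}]\le\Gamma/(k+\lambda)$ by induction on $k$, with $\Gamma$ as in~\eqref{eq:gammadef}. The base case $k=0$ requires $\Gamma\ge\lambda\,\Delta^{(0)}$, which is the first argument of the maximum. For the inductive step, substituting the hypothesis and $\beta_k=\alpha/(k+\lambda)$ and clearing denominators, the target $\mathbb{E}[\Delta^{(k)}]\le\Gamma/(k+\lambda)$ reduces to $\tfrac{(\alpha\mu-1)\Gamma}{k-1+\lambda}\ge\tfrac{\eta\alpha^2 B}{k+\lambda}$; since $\alpha\mu>1$ (as $\alpha>1/\mu$) and $\tfrac{k+\lambda}{k-1+\lambda}>1$, this holds provided $\Gamma\ge\tfrac{\eta\alpha^2 B}{\alpha\mu-1}=\tfrac{\eta\alpha^2\Phi\sum_{j=0}^{|\mathcal{L}|-1}\sigma_{j+1}N_j}{2D^2(\alpha\mu-1)}$, the second argument of the maximum. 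Letting $k\to\infty$ gives $\mathbb{E}[\Delta^{(k)}]\to0$.

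I expect the main obstacle to be the first step: carefully re-deriving the one-step recursion for gradient sharing under a \emph{time-varying} step size, in particular tracking how the explicit $\beta_k^2$ factor emerges (rather than being absorbed into $\Upsilon$ as in the $\beta=1/\eta$ parameter-sharing analysis of Theorem~\ref{theo:consbase}) and verifying that uniform-random cluster-head sampling removes the bias so that only the $\beta_k^2\Xi^{(k)}$ variance term survives in expectation; the multi-layer telescoping is identical in form to that already carried out for Theorem~\ref{theo:consbase}, and the concluding induction is routine.
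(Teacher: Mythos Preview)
Your proposal is correct and follows essentially the same route as the paper's proof: write the gradient-sharing update as $\mathbf{w}^{(k)}=\mathbf{w}^{(k-1)}-\beta_{k-1}\big(\nabla F(\mathbf{w}^{(k-1)})+\mathbf{c}^{(k)}\big)$, use $\eta$-smoothness and then take the (conditional) expectation so that the cross terms in $\mathbf{c}^{(k)}$ vanish by uniform sampling, apply $\beta_{k}\le 1/\eta$ and the PL inequality to reach $\mathbb{E}[\Delta^{(k)}]\le(1-\mu\beta_{k-1})\mathbb{E}[\Delta^{(k-1)}]+\tfrac{\eta\beta_{k-1}^2}{2}\mathbb{E}\|\mathbf{c}^{(k)}\|^2$, bound the variance by $\tfrac{\Phi}{D^2}\sum_j\sigma_{j+1}N_j$ via~\eqref{eq:IterFogLCons}, and close with the standard $O(1/k)$ induction. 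Two small presentational points: (i) the inequality $\Delta^{(k)}\le(1-\mu\beta_k)\Delta^{(k-1)}+\tfrac{\eta}{2}\|\mathbf{e}^{(k)}\|^2$ does \emph{not} hold pathwise---it is only after taking expectation (killing the cross term) and using $\beta_k\le 1/\eta$ that you get this form, so state it directly in expectation; (ii) your step-size index is shifted by one relative to the paper (you attach $\beta_k$ to the $(k{-}1)\!\to\!k$ update, the paper attaches $\beta_{k-1}$), which is harmless but worth aligning so that the induction algebra matches~\eqref{eq:gammadef} exactly.
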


\vspace{-1.2mm}
The bound in~\eqref{eq:taperGradStep} implies a rate of convergence of $O(1/k)$, which is slower than the linear convergence obtained in Proposition~\ref{prop:FogConv}, but also allows tapering of the D2D rounds over time as in Proposition~\ref{prop:boundedConsConv}. 

\subsubsection{Cluster sampling}
 In a large-scale network with millions of nodes, it may be desirable to reduce upstream communications even further than what is provided by LUT clusters. We develop a \textit{cluster sampling} technique where a portion of the clusters are activated in model training at each global iteration in Appendix~\ref{app:cluster}, 
 and extend Theorem~\ref{theo:consbase} to this case. 
 We leave further investigation of this technique to future work.

 \vspace{-4mm}
\subsection{Control Algorithms for Distributed Consensus Tuning}
\label{ssec:control}

With all else constant, fewer rounds of D2D results in lower power consumption and network load among the devices in LUT clusters. Motivated by this, we develop control algorithms for {\tt MH-FL} that tune the number of D2D rounds through time (global aggregations) and space (network layers). 

\subsubsection{Adaptive D2D for loss functions satisfying Assumption~\ref{assum:genConv}} 
We are motivated to realize the two D2D consensus round tuning policies that we obtained in Propositions~\ref{prop:boundedConsConv} and~\ref{prop:FogConv}, which we refer to as Policies A and B, respectively. Policy A will provide a finite optimality gap, with tapering of the D2D rounds through time, while Policy B will provide linear convergence to the optimal, with boosting of the D2D rounds through time. We are interested in realizing these two policies in a distributed manner, where the number of D2D rounds for different device clusters are tuned by the corresponding parent nodes in real-time. It is assumed that parent node of  ${C}$ has an estimate on the topology of the cluster, and thus an upper-bound on the spectral radius of its children cluster graph $\lambda^{(k)}_{{C}}$, $\forall k$.

According to \eqref{eq:IterFogLCons} and \eqref{eq:IterFogLCons2}, for both policies, tuning of the D2D rounds for cluster ${C}$ requires  knowledge of the divergence of parameters~$\Upsilon^{(k)}_{C}$. Also, Policy A requires a set of fixed D2D control parameters $\sigma_{j}$ for clusters located in layer ${L}_{j}$, while Policy B requires the global gradient of the broadcast weight $\norm{\nabla F(\mathbf{w}^{(k-1)})}$ and the real-time D2D control parameters $\sigma^{(k)}_{j}$. In the following, we first derive the divergence of parameters in a distributed manner. Then, we focus on realizing the other specific parameters for each policy.

Given Definition~\ref{def:clustDiv}, at cluster ${C}$ in layer ${L}_{j}$, $1\leq j\leq |\mathcal{L}|$, the divergence of the parameters can be approximated as\footnote{Here, for practical purposes, we use the lower bound of divergence $\big|~\Vert\textbf{a}\Vert-\Vert\textbf{b}\Vert~\big|\leq \Vert\textbf{a}-\textbf{b}\Vert$. The upper bound alternative $\Vert\textbf{a}-\textbf{b}\Vert\leq \Vert\textbf{a}\Vert+\Vert\textbf{b}\Vert$ can be arbitrarily large even when $\textbf{a}=\textbf{b}$.}
\vspace{-1.2mm}
\begin{equation}\label{eq:appDiv}
\hspace{-22mm}
\begin{aligned}
 &\Upsilon^{(k)}_{{C}} \approx \max_{q, q' \in\mathcal{C}^{(k)}} \big\{ \Vert \widetilde{\mathbf{w}}^{(k)}_{q}\Vert- \Vert \widetilde{\mathbf{w}}^{(k)}_{q'} \Vert \big\} \\& =\underbrace{\max_{q\in\mathcal{C}^{(k)}} \Vert\widetilde{\mathbf{w}}^{(k)}_{q}\Vert}_{(a)}-   \underbrace{\min_{q'\in\mathcal{C}^{(k)}} \Vert\widetilde{\mathbf{w}}^{(k)}_{q'} \Vert}_{(b)}.
 \end{aligned}
 \hspace{-15mm}
\end{equation}
\vspace{-3.1mm}

\noindent To obtain (a) and (b) in a distributed manner, at any given LUT cluster, each node $n\in\mathcal{C}^{(k)}$ first computes the scalar value $\Vert\widetilde{\mathbf{w}}^{(k)}_{n}\Vert$.
Nodes then share these scalar values with their neighbors iteratively. In each iteration, each node saves two scalars: the (i) maximum and (ii) minimum values among the received values and the node's current value. It is easy to verify that for any given communication graph ${G}^{(k)}_{{C}}$ among the cluster nodes, once the number of iterations has exceeded the diameter of ${G}^{(k)}_{{C}}$, the saved values at each node will correspond to (a) and (b) for cluster ${C}$ in~\eqref{eq:appDiv}. The parent node can then sample the value of one of its children to compute~\eqref{eq:appDiv}.

\begin{algorithm}[t]
 	\caption{\small Adaptive D2D round tuning at each cluster}\label{alg:tuneCons}
 	 {\footnotesize
 	\SetKwFunction{Union}{Union}\SetKwFunction{FindCompress}{FindCompress}
 	\SetKwInOut{Input}{input}\SetKwInOut{Output}{output}
    \Input{Global aggregation count $k$, tuning parameter $\omega>1$, cluster index ${C} = {L}_{j,i}$.
    }
    \Output{Number of D2D rounds $\theta^{(k)}_{{L}_{j,i}}$ for the cluster.}
    Nodes inside the cluster ${C}$ iteratively compute (a) and (b) of~\eqref{eq:appDiv}. \label{alg:2innerfirst}\\
    Parent node of cluster samples one child and computes~\eqref{eq:appDiv}.\\
        \If{ asymptotic convergence to optimal desired}{
         Parent node uses~\eqref{eq:ineqWeightsofNodesApp} with stored $\Vert\widetilde{\nabla F(\mathbf{w}^{(k-1)})}\Vert$ and $\delta$ received from the server to compute  $\sigma^{(k)}_j$. \\
         Parent node uses~\eqref{eq:IterFogLCons2} to compute  $\theta^{(k)}_{{L}_{j,i}}$. \\
         }\Else{
         Parent node uses the received consensus tuning parameter $\sigma^{(k)}_j$ from the server in~\eqref{eq:IterFogLCons} to compute $\theta^{(k)}_{{L}_{j,i}}$.
         }
    }
 \end{algorithm}

For Policy A, since the values of $\{\sigma_j\}_{j=1}^{|\mathcal{L}|}$ are fixed through time, one option is for the server to tune them once at the beginning of training and distribute them among all the nodes. If satisfaction of a given accuracy $\epsilon$ at a certain iteration $\kappa$ is desired, we use the result of Corollary~\ref{cor:numIterCertainAccur} and obtain the D2D control parameters as the solution of the following max-min optimization problem: $\argmax_{\{\sigma_{j}\}_{j=1}^{|\mathcal{L}|}} ~\min{\{ {N_{j-1}\sigma_{j}}\}}$ subject to \eqref{eq:coeffgap}. It can be verified that the solution is given by
\vspace{-1.2mm}
\begin{equation}\label{eq:coeffgapApp}
\hspace{-2mm}
 \sigma_{j}^{\star}  =  {\frac{\epsilon-
     \left({1-\mu/\eta}\right)^{\kappa}(F(\mathbf{w}^{(0)})- F(\mathbf{w}^*)) }{ \left({1-(1-\mu/\eta)^{\kappa}}\right)\frac{\eta^2{\Phi}}{2\mu D^2}N_{j-1}|\mathcal{L}| }}, ~1\leq j\leq |\mathcal{L}|,
     \hspace{-2mm}
\end{equation}
\vspace{-3mm}

\noindent which can be broadcast at the beginning of training among the nodes. 
The reason behind the choice of the aforementioned max-min problem is two-fold. First, according to~\eqref{eq:IterFogLCons}, for a given set of divergence of parameters $\Upsilon^{(k)}_{{C}}$ across ${C}$, fewer numbers of D2D rounds at each layer ${L}_{j}$ is associated with larger values of $\sigma_{j}$, so larger values of D2D control parameters are often desired. Second, this choice of objective function results in smaller values of D2D control parameters as we move down the layers (towards the end devices) and the number of nodes increases. This leads to larger D2D rounds and lower errors in the bottom layers, which is desired in practice given the discussion in Sec.~\ref{ssec:traversing} that the errors from the bottom layers are propagated and amplified as we move up the layers.

For Policy B, to obtain $\Vert\nabla F (\mathbf{w}^{(k-1)})\Vert$, we use~\eqref{eq:gradBasic} to approximate $\nabla F (\mathbf{w}^{(k-2)})$ as $\nabla F (\mathbf{w}^{(k-2)})\approx \left({\mathbf{w}^{(k-2)}-\mathbf{w}^{(k-1)}}\right)/{\beta}$. This is an approximation due to the error introduced in the consensus process. Using this, the main server estimates $\Vert\nabla F (\mathbf{w}^{(k-1)})\Vert$ via $\widetilde{\Vert\nabla F (\mathbf{w}^{(k-1)})\Vert}=\frac{1}{\omega} \Vert\nabla F (\mathbf{w}^{(k-2)})\Vert$, where we introduce tuning parameter $\omega>1$ based on the intuition that the norm of the gradient should be decreasing over $k$, and broadcasts it along with $\mathbf{w}^{(k-1)}$. The choice of $\omega$ can be viewed as a tradeoff between the number of global aggregations $k$ and the number of D2D rounds $\theta^{(k)}_{{C}}$ per aggregation: as $\omega$ increases, we tolerate less consensus error in~\eqref{eq:ineqWeightsofNodes}, requiring more D2D rounds $\theta^{(k)}_{{C}}$ and fewer global iterations $k$ to achieve an accuracy. 
Then, the cluster heads obtain the $\sigma^{(k)}_j$, $\forall {j},k$, according to the following max-min problem: $\argmax_{\{\sigma^{(k)}_{j}\}_{j=1}^{|\mathcal{L}|}} ~\min{\{ {N_{j-1}\sigma^{(k)}_{j}}\}}$ subject to \eqref{eq:ineqWeightsofNodes} for a given $\delta$. It can be verified that the solution is given by
\begin{equation}\label{eq:ineqWeightsofNodesApp}
\hspace{-3mm}
 {\sigma^{(k)}_{j}}^{\star}\hspace{-2mm} =\hspace{-1mm} \frac{D^2\mu  ({\mu}-\delta\eta)}{\eta^4{\Phi}N_{j-1}|\mathcal{L}|} \Vert\widetilde{\nabla F (\mathbf{w}^{(k-1)})}\Vert^2, \hspace{.1mm}1\hspace{-.1mm}\leq\hspace{-.1mm} j\leq \hspace{-.1mm}|\mathcal{L}|,\forall k.\hspace{-3mm}
\end{equation}
The parameter $\delta$ can be tuned by the main server at the beginning of training to guarantee a desired linear convergence, or it can be tuned by \eqref{eq:deltaLinear} to satisfy a desired accuracy at a certain global iteration. In both cases, the server broadcasts this parameter among the nodes at the beginning of training. With $\delta$ and $\Vert\nabla F (\mathbf{w}^{(k-1)})\Vert$ in hand, along with the ML model characteristics ($D,\mu,\eta$) and networked related parameters ($\Phi,|\mathcal{L}|,N_{j-1}$) that can be once broadcast by the server, all the parent nodes of the clusters can calculate \eqref{eq:ineqWeightsofNodesApp} at each global iteration, which then can be used in~\eqref{eq:IterFogLCons2} to tune the number of D2D rounds for the children nodes in real-time.

 \begin{algorithm}[t]
 	\caption{\small Adaptive D2D round tuning at each cluster for non-convex loss functions}\label{alg:tuneConsNN}
 	 {\footnotesize
 	\SetKwFunction{Union}{Union}\SetKwFunction{FindCompress}{FindCompress}
 	\SetKwInOut{Input}{input}\SetKwInOut{Output}{output}
    \Input{Tolerable error of aggregations $\psi$, global aggregation count $k$, cluster index ${C} = {L}_{j,i}$.}
    \Output{Number of D2D rounds $\theta^{(k)}_{{L}_{j,i}}$ for the cluster.}
    Nodes inside the cluster iteratively compute (a) and (b) of~\eqref{eq:appDiv}. \label{alg:2innerfirst}\\
    Parent node of cluster samples one child and computes~\eqref{eq:appDiv}.\\
    Parent node of the cluster computes the required rounds of D2D as follows with $\sigma_j  =  {{\psi D^2}/({ {{\Phi}}N_{j-1}|\mathcal{L}| }})$:
    \vspace{-1.5mm}
    \begin{equation}\label{eq:IterFogLConsAlg}
  \hspace{-17mm}
  \begin{cases}
    \hspace{-.5mm} \theta^{(k)}_{{L}_{j,i}} \hspace{-1mm}\geq\hspace{-1mm} \frac{\log\left(\sigma_j \right)-2\log\left({\big\vert\mathcal{L}^{(k)}_{j,i}\big\vert^{\frac{3}{2}} \Upsilon^{(k)}_{{L}_{j,i}}} \right)}{2\log\left(\lambda^{(k)}_{{L}_{j,i}} \right)}, & \hspace{-2mm}\textrm{if}~ \sigma_j\leq {\big\vert\mathcal{L}^{(k)}_{j,i}\big\vert^3 \left(\Upsilon^{(k)}_{{L}_{j,i}}\right)^2} \\
        \hspace{-.5mm} \theta^{(k)}_{{L}_{j,i}}\hspace{-1mm}\geq 0, &\hspace{-2.5mm} \textrm{otherwise}.
     \end{cases}
     \hspace{-10mm}
     \vspace{-5mm}
  \end{equation}
       \vspace{-2mm}
    }
 \end{algorithm}
A summary of this procedure for tuning the D2D rounds at a cluster is given in Algorithm~\ref{alg:tuneCons}. In the full {\tt MH-FL} method described in Algorithm~\ref{alg:GenCons}, this is (optionally) called for each cluster in the  lines marked via $**$. 



\subsubsection{Adaptive D2D tuning for non-convex loss functions}
Some contemporary ML models, such as neural networks, possess non-convex loss functions for which Assumption~\ref{assum:genConv} does not apply. In these cases, we can develop a heuristic approach to tune the D2D rounds of {\tt MH-FL} if we specify a maximum tolerable error of aggregations $\psi$ at each global iteration. The resulting procedure is given in Algorithm~\ref{alg:tuneConsNN}, which is called once for each cluster in Algorithm~\ref{alg:GenCons} in place of Algorithm~\ref{alg:tuneCons} (see the lines marked \#\#). To execute this, prior to the start of training, each parent node should receive $\psi$ and the number of nodes in its layer. Using Algorithm~\ref{alg:tuneConsNN}, we can show that the 2-norm of aggregation errors is always bounded by parameter $\psi$ (see Appendix~\ref{app:aggrErr} for the proof).


\vspace{-3.2mm}
\section{Experimental Evaluation}
\label{sec:num-res}
\noindent We conducted extensive numerical experiments to evaluate {\tt MH-FL}.
In this section, we present the setup 
and results 
for a popular dataset and a sample fog topology. Additional results on more datasets and topologies can be found in Appendix~\ref{app:extraSim}.
\begin{figure*}[t]
\centering
\begin{minipage}{.47\textwidth}
     \centering
     \includegraphics[width=\linewidth]{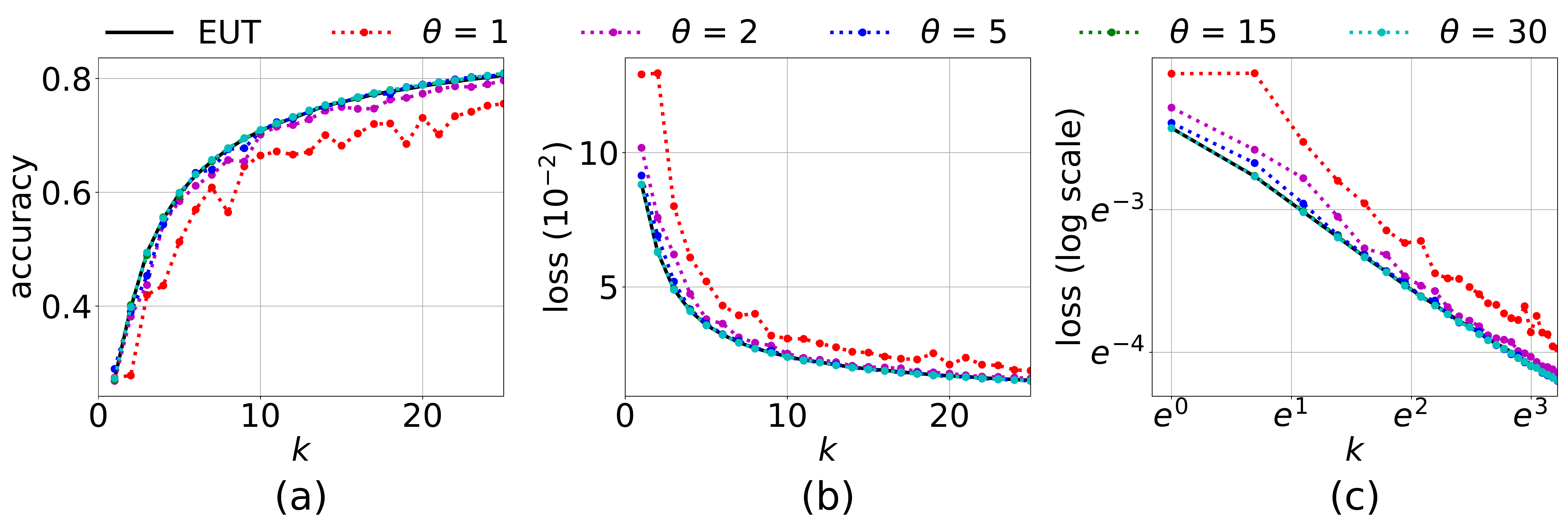}
     \caption{Performance comparison between baseline EUT and {\tt MH-FL} when a fixed number of D2D rounds $\theta$ is used at every cluster in the network, for non-i.i.d data. As the number of D2D rounds increases, {\tt MH-FL} performs more similar to the EUT baseline and the learning becomes more stable.}
     \label{fig:GenFigGoodIntuition_MNIST_125}
\end{minipage}%
\hspace{4mm}
\begin{minipage}{.47\textwidth}
  \centering
         \includegraphics[width=\linewidth]{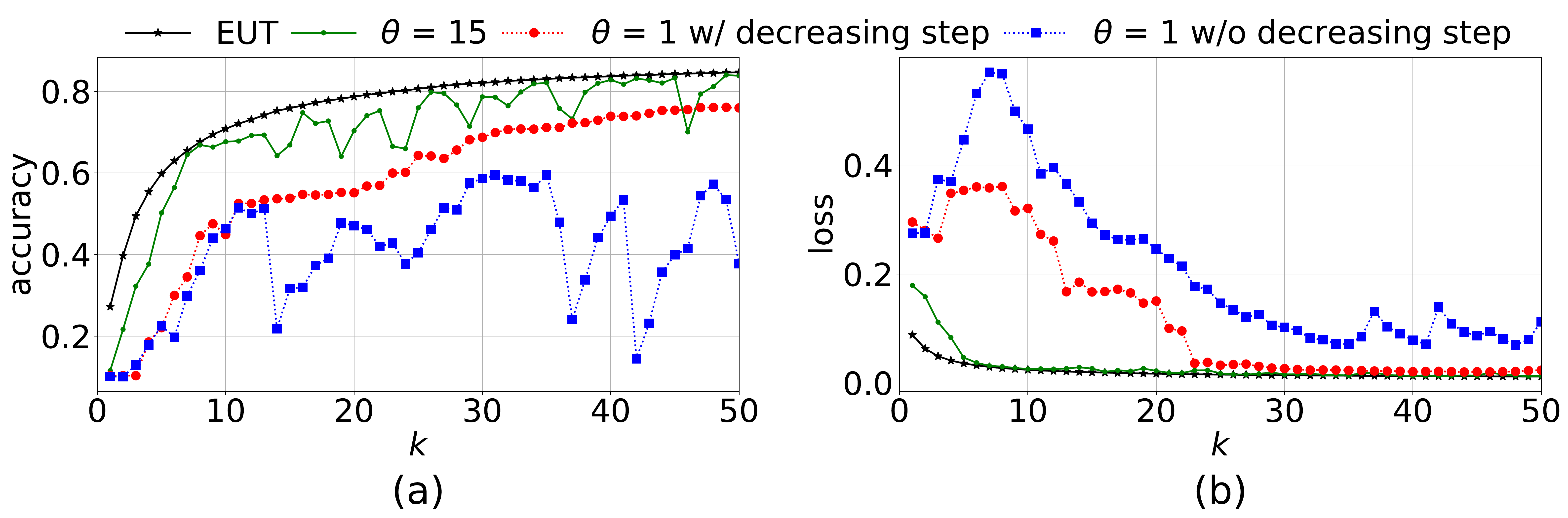}
     \caption{Performance comparison between baseline EUT, and {\tt MH-FL} with and without (w/o) decreasing the gradient descent step size. Decreasing the step size can provide convergence to the optimal solution in cases where a fixed step size is not capable, but also has a slower convergence speed.}
     \label{fig:decaying_learning_rate_MNIST_125}
\end{minipage}
 \vspace{-4.5mm}
\end{figure*}

\begin{figure*}[t]
\centering
\begin{minipage}{.47\textwidth}
        \centering
     \includegraphics[width=\linewidth]{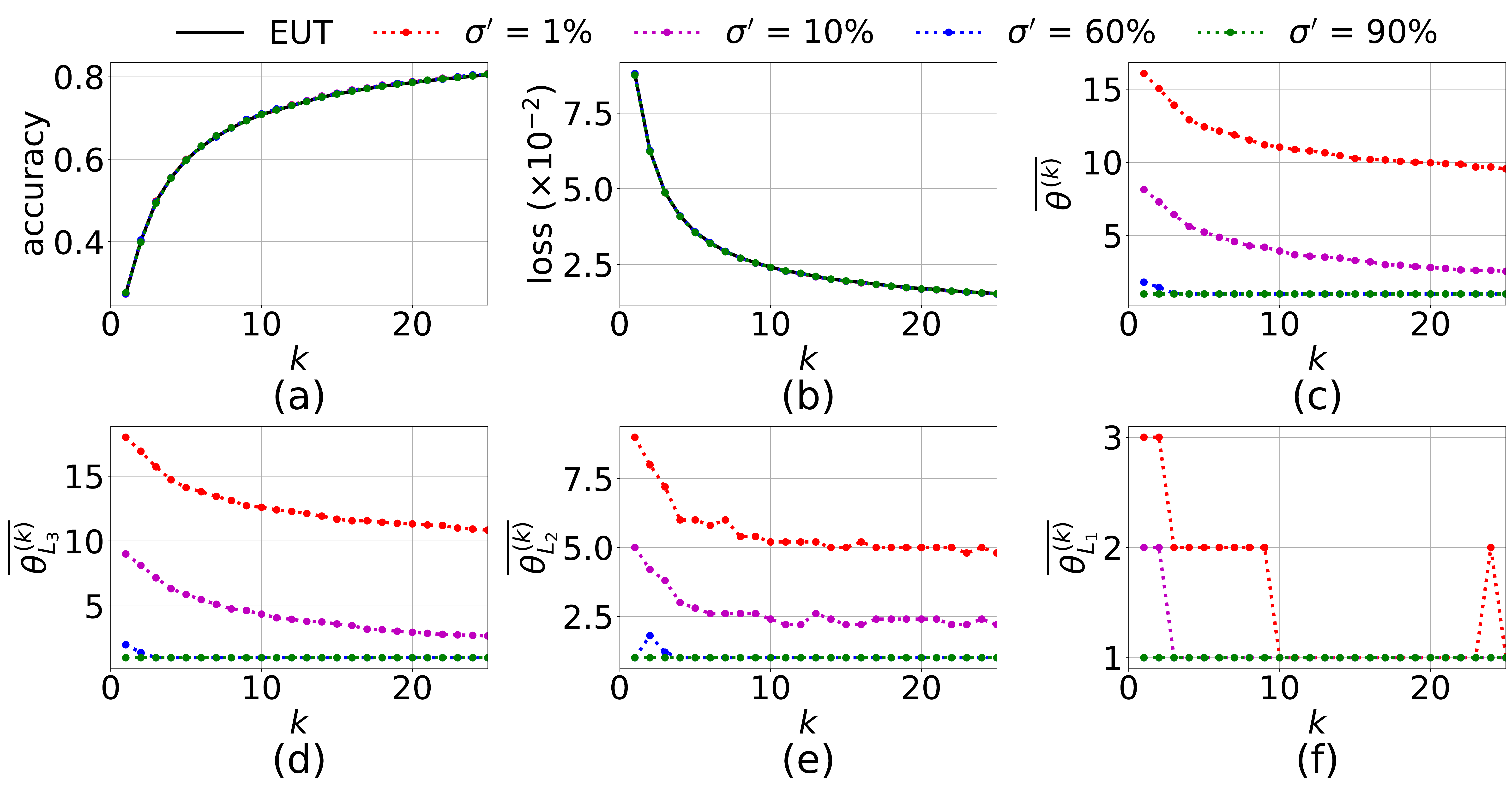}
     \caption{Performance comparison between baseline EUT and {\tt MH-FL} for i.i.d data when a finite optimality gap is tolerable. $\sigma_j$ at ${L}_j$ is fixed as $\sigma_j=\sigma' \max_{i}{\Upsilon_{{L}_{j,i}}^{(1)}}$. The tapering of D2D rounds  through time (c) and space (layers) (d)-(f) can be observed.}
     \label{fig:iidIncConSigma_MNIST_125}
\end{minipage}%
\hspace{4mm}
\begin{minipage}{.47\textwidth}
     \centering
     \includegraphics[width=\linewidth]{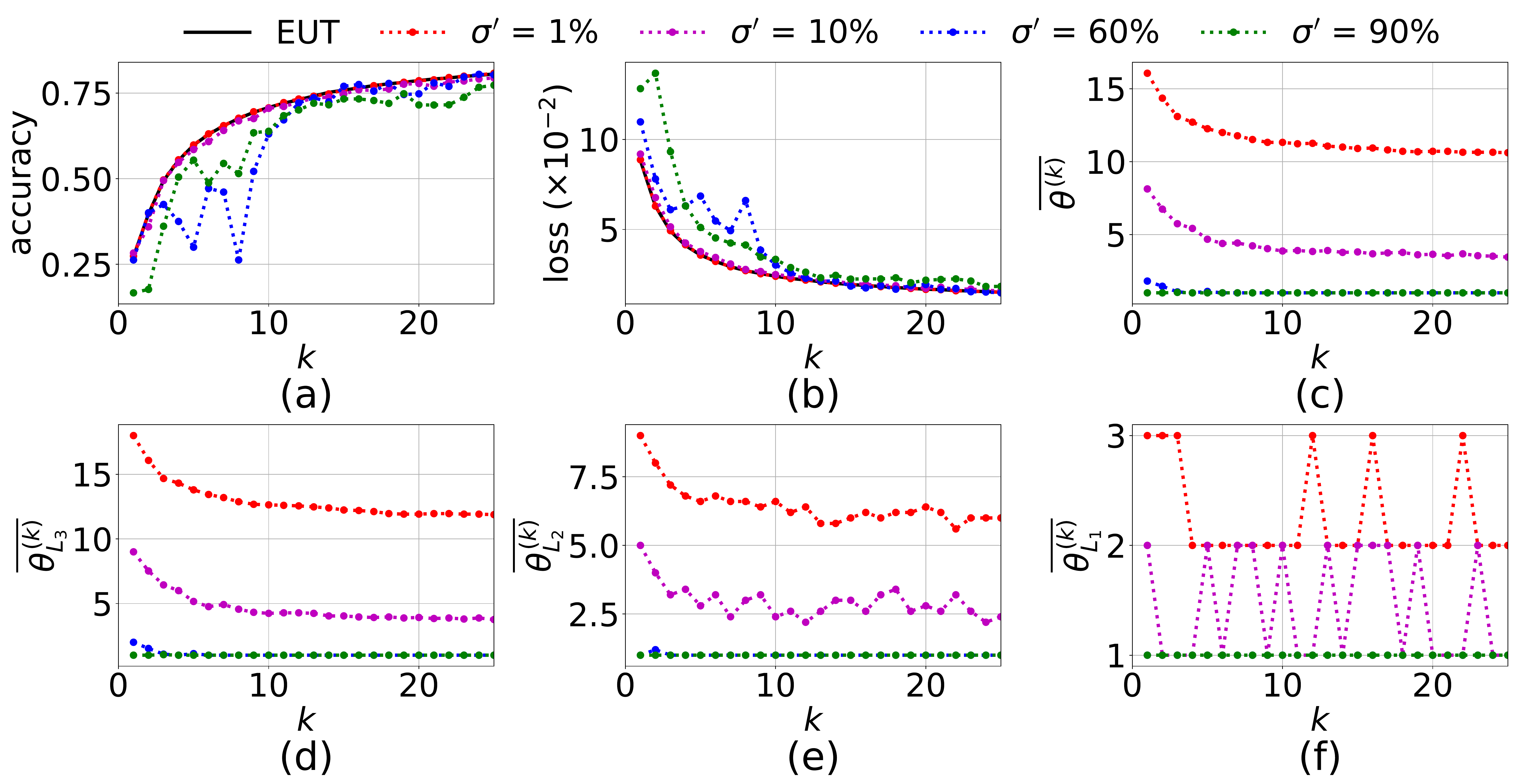}
     \caption{Performance comparison between baseline EUT and {\tt MH-FL} for non-i.i.d data when a finite optimality gap is tolerable. $\sigma_i$ is set as in Fig.~\ref{fig:iidIncConSigma_MNIST_125}. Smaller loss and higher accuracy are achieved with smaller $\sigma'$, implying more rounds of D2D are required.}
     \label{fig:non_iidIncConSigma_MNIST_125}
\end{minipage}
\vspace{-4.5mm}
\end{figure*}

\vspace{-0.1in}
\begin{figure*}[t]
\centering
\begin{minipage}{.47\textwidth}
       \centering
     \includegraphics[width=\linewidth]{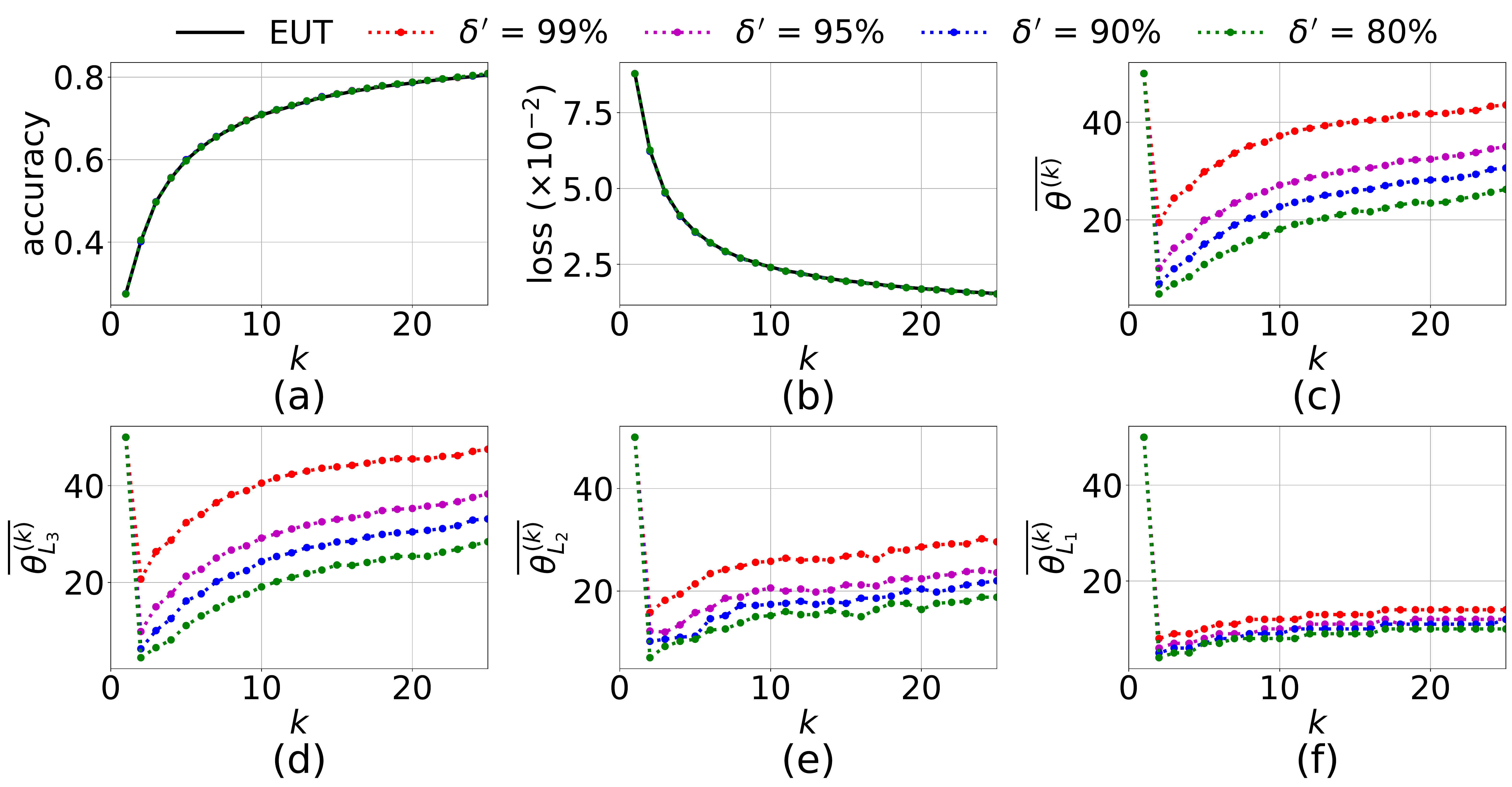}
     \caption{Performance comparison between baseline EUT and {\tt MH-FL} for i.i.d data when linear convergence to the optimal is desired. The value of $\delta$ is set at $\delta=\delta' \frac{\mu}{\eta}$. Boosting of the D2D rounds through time can be observed in (c)-(f) as $k$ is increased. Also, tapering through space can be observed by comparing the D2D rounds across the bottom subplots.}
     \label{fig:iidConsConDelta_MNIST_125}
\end{minipage}%
\hspace{4mm}
\begin{minipage}{.47\textwidth}
      \centering
     \includegraphics[width=\linewidth]{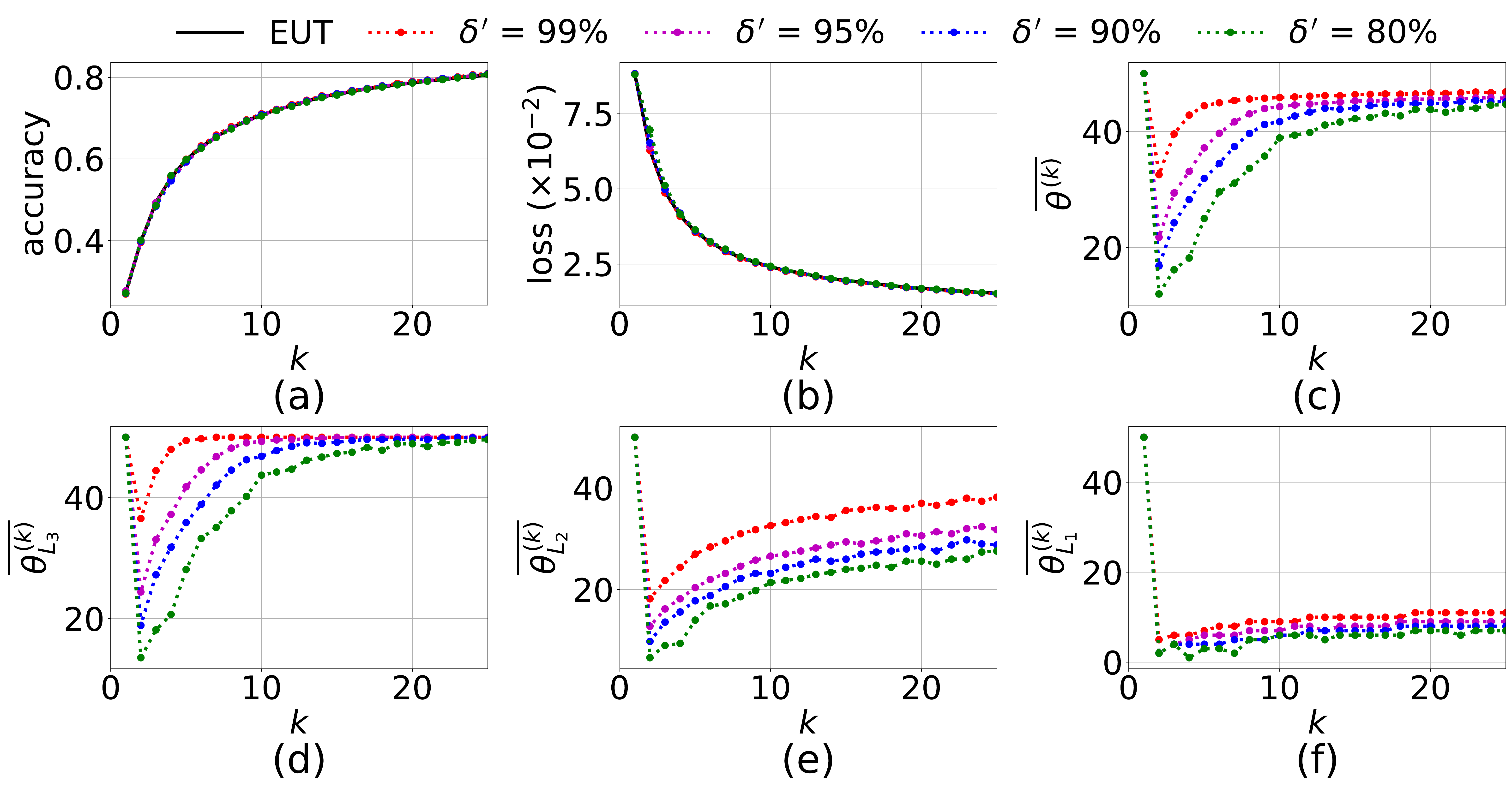}
     \caption{Performance comparison between baseline EUT and {\tt MH-FL} for non-i.i.d data when linear convergence to the optimal is desired. The value of $\delta$ is set as in Fig.~\ref{fig:iidConsConDelta_MNIST_125}.
     Smaller values of loss and higher accuracy are both associated with larger value of $\delta$, which results in lower error tolerance and more rounds of D2D.}
     \label{fig:non_iidConsConDelta_MNIST_125}
\end{minipage}
\vspace{-7.5mm}
\end{figure*}

\vspace{-0.1in}
\begin{figure*}[t]
\vspace{-2mm}
\centering
\begin{minipage}{.47\textwidth}
         \centering
     \includegraphics[width=\linewidth]{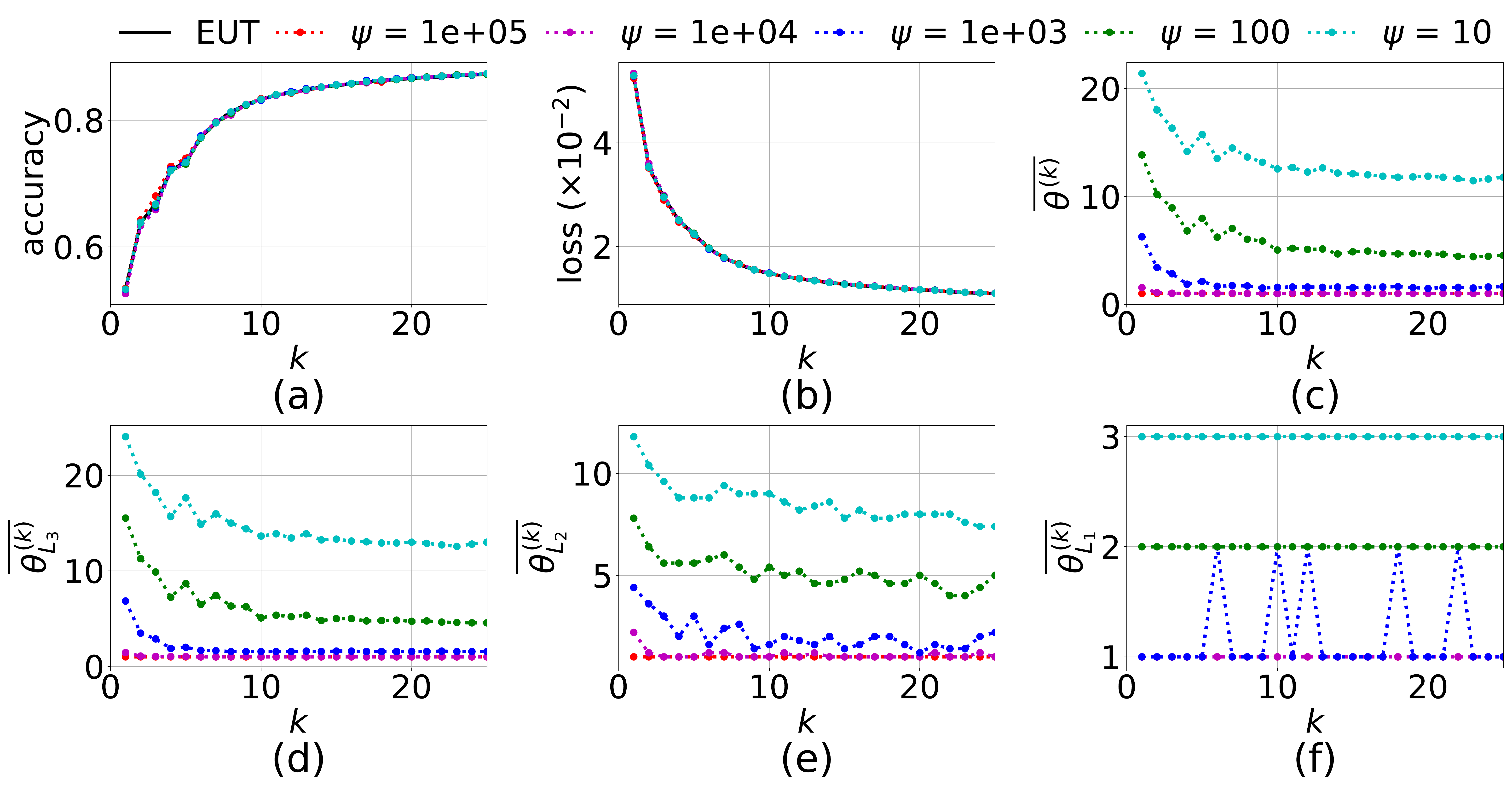}
     \caption{Performance comparison between baseline EUT and {\tt MH-FL}  under i.i.d data using NNs with different values of $\psi$. Tapering the D2D rounds through time can be observed. Also, tapering through space can be observed by comparing the D2D rounds across the bottom subplots.}
     \label{fig:iidNNIncConec_MNIST_125}
\end{minipage}%
\hspace{2mm}
\begin{minipage}{.47\textwidth}
         \centering
     \includegraphics[width=\linewidth]{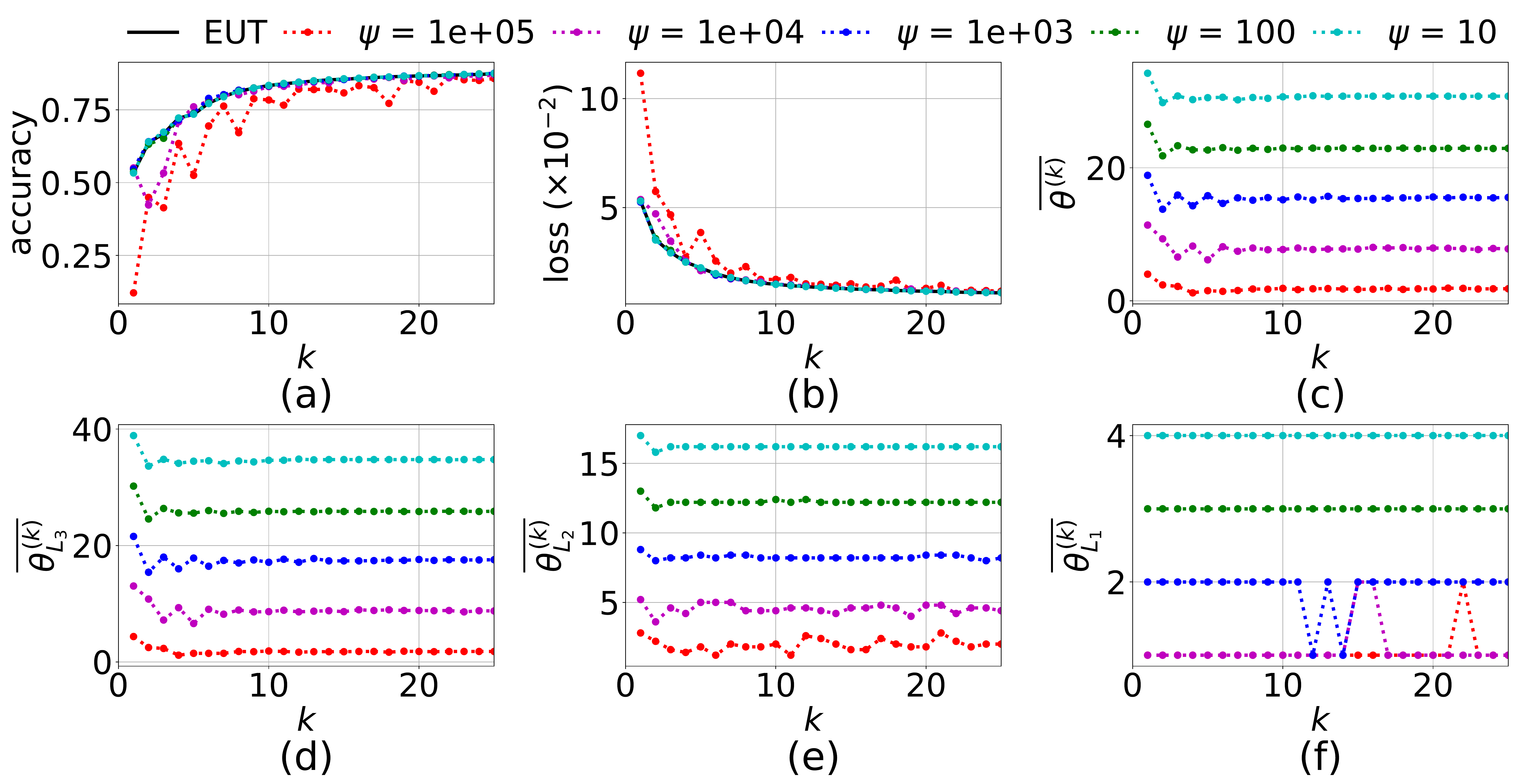}
     \caption{Performance comparison between baseline EUT and {\tt MH-FL} under non-i.i.d data using NNs with different values of $\psi$. Lower loss and higher accuracy are associated with smaller values of $\psi$, which result in lower error tolerance and larger numbers of D2D rounds.}
     \label{fig:non_iidConsConDeltaNN_MNIST_125}
\end{minipage}
\vspace{-5mm}
\end{figure*}

\begin{figure*}[t]
\centering
\begin{minipage}{.285\textwidth}
     \centering
         \includegraphics[width=0.875\linewidth]{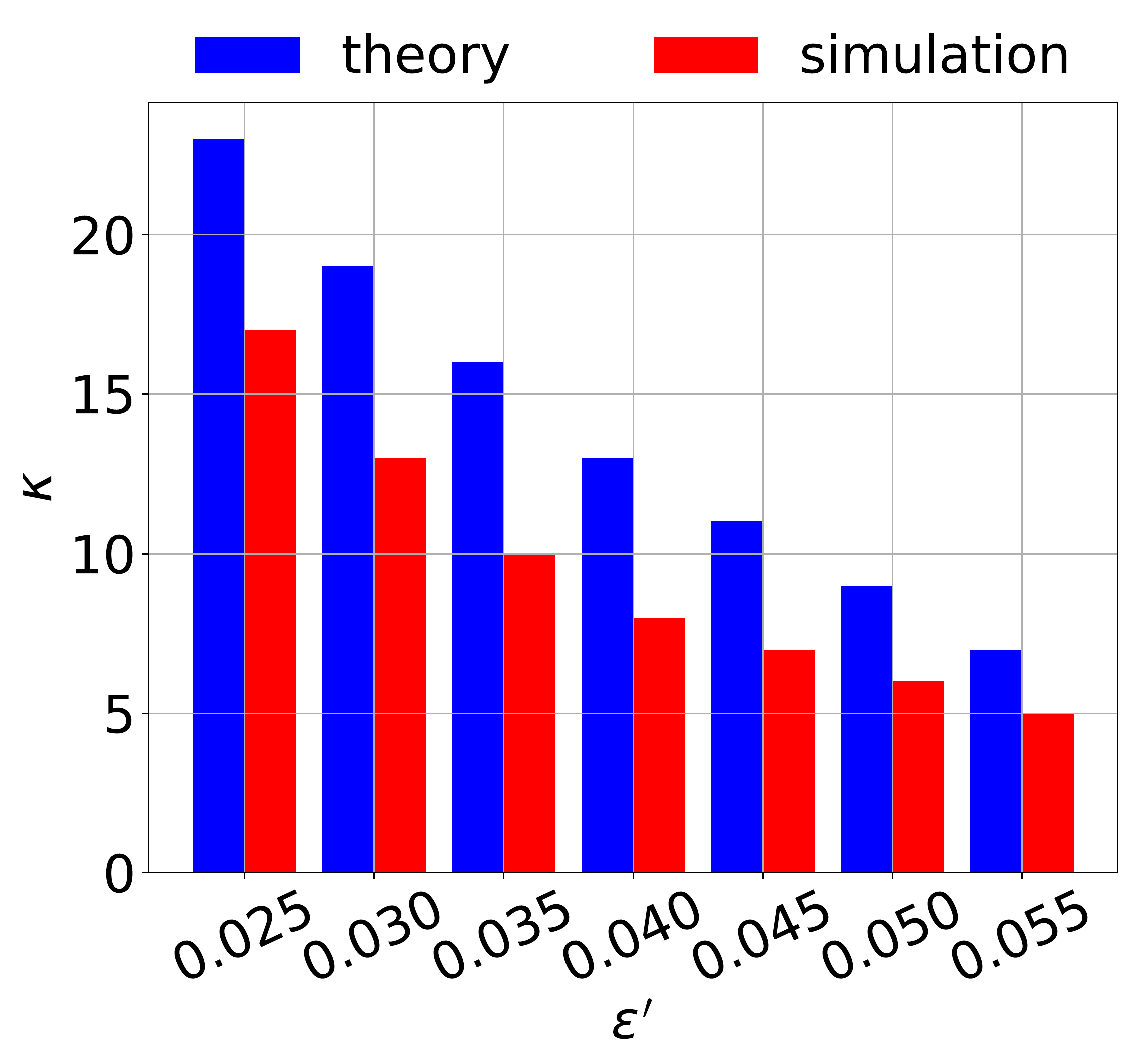}
     \caption{Theoretical vs. simulation results regarding the number of global iterations to achieve an accuracy of $\epsilon' (F(\mathbf{w}^{(0)})-F(\mathbf{w}^*))$ for different $\epsilon'$. Convergence in practice is faster than the derived upper bound.}
     \label{fig:non_iidTheoPracSimDiff_MNIST_125}
\end{minipage}%
\hspace{6mm}
\begin{minipage}{.285\textwidth}
     \centering
     \includegraphics[width=\linewidth]{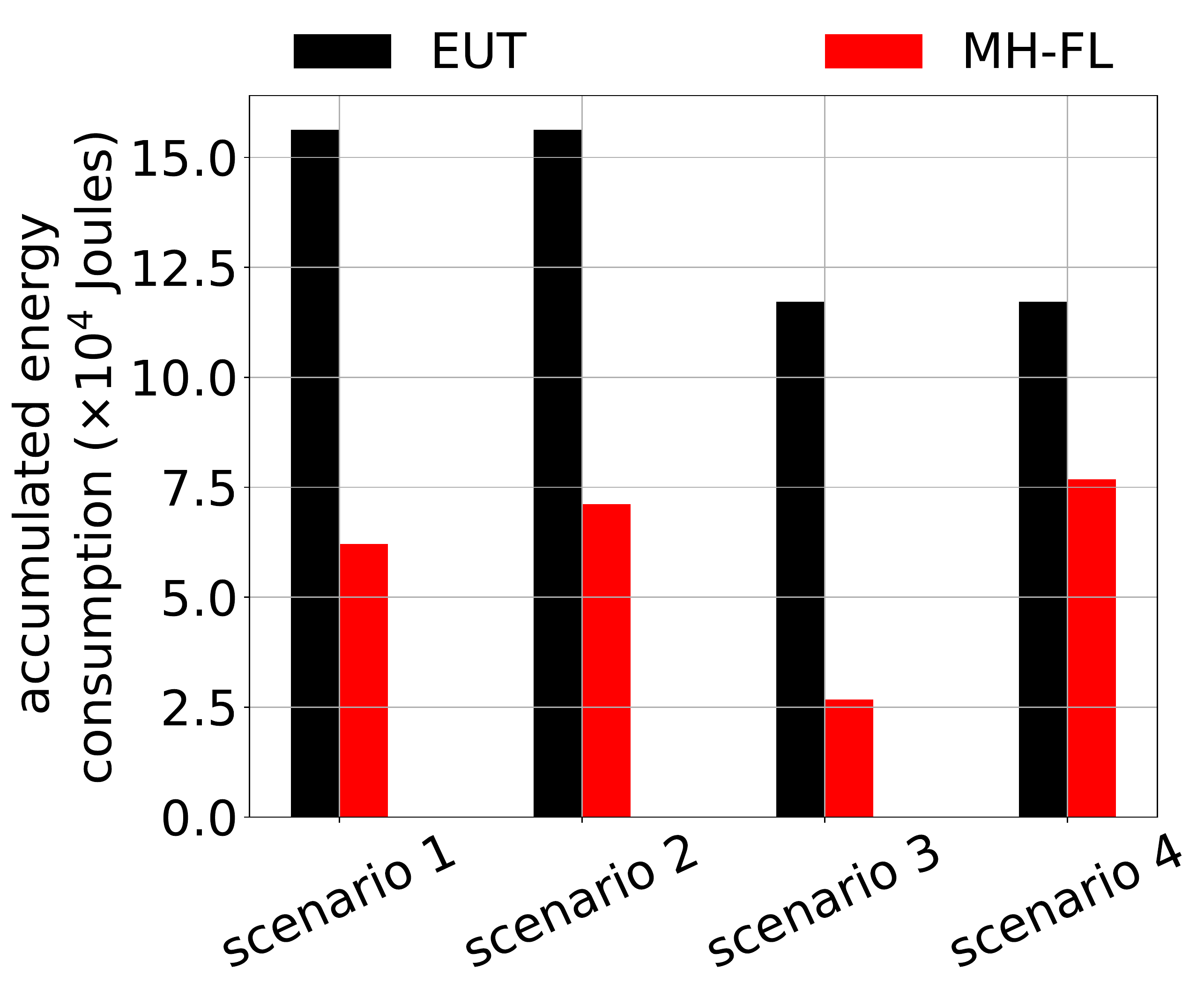}
     \caption{Accumulated energy consumption of EUT and {\tt MH-FL} over scenario 1: $\sigma' = 0.1$ from Fig.~\ref{fig:iidIncConSigma_MNIST_125}, scenario 2: $\sigma' = 0.1$ from Fig.~\ref{fig:non_iidIncConSigma_MNIST_125}, scenario 3: $\psi = 10^4$ from Fig.~\ref{fig:iidNNIncConec_MNIST_125}, and scenario 4: $\psi = 10^4$ from Fig.~\ref{fig:non_iidConsConDeltaNN_MNIST_125}.}
     \label{fig:accum_energy_125_MNIST}
\end{minipage}%
\hspace{6mm}
\begin{minipage}{.285\textwidth}
     \centering
     \includegraphics[width=\linewidth]{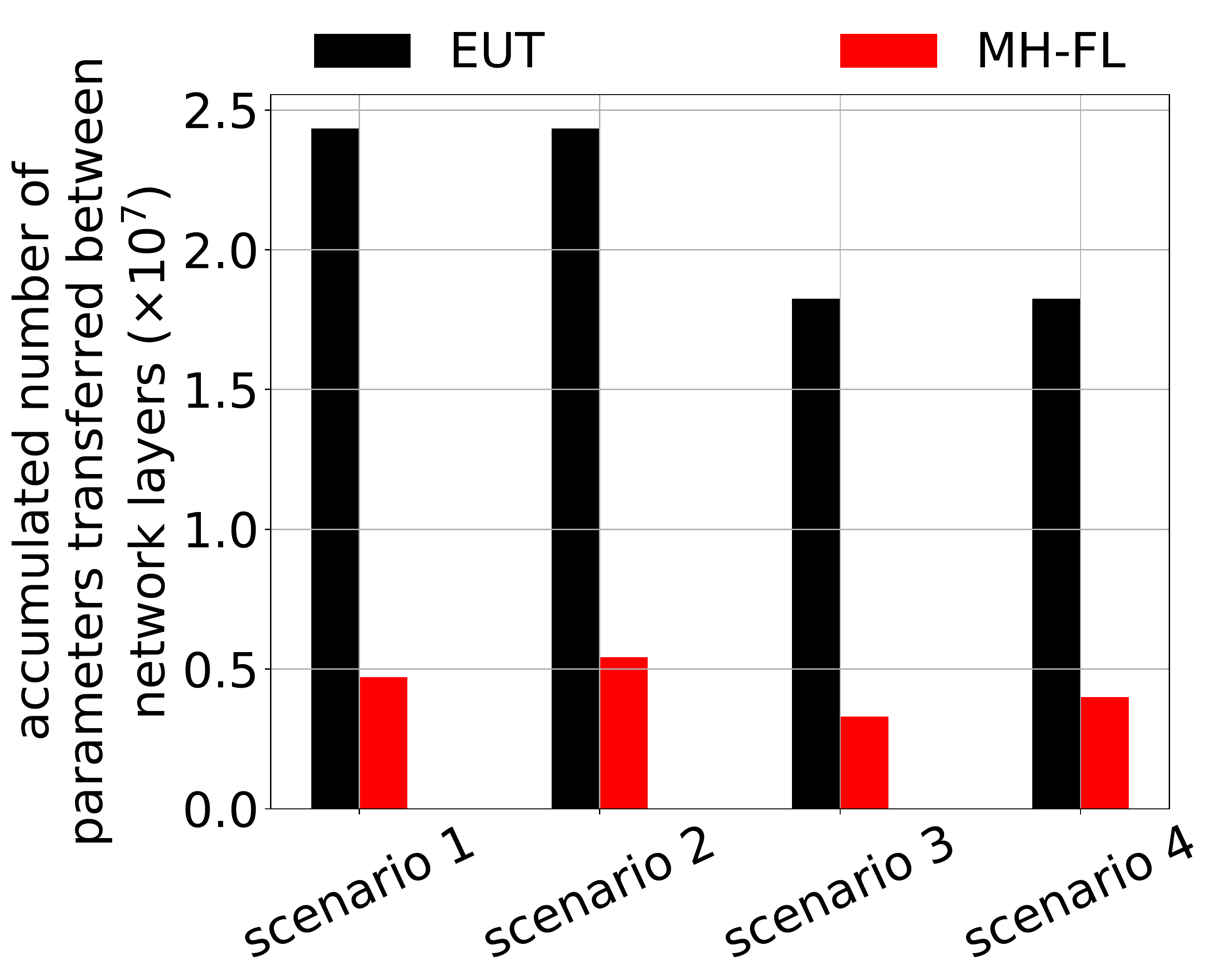}
     \caption{Comparison of parameters transferred among layers in EUT vs {\tt MH-FL} over scenario 1: $\sigma' = 0.1$ from Fig.~\ref{fig:iidIncConSigma_MNIST_125}, scenario 2: $\sigma' = 0.1$ from Fig.~\ref{fig:non_iidIncConSigma_MNIST_125}, scenario 3: $\psi = 10^4$ from Fig.~\ref{fig:iidNNIncConec_MNIST_125}, and scenario 4: $\psi = 10^4$ from Fig.~\ref{fig:non_iidConsConDeltaNN_MNIST_125}.}
     \label{fig:accumData_MNIST_125}
\end{minipage}
\vspace{-7mm}
\end{figure*}

\subsection{Experimental Setup}
\label{ssec:setup}
We consider a fog network consisting of a main server and three subsequent layers. There are $125$ edge devices in the bottom layer (${L}_3$), clustered into groups of $5$ nodes. Each of these clusters communicates with one of $25$ parent nodes in layer ${L}_2$. The nodes at this layer are in turn clustered into groups of $5$, with $5$ parent nodes at layer ${L}_1$ that communicate with the main server. We consider the cases where (i) all clusters are configured to operate in LUT mode and (ii) all are EUT, which allows us to evaluate the performance differences in terms of model convergence, energy consumption, and parameters transferred. In the LUT case, network topology within clusters follows a random geometric graph~\cite{1300540}. See Appendix~\ref{app:extraSim} for the detailed discussion of our implementation. 


We consider a 10-class image classification task on the standard MNIST dataset of 70K handwritten digits (\url{http://yann.lecun.com/exdb/mnist/}). We evaluate with both regularized SVM and fully-connected neural network (NN) classifiers as loss functions; SVM satisfies Assumption~\ref{assum:genConv} while NN does not. Unless stated otherwise, the results are presented using SVM. Samples are distributed across devices in either an i.i.d. or non-i.i.d. manner; for i.i.d., each device has samples from each class, while for non-i.i.d., each device has samples from only of the 10 classes. More details on the dataset, classifiers, and hyperparameter tuning procedure are available in Appendix~\ref{app:extraSim}: there, we also provide additional results on the Fashion MNIST (F-MNIST) dataset and for a network of $625$ edge devices.

\vspace{-4mm}
\subsection{{\tt MH-FL} with Fixed Consensus Rounds}
\label{ssec:results}
\vspace{-.2mm}
\subsubsection{{\tt MH-FL} with fixed step size}
We consider a scenario in which the number of D2D rounds is set to be a constant value $\theta$ over all clusters, which provides useful insights for the rest of the results. In Fig.~\ref{fig:GenFigGoodIntuition_MNIST_125}, we compare the performance of {\tt MH-FL} when all the clusters work in LUT mode and have fixed rounds of D2D with the case where the network consists of all EUT clusters (referred to as ``EUT baseline''). The EUT case is identical (in terms of convergence) to carrying out centralized gradient descent over the entire dataset. We see that increasing the number of consensus at different layers increases the accuracy and stability of the model training for {\tt MH-FL}. Although convergence is not achieved in all cases (in particular, when $\theta$ is $1$ and $2$), if the number of D2D rounds chosen is larger than $15$, comparable performance to EUT is achieved. This performance is characterized by linear convergence, as can be seen in Fig.~\ref{fig:GenFigGoodIntuition_MNIST_125}(c) with logarithmic axis scales. 

\subsubsection{{\tt MH-FL} with decaying step size}
The effect of decreasing the gradient descent step size (Proposition~\ref{prop:conv_0_dem_step}) is depicted in Fig.~\ref{fig:decaying_learning_rate_MNIST_125}. This verifies that the decay can suppress the finite optimality bound and provide convergence to the optimal model. Also, the convergence occurs at a slower pace compared with the baseline, which is in line with our theoretical results (convergence rate of $O(1/k)$). Fig.~\ref{fig:decaying_learning_rate_MNIST_125} further reveals the inherent trade-off between conducting a higher number of D2D rounds with a constant learning rate (higher power consumption from more rounds, but   with a fast convergence speed) and performing a fewer number of D2D rounds with a decaying learning rate (lower power consumption with a slower convergence).

\vspace{-3mm}
\subsection{{\tt MH-FL} with Adaptive D2D Round Tuning}
We next study the case when our distributed D2D tuning scheme is utilized. The results are depicted for both convergence cases, i.e., where a finite optimality gap is tolerable (Figs.~\ref{fig:iidIncConSigma_MNIST_125},~\ref{fig:non_iidIncConSigma_MNIST_125}) and when the linear convergence to the optimal solution is desired (Figs.~\ref{fig:iidConsConDelta_MNIST_125},~\ref{fig:non_iidConsConDelta_MNIST_125}). Recall that Propositions~\ref{prop:boundedConsConv}\&\ref{prop:FogConv} obtain the sufficient number of D2D rounds based on an upper bound; for this experiment, we observed that $\log({\sigma_{j}})$ and $\log({\sigma^{(k)}_{j}})$ in \eqref{eq:IterFogLCons} and~\eqref{eq:IterFogLCons2} can be scaled and used as $\log(\chi {\sigma^{(k)}_{j}})$ and $\log(\chi {\sigma^{(k)}_{j}})$,  $\forall j$, where $\chi\in[1,15]$ to obtain fewer rounds of D2D while satisfying the desired convergence behavior.
\subsubsection{{\tt MH-FL} with finite optimality gap}
Fig.~\ref{fig:iidIncConSigma_MNIST_125} depicts the result for the case where local datasets are i.i.d., with the values of $\{\sigma_j\}_{j=1}^{|\mathcal{L}|}$ depicted. In the figures, $\overline{\theta^{(k)}}$ denotes the average number of D2D rounds employed by clusters over all the network layers at global iteration $k$, and $\overline{\theta^{(k)}_{{L}_j}}$ denotes the average number of D2D rounds at iteration $k$ in layer ${{L}_j}$.  We observe that (i) the D2D rounds performed in the network is tapered through time (subplot c), and (ii) the D2D rounds performed at different network layers is tapered through space, where higher layers of the network perform fewer rounds (subplots d-f). We perform the same experiment with non-i.i.d datasets across the nodes in Fig.~\ref{fig:non_iidIncConSigma_MNIST_125}. Comparing Fig.~\ref{fig:non_iidIncConSigma_MNIST_125} to~\ref{fig:iidIncConSigma_MNIST_125}, it can be observed that non-i.i.d. introduces oscillations on the number of D2D performed at different network layers, and the smaller values of D2D control parameters result in larger numbers of D2D rounds which leads to more stable training.

\subsubsection{{\tt MH-FL} with linear convergence} The same experiment is repeated in Figs.~\ref{fig:iidConsConDelta_MNIST_125},~\ref{fig:non_iidConsConDelta_MNIST_125} for the linear convergence case. We see that the number of D2D rounds is tapered through space (subplots d-f) and is boosted over time index $k$ (subplots c-f). This is due to the decrease in the norm of gradient in the right hand side of~\eqref{eq:ineqWeightsofNodesApp} over time, which calls for an increment in the number of D2D rounds. Comparing Figs.~\ref{fig:iidConsConDelta_MNIST_125} and~\ref{fig:non_iidConsConDelta_MNIST_125} with Figs.~\ref{fig:iidIncConSigma_MNIST_125} and \ref{fig:non_iidIncConSigma_MNIST_125}, we see that guaranteeing the linear convergence comes with the tradeoff of performing a larger number of D2D rounds over time. In Figs.~\ref{fig:iidIncConSigma_MNIST_125},~\ref{fig:non_iidIncConSigma_MNIST_125}, a small optimality gap is achieved, while the number of D2D rounds is tapered over time.

\subsubsection{{\tt MH-FL} with adaptive rounds of D2D for non-convex ML models}
Recall that we developed Algorithm 3 for non-convex ML loss functions. Figs.~\ref{fig:iidNNIncConec_MNIST_125} and~\ref{fig:non_iidConsConDeltaNN_MNIST_125} give results with NNs for different values of tolerable error of aggregations $\psi$, under i.i.d and non-i.i.d data distributions, respectively. These figures show the effect of tolerable error of aggregations on the performance of NNs; by decreasing the tolerable error, the number of D2D rounds is increased, and the performance is enhanced. These figures also reveal a tapering of the number of consensus through time and space in the i.i.d case.

\subsubsection{Analytical vs. experimental bound comparison}
 We investigate the number of aggregations required to obtain a certain accuracy under linear convergence (Corollary~\ref{cor:numIterCertainAccur}). In Fig.~\ref{fig:non_iidTheoPracSimDiff_MNIST_125}, we compare the result obtained from Policy B using \eqref{eq:ineqWeightsofNodesApp} to that observed in our experiments. The results indicate that the theoretical bounds are reasonably tight.



\vspace{-3.3mm}
\subsection{Network Resource Utilization}
\vspace{-.2mm}
We now study the network resource utilization of {\tt MH-FL}. In particular, we consider two metrics: (i) the amount of data transferred between the network layers, and (ii) the accumulated energy consumption of the edge devices. In both cases, EUT and {\tt MH-FL} are trained to reach $98\%$ of the final accuracy achieved after $50$ iterations of centralized gradient descent. We consider four scenarios, corresponding to those used in Figs.~\ref{fig:iidIncConSigma_MNIST_125},~\ref{fig:non_iidIncConSigma_MNIST_125},~\ref{fig:iidNNIncConec_MNIST_125},~\ref{fig:non_iidConsConDeltaNN_MNIST_125}.
To obtain the accumulated energy, we consider the transmission power of end devices to be $10$dbm in D2D and $24$dbm in uplink mode~\cite{hmila2019energy,dominic2020joint}, and assume that transmission of parameters at each round occurs with data rate of $1 \mbox{Mbits/s}$ with $32$-bit quantization per model parameter element. The accumulated energy consumption of the edge devices through the training phase is depicted in Fig.~\ref{fig:accum_energy_125_MNIST}, which reveals around $50\%$ energy saving on average as compared to the EUT baseline. The accumulated number of parameters transferred over the network layers are shown in Fig.~\ref{fig:accumData_MNIST_125}, revealing $80\%$ reduction in the number of parameters transferred over the layers as compared to the baseline.
We conduct further numerical studies in Appendix~\ref{appendix:psiSigma} to reveal the impact of our D2D control parameters $\{\sigma_j\}_{j=1}^{|\mathcal{L}|}$ and our tolerable aggregation error $\psi$ on the energy and transmit parameters savings.

\vspace{-3mm}
\section{Conclusion and Future Work}

\noindent We developed multi-stage hybrid federated learning~({\tt MH-FL}), a novel methodology which migrates the star topology of federated learning to a multi-layer cluster-based distributed architecture incorporating cooperative D2D communications, which constitutes a semi-decentralized learning architecture. We theoretically obtained the convergence bound of {\tt MH-MT} explicitly considering the time varying network topology, time varying number of D2D rounds at different network clusters, and inherent ML model characteristics. We proposed a set of policies for the number of D2D rounds conducted at different network clusters under which convergence either to a finite optimality gap or the global optimum can be achieved. We further used these policies to develop a set of adaptive distributed control algorithms that tune the number of D2D rounds at different clusters of the network in  real-time. 


This paper motivates several directions for future work. Investigating more specific system characteristics that have been considered for federated learning -- such as communication imperfections, interference management, mitigation of stragglers, and device scheduling -- for the multi-stage hybrid structure of fog networks is promising. Also, the proposed network dimension of federated learning motivates works on network (re-)formation, congestion-aware data transfer and load balancing, and topology design for performance optimization. Furthermore, integrating the recently proposed asynchronous federated learning paradigm~\cite{xie2019asynchronous} with the semi-decentralized architecture proposed in this paper is an interesting direction. Finally, in this work, we have assumed that the operation of device clusters as LUT vs. EUT is provided as an input to our methodology; namely, by the physical/link-layer protocols in place, where D2D communication links are established. A holistic, cross-layer optimization approach that jointly optimizes model training and resource utilization metrics over the partitioning of devices into LUT vs. EUT and the subsequent operation of LUT clusters is another promising future direction.
\vspace{-7mm}
\bibliographystyle{IEEEtran}
\bibliography{FogLRef}
\vspace{-16mm}

\begin{IEEEbiographynophoto}{Seyyedali Hosseinalipour (M'20)}
received his Ph.D. in EE from NCSU in 2020. He is currently a postdoctoral researcher at Purdue University. 
\end{IEEEbiographynophoto}
\vspace{-15mm}
\begin{IEEEbiographynophoto}{Sheikh Shams Azam} is a Ph.D. student at Purdue University. He received his B.Tech. in ECE from NITK, India in 2015.
 
\end{IEEEbiographynophoto}
\vspace{-15mm}
\begin{IEEEbiographynophoto}{Christopher G. Brinton (SM'20)}
is an Assistant Professor of ECE at Purdue University. He received his Ph.D. in EE from Princeton University in 2016.
\end{IEEEbiographynophoto}
\vspace{-15mm}
\begin{IEEEbiographynophoto}{Nicol\`{o} Michelusi (SM'19)} 
received his Ph.D. in EE from University of Padova, Italy, in 2013. He is an Assistant Professor at Arizona State University.
\end{IEEEbiographynophoto}
\vspace{-15mm}
\begin{IEEEbiographynophoto}{Vaneet Aggarwal (SM'15)} received his Ph.D. in EE from Princeton University in 2010. He is currently an Associate Professor at Purdue University.
\end{IEEEbiographynophoto}
\vspace{-15mm}
\begin{IEEEbiographynophoto}{David Love (F'15)} is the Nick Trbovich Professor of ECE at Purdue University. He received the Ph.D. degree in EE from University of Texas at Austin in 2004.
\end{IEEEbiographynophoto}
\vspace{-15mm}
\begin{IEEEbiographynophoto}{Huaiyu Dai (F’17)}
received the Ph.D. degree in EE from Princeton University in 2002. He is currently a Professor of ECE at NCSU, holding the title of University Faculty Scholar.
\end{IEEEbiographynophoto}
\newpage
\vspace{30mm}
  \newpage 
\vspace{10mm}
\begingroup
\let\clearpage\relax 
\onecolumn 
\appendices
\section{Proof of Theorem~\ref{theo:consbase}}\label{app:ConsGen}
 \begin{proof}
 We carry out the proof in three parts: In part I, we obtain the convergence behavior of {\tt MH-FL} given arbitrary aggregation errors at the sampled devices. In part II, we obtain the aggregation error caused by the D2D consensus process. Finally, in part III, we derive the final convergence bound.
 
 \subsection{PART I: Convergence Bound for General Local Aggregation Error at the Sampled Nodes}
 
  We first aim to bound the per-iteration decrease of the gap between the function $F(\mathbf{w}^{(k)})$ and $F(\mathbf{w}^*)$. Using the Taylor expansion and the $\eta$-smoothness of function $F$, the following quadratic upper-bound can be obtained:
 \begin{equation}\label{eq:quadBound}
 \begin{aligned}
     F(\mathbf{w}^{(k)})\leq& F(\mathbf{w}^{(k-1)})+ (\mathbf{w}^{(k)}-\mathbf{w}^{(k-1)})^\top \nabla F(\mathbf{w}^{(k-1)})+\frac{\eta}{2}\norm{\mathbf{w}^{(k)}-\mathbf{w}^{(k-1)}}^2,~~ \forall k.
     \end{aligned}
 \end{equation}
 To find the relationship between $\textbf{w}^{(k-1)}$ and $\textbf{w}^{(k)}$, we follow the procedure described in the main text. For parent node $a_{p}$, let $a'_{p+1}$ denote the corresponding sampled node, $\forall p$, e.g., in the following nested sums $a'_{|\mathcal{L}|}$ denotes the sampled node in the last layer by parent node $a_{|\mathcal{L}|-1}$ in its above layer. This corresponds to an arbitrary realization of the children sampling at different parent nodes. Let $a'_{1}$ denote the  sampled node by the main server in $L_1$. The model parameter of this node is given by
\begin{equation}\label{eq:root}
  \hspace{-22mm}
  \begin{aligned}
    & \widehat{\mathbf{w}}^{(k)}_{{a'_1}}=\frac{\displaystyle \sum_{{a_{1}}
    \in \mathcal{L}^{(k)}_{{1},{1}}} \sum_{a_{2}
    \in \mathcal{Q}^{(k)}(a_{1})} \sum_{a_{3}
    \in \mathcal{Q}^{(k)}(a_{2})}\cdots  \sum_{a_{|\mathcal{L}|}
    \in \mathcal{Q}^{(k)}({a_{|\mathcal{L}|-1}})} |\mathcal{D}_{a_{|\mathcal{L}|}}|\mathbf{w}^{(k-1)}_{{a_{|\mathcal{L}|}}}} {|\mathcal{L}^{(k)}_{{1},1}|}\\& 
    -\frac{\displaystyle \sum_{{a_{1}}
    \in \mathcal{L}^{(k)}_{{1},{1}}} \sum_{a_{2}
    \in \mathcal{Q}^{(k)}(a_{1})} \sum_{a_{3}
    \in \mathcal{Q}^{(k)}(a_{2})}\cdots  \sum_{a_{|\mathcal{L}|}
    \in \mathcal{Q}^{(k)}({a_{|\mathcal{L}|-1}})}\beta |\mathcal{D}_{a_{|\mathcal{L}|}}|\nabla f_{{a_{|\mathcal{L}|}}}(\mathbf{w}^{(k-1)}_{{a_{|\mathcal{L}|}}})} {|\mathcal{L}^{(k)}_{{1},1}|}\\&
    +\sum_{{a_{1}}
    \in \mathcal{L}^{(k)}_{{1},{1}}} \sum_{a_{2}
    \in \mathcal{Q}^{(k)}(a_{1})} \sum_{a_{3}
    \in \mathcal{Q}^{(k)}(a_{2})}\cdots  \sum_{a_{|\mathcal{L}|-1}
    \in \mathcal{Q}^{(k)}({a_{|\mathcal{L}|-2}})} \frac{\mathbbm{1}^{(k)}_{\left\{{{Q}(a_{|\mathcal{L}|-1})}\right\}}|\mathcal{Q}^{(k)}(a_{|\mathcal{L}|-1})|  \mathbf{c}^{(k)}_{a'_{|\mathcal{L}|}}}{|\mathcal{L}^{(k)}_{{1},1}|}\\&
    +\sum_{{a_{1}}
    \in \mathcal{L}^{(k)}_{{1},{1}}} \sum_{a_{2}
    \in \mathcal{Q}^{(k)}(a_{1})} \sum_{a_{3}
    \in \mathcal{Q}^{(k)}(a_{2})}\cdots  \sum_{a_{|\mathcal{L}|-2}
    \in \mathcal{Q}^{(k)}({a_{|\mathcal{L}|-3}})} \frac{\mathbbm{1}^{(k)}_{\left\{{{Q}(a_{|\mathcal{L}|-2})}\right\}}|\mathcal{Q}^{(k)}(a_{|\mathcal{L}|-2})|  \mathbf{c}^{(k)}_{a'_{|\mathcal{L}|-1}}}{|\mathcal{L}^{(k)}_{{1},1}|}+
    \\&\vdots
    \\&+\sum_{a_1
    \in \mathcal{L}^{(k)}_{{1},1}} \frac{\mathbbm{1}^{(k)}_{\left\{{{Q}(a_1)}\right\}}|\mathcal{Q}^{(k)}(a_1)|  \mathbf{c}^{(k)}_{a'_2}}{|\mathcal{L}^{(k)}_{{1},1}|}+\mathbbm{1}^{(k)}_{\left\{{{L}_{{1},1}}\right\}} \mathbf{c}^{(k)}_{a'_1},
       \end{aligned}
       \hspace{-20mm}
 \end{equation}
which the main server uses to obtain the next global parameter as follows (due to the existence of the indicator function in the last term of the above expression, the following expression holds regardless of the operating mode of the cluster at layer ${L}_1$):
  \begin{equation}\label{up:proofGlob}
    \textbf{w}^{(k)}= \frac{|\mathcal{L}^{(k)}_{{1},1}| {\widehat{\mathbf{w}}}^{(k)}_{{a'_1}}}{D}.
 \end{equation}
Based on \eqref{eq:localupdateOverrride}, it can be verified that
\begin{equation}
  \begin{aligned}
    &\displaystyle \sum_{{a_{1}}
    \in \mathcal{L}^{(k)}_{{1},{1}}} \sum_{a_{2}
    \in \mathcal{Q}^{(k)}(a_{1})} \sum_{a_{3}
    \in \mathcal{Q}^{(k)}(a_{2})}\cdots  \sum_{a_{|\mathcal{L}|}
    \in \mathcal{Q}^{(k)}({a_{|\mathcal{L}|-1}})} |\mathcal{D}_{a_{|\mathcal{L}|}}|\mathbf{w}^{(k-1)}_{{a_{|\mathcal{L}|}}} =D\textbf{w}^{(k-1)}.
       \end{aligned}
       \hspace{-15mm}
 \end{equation}
Also, using~\eqref{eq:globlossinit}, we have
\begin{equation}
  \hspace{-27mm}
  \begin{aligned}
  {\displaystyle \sum_{{a_{1}}
    \in \mathcal{L}^{(k)}_{{1},{1}}} \sum_{a_{2}
    \in \mathcal{Q}^{(k)}(a_{1})} \sum_{a_{3}
    \in \mathcal{Q}^{(k)}(a_{2})}\cdots  \sum_{a_{|\mathcal{L}|}
    \in \mathcal{Q}^{(k)}({a_{|\mathcal{L}|-1}})}\beta |\mathcal{D}_{a_{|\mathcal{L}|}}|\nabla f_{{a_{|\mathcal{L}|}}}(\mathbf{w}^{(k-1)}_{{a_{|\mathcal{L}|}}})}
    = \beta D\nabla F (\textbf{w}^{(k-1)}).
       \end{aligned}
       \hspace{-20mm}
 \end{equation}
 Replacing the above two equations in~\eqref{eq:root} and performing the update given by~\eqref{up:proofGlob}, we get
 \begin{equation}\label{eq:roott}
  \hspace{-24mm}
  \begin{aligned}
    & {\textbf{w}}^{(k)}=\textbf{w}^{(k-1)}-\beta \nabla F(\textbf{w}^{(k-1)}) +\frac{1}{D}\Bigg(\sum_{{a_{1}}
    \in \mathcal{L}^{(k)}_{{1},{1}}} \sum_{a_{2}
    \in \mathcal{Q}^{(k)}(a_{1})} \sum_{a_{3}
    \in \mathcal{Q}^{(k)}(a_{2})}\cdots  \sum_{a_{|\mathcal{L}|-1}
    \in \mathcal{Q}^{(k)}({a_{|\mathcal{L}|-2}})} {\mathbbm{1}^{(k)}_{\left\{{{Q}(a_{|\mathcal{L}|-1})}\right\}}|\mathcal{Q}^{(k)}(a_{|\mathcal{L}|-1})|  \mathbf{c}^{(k)}_{a'_{|\mathcal{L}|}}}\\&
    +\sum_{{a_{1}}
    \in \mathcal{L}^{(k)}_{{1},{1}}} \sum_{a_{2}
    \in \mathcal{Q}^{(k)}(a_{1})} \sum_{a_{3}
    \in \mathcal{Q}^{(k)}(a_{2})}\cdots  \sum_{a_{|\mathcal{L}|-2}
    \in \mathcal{Q}^{(k)}({a_{|\mathcal{L}|-3}})} {\mathbbm{1}^{(k)}_{\left\{{{Q}(a_{|\mathcal{L}|-2})}\right\}}|\mathcal{Q}^{(k)}(a_{|\mathcal{L}|-2})|  \mathbf{c}^{(k)}_{a'_{|\mathcal{L}|-1}}}+
    \\&\vdots
    \\&+\sum_{a_1
    \in \mathcal{L}^{(k)}_{{1},1}} {\mathbbm{1}^{(k)}_{\left\{{{Q}(a_1)}\right\}}|\mathcal{Q}^{(k)}(a_1)|  \mathbf{c}^{(k)}_{a'_2}}+\mathbbm{1}^{(k)}_{\left\{{{L}_{{1},1}}\right\}} |\mathcal{L}^{(k)}_{{1},1}|\mathbf{c}^{(k)}_{a'_1}\Bigg),
       \end{aligned}
       \hspace{-20mm}
 \end{equation}
 Calculating $\textbf{w}^{(k)}-\textbf{w}^{(k-1)}$ using the above equation and replacing the result in~\eqref{eq:quadBound} yields
  \begin{equation}\label{eq:quadBound2}
 \begin{aligned}
     &F(\mathbf{w}^{(k)})-F(\mathbf{w}^{(k-1)})\leq  \left(\frac{\eta\beta^2}{2}-\beta\right)\norm{\nabla F(\mathbf{w}^{(k-1)})}^2\\&+\left(\frac{1-\beta \eta}{D}\right)\Bigg(\sum_{{a_{1}}
    \in \mathcal{L}^{(k)}_{{1},{1}}} \sum_{a_{2}
    \in \mathcal{Q}^{(k)}(a_{1})} \sum_{a_{3}
    \in \mathcal{Q}^{(k)}(a_{2})}\cdots  \sum_{a_{|\mathcal{L}|-1}
    \in \mathcal{Q}^{(k)}({a_{|\mathcal{L}|-2}})} {\mathbbm{1}^{(k)}_{\left\{{{Q}(a_{|\mathcal{L}|-1})}\right\}}|\mathcal{Q}^{(k)}(a_{|\mathcal{L}|-1})|  \mathbf{c}^{(k)}_{a'_{|\mathcal{L}|}}}\\&
    +\sum_{{a_{1}}
    \in \mathcal{L}^{(k)}_{{1},{1}}} \sum_{a_{2}
    \in \mathcal{Q}^{(k)}(a_{1})} \sum_{a_{3}
    \in \mathcal{Q}^{(k)}(a_{2})}\cdots  \sum_{a_{|\mathcal{L}|-2}
    \in \mathcal{Q}^{(k)}({a_{|\mathcal{L}|-3}})} {\mathbbm{1}^{(k)}_{\left\{{{Q}(a_{|\mathcal{L}|-2})}\right\}}|\mathcal{Q}^{(k)}(a_{|\mathcal{L}|-2})|  \mathbf{c}^{(k)}_{a'_{|\mathcal{L}|-1}}}+\cdots
    \\&+\sum_{a_1
    \in \mathcal{L}^{(k)}_{{1},1}} {\mathbbm{1}^{(k)}_{\left\{{{Q}(a_1)}\right\}}|\mathcal{Q}^{(k)}(a_1)|  \mathbf{c}^{(k)}_{a'_2}}+\mathbbm{1}^{(k)}_{\left\{{{L}_{{1},1}}\right\}} |\mathcal{L}^{(k)}_{{1},1}|\mathbf{c}^{(k)}_{a'_1}\Bigg)^\top\nabla F(\mathbf{w}^{(k-1)})\\
 &+\frac{\eta}{2D^2}\Big\Vert\sum_{{a_{1}}
    \in \mathcal{L}^{(k)}_{{1},{1}}} \sum_{a_{2}
    \in \mathcal{Q}^{(k)}(a_{1})} \sum_{a_{3}
    \in \mathcal{Q}^{(k)}(a_{2})}\cdots  \sum_{a_{|\mathcal{L}|-1}
    \in \mathcal{Q}^{(k)}({a_{|\mathcal{L}|-2}})} {\mathbbm{1}^{(k)}_{\left\{{{Q}(a_{|\mathcal{L}|-1})}\right\}}|\mathcal{Q}^{(k)}(a_{|\mathcal{L}|-1})|  \mathbf{c}^{(k)}_{a'_{|\mathcal{L}|}}}\\&
    +\sum_{{a_{1}}
    \in \mathcal{L}^{(k)}_{{1},{1}}} \sum_{a_{2}
    \in \mathcal{Q}^{(k)}(a_{1})} \sum_{a_{3}
    \in \mathcal{Q}^{(k)}(a_{2})}\cdots  \sum_{a_{|\mathcal{L}|-2}
    \in \mathcal{Q}^{(k)}({a_{|\mathcal{L}|-3}})} {\mathbbm{1}^{(k)}_{\left\{{{Q}(a_{|\mathcal{L}|-2})}\right\}}|\mathcal{Q}^{(k)}(a_{|\mathcal{L}|-2})|  \mathbf{c}^{(k)}_{a'_{|\mathcal{L}|-1}}}+\cdots
    \\&+\sum_{a_1
    \in \mathcal{L}^{(k)}_{{1},1}} {\mathbbm{1}^{(k)}_{\left\{{{Q}(a_1)}\right\}}|\mathcal{Q}^{(k)}(a_1)|  \mathbf{c}^{(k)}_{a'_2}}+\mathbbm{1}^{(k)}_{\left\{{{L}_{{1},1}}\right\}} |\mathcal{L}^{(k)}_{{1},1}|\mathbf{c}^{(k)}_{a'_1}\Big\Vert^2.
     \end{aligned}
     \hspace{-6mm}
 \end{equation}
 Tuning the learning rate as $\beta=\frac{1}{\eta}$, we obtain
 \begin{equation}\label{eq:quadBound33}
 \hspace{-5mm}
 \begin{aligned}
     &F(\mathbf{w}^{(k)})-F(\mathbf{w}^{(k-1)})\leq \frac{-1}{2\eta}\norm{\nabla F(\mathbf{w}^{(k-1)})}^2+\\
 &\frac{\eta}{2D^2}\Big\Vert\sum_{{a_{1}}
    \in \mathcal{L}^{(k)}_{{1},{1}}} \sum_{a_{2}
    \in \mathcal{Q}^{(k)}(a_{1})} \sum_{a_{3}
    \in \mathcal{Q}^{(k)}(a_{2})}\cdots  \sum_{a_{|\mathcal{L}|-1}
    \in \mathcal{Q}^{(k)}({a_{|\mathcal{L}|-2}})} {\mathbbm{1}^{(k)}_{\left\{{{Q}(a_{|\mathcal{L}|-1})}\right\}}|\mathcal{Q}^{(k)}(a_{|\mathcal{L}|-1})|  \mathbf{c}^{(k)}_{a'_{|\mathcal{L}|}}}\\&
    +\sum_{{a_{1}}
    \in \mathcal{L}^{(k)}_{{1},{1}}} \sum_{a_{2}
    \in \mathcal{Q}^{(k)}(a_{1})} \sum_{a_{3}
    \in \mathcal{Q}^{(k)}(a_{2})}\cdots  \sum_{a_{|\mathcal{L}|-2}
    \in \mathcal{Q}^{(k)}({a_{|\mathcal{L}|-3}})} {\mathbbm{1}^{(k)}_{\left\{{{Q}(a_{|\mathcal{L}|-2})}\right\}}|\mathcal{Q}^{(k)}(a_{|\mathcal{L}|-2})|  \mathbf{c}^{(k)}_{a'_{|\mathcal{L}|-1}}}+
    \\&\vdots
    \\&+\sum_{a_1
    \in \mathcal{L}^{(k)}_{{1},1}} {\mathbbm{1}^{(k)}_{\left\{{{Q}(a_1)}\right\}}|\mathcal{Q}^{(k)}(a_1)|  \mathbf{c}^{(k)}_{a'_2}}+\mathbbm{1}^{(k)}_{\left\{{{L}_{{1},1}}\right\}} |\mathcal{L}^{(k)}_{{1},1}|\mathbf{c}^{(k)}_{a'_1}\Big\Vert^2.
     \end{aligned}
     \hspace{-5mm}
 \end{equation}
 Considering the right hand side of the above inequality, to bound $\norm{\nabla F(\mathbf{w}^{(k-1)})}^2$,  we use the strong convexity property of $F$. Considering the strong convexity criterion in Assumption~\ref{assum:genConv} with $x$ replaced by $\mathbf{w}^{(k-1)}$ and minimizing the both hand sides, the minimum of the left hand side occurs at $y=\mathbf{w}^*$ and the minimum of the right hand side occurs at $y=\mathbf{w}^{(k-1)}-\frac{1}{\mu} \nabla F(\mathbf{w}^{(k-1)})$. Replacing these values in the strong convexity criterion in Assumption~\ref{assum:genConv} results in Polyak-Lojasiewicz inequality~\cite{polyak1963gradient} in the following form:
 \begin{equation}\label{proofboundGradStrongConv}
     \norm{\nabla F(\mathbf{w}^{(k-1)})}^2 \geq (F(\mathbf{w}^{(k-1)})- F(\mathbf{w}^*)){2\mu},
 \end{equation}
 which yields
 \begin{equation}\label{eq:quadBound3}
 \hspace{-5mm}
 \begin{aligned}
     &F(\mathbf{w}^{(k)})-F(\mathbf{w}^{(k-1)})\leq \frac{-\mu}{\eta}(F(\mathbf{w}^{(k-1)})- F(\mathbf{w}^*))+\\
 &\frac{\eta}{2D^2}\Bigg[\Big\Vert\sum_{{a_{1}}
    \in \mathcal{L}^{(k)}_{{1},{1}}} \sum_{a_{2}
    \in \mathcal{Q}^{(k)}(a_{1})} \sum_{a_{3}
    \in \mathcal{Q}^{(k)}(a_{2})}\cdots  \sum_{a_{|\mathcal{L}|-1}
    \in \mathcal{Q}^{(k)}({a_{|\mathcal{L}|-2}})} {\mathbbm{1}^{(k)}_{\left\{{{Q}(a_{|\mathcal{L}|-1})}\right\}}|\mathcal{Q}^{(k)}(a_{|\mathcal{L}|-1})|  \mathbf{c}^{(k)}_{a'_{|\mathcal{L}|}}}\\&
    +\sum_{{a_{1}}
    \in \mathcal{L}^{(k)}_{{1},{1}}} \sum_{a_{2}
    \in \mathcal{Q}^{(k)}(a_{1})} \sum_{a_{3}
    \in \mathcal{Q}^{(k)}(a_{2})}\cdots  \sum_{a_{|\mathcal{L}|-2}
    \in \mathcal{Q}^{(k)}({a_{|\mathcal{L}|-3}})} {\mathbbm{1}^{(k)}_{\left\{{{Q}(a_{|\mathcal{L}|-2})}\right\}}|\mathcal{Q}^{(k)}(a_{|\mathcal{L}|-2})|  \mathbf{c}^{(k)}_{a'_{|\mathcal{L}|-1}}}+
    \\&\vdots
    \\&+\sum_{a_1
    \in \mathcal{L}^{(k)}_{{1},1}} {\mathbbm{1}^{(k)}_{\left\{{{Q}(a_1)}\right\}}|\mathcal{Q}^{(k)}(a_1)|  \mathbf{c}^{(k)}_{a'_2}}+\mathbbm{1}^{(k)}_{\left\{{{L}_{{1},1}}\right\}} |\mathcal{L}^{(k)}_{{1},1}|\mathbf{c}^{(k)}_{a'_1}\Big\Vert^2\Bigg].
     \end{aligned}
     \hspace{-5mm}
 \end{equation}
 Then, we perform the following algebraic steps to bound the second term on the right hand side of the above inequality:
\begin{equation}\label{eq:proofGenBound1}
\hspace{-19mm}
     \begin{aligned}
 &\Big\Vert\sum_{{a_{1}}
    \in \mathcal{L}^{(k)}_{{1},{1}}} \sum_{a_{2}
    \in \mathcal{Q}^{(k)}(a_{1})} \sum_{a_{3}
    \in \mathcal{Q}^{(k)}(a_{2})}\cdots  \sum_{a_{|\mathcal{L}|-1}
    \in \mathcal{Q}^{(k)}({a_{|\mathcal{L}|-2}})} {\mathbbm{1}^{(k)}_{\left\{{{Q}(a_{|\mathcal{L}|-1})}\right\}}|\mathcal{Q}^{(k)}(a_{|\mathcal{L}|-1})|  \mathbf{c}^{(k)}_{a'_{|\mathcal{L}|}}}\\&
    +\sum_{{a_{1}}
    \in \mathcal{L}^{(k)}_{{1},{1}}} \sum_{a_{2}
    \in \mathcal{Q}^{(k)}(a_{1})} \sum_{a_{3}
    \in \mathcal{Q}^{(k)}(a_{2})}\cdots  \sum_{a_{|\mathcal{L}|-2}
    \in \mathcal{Q}^{(k)}({a_{|\mathcal{L}|-3}})} {\mathbbm{1}^{(k)}_{\left\{{{Q}(a_{|\mathcal{L}|-2})}\right\}}|\mathcal{Q}^{(k)}(a_{|\mathcal{L}|-2})|  \mathbf{c}^{(k)}_{a'_{|\mathcal{L}|-1}}}+
    \cdots
    \\&+\sum_{a_1
    \in \mathcal{L}^{(k)}_{{1},1}} {\mathbbm{1}^{(k)}_{\left\{{{Q}(a_1)}\right\}}|\mathcal{Q}^{(k)}(a_1)|  \mathbf{c}^{(k)}_{a'_2}}+\mathbbm{1}^{(k)}_{\left\{{{L}_{{1},1}}\right\}} |\mathcal{L}^{(k)}_{{1},1}|\mathbf{c}^{(k)}_{a'_1}\Big\Vert^2\\&
    \leq 
    \Bigg( \Big\Vert\sum_{{a_{1}}
    \in \mathcal{L}^{(k)}_{{1},{1}}} \sum_{a_{2}
    \in \mathcal{Q}^{(k)}(a_{1})} \sum_{a_{3}
    \in \mathcal{Q}^{(k)}(a_{2})}\cdots  \sum_{a_{|\mathcal{L}|-1}
    \in \mathcal{Q}^{(k)}({a_{|\mathcal{L}|-2}})} {\mathbbm{1}^{(k)}_{\left\{{{Q}(a_{|\mathcal{L}|-1})}\right\}}|\mathcal{Q}^{(k)}(a_{|\mathcal{L}|-1})|  \mathbf{c}^{(k)}_{a'_{|\mathcal{L}|}}}\Big\Vert\\&+\Big\Vert\sum_{{a_{1}}
    \in \mathcal{L}^{(k)}_{{1},{1}}} \sum_{a_{2}
    \in \mathcal{Q}^{(k)}(a_{1})} \sum_{a_{3}
    \in \mathcal{Q}^{(k)}(a_{2})}\cdots  \sum_{a_{|\mathcal{L}|-2}
    \in \mathcal{Q}^{(k)}({a_{|\mathcal{L}|-3}})} {\mathbbm{1}^{(k)}_{\left\{{{Q}(a_{|\mathcal{L}|-2})}\right\}}|\mathcal{Q}^{(k)}(a_{|\mathcal{L}|-2})|  \mathbf{c}^{(k)}_{a'_{|\mathcal{L}|-1}}}
   \Big\Vert \\&+\cdots
    \\&+\Big\Vert\sum_{a_1
    \in \mathcal{L}^{(k)}_{{1},1}} {\mathbbm{1}^{(k)}_{\left\{{{Q}(a_1)}\right\}}|\mathcal{Q}^{(k)}(a_1)|  \mathbf{c}^{(k)}_{a'_2}}\Big\Vert+\Big\Vert\mathbbm{1}^{(k)}_{\left\{{{L}_{{1},1}}\right\}} |\mathcal{L}^{(k)}_{{1},1}|\mathbf{c}^{(k)}_{a'_1}\Big\Vert\Bigg)^2
   \\&
     \leq \Bigg( \sum_{{a_{1}}
    \in \mathcal{L}^{(k)}_{{1},{1}}} \sum_{a_{2}
    \in \mathcal{Q}^{(k)}(a_{1})} \sum_{a_{3}
    \in \mathcal{Q}^{(k)}(a_{2})}\cdots  \sum_{a_{|\mathcal{L}|-1}
    \in \mathcal{Q}^{(k)}({a_{|\mathcal{L}|-2}})} {\mathbbm{1}^{(k)}_{\left\{{{Q}(a_{|\mathcal{L}|-1})}\right\}}|\mathcal{Q}^{(k)}(a_{|\mathcal{L}|-1})| \Big\Vert \mathbf{c}^{(k)}_{a'_{|\mathcal{L}|}}}\Big\Vert\\&+\sum_{{a_{1}}
    \in \mathcal{L}^{(k)}_{{1},{1}}} \sum_{a_{2}
    \in \mathcal{Q}^{(k)}(a_{1})} \sum_{a_{3}
    \in \mathcal{Q}^{(k)}(a_{2})}\cdots  \sum_{a_{|\mathcal{L}|-2}
    \in \mathcal{Q}^{(k)}({a_{|\mathcal{L}|-3}})} {\mathbbm{1}^{(k)}_{\left\{{{Q}(a_{|\mathcal{L}|-2})}\right\}}|\mathcal{Q}^{(k)}(a_{|\mathcal{L}|-2})|  \Big\Vert\mathbf{c}^{(k)}_{a'_{|\mathcal{L}|-1}}}
   \Big\Vert \\&+\cdots
    \\&+\sum_{a_1
    \in \mathcal{L}^{(k)}_{{1},1}} {\mathbbm{1}^{(k)}_{\left\{{{Q}(a_1)}\right\}}|\mathcal{Q}^{(k)}(a_1)|^2 \Big\Vert\mathbf{c}^{(k)}_{a'_2}}\Big\Vert+\mathbbm{1}^{(k)}_{\left\{{{L}_{{1},1}}\right\}}  |\mathcal{L}^{(k)}_{{1},1}|\Big\Vert \mathbf{c}^{(k)}_{a'_1}\Big\Vert\Bigg)^2
    \\&\overset{(a)}{\leq} \\&{\Phi} \Bigg[  \sum_{{a_{1}}
    \in \mathcal{L}^{(k)}_{{1},{1}}} \sum_{a_{2}
    \in \mathcal{Q}^{(k)}(a_{1})} \sum_{a_{3}
    \in \mathcal{Q}^{(k)}(a_{2})}\cdots  \sum_{a_{|\mathcal{L}|-1}
    \in \mathcal{Q}^{(k)}({a_{|\mathcal{L}|-2}})} {\mathbbm{1}^{(k)}_{\left\{{{Q}(a_{|\mathcal{L}|-1})}\right\}}|\mathcal{Q}^{(k)}(a_{|\mathcal{L}|-1})|^2 \Big\Vert \mathbf{c}^{(k)}_{a'_{|\mathcal{L}|}}}\Big\Vert^2\\&+\sum_{{a_{1}}
    \in \mathcal{L}^{(k)}_{{1},{1}}} \sum_{a_{2}
    \in \mathcal{Q}^{(k)}(a_{1})} \sum_{a_{3}
    \in \mathcal{Q}^{(k)}(a_{2})}\cdots  \sum_{a_{|\mathcal{L}|-2}
    \in \mathcal{Q}^{(k)}({a_{|\mathcal{L}|-3}})} {\mathbbm{1}^{(k)}_{\left\{{{Q}(a_{|\mathcal{L}|-2})}\right\}}|\mathcal{Q}^{(k)}(a_{|\mathcal{L}|-2})|^2  \Big\Vert\mathbf{c}^{(k)}_{a'_{|\mathcal{L}|-1}}}
   \Big\Vert^2 \\&+\cdots
    \\&+\sum_{a_1
    \in \mathcal{L}^{(k)}_{{1},1}} {\mathbbm{1}^{(k)}_{\left\{{{Q}(a_1)}\right\}}|\mathcal{Q}^{(k)}(a_1)|^2  \Big\Vert\mathbf{c}^{(k)}_{a'_2}}\Big\Vert^2+\mathbbm{1}^{(k)}_{\left\{{{L}_{{1},1}}\right\}}  |\mathcal{L}^{(k)}_{{1},1}|^2\Big\Vert \mathbf{c}^{(k)}_{a'_1}\Big\Vert^2\Bigg],
         \end{aligned}
         \hspace{-20mm}
\end{equation}
where the triangle inequality, e.g, for vectors $\textbf{a}_i,~1\leq i\leq n$: $\norm{\sum_{i=1}^{n}\textbf{a}_i}\leq \sum_{i=1}^{n} \norm{\textbf{a}_i}$,  is applied sequentially and
\begin{equation}
    \Phi= N_{{|\mathcal{L}|-1}}+N_{{|\mathcal{L}|-2}}+\cdots+N_{{1}}+1.
\end{equation}
Also, inequality (a) in~\eqref{eq:proofGenBound1} is obtained by exploiting the Cauchy–Schwarz inequality,$<\textbf{a},\textbf{a}'>~\leq \norm{\textbf{a}}.\norm{\textbf{a}'}$, which results in the following bound, where $\mathbf{q}=[q_1,\cdots,q_b]$:
\begin{equation}
\begin{aligned}
    &\left(\sum_{a=1}^{b} q_i \right)^2= \left(<\mathbf{1},\mathbf{q}> \right)^2
    \leq b \sum_{a=1}^{b} q^2_i.
    \end{aligned}
\end{equation}

\subsection{PART II: Finding the Consensus (local aggregation) Error in Each LUT Cluster}
To further find each of the error terms in the right hand side of~\eqref{eq:proofGenBound1}, we need to bound $\Big\Vert \mathbf{c}^{(k)}_{{}{a'_p}}\Big\Vert^2$, $1\leq p \leq |\mathcal{L}|$. For notations simplicity we consider bounding $\Big\Vert \mathbf{c}^{(k)}_{{a'_{|\mathcal{L}|}}}\Big\Vert^2$ for the case where sampling is conducted from node $a'_{|\mathcal{L}|}$, $a'_{|\mathcal{L}|} \in \mathcal{C}^{(k)}$, where LUT cluster ${C}$ is located in the bottom-most layer .

The evolution of nodes' parameters during D2D communications in this cluster can be described by~\eqref{eq:consensus} as
\begin{equation}\label{eq:con2s}
      \widehat{\mathbf{W}}_{{C}}^{(k)}= \left(\mathbf{V}^{(k)}_{{C}}\right)^{\theta^{(k)}_{{C}}} \widetilde{\mathbf{W}}_{{C}}^{(k)}.
  \end{equation}
  Let matrix $\overline{\mathbf{W}}_{{C}}^{(k)}$ denote
 the average of the vector of parameters in cluster ${C}$. This matrix can be described as
 \begin{equation}\label{eq:aveDefProof}
     \overline{{\mathbf{W}}}_{{C}}^{(k)}= \frac{\textbf{1}_{|\mathcal{C}^{(k)}|} \textbf{1}_{|\mathcal{C}^{(k)}|}^\top {\widetilde{\mathbf{W}}}_{{C}}^{(k)}}{|\mathcal{C}^{(k)}|}.
 \end{equation}
  We next define the local aggregation error matrix $\bm{E}_C^{(k)}$ for cluster $C$ at the instance of global aggregation $k$, which satisfies the following equality:
  \begin{equation}
      \mathbf{E}_C^{(k)}=\widehat{\mathbf{W}}_{{C}}^{(k)}-\overline{\mathbf{W}}_{{C}}^{(k)}.
  \end{equation}
  Note that $[\mathbf{E}_C^{(k)}]_{a'_{|\mathcal{L}|},:}=\mathbf{c}^{(k)}_{a'_{|\mathcal{L}|}}$, where $[\mathbf{E}_C^{(k)}]_{a'_{|\mathcal{L}|},:}$ denotes the row describing the parameter of node $a'_{|\mathcal{L}|}$.
  Note that $\mathbf{1}^\top \mathbf{E}_C^{(k)}=\bm{0}$, and thus $(\mathbf{1} \mathbf{1}^\top) \mathbf{E}_C^{(k)}=\mathbf{0}$ and accordingly
  \begin{equation}
  \begin{aligned}
      \mathbf{E}_C^{(k)}&=\left(\mathbf{I}-\frac{\mathbf{1} \mathbf{1}^\top}{|\mathcal{C}^{(k)}|}\right)\mathbf{E}_C^{(k)}=\left(\mathbf{I}-\frac{\mathbf{1} \mathbf{1}^\top}{|\mathcal{C}^{(k)}|}\right)\left(\widehat{\mathbf{W}}_{{C}}^{(k)}-\overline{\mathbf{W}}_{{C}}^{(k)}\right)
      \\&
      =\left(\mathbf{I}-\frac{\mathbf{1} \mathbf{1}^\top}{|\mathcal{C}^{(k)}|}\right)\left(\left(\mathbf{V}^{(k)}_{{C}}\right)^{\theta^{(k)}_{{C}}} \widetilde{\mathbf{W}}_{{C}}^{(k)}-\overline{\mathbf{W}}_{{C}}^{(k)}\right)=\left(\mathbf{I}-\frac{\mathbf{1} \mathbf{1}^\top}{|\mathcal{C}^{(k)}|}\right)\left(\left(\mathbf{V}^{(k)}_{{C}}\right)^{\theta^{(k)}_{{C}}} \widetilde{\mathbf{W}}_{{C}}^{(k)}-\left(\mathbf{V}^{(k)}_{{C}}\right)^{\theta^{(k)}_{{C}}}\overline{\mathbf{W}}_{{C}}^{(k)}\right)
      \\&
      =\left(\left(\mathbf{V}^{(k)}_{{C}}\right)^{\theta^{(k)}_{{C}}}-\frac{\mathbf{1} \mathbf{1}^\top}{|\mathcal{C}^{(k)}|}\right)\left(\widetilde{\mathbf{W}}_{{C}}^{(k)}-\overline{\mathbf{W}}_{{C}}^{(k)}\right),
      \end{aligned}
  \end{equation}
  where $\mathbf{I}$ denotes the identity matrix. In the above equalities we have used the facts that (i) $\left(\mathbf{V}^{(k)}_{{C}}\right)^{\theta^{(k)}_{{C}}} \overline{\mathbf{W}}_{{C}}^{(k)}=\overline{\mathbf{W}}_{{C}}^{(k)}$ since performing consensus on averaged matrix does not change the resulting parameters, and (ii) $\frac{\mathbf{1} \mathbf{1}^\top}{|\mathcal{C}^{(k)}|} \left(\mathbf{V}^{(k)}_{{C}}\right)^{\theta^{(k)}_{{C}}}=\frac{\mathbf{1} \mathbf{1}^\top}{|\mathcal{C}^{(k)}|}$ according to Assumption~\ref{assump:cons} since $\left(\mathbf{V}^{(k)}_{{C}}\right)^{\theta^{(k)}_{{C}}}$ is also double stochastic.

Using the above properties, we finally bound $\Vert\mathbf{c}^{(k)}_{a'_{|\mathcal{L}|}}\Vert$ as follows:
\begin{equation}
\begin{aligned}
    \big\Vert\mathbf{c}^{(k)}_{a'_{|\mathcal{L}|}}\big\Vert ^2 &\leq \textrm{trace} \left((\mathbf{E}_C^{(k)})^\top \mathbf{E}_C^{(k)} \right)=\textrm{trace}\left(\left(\widetilde{\mathbf{W}}_{{C}}^{(k)}-\overline{\mathbf{W}}_{{C}}^{(k)}\right)^\top\left(\left(\mathbf{V}^{(k)}_{{C}}\right)^{\theta^{(k)}_{{C}}}-\frac{\mathbf{1} \mathbf{1}^\top}{|\mathcal{C}^{(k)}|}\right)^2\left(\widetilde{\mathbf{W}}_{{C}}^{(k)}-\overline{\mathbf{W}}_{{C}}^{(k)}\right)\right)\\&
    \leq (\lambda^{(k)}_C)^{2\theta_C^{(k)}} \sum_{q\in\mathcal{C}^{(k)}} \Vert \widetilde{\mathbf{w}}^{(k)}_{{q}} -\overline{\mathbf{w}}^{(k)}_{C} \big\Vert^2\leq (\lambda^{(k)}_C)^{2\theta_C^{(k)}} \frac{1}{|\mathcal{C}^{(k)}|}\sum_{q,q'\in\mathcal{C}^{(k)}} \Vert \widetilde{\mathbf{w}}^{(k)}_{{q}} -\overline{\mathbf{w}}^{(k)}_{q'} \big\Vert^2\\&\leq (\lambda^{(k)}_C)^{2\theta_C^{(k)}} {|\mathcal{C}^{(k)}|}\max_{q,q'\in\mathcal{C}^{(k)}} \Vert \widetilde{\mathbf{w}}^{(k)}_{{q}} -\overline{\mathbf{w}}^{(k)}_{q'} \big\Vert^2\leq (\lambda^{(k)}_C)^{2\theta_C^{(k)}} {|\mathcal{C}^{(k)}|}\left(\Upsilon_C^{(k)}\right)^2,
    \end{aligned}
\end{equation}  
  where $\overline{\mathbf{w}}^{(k)}_{C}$  denotes the vector of average of parameters inside the cluster and we used the fact that $\left(\mathbf{V}^{(k)}_{{C}}\right)^{\theta^{(k)}_{{C}}}-\frac{\mathbf{1} \mathbf{1}^\top}{|\mathcal{C}^{(k)}|}=\left(\mathbf{V}^{(k)}_{{C}}\right)^{\theta^{(k)}_{{C}}}\left(\mathbf I-\frac{\mathbf{1} \mathbf{1}^\top}{|\mathcal{C}^{(k)}|}\right)=\left(\mathbf{V}^{(k)}_{{C}}\right)^{\theta^{(k)}_{{C}}}\left(\mathbf I-\frac{\mathbf{1} \mathbf{1}^\top}{|\mathcal{C}^{(k)}|}\right)^{\theta^{(k)}_{{C}}}=\left(\mathbf{V}^{(k)}_{{C}}-\frac{\mathbf{1} \mathbf{1}^\top}{|\mathcal{C}^{(k)}|}\right)^{\theta^{(k)}_{{C}}}$ (note that $\left(\mathbf I-\frac{\mathbf{1} \mathbf{1}^\top}{|\mathcal{C}^{(k)}|}\right)$ is a projection matrix) is a real symmetric matrix.

  The above mentioned proof can be generalized to every cluster with slight modifications, which results in
    \begin{equation}
      \Big\Vert \mathbf{c}^{(k)}_{a'_p}\Big\Vert^2\leq \left(\lambda_{{C}}\right)^{2\theta^{(k)}_{{C}}} |\mathcal{C}^{(k)}|\left({\Upsilon^{(k)}_{{C}}}\right)^2,~a'_{p}\in \mathcal{C}.
  \end{equation}
  
  \subsection{PART III: Obtaining the Final Convergence Bound}
 
  Replacing the above inequality in~\eqref{eq:proofGenBound1} combined with~\eqref{eq:quadBound3} gives us
  \begin{equation}\label{eq:quadBound4}
 \hspace{-20mm}
 \begin{aligned}
     &F(\mathbf{w}^{(k)})-F(\mathbf{w}^{(k-1)})\leq \frac{-\mu}{\eta} \left(F(\mathbf{w}^{(k-1)})- F(\mathbf{w}^*)\right)+\\
 &\frac{\eta{\Phi}}{2D^2} \Bigg[\sum_{{a_{1}}
    \in \mathcal{L}^{(k)}_{{1},{1}}} \sum_{a_{2}
    \in \mathcal{Q}^{(k)}(a_{1})} \cdots \hspace{-3mm} \sum_{a_{|\mathcal{L}|-1}
    \in \mathcal{Q}^{(k)}({a_{|\mathcal{L}|-2}})} \mathbbm{1}^{(k)}_{ \left\{{Q}(a_{|\mathcal{L}|-1})\right\}}|\mathcal{Q}^{(k)}(a_{|\mathcal{L}|-1})|^3
    \left(\lambda^{(k)}_{{Q}(a_{|\mathcal{L}|-1})}\right)^{2\theta^{(k)}_{{Q}(a_{|\mathcal{L}|-1})}} \left(\Upsilon^{(k)}_{{Q}(a_{|\mathcal{L}|-1})}\right)^2
    \\&+\sum_{{a_{1}}
    \in \mathcal{L}^{(k)}_{{1},{1}}} \sum_{a_{2}
    \in \mathcal{Q}^{(k)}(a_{1})} \cdots  \hspace{-3mm} \sum_{a_{|\mathcal{L}|-2}
    \in \mathcal{Q}^{(k)}({a_{|\mathcal{L}|-3}})} \mathbbm{1}^{(k)}_{ \left\{{Q}(a_{|\mathcal{L}|-2})\right\}}|\mathcal{Q}^{(k)}(a_{|\mathcal{L}|-2})|^3  \left(\lambda^{(k)}_{{Q}(a_{|\mathcal{L}|-2})}\right)^{2\theta^{(k)}_{{Q}(a_{|\mathcal{L}|-2})}} \left(\Upsilon^{(k)}_{{Q}(a_{|\mathcal{L}|-2})}\right)^2 \\&+\cdots+\sum_{a_1
    \in \mathcal{L}^{(k)}_{{1},1}} \mathbbm{1}^{(k)}_{ \left\{{Q}(a_1)\right\}}|\mathcal{Q}^{(k)}(a_1)|^3 \left(\lambda^{(k)}_{{Q}(a_1)}\right)^{2\theta^{(k)}_{{Q}(a_1)}} \left(\Upsilon^{(k)}_{{Q}(a_1)}\right)^2+\mathbbm{1}^{(k)}_{\left\{{{L}_{{1},1}}\right\}}|\mathcal{L}^{(k)}_{{1},1}|^3 \left(\lambda^{(k)}_{{L}_{{1},1}}\right)^{2\theta^{(k)}_{{L}_{{1},1}}} \left(\Upsilon^{(k)}_{{L}_{{1},1}}\right)^2\Bigg].
     \end{aligned}
     \hspace{-15mm}
 \end{equation}
 Adding $F(\textbf{w}^{(k-1)})$ to both hands sides and subtracting $F(\textbf{w}^*)$ from both hand sides, we get
 \begin{equation}\label{eq:quadBound5}
 \hspace{-12mm}
 \begin{aligned}
     &F(\mathbf{w}^{(k)})-F(\mathbf{w}^{*})\leq (1-\frac{\mu}{\eta}) \underbrace{\left(F(\mathbf{w}^{(k-1)})- F(\mathbf{w}^*)\right)}_{(a)}+\\
 &\frac{\eta{\Phi}}{2D^2} \Bigg[\sum_{{a_{1}}
    \in \mathcal{L}^{(k)}_{{1},{1}}} \sum_{a_{2}
    \in \mathcal{Q}^{(k)}(a_{1})} \cdots \hspace{-3mm} \sum_{a_{|\mathcal{L}|-1}
    \in \mathcal{Q}^{(k)}({a_{|\mathcal{L}|-2}})} \mathbbm{1}^{(k)}_{ \left\{{Q}(a_{|\mathcal{L}|-1})\right\}}|\mathcal{Q}^{(k)}(a_{|\mathcal{L}|-1})|^3
    \left(\lambda^{(k)}_{{Q}(a_{|\mathcal{L}|-1})}\right)^{2\theta^{(k)}_{{Q}(a_{|\mathcal{L}|-1})}} \left(\Upsilon^{(k)}_{{Q}(a_{|\mathcal{L}|-1})}\right)^2
    \\&+\sum_{{a_{1}}
    \in \mathcal{L}^{(k)}_{{1},{1}}} \sum_{a_{2}
    \in \mathcal{Q}^{(k)}(a_{1})} \cdots  \hspace{-3mm} \sum_{a_{|\mathcal{L}|-2}
    \in \mathcal{Q}^{(k)}({a_{|\mathcal{L}|-3}})} \mathbbm{1}^{(k)}_{ \left\{{Q}(a_{|\mathcal{L}|-2})\right\}}|\mathcal{Q}^{(k)}(a_{|\mathcal{L}|-2})|^3  \left(\lambda^{(k)}_{{Q}(a_{|\mathcal{L}|-2})}\right)^{2\theta^{(k)}_{{Q}(a_{|\mathcal{L}|-2})}} \left(\Upsilon^{(k)}_{{Q}(a_{|\mathcal{L}|-2})}\right)^2 \\&+\cdots+\sum_{a_1
    \in \mathcal{L}^{(k)}_{{1},1}} \mathbbm{1}^{(k)}_{ \left\{{Q}(a_1)\right\}}|\mathcal{Q}^{(k)}(a_1)|^3 \left(\lambda^{(k)}_{{Q}(a_1)}\right)^{2\theta^{(k)}_{{Q}(a_1)}} \left(\Upsilon^{(k)}_{{Q}(a_1)}\right)^2+\mathbbm{1}^{(k)}_{\left\{{{L}_{{1},1}}\right\}}|\mathcal{L}^{(k)}_{{1},1}|^3 \left(\lambda^{(k)}_{{L}_{{1},1}}\right)^{2\theta^{(k)}_{{L}_{{1},1}}} \left(\Upsilon^{(k)}_{{L}_{{1},1}}\right)^2\Bigg].
     \end{aligned}
     \hspace{-15mm}
 \end{equation}
 Expanding term (a) on the right hand side of the inequality in a recursive manner leads to the theorem result.
  \end{proof}
 
 
 
 

\section{Proof of Proposition~\ref{prop:boundedConsConv}}\label{app:boundedConsConv}
\noindent Consider the bound on the number of D2D that is given in the proposition statement. For ${L}_{j,i} $, $\textrm{if}~ \sigma_{j}\leq {|\mathcal{L}^{(k)}_{j,i}|^3 \left(\Upsilon_{{L}_{j,i}}^{(k)}\right)^2}$, $\forall i$, the proposed number of D2D guarantees $\theta^{(k)}_{{L}_{j,i}} \geq \frac{\log\left({\sigma_{j}}\right)-2\log\left({\big\vert\mathcal{L}^{(k)}_{j,i}\big\vert^{\frac{3}{2}} \Upsilon_{{L}_{j,i}}^{(k)}} \right)}{2\log\left(\lambda^{(k)}_{{L}_{j,i}} \right)} $, which results in
\begin{equation}
\begin{aligned}
&\theta^{(k)}_{{L}_{j,i}} \geq \frac{\log\left({\sigma_{j}}\right)-2\log\left({\big\vert\mathcal{L}^{(k)}_{j,i}\big\vert^{\frac{3}{2}} \Upsilon_{{L}_{j,i}}^{(k)}} \right)}{2\log\left(\lambda^{(k)}_{{L}_{j,i}} \right)}\\
&\Rightarrow \theta^{(k)}_{{L}_{j,i}} \geq \frac{1}{2} \frac{\log\left(\frac{\sigma_{j}}{{|\mathcal{L}^{(k)}_{j,i}|^3 \left(\Upsilon_{{L}_{j,i}}^{(k)}\right)^2}}\right)}{\log\left(\lambda^{(k)}_{{L}_{j,i}} \right)}\\
&
\Rightarrow \left(\lambda^{(k)}_{{L}_{j,i}}\right)^{2\theta^{(k)}_{{L}_{j,i}}}\leq \frac{\sigma_{j}}{{|\mathcal{L}^{(k)}_{j,i}|^3 \left(\Upsilon_{{L}_{j,i}}^{(k)}\right)^2}},~\forall k,
\end{aligned}
\end{equation}
where the last inequality is due to the facts that $\frac{\log a}{\log b} =\log^b_a$, $a^{\log^b_a}=b$, and $\lambda^{(k)}_{{L}_{j,i}}<1$. Also, for cluster ${L}_{j,i} $,  $\textrm{if}~ \sigma_{j}\geq {|\mathcal{L}^{(k)}_{j,i}|^3 \left(\Upsilon_{{L}_{j,i}}^{(k)}\right)^2}$, any $\theta^{(k)}_{{L}_{j,i}}\geq 0$ ensures $\sigma_{j}\geq {|\mathcal{L}^{(k)}_{j,i}|^3 \left(\Upsilon_{{L}_{j,i}}^{(k)}\right)^2}\left(\lambda^{(k)}_{{L}_{j,i}}\right)^{2\theta^{(k)}_{{L}_{j,i}}}$, $\forall k$.  Replacing the above result in~\eqref{eq:ConsTh1}, we get
\begin{equation}\label{eq:simpConvCons}
     \hspace{-18mm}
     \begin{aligned}
     &F(\mathbf{w}^{(k-1)})-F(\mathbf{w}^{*})\leq \frac{\eta{\Phi}}{2D^2} \sum_{t=0}^{k-1}
     \left(\frac{\eta-\mu}{\eta}\right)^{t}   \Bigg[ \\& \sum_{{a_{1}}
    \in \mathcal{L}^{(k)}_{{1},{1}}} \sum_{a_{2}
    \in \mathcal{Q}^{(k)}(a_{1})} \cdots \hspace{-3mm} \sum_{a_{|\mathcal{L}|-1}
    \in \mathcal{Q}^{(k)}({a_{|\mathcal{L}|-2}})}\mathbbm{1}^{(k-t)}_{ \left\{\mathcal{Q}({a_{|\mathcal{L}|-1}})\right\}}\sigma_{|\mathcal{L}|}+
    \\&\sum_{{a_{1}}
    \in \mathcal{L}^{(k)}_{{1},{1}}} \sum_{a_{2}
    \in \mathcal{Q}^{(k)}(a_{1})} \cdots  \hspace{-3mm} \sum_{a_{|\mathcal{L}|-2}
    \in \mathcal{Q}^{(k)}({a_{|\mathcal{L}|-3}})} \mathbbm{1}^{(k-t)}_{ \left\{\mathcal{Q}(a_{|\mathcal{L}|-2})\right\}}\sigma_{|\mathcal{L}|-1}+\cdots \\&+\sum_{a_1
    \in \mathcal{L}^{(k)}_{{1},1}} \mathbbm{1}^{(k-t)}_{ \left\{\mathcal{Q}(a_1)\right\}}\sigma_{2}+\mathbbm{1}^{(k-t)}_{\left\{{{L}_{{1},1}}\right\}}\sigma_{1}\Bigg]
    +\left(\frac{\eta-\mu}{\eta}\right)^{k}\left(F(\mathbf{w}^{(0)})- F(\mathbf{w}^*)\right) \\
   & \leq \frac{\eta{\Phi}}{2D^2} \sum_{t=0}^{k-1}
     \left(\frac{\eta-\mu}{\eta}\right)^{t}   \Bigg[ N_{|\mathcal{L}|-1}\sigma_{|\mathcal{L}|}+
    N_{|\mathcal{L}|-2}\sigma_{|\mathcal{L}|-1} +\cdots\\&+N_{1}\sigma_{2}+{N_{0}\sigma_{1}\Bigg]
    + \left(\frac{\eta-\mu}{\eta}\right)^{k}\left(F(\mathbf{w}^{(0)})- F(\mathbf{w}^*)\right) }.
     \end{aligned}
     \hspace{-10mm}
 \end{equation}
 Taking the limit with respect to $k$, we get
 \begin{equation}
     \hspace{-18mm}
     \begin{aligned}
     &\lim_{k\rightarrow \infty}F(\mathbf{w}^{(k-1)})-F(\mathbf{w}^{*})\leq \frac{\eta{\Phi}}{2D^2}\left(\sum_{j=0}^{|\mathcal{L}|-1}\sigma_{j+1} N_j\right) \frac{1}{1-(\frac{\eta-\mu}{\eta})},
     \end{aligned}
     \hspace{-10mm}
 \end{equation}
 which concludes the proof.

\section{Proof of Proposition~\ref{prop:FogConv}}\label{app:ConLinCons}
 \noindent Consider the per-iteration decrease of the objective function given by~\eqref{eq:quadBound5}. Following a similar procedure as Appendix~\ref{app:boundedConsConv}, given the proposed number of D2D rounds in the proposition statement, we get
\begin{equation}\label{eq:quadBound6}
 \hspace{-5mm}
 \begin{aligned}
     &F(\mathbf{w}^{(k)})-F(\mathbf{w}^{*})\leq (1-\frac{\mu}{\eta}) \left(F(\mathbf{w}^{(k-1)})- F(\mathbf{w}^*)\right)+\frac{\eta{\Phi}}{2D^2} \Bigg[ \sum_{j=0}^{|\mathcal{L}|-1} \sigma^{(k)}_{j+1} N_j \Bigg].
     \end{aligned}
     \hspace{-5mm}
 \end{equation}
 
 Using the fact that $\nabla F(\textbf{w}^*)=0$ combined with $\eta$-smoothness of $F$, we get 
 \begin{equation}
 \hspace{-3mm}
     \norm{\nabla F(\textbf{w}^{(k-1)})}=\norm{\nabla F(\textbf{w}^{(k-1)})-\nabla F(\textbf{w}^*)}\leq \eta\Vert \textbf{w}^{(k-1)}- \textbf{w}^*\Vert.
      \hspace{-2mm}
 \end{equation} 
Also, it is straightforward to verify that strong convexity of $F$, expressed in Assumption~\ref{assum:genConv}, implies the following inequality:
\begin{equation}
    \mu/2 \norm{\textbf{w}^{(k-1)}-\textbf{w}^*}^2\leq F(\textbf{w}^{(k-1)})-F(\textbf{w}^*).
\end{equation}
Combining the above results with the condition given in the proposition statement, i.e., \eqref{eq:ineqWeightsofNodes}, we get
\begin{equation}
 \begin{aligned}
       &\sum_{j=0}^{|\mathcal{L}|-1}\sigma^{(k)}_{j+1} N_j\leq  \frac{D^2\mu  ({\mu}-\delta\eta)}{\eta^4{\Phi}} \norm{\nabla F(\textbf{w}^{(k-1)})}^2\\&
       \leq \frac{D^2\mu  ({\mu}-\delta\eta)}{\eta^2{\Phi}} \norm{ \textbf{w}^{(k-1)}-\textbf{w}^*}^2\\&\leq \frac{2D^2 ({\mu}-\delta\eta)}{\eta^2{\Phi}} \left(F(\textbf{w}^{(k-1)})-F(\textbf{w}^*) \right) 
       \end{aligned}
 \end{equation}
 By replacing the above inequality in~\eqref{eq:quadBound6} we obtain
 \begin{equation}\label{eq:quadBound8}
 \hspace{-5mm}
 \begin{aligned}
     &F(\mathbf{w}^{(k)})-F(\mathbf{w}^{*})\leq (1-{\mu}/{\eta})\left(F(\mathbf{w}^{(k-1)})- F(\mathbf{w}^*)\right)
 +({\mu/\eta -\delta})\left(F(\mathbf{w}^{(k-1)})- F(\mathbf{w}^*)\right),
     \end{aligned}
     \hspace{-5mm}
 \end{equation}
 which readily leads to the proposition result.
\section{Proof of Corollary~\ref{cor:numIterCertainAccur}}\label{app:numIterCertainAccur}
 \noindent 
 
\noindent  Regarding the first condition, at global iteration $\kappa$, using the number of consensus given in the corollary statement, according to~\eqref{eq:simpConvCons}, we have
 \begin{equation}
     \hspace{-18mm}
     \begin{aligned}
    &F(\mathbf{w}^{(\kappa)})-F(\mathbf{w}^{*})
   \leq \frac{\eta{\Phi}}{2D^2} \sum_{j=0}^{|\mathcal{L}|-1} \sigma_{j+1} N_j
   \frac{1-\left(1-\frac{\mu}{\eta}\right)^\kappa}{\mu/\eta}
    +\left(\frac{\eta-\mu}{\eta}\right)^{\kappa}\left(F(\mathbf{w}^{(0)})- F(\mathbf{w}^*)\right) \\&
    =\left(1-\frac{\mu}{\eta}\right)^\kappa \left(F(\mathbf{w}^{(0)})- F(\mathbf{w}^*)-\frac{\eta^2{\Phi}}{2\mu D^2} \sum_{j=0}^{|\mathcal{L}|-1} \sigma_{j+1} N_j \right)
    +\frac{\eta^2{\Phi}}{2\mu D^2} \sum_{j=0}^{|\mathcal{L}|-1} \sigma_{j+1} N_j.
     \end{aligned}
     \hspace{-10mm}
 \end{equation}
 Thus to satisfy the accuracy requirement, it is sufficient to have
 \begin{equation}\label{eq:suffGapk}
     \hspace{-18mm}
     \begin{aligned}
   &\left(1-\frac{\mu}{\eta}\right)^\kappa \left(F(\mathbf{w}^{(0)})- F(\mathbf{w}^*)-\frac{\eta^2{\Phi}}{2\mu D^2} \sum_{j=0}^{|\mathcal{L}|-1} \sigma_{j+1} N_j \right)
    +\frac{\eta^2{\Phi}}{2\mu D^2} \sum_{j=0}^{|\mathcal{L}|-1} \sigma_{j+1} N_j \leq \epsilon.
     \end{aligned}
     \hspace{-10mm}
 \end{equation}
 Performing some algebraic steps leads to~\eqref{eq:coeffgap}.
 
 Regarding the second condition, given the number of D2D rounds stated in the proposition statement, we first recursively expand the right hand side of~\eqref{eq:quadBound8} to get
 \begin{equation}\label{eq:linearPrec}
 \hspace{-5mm}
 \begin{aligned}
     &F(\mathbf{w}^{(\kappa)})-F(\mathbf{w}^{*})\leq (1-\delta)^{\kappa}\left(F(\mathbf{w}^{(0)})- F(\mathbf{w}^*)\right).
     \end{aligned}
     \hspace{-5mm}
 \end{equation}
 Thus, to satisfy the desired accuracy, it is sufficient to have
 \begin{equation}\label{eq:linearK}
 \hspace{-5mm}
 \begin{aligned}
    & (1-\delta)^{\kappa}[F(\mathbf{w}^{(0)})- F(\mathbf{w}^*)]\leq \epsilon,
     \end{aligned}
     \hspace{-5mm}
 \end{equation}
 which readily leads to~\eqref{eq:deltaLinear}. Note that the criterion given in the corollary statement for $\epsilon$ guarantees that: $0< \delta \leq \mu/\eta$.
\section{Proof of Corollary~\ref{cor:numGlobIterCertainAccur}}\label{app:numGlobIterCertainAccur}
\noindent Regarding the first condition, upon using the number of D2D rounds described in the corollary statement, we get~\eqref{eq:suffGapk}, which can be written as
 \begin{equation}
     \hspace{-18mm}
     \begin{aligned}
   &\left(1-\frac{\mu}{\eta}\right)^\kappa
     \leq \frac{\epsilon-\frac{\eta^2{\Phi}}{2\mu D^2} \sum_{j=0}^{|\mathcal{L}|-1} \sigma_{j+1} N_j}{F(\mathbf{w}^{(0)})- F(\mathbf{w}^*)-\frac{\eta^2{\Phi}}{2\mu D^2} \sum_{j=0}^{|\mathcal{L}|-1} \sigma_{j+1} N_j}.
     \end{aligned}
     \hspace{-10mm}
 \end{equation}
To obtain $\kappa$, we need to take the logarithm with base $1-\mu/\eta$, where $0<1-\mu/\eta<1$. Using the characteristic of the logarithm upon having a positive base less than one, we get
    \begin{equation}
     \begin{aligned}
    &\kappa\geq \log_{1-\mu/\eta}^{\left(\epsilon-\frac{\eta^2{\Phi}}{2\mu D^2} \sum_{j=0}^{|\mathcal{L}|-1} \sigma_{j+1} N_j \right){\left(F(\mathbf{w}^{(0)})- F(\mathbf{w}^*)-\frac{\eta^2{\Phi}}{2\mu D^2} \sum_{j=0}^{|\mathcal{L}|-1} \sigma_{j+1} N_j \right)^{-1}}},
     \end{aligned}
     \hspace{-5mm}
\end{equation}
which can be written as~\eqref{eq:kGapCons}.

Regarding the second condition, upon using the number of D2D rounds described in the corollary statement, we get~\eqref{eq:linearK}. To obtain $\kappa$, we take the logarithm with base $1-\delta$ from both hand sides of the equation, using the fact that $0<1-\delta<1$ and the characteristic of the logarithm upon having a positive base less than one, we get
 \begin{equation}
 \hspace{-5mm}
 \begin{aligned}
    & \kappa\geq \log_{1-\delta}^{\left(\epsilon/\left(F(\mathbf{w}^{(0)})- F(\mathbf{w}^*)\right)\right)},
     \end{aligned}
     \hspace{-5mm}
 \end{equation}
which can be written as~\eqref{cor:klieaner}.
\section{Proof of Proposition~\ref{prop:conv_0_dem_step}}\label{app:conv_0_dem_step}
 \noindent Upon sharing the gradients, the nodes in the bottom layer share their scaled gradients (multiplying their gradients by their number of data points), while the rest of the procedure, i.e., traversing of the gradients over the hierarchy, is the same as sharing the parameters. For parent node $a_{p}$, let $a'_{p+1}$ denote the corresponding sampled node, $\forall p$, e.g., in the following nested sums $a'_{|\mathcal{L}|}$ denotes the sampled node in the last layer by parent node $a_{|\mathcal{L}|-1}$ in its above layer. Let $\hat{\textbf{g}}^{(k)}_{a'_1}$ denote the sampled value by the main server at global iteration $k$. It can be verified that we have
\begin{equation}\label{eq:rootGrad}
  \hspace{-22mm}
  \begin{aligned}
    &  
    \hat{\textbf{g}}^{(k)}_{a'_1}=\frac{\displaystyle \sum_{{a_{1}}
    \in \mathcal{L}^{(k)}_{{1},{1}}} \sum_{a_{2}
    \in \mathcal{Q}^{(k)}(a_{1})} \sum_{a_{3}
    \in \mathcal{Q}^{(k)}(a_{2})}\cdots  \sum_{a_{|\mathcal{L}|}
    \in \mathcal{Q}^{(k)}({a_{|\mathcal{L}|-1}})} |\mathcal{D}_{a_{|\mathcal{L}|}}|\nabla f_{{a_{|\mathcal{L}|}}}(\mathbf{w}^{(k-1)}_{{a_{|\mathcal{L}|}}})} {|\mathcal{L}^{(k)}_{{1},1}|}\\&
    +\sum_{{a_{1}}
    \in \mathcal{L}^{(k)}_{{1},{1}}} \sum_{a_{2}
    \in \mathcal{Q}^{(k)}(a_{1})} \sum_{a_{3}
    \in \mathcal{Q}^{(k)}(a_{2})}\cdots  \sum_{a_{|\mathcal{L}|-1}
    \in \mathcal{Q}^{(k)}({a_{|\mathcal{L}|-2}})} \frac{\mathbbm{1}^{(k)}_{\left\{{{Q}(a_{|\mathcal{L}|-1})}\right\}}|\mathcal{Q}^{(k)}(a_{|\mathcal{L}|-1})|  \mathbf{c}^{(k)}_{a'_{|\mathcal{L}|}}}{|\mathcal{L}^{(k)}_{{1},1}|}\\&
    +\sum_{{a_{1}}
    \in \mathcal{L}^{(k)}_{{1},{1}}} \sum_{a_{2}
    \in \mathcal{Q}^{(k)}(a_{1})} \sum_{a_{3}
    \in \mathcal{Q}^{(k)}(a_{2})}\cdots  \sum_{a_{|\mathcal{L}|-2}
    \in \mathcal{Q}^{(k)}({a_{|\mathcal{L}|-3}})} \frac{\mathbbm{1}^{(k)}_{\left\{{{Q}(a_{|\mathcal{L}|-2})}\right\}}|\mathcal{Q}^{(k)}(a_{|\mathcal{L}|-2})|  \mathbf{c}^{(k)}_{a'_{|\mathcal{L}|-1}}}{|\mathcal{L}^{(k)}_{{1},1}|}+
    \\&\vdots
    \\&+\sum_{a_1
    \in \mathcal{L}^{(k)}_{{1},1}} \frac{\mathbbm{1}^{(k)}_{\left\{{{Q}(a_1)}\right\}}|\mathcal{Q}^{(k)}(a_1)|  \mathbf{c}^{(k)}_{a'_2}}{|\mathcal{L}^{(k)}_{{1},1}|}+\mathbbm{1}^{(k)}_{\left\{{{L}_{{1},1}}\right\}} \mathbf{c}^{(k)}_{a'_1}.
       \end{aligned}
       \hspace{-20mm}
 \end{equation}
 The main server then uses this vector as the estimation of global gradient and builds the parameter vector for the next iteration as follows (note that although the root only receives the gradients, it has the knowledge of the previous parameters that it broadcast, i.e., $\textbf{w}^{(k-1)}$):
 \begin{equation}\label{eq:rootGrad}
  \hspace{-22mm}
  \begin{aligned}
    & \widehat{\textbf{w}}^{(k)}_{a'_1}=D\frac{\textbf{w}^{(k-1)}}{|\mathcal{L}^{(k)}_{{1},1}|}-\\& 
    \beta_{k-1}\Bigg[\frac{\displaystyle \sum_{{a_{1}}
    \in \mathcal{L}^{(k)}_{{1},{1}}} \sum_{a_{2}
    \in \mathcal{Q}^{(k)}(a_{1})} \sum_{a_{3}
    \in \mathcal{Q}^{(k)}(a_{2})}\cdots  \sum_{a_{|\mathcal{L}|}
    \in \mathcal{Q}^{(k)}({a_{|\mathcal{L}|-1}})} |\mathcal{D}_{a_{|\mathcal{L}|}}|\nabla f_{{a_{|\mathcal{L}|}}}(\mathbf{w}^{(k-1)}_{{a_{|\mathcal{L}|}}})} {|\mathcal{L}^{(k)}_{{1},1}|}\\&
    +\sum_{{a_{1}}
    \in \mathcal{L}^{(k)}_{{1},{1}}} \sum_{a_{2}
    \in \mathcal{Q}^{(k)}(a_{1})} \sum_{a_{3}
    \in \mathcal{Q}^{(k)}(a_{2})}\cdots  \sum_{a_{|\mathcal{L}|-1}
    \in \mathcal{Q}^{(k)}({a_{|\mathcal{L}|-2}})} \frac{\mathbbm{1}^{(k)}_{\left\{{{Q}(a_{|\mathcal{L}|-1})}\right\}}|\mathcal{Q}^{(k)}(a_{|\mathcal{L}|-1})|  \mathbf{c}^{(k)}_{a'_{|\mathcal{L}|}}}{|\mathcal{L}^{(k)}_{{1},1}|}\\&
    +\sum_{{a_{1}}
    \in \mathcal{L}^{(k)}_{{1},{1}}} \sum_{a_{2}
    \in \mathcal{Q}^{(k)}(a_{1})} \sum_{a_{3}
    \in \mathcal{Q}^{(k)}(a_{2})}\cdots  \sum_{a_{|\mathcal{L}|-2}
    \in \mathcal{Q}^{(k)}({a_{|\mathcal{L}|-3}})} \frac{\mathbbm{1}^{(k)}_{\left\{{{Q}(a_{|\mathcal{L}|-2})}\right\}}|\mathcal{Q}^{(k)}(a_{|\mathcal{L}|-2})|  \mathbf{c}^{(k)}_{a'_{|\mathcal{L}|-1}}}{|\mathcal{L}^{(k)}_{{1},1}|}+
    \\&\vdots
    \\&+\sum_{a_1
    \in \mathcal{L}^{(k)}_{{1},1}} \frac{\mathbbm{1}^{(k)}_{\left\{{{Q}(a_1)}\right\}}|\mathcal{Q}^{(k)}(a_1)|  \mathbf{c}^{(k)}_{a'_2}}{|\mathcal{L}^{(k)}_{{1},1}|}+\mathbbm{1}^{(k)}_{\left\{{{L}_{{1},1}}\right\}} \mathbf{c}^{(k)}_{a'_1}\Bigg],
       \end{aligned}
       \hspace{-20mm}
 \end{equation}
which is used to obtain the next global parameter (due to the existence of the indicator function in the last term of the above expression, the following expression holds regardless of the operating mode of the cluster at layer ${L}_1$):
  \begin{equation}\label{up:proofGlobGrad}
    \textbf{w}^{(k)}= \frac{|\mathcal{L}^{(k)}_{{1},1}| \widehat{\textbf{w}}^{(k)}_{a'_1}}{D}.
 \end{equation}
According to~\eqref{eq:globlossinit}, it can be verified that
\begin{equation}
  \hspace{-27mm}
  \begin{aligned}
  {\displaystyle \displaystyle \sum_{{a_{1}}
    \in \mathcal{L}^{(k)}_{{1},{1}}} \sum_{a_{2}
    \in \mathcal{Q}^{(k)}(a_{1})} \sum_{a_{3}
    \in \mathcal{Q}^{(k)}(a_{2})}\cdots  \sum_{a_{|\mathcal{L}|}
    \in \mathcal{Q}^{(k)}({a_{|\mathcal{L}|-1}})} |\mathcal{D}_{a_{|\mathcal{L}|}}|\nabla f_{a_{|\mathcal{L}|}}(\mathbf{w}^{(k-1)}_{a_{|\mathcal{L}|}})}
    = D\nabla F (\textbf{w}^{(k-1)}).
       \end{aligned}
       \hspace{-20mm}
 \end{equation}
 Replacing the above equation in~\eqref{eq:rootGrad} and performing the update given by~\eqref{up:proofGlobGrad}, we get
 \begin{equation}
  \hspace{-24mm}
  \begin{aligned}
    & {\textbf{w}}^{(k)}=\textbf{w}^{(k-1)}-\beta_{k-1}\Bigg[ \nabla F(\textbf{w}^{(k-1)}) 
    +\frac{1}{D}\Bigg(\sum_{{a_{1}}
    \in \mathcal{L}^{(k)}_{{1},{1}}} \sum_{a_{2}
    \in \mathcal{Q}^{(k)}(a_{1})} \cdots  \sum_{a_{|\mathcal{L}|-1}
    \in \mathcal{Q}^{(k)}({a_{|\mathcal{L}|-2}})} {\mathbbm{1}^{(k)}_{\left\{{{Q}(a_{|\mathcal{L}|-1})}\right\}}|\mathcal{Q}^{(k)}(a_{|\mathcal{L}|-1})|  \mathbf{c}^{(k)}_{a'_{|\mathcal{L}|}}}\\&
    +\sum_{{a_{1}}
    \in \mathcal{L}^{(k)}_{{1},{1}}} \sum_{a_{2}
    \in \mathcal{Q}^{(k)}(a_{1})} \cdots  \sum_{a_{|\mathcal{L}|-2}
    \in \mathcal{Q}^{(k)}({a_{|\mathcal{L}|-3}})} {\mathbbm{1}^{(k)}_{\left\{{{Q}(a_{|\mathcal{L}|-2})}\right\}}|\mathcal{Q}^{(k)}(a_{|\mathcal{L}|-2})|  \mathbf{c}^{(k)}_{a'_{|\mathcal{L}|-1}}}+
    \\&\vdots
    \\&+\sum_{a_1
    \in \mathcal{L}^{(k)}_{{1},1}} {\mathbbm{1}^{(k)}_{\left\{{{Q}(a_1)}\right\}}|\mathcal{Q}^{(k)}(a_1)|  \mathbf{c}^{(k)}_{a'_2}}+\mathbbm{1}^{(k)}_{\left\{{{L}_{{1},1}}\right\}} |\mathcal{L}^{(k)}_{{1},1}|\mathbf{c}^{(k)}_{a'_1}\Bigg)\Bigg].
       \end{aligned}
       \hspace{-20mm}
 \end{equation}
 Using the above equality in \eqref{eq:quadBound}, we have
 \begin{equation}\label{eq:stConSam}
 \hspace{-20mm}
 \begin{aligned}
 &F(\mathbf{w}^{(k)})\leq F(\mathbf{w}^{(k-1)}) -\beta_{k-1}\left( \nabla F (\mathbf{w}^{(k-1)})+\mathbf{c}^{(k)}\right)^\top\nabla F(\mathbf{w}^{(k-1)})
 + \frac{\eta}{2} \beta_{k-1}^2\Vert \nabla F (\mathbf{w}^{(k-1)})+\mathbf{c}^{(k)}\Vert^2\\
     & =F(\mathbf{w}^{(k-1)}) -\beta_{k-1} \norm{\nabla F(\mathbf{w}^{(k-1)})}^2 -\beta_{k-1}\left(\mathbf{c}^{(k)}\right)^\top\nabla F(\mathbf{w}^{(k-1)})
+ \frac{\eta}{2}\beta_{k-1}^2 \Vert \nabla F (\mathbf{w}^{(k-1)})+\mathbf{c}^{(k)}\Vert^2\\
 &= F(\mathbf{w}^{(k-1)}) -\beta_{k-1} \norm{\nabla F(\mathbf{w}^{(k-1)})}^2 -\beta_{k-1}\left(\mathbf{c}^{(k)}\right)^\top\nabla F(\mathbf{w}^{(k-1)})
 + \frac{\eta\beta_{k-1}^2}{2} \Vert\nabla F (\mathbf{w}^{(k-1)})\Vert^2 \\
 &+\beta_{k-1}^2 \eta\left(\nabla F (\mathbf{w}^{(k-1)})^\top \mathbf{c}^{(k)}\right) +\frac{\eta\beta_{k-1}^2}{2}\norm{\mathbf{c}^{(k)}}^2, 
 \end{aligned}
 \hspace{-20mm}
 \end{equation}

 where
 \begin{equation}
 \hspace{-20mm}
     \begin{aligned}
  &\mathbf{c}^{(k)}\triangleq \frac{1}{D}\Bigg(\sum_{{a_{1}}
    \in \mathcal{L}^{(k)}_{{1},{1}}} \sum_{a_{2}
    \in \mathcal{Q}^{(k)}(a_{1})} \sum_{a_{3}
    \in \mathcal{Q}^{(k)}(a_{2})}\cdots  \sum_{a_{|\mathcal{L}|-1}
    \in \mathcal{Q}^{(k)}({a_{|\mathcal{L}|-2}})} {\mathbbm{1}^{(k)}_{\left\{{{Q}(a_{|\mathcal{L}|-1})}\right\}}|\mathcal{Q}^{(k)}(a_{|\mathcal{L}|-1})|  \mathbf{c}^{(k)}_{a'_{|\mathcal{L}|}}}\\&
    +\sum_{{a_{1}}
    \in \mathcal{L}^{(k)}_{{1},{1}}} \sum_{a_{2}
    \in \mathcal{Q}^{(k)}(a_{1})} \sum_{a_{3}
    \in \mathcal{Q}^{(k)}(a_{2})}\cdots  \sum_{a_{|\mathcal{L}|-2}
    \in \mathcal{Q}^{(k)}({a_{|\mathcal{L}|-3}})} {\mathbbm{1}^{(k)}_{\left\{{{Q}(a_{|\mathcal{L}|-2})}\right\}}|\mathcal{Q}^{(k)}(a_{|\mathcal{L}|-2})|  \mathbf{c}^{(k)}_{a'_{|\mathcal{L}|-1}}}+
    \\&\vdots
    \\&+\sum_{a_1
    \in \mathcal{L}^{(k)}_{{1},1}} {\mathbbm{1}^{(k)}_{\left\{{{Q}(a_1)}\right\}}|\mathcal{Q}^{(k)}(a_1)|  \mathbf{c}^{(k)}_{a'_2}}+\mathbbm{1}^{(k)}_{\left\{{{L}_{{1},1}}\right\}} |\mathcal{L}^{(k)}_{{1},1}|\mathbf{c}^{(k)}_{a'_1}\Bigg).
       \end{aligned}
       \hspace{-20mm}
 \end{equation}
 
 Taking the expectation from both hand sides (with respect to the consensus errors) and using the fact that upon using the consensus method, when one node is sampled uniformly at random we have:\footnote{Assume a set of $n$ numbers denoted by $x_1,\cdots,x_n$ with mean $\bar{x}$. Assume that $X$ denotes a random variable with probability mass function $p(X=x_i)=\frac{1}{n}$, $1\leq i\leq n$. It is straightforward to verify that $E(X-\bar{x})=0$.} $\mathbb{E}[\mathbf{c}^{(k)}_{{}{a'_p}}]=\textbf{0}$, $\forall p$. This implies $\mathbb{E}[\mathbf{c}^{(k)}]=\textbf{0}$, $\forall k$, replacing which in~\eqref{eq:stConSam} gives us
 \begin{equation}
     \begin{aligned}
         &\mathbb{E} [F(\mathbf{w}^{(k)})]\leq F(\mathbf{w}^{(k-1)})-(1-\frac{\eta\beta_{k-1}}{2})\beta_{k-1}\norm{\nabla F(\mathbf{w}^{(k-1)})}^2+\frac{\eta \beta_{k-1}^2}{2} E[\Vert\mathbf{c}^{(k)}\Vert^2].
     \end{aligned}
 \end{equation}
 
 Using the fact that  $\beta_0 \leq 1/\eta$, we get $\beta_{k}\leq 1/\eta$, and thus $1-\eta\beta_k/2\geq1/2$, $\forall k$. Using this in the above inequality gives us
 \begin{equation}
     \begin{aligned}
         &\mathbb{E} [F(\mathbf{w}^{(k)})]\leq F(\mathbf{w}^{(k-1)})-\frac{\beta_{k-1}}{2}\norm{\nabla F(\mathbf{w}^{(k-1)})}^2+\frac{\eta \beta_{k-1}^2}{2} E[\Vert\mathbf{c}^{(k)}\Vert^2].
     \end{aligned}
 \end{equation}
Using the strong convexity, we get Polyak-Lojasiewicz inequality~\cite{polyak1963gradient} in the following form: $\Vert\nabla F(\mathbf{w}^{(k-1)})\Vert^2\geq 2\mu [F(\mathbf{w}^{(k-1)})-F(\mathbf{w}^*)]$, using which in the above inequality yields
 \begin{equation}
     \begin{aligned}
         &\mathbb{E} [F(\mathbf{w}^{(k)})]\leq F(\mathbf{w}^{(k-1)})-\beta_{k-1}\mu[F(\mathbf{w}^{(k-1)})-F(\mathbf{w}^*)]+\frac{\eta \beta_{k-1}^2}{2} E[\Vert\mathbf{c}^{(k)}\Vert^2],
     \end{aligned}
 \end{equation}
 or, equivalently
  \begin{equation}
     \begin{aligned}
         &\mathbb{E} [F(\mathbf{w}^{(k)})]-F(\mathbf{w}^*)\leq (1-\beta_{k-1}\mu)[F(\mathbf{w}^{(k-1)})-F(\mathbf{w}^*)]+\frac{\eta \beta_{k-1}^2}{2} E[\Vert\mathbf{c}^{(k)}\Vert^2].
     \end{aligned}
 \end{equation}
 Taking total expectation, with respect to all the consensus errors until iteration $k$, from both hand sides results in
   \begin{equation}\label{eq:expBoundConvDim}
   \hspace{-10mm}
     \begin{aligned}
         &\mathbb{E} [F(\mathbf{w}^{(k)})-F(\mathbf{w}^*)]\leq (1-\beta_{k-1}\mu)\mathbb{E}[F(\mathbf{w}^{(k-1)})-F(\mathbf{w}^*)]+\frac{\eta \beta_{k-1}^2}{2} E[\Vert\mathbf{c}^{(k)}\Vert^2].
     \end{aligned}
      \hspace{-10mm}
 \end{equation}
We continue the proof by carrying out  an induction.  The proposition result trivially holds for iteration $0$. Assume that the result holds for iteration $k$, i.e.,   $\mathbb{E} [F(\mathbf{w}^{(k)})-F(\mathbf{w}^*)]\leq \frac{\Gamma}{k+\lambda}$. We aim to show that the result also holds for iteration $k+1$. Using~\eqref{eq:expBoundConvDim}, we get
  \begin{equation}
   \hspace{-10mm}
     \begin{aligned}
         &\mathbb{E} [F(\mathbf{w}^{(k+1)})-F(\mathbf{w}^*)]\leq (1-\beta_{k}\mu)\mathbb{E}[F(\mathbf{w}^{(k)})-F(\mathbf{w}^*)]+\frac{\eta \beta_{k}^2}{2} E[\Vert\mathbf{c}^{(k+1)}\Vert^2],
     \end{aligned}
      \hspace{-10mm}
 \end{equation}
 which results in
    \begin{equation}\label{eq:expBoundConvDim2}
   \hspace{-10mm}
     \begin{aligned}
         &\mathbb{E} [F(\mathbf{w}^{(k+1)})-F(\mathbf{w}^*)]\leq (1-\frac{\alpha}{k+\lambda}\mu)\frac{\Gamma}{k+\lambda}+ \frac{\eta \alpha^2}{2(k+\lambda)^2} \mathbb{E}[\Vert\mathbf{c}^{(k+1)}\Vert^2]\\&=\left(\frac{k+\lambda-\alpha\mu}{(k+\lambda)^2}\right)\Gamma+ \frac{\eta \alpha^2}{2(k+\lambda)^2} \mathbb{E}[\Vert\mathbf{c}^{(k+1)}\Vert^2]\\&=\left(\frac{k+\lambda-1}{(k+\lambda)^2}\right)\Gamma-\frac{\alpha\mu-1}{(k+\lambda)^2}\Gamma+ \frac{\eta \alpha^2}{2(k+\lambda)^2} \mathbb{E}[\Vert\mathbf{c}^{(k+1)}\Vert^2]    . 
     \end{aligned}
      \hspace{-10mm}
 \end{equation}
  Note that using a similar method as Appendix~\ref{app:ConsGen}, we can get\footnote{Note that if for every realization of random variable $X$, inequality $\norm{X}^2<y$ holds, then we get: $\mathbb{E}[ \norm{X}^2]<y$.}
 \begin{equation}
 \hspace{-15mm}
     \begin{aligned}
  &\mathbb{E}\left[\norm{\mathbf{c}^{(k+1)}}^2\right]\leq \frac{{\Phi}}{D^2} \Bigg[\\&\sum_{{a_{1}}
    \in \mathcal{L}^{(k+1)}_{{1},{1}}} \sum_{a_{2}
    \in \mathcal{Q}^{(k+1)}(a_{1})} \cdots \hspace{-3mm} \sum_{a_{|\mathcal{L}|-1}
    \in \mathcal{Q}^{(k+1)}({a_{|\mathcal{L}|-2}})} \mathbbm{1}^{(k+1)}_{ \left\{{Q}(a_{|\mathcal{L}|-1})\right\}}|\mathcal{Q}^{(k+1)}(a_{|\mathcal{L}|-1})|^3
    \left(\lambda^{(k+1)}_{{Q}(a_{|\mathcal{L}|-1})}\right)^{2\theta^{(k+1)}_{{Q}(a_{|\mathcal{L}|-1})}} \left(\Upsilon^{(k+1)}_{{Q}(a_{|\mathcal{L}|-1})}\right)^2
    \\&+\sum_{{a_{1}}
    \in \mathcal{L}^{(k+1)}_{1,1}} \sum_{a_{2}
    \in \mathcal{Q}^{(k+1)}(a_{1})} \cdots  \hspace{-3mm} \sum_{a_{|\mathcal{L}|-2}
    \in \mathcal{Q}^{(k+1)}({a_{|\mathcal{L}|-3}})} \mathbbm{1}^{(k+1)}_{ \left\{{Q}(a_{|\mathcal{L}|-2})\right\}}|\mathcal{Q}^{(k+1)}(a_{|\mathcal{L}|-2})|^3  \left(\lambda^{(k+1)}_{{Q}(a_{|\mathcal{L}|-2})}\right)^{2\theta^{(k+1)}_{{Q}(a_{|\mathcal{L}|-2})}} \left(\Upsilon^{(k+1)}_{{Q}(a_{|\mathcal{L}|-2})}\right)^2 \\&+\cdots+\sum_{a_1
    \in \mathcal{L}^{(k+1)}_{1,1}} \mathbbm{1}^{(k+1)}_{ \left\{{Q}^{(k+1)}(a_1)\right\}}|\mathcal{Q}^{(k+1)}(a_1)|^3 \left(\lambda^{(k+1)}_{{Q}(a_1)}\right)^{2\theta^{(k+1)}_{{Q}(a_1)}} \left(\Upsilon^{(k+1)}_{{Q}(a_1)}\right)^2+\mathbbm{1}^{(k+1)}_{\left\{{{L}_{1,1}}\right\}}|\mathcal{L}^{(k+1)}_{1,1}|^3 \left(\lambda^{(k+1)}_{{L}_{1,1}}\right)^{2\theta^{(k+1)}_{{L}_{1,1}}} \left(\Upsilon^{(k+1)}_{{L}_{1,1}}\right)^2\Bigg].
       \end{aligned}
       \hspace{-20mm}
 \end{equation}
 Using the number of D2D rounds given in the proposition, similar to the approach taken in Appendix~\ref{app:boundedConsConv} it can be verified that $E[\Vert\mathbf{c}^{(k+1)}\Vert^2] \leq C= \frac{{\Phi}}{D^2}\sum_{j=0}^{|\mathcal{L}|-1} N_j \sigma_{j+1}$, $\forall k$. Using this and the definition of $\Gamma$ in~\eqref{eq:gammadef}, we get: $\Gamma\geq \frac{\eta\alpha^2 C}{2(\alpha \mu -1)}$, $\forall k$. Using this result in the last line of~\eqref{eq:expBoundConvDim2}, we get
   \begin{equation}\label{eq:expBoundConvDim3}
   \hspace{-10mm}
     \begin{aligned}
         &\mathbb{E} [F(\mathbf{w}^{(k+1)})-F(\mathbf{w}^*)]     \leq \left(\frac{k+\lambda-1}{(k+\lambda)^2}\right)\Gamma.
     \end{aligned}
      \hspace{-10mm}
 \end{equation}
 Note that since $k+\lambda >1$, we have $(k+\lambda)^2\geq (k+\lambda-1)(k+\lambda+1)$. Using this fact in~\eqref{eq:expBoundConvDim3}, we obtain
 \begin{equation}
     \mathbb{E} [F(\mathbf{w}^{(k+1)})-F(\mathbf{w}^*)]\leq \left(\frac{1}{k+\lambda+1}\right)\Gamma,
 \end{equation}
 which completes the induction and thus the proof.

 \begin{table*}[b]
\begin{minipage}{0.99\textwidth}
\vspace{-.2mm}
\hrulefill
{\footnotesize
\begin{equation}\label{eq:ClustSamp}
\vspace{-2mm}
 \hspace{-14mm}
 \begin{aligned}
     &F(\mathbf{w}^{(k)})-F(\mathbf{w}^{*})\leq
      \left[ \prod_{l=1}^{k} \left(1-\frac{\mu}{\eta}+8\frac{c_2}{ D^2}\left({D-D^{(l)}_s}\right)^2  \right) \right]\left(F(\mathbf{w}^{(0)})- F(\mathbf{w}^*)\right)+\Vast(\sum_{t=1}^{k} \left[ \prod_{l=t+1}^{k} \left(1-\frac{\mu}{\eta}+8\frac{c_2}{ D^2}\left({D-D^{(l)}_s}\right)^2  \right) \right]\\
 &\vast(\frac{\eta{\Phi}}{\left(D^{(t)}_s\right)^2} \Big[ \sum_{{a_{1}}
    \in \mathcal{L}^{(k)}_{{1},{1}}} \sum_{a_{2}
    \in \mathcal{Q}^{(k)}(a_{1})} \cdots  \sum_{a_{|\mathcal{L}|-1}
    \in \mathcal{Q}^{(k)}({a_{|\mathcal{L}|-2}})} \mathbbm{1}^{(t)}_{ \left\{\mathcal{Q}(a_{|\mathcal{L}|-1})\right\}}|\mathcal{Q}^{(k)}(a_{|\mathcal{L}|-1})|^3
    \left(\lambda^{(t)}_{\mathcal{Q}(a_{|\mathcal{L}|-1})}\right)^{2\theta^{(t)}_{\mathcal{Q}(a_{|\mathcal{L}|-1})}} \left(\Upsilon^{(t)}_{\mathcal{Q}(a_{|\mathcal{L}|-1})}\right)^2
    \\&+\sum_{{a_{1}}
    \in \mathcal{L}^{(k)}_{{1},{1}}} \sum_{a_{2}
    \in \mathcal{Q}^{(k)}(a_{1})} \cdots  \sum_{a_{|\mathcal{L}|-2}
    \in \mathcal{Q}^{(k)}({a_{|\mathcal{L}|-3}})} \mathbbm{1}^{(t)}_{ \left\{\mathcal{Q}(a_{|\mathcal{L}|-2})\right\}}|\mathcal{Q}^{(k)}(a_{|\mathcal{L}|-2})|^3  \left(\lambda^{(t)}_{\mathcal{Q}(a_{|\mathcal{L}|-2})}\right)^{2\theta^{(t)}_{\mathcal{Q}(a_{|\mathcal{L}|-2})}} \left(\Upsilon^{(t)}_{\mathcal{Q}(a_{|\mathcal{L}|-2})}\right)^2 +\cdots\\&+\sum_{a_1
    \in \mathcal{L}^{(k)}_{{1},1}} \mathbbm{1}^{(t)}_{ \left\{\mathcal{Q}(a_1)\right\}}|\mathcal{Q}^{(k)}(a_1)|^3 \left(\lambda^{(t)}_{\mathcal{Q}(a_1)}\right)^{2\theta^{(t)}_{\mathcal{Q}(a_1)}} \left(\Upsilon^{(t)}_{\mathcal{Q}(a_1)}\right)^2\\&+\mathbbm{1}^{(t)}_{\left\{{{L}_{{1},1}}\right\}}|\mathcal{L}^{(k)}_{{1},1}|^3 \left(\lambda^{(t)}_{\mathcal{L}_{{1},1}}\right)^{2\theta^{(t)}_{\mathcal{L}_{{1},1}}} \left(\Upsilon^{(t)}_{\mathcal{L}_{{1},1}}\right)^2\Big]+\frac{4}{\eta}\left(\frac{D-D^{(t)}_s}{D}\right)^2 c_1\vast)\Vast)
     \end{aligned}
     \hspace{-15mm}
 \end{equation}
 }
 \end{minipage}
 \vspace{-3mm}
 \end{table*}
   \pagebreak
\section{Cluster Sampling}
\label{app:cluster}
In a system of a million/billion users, one technique that a main server can use to  reduce the network load is to engage a portion of the devices in each global iteration. We realize this in FogL via \textit{cluster sampling} using which at each global iteration, a portion of the clusters of the bottom-most layer are engaged in model training, which we call them as \textit{active clusters}. We assume that at each global iteration $k$, the main server engages a set of $|\mathcal{S}^{(k)}|$ clusters in the learning, where each element of the set $\mathcal{S}^{(k)}$ corresponds to one cluster in the bottom-most layer. Consequently, we partition the nodes in different layers into \textit{active nodes} (those that are through the path between an active cluster and the main server) and \textit{passive nodes}. Similarly, for the clusters of the middle layers, if the cluster contains at least one active node, it is called an \textit{active cluster}. To capture these dynamics, with some abuse of notation, let $\mathbbm{1}^{(k)}_{ \left\{{C}\right\}}$ take the value of $1$ if cluster ${C}$ is both in active mode and operates in LUT mode in global aggregation $k$, and $0$ otherwise.  To conduct analysis, in addition to our assumptions made in Assumptions~\ref{assum:genConv} and~\ref{assump:cons}, we also consider the following assumption that is common in stochastic optimization literature~\cite{bertsekas1996neuro}:
    \begin{equation}\label{eq:berts}
    \hspace{-7mm}
        \exists c_1\geq 0,c_2\geq 1: \Vert\nabla f_i(x)\Vert^2\leq c_1+c_2\Vert\nabla F(x)\Vert^2,~~\forall i,x.
         \hspace{-7mm}
    \end{equation}
  \begin{proposition}\label{prop:deviceSamplingCons}
 For global iteration $k$ of {\tt MH-FL} with cluster sampling, the upper bound of convergence of  the objective function is given by~\eqref{eq:ClustSamp}, where $D^{(k)}_S$ denotes the total number of data points of the sampled devices at iteration $k$, i.e., $D^{(k)}_s=\sum_{n
    \in \mathcal{N}_{|\mathcal{L}|}} \mathbbm{1}^{(k)}_{ \left\{\mathcal{B}(n)\right\}}|\mathcal{D}_n|$, with $\mathcal{B}(n)$ referring to the cluster that node $n$ belongs to.\footnote{It is assumed that $\prod_{j=k+1}^{k} c_j =1$, $\forall c_j$.}
\end{proposition}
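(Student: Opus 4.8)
The plan is to adapt the three-part structure of the proof of Theorem~\ref{theo:consbase} (Appendix~\ref{app:ConsGen}) to the setting where only the sampled clusters participate in a given global round. First I would set up the per-iteration recursion: starting from the $\eta$-smoothness quadratic upper bound~\eqref{eq:quadBound}, I would express $\mathbf{w}^{(k)} - \mathbf{w}^{(k-1)}$ in terms of (i) the gradient aggregated over the \emph{active} devices only, scaled by $1/D$, and (ii) the accumulated consensus errors $\mathbf{c}^{(k)}_{a'_p}$ over the active LUT clusters. The key difference from Theorem~\ref{theo:consbase} is that the ``signal'' term is now $\frac{D_s^{(k)}}{D}$ times an average of active-device gradients rather than $\nabla F(\mathbf{w}^{(k-1)})$ itself; I would write this as $\nabla F(\mathbf{w}^{(k-1)})$ plus a \emph{sampling bias} term equal to $-\frac{1}{D}\sum_{n : \text{passive}}|\mathcal D_n|\nabla f_n(\mathbf{w}^{(k-1)}_n)$, whose norm is controlled by $(D-D_s^{(k)})$ and, via~\eqref{eq:berts}, by $c_1 + c_2\|\nabla F(\mathbf{w}^{(k-1)})\|^2$.

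Second, I would carry over Part~II of Appendix~\ref{app:ConsGen} essentially verbatim: the bound $\|\mathbf{c}^{(k)}_{a'_p}\|^2 \le (\lambda^{(k)}_C)^{2\theta^{(k)}_C}|\mathcal C^{(k)}|(\Upsilon^{(k)}_C)^2$ holds for each active LUT cluster $C$ unchanged, since it only concerns the consensus dynamics inside a cluster. Summing these through the hierarchy with the triangle inequality and Cauchy--Schwarz (the factor $\Phi$) gives the bracketed $\Xi$-like term in~\eqref{eq:ClustSamp}, now with the normalization $1/(D_s^{(t)})^2$ instead of $1/D^2$ because the parent at layer $L_1$ rescales by $|\mathcal L_{1,1}^{(k)}|$ but the true global weight is $D$.

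Third, I would combine these pieces. After substituting into~\eqref{eq:quadBound}, I expect three groups of terms on the right: the $-\frac{1}{2\eta}\|\nabla F\|^2$ term (from $\beta = 1/\eta$), cross terms and quadratic terms in the sampling bias, and the consensus-error terms. Using Young's/Cauchy--Schwarz inequalities to absorb the cross terms, then the bound~\eqref{eq:berts} to convert $\|\nabla f_i\|^2$ contributions into $c_1 + c_2\|\nabla F\|^2$, I would collect the coefficient of $\|\nabla F(\mathbf{w}^{(k-1)})\|^2$; applying the Polyak--\L{}ojasiewicz inequality~\eqref{proofboundGradStrongConv} to this term yields the contraction factor $1 - \frac{\mu}{\eta} + 8\frac{c_2}{D^2}(D - D_s^{(l)})^2$ appearing in~\eqref{eq:ClustSamp}, while the residual $c_1$ term produces the additive $\frac{4}{\eta}\big(\frac{D-D_s^{(t)}}{D}\big)^2 c_1$ term. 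Unrolling the one-step recursion over $l = 1,\dots,k$ gives the product form in~\eqref{eq:ClustSamp}.

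The main obstacle I anticipate is bookkeeping the sampling-bias term correctly: unlike in the gradient-sharing proof of Proposition~\ref{prop:conv_0_dem_step}, here the omitted devices are not zero-mean perturbations, so the bias does not vanish in expectation and must be bounded deterministically via $(D - D_s^{(k)})$ and~\eqref{eq:berts}. Getting the constants $8c_2$ and $4$ to come out exactly right requires careful tracking of which inequality (triangle, Cauchy--Schwarz with how many summands, Young with which weight) is applied where, and ensuring the $c_2\|\nabla F\|^2$ contribution is small enough that the net coefficient $1 - \mu/\eta + 8\frac{c_2}{D^2}(D-D_s)^2$ remains below $1$ when $D_s$ is close to $D$ — i.e., the bound is only meaningful when enough clusters are sampled, which is the natural analogue of the partial-participation condition in standard federated learning analyses.
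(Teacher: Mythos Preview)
Your overall architecture matches the paper's proof: smoothness upper bound~\eqref{eq:quadBound}, decompose the update into true gradient plus sampling bias plus consensus error, reuse Part~II of Appendix~\ref{app:ConsGen} verbatim for the per-cluster consensus bound, invoke~\eqref{eq:berts} on the bias, apply PL, and unroll. The consensus-error portion is indeed unchanged.

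The concrete gap is in how you set up the normalization and hence the sampling bias. In the cluster-sampling variant the server divides by $D_s^{(k)}$, \emph{not} by $D$ (see~\eqref{up:proofGlobSamp}); this is what makes the $\mathbf{w}^{(k-1)}$ term cancel cleanly (only active devices contribute, and $\sum_{\text{active}}|\mathcal D_n|\mathbf{w}^{(k-1)}=D_s^{(k)}\mathbf{w}^{(k-1)}$), and it is the actual reason the consensus error carries $1/(D_s^{(t)})^2$ rather than $1/D^2$. With this normalization the ``signal'' is $\sum_{\text{active}}|\mathcal D_n|\nabla f_n/D_s^{(k)}$, so the bias is
\[
\nabla F(\mathbf w^{(k-1)})-\sum_{\text{active}}\frac{|\mathcal D_n|}{D_s^{(k)}}\nabla f_n
=\sum_{\text{passive}}\frac{|\mathcal D_n|}{D}\nabla f_n\;-\;\sum_{\text{active}}\frac{(D-D_s^{(k)})|\mathcal D_n|}{D\,D_s^{(k)}}\nabla f_n,
\]
i.e.\ \emph{two} pieces, each of total mass $(D-D_s^{(k)})/D$, not only the passive-device piece you wrote. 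Bounding their sum by $2\frac{D-D_s^{(k)}}{D}\max_n\|\nabla f_n\|$, squaring, and combining with the factor~$2$ from splitting off $\varpi^{(k)}$ via $(\|a\|+\|b\|)^2\le 2(\|a\|^2+\|b\|^2)$ is precisely what produces the constants $8c_2$ and $\frac{4}{\eta}c_1$ in~\eqref{eq:ClustSamp}. With only the passive piece your constants would be off by a factor of four, and with your $1/D$ normalization the $\mathbf{w}^{(k-1)}$ term would not have cancelled to begin with.
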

 \begin{proof}
To find the relationship between $\textbf{w}^{(k)}$ and $\textbf{w}^{(k-1)}$, we follow the procedure described in the main text. Let $\mathbbm{1}^{(k)}_{ \left\{{S}({C})\right\}}$ take the value of $1$ when cluster ${C}$ is in active mode in global aggregation $k$, and $0$ otherwise. Also, with some abuse of notation, let $\mathbbm{1}^{(k)}_{ \left\{{C}\right\}}$ take the value of $1$ if cluster ${C}$ is both in active mode and operates in LUT mode in global aggregation $k$, and $0$ otherwise.
   \begin{table*}[b]
\begin{minipage}{0.99\textwidth}
 \hrulefill
\begin{equation}\label{eq:rootSample}
  \hspace{-22mm}
  \begin{aligned}
    & \widehat{\mathbf{w}}^{(k)}_{{a'_1}}=\frac{\displaystyle \sum_{{a_{1}}
    \in \mathcal{L}^{(k)}_{{1},{1}}} \sum_{a_{2}
    \in \mathcal{Q}^{(k)}(a_{1})} \sum_{a_{3}
    \in \mathcal{Q}^{(k)}(a_{2})}\cdots  \sum_{a_{|\mathcal{L}|}
    \in \mathcal{Q}^{(k)}({a_{|\mathcal{L}|-1}})} \mathbbm{1}^{(k)}_{ \left\{{S}({Q}({a_{|\mathcal{L}|-1}}))\right\}}|\mathcal{D}_{a_{|\mathcal{L}|}}|\mathbf{w}^{(k-1)}_{{a_{|\mathcal{L}|}}}} {|\mathcal{L}^{(k)}_{{1},1}|}\\& 
    -\frac{\displaystyle \sum_{{a_{1}}
    \in \mathcal{L}^{(k)}_{{1},{1}}} \sum_{a_{2}
    \in \mathcal{Q}^{(k)}(a_{1})} \sum_{a_{3}
    \in \mathcal{Q}^{(k)}(a_{2})}\cdots  \sum_{a_{|\mathcal{L}|}
    \in \mathcal{Q}^{(k)}({a_{|\mathcal{L}|-1}})}\mathbbm{1}^{(k)}_{ \left\{{S}({Q}({a_{|\mathcal{L}|-1}}))\right\}}\beta |\mathcal{D}_{a_{|\mathcal{L}|}}|\nabla f_{{a_{|\mathcal{L}|}}}(\mathbf{w}^{(k-1)}_{{a_{|\mathcal{L}|}}})} {|\mathcal{L}^{(k)}_{{1},1}|}\\&
    +\sum_{{a_{1}}
    \in \mathcal{L}^{(k)}_{{1},{1}}} \sum_{a_{2}
    \in \mathcal{Q}^{(k)}(a_{1})} \sum_{a_{3}
    \in \mathcal{Q}^{(k)}(a_{2})}\cdots  \sum_{a_{|\mathcal{L}|-1}
    \in \mathcal{Q}^{(k)}({a_{|\mathcal{L}|-2}})} \frac{\mathbbm{1}^{(k)}_{\left\{{{Q}(a_{|\mathcal{L}|-1})}\right\}}|\mathcal{Q}^{(k)}(a_{|\mathcal{L}|-1})|  \mathbf{c}^{(k)}_{a'_{|\mathcal{L}|}}}{|\mathcal{L}^{(k)}_{{1},1}|}\\&
    +\sum_{{a_{1}}
    \in \mathcal{L}^{(k)}_{{1},{1}}} \sum_{a_{2}
    \in \mathcal{Q}^{(k)}(a_{1})} \sum_{a_{3}
    \in \mathcal{Q}^{(k)}(a_{2})}\cdots  \sum_{a_{|\mathcal{L}|-2}
    \in \mathcal{Q}^{(k)}({a_{|\mathcal{L}|-3}})} \frac{\mathbbm{1}^{(k)}_{\left\{{{Q}(a_{|\mathcal{L}|-2})}\right\}}|\mathcal{Q}^{(k)}(a_{|\mathcal{L}|-2})|  \mathbf{c}^{(k)}_{a'_{|\mathcal{L}|-1}}}{|\mathcal{L}^{(k)}_{{1},1}|}+
    \\&\vdots
    \\&+\sum_{a_1
    \in \mathcal{L}^{(k)}_{{1},1}} \frac{\mathbbm{1}^{(k)}_{\left\{{{Q}(a_1)}\right\}}|\mathcal{Q}^{(k)}(a_1)|  \mathbf{c}^{(k)}_{a'_2}}{|\mathcal{L}^{(k)}_{{1},1}|}+\mathbbm{1}^{(k)}_{\left\{{{L}_{{1},1}}\right\}} \mathbf{c}^{(k)}_{a'_1}
       \end{aligned}
       \hspace{-20mm}
 \end{equation}
\end{minipage}
\end{table*}
  It can be verified that, at global iteration $k$, the parameter of the  node located in the ${L}_1$ sampled by the main server, referred to as ${{a'_1}}$, is given by~\eqref{eq:rootSample}, which
  is used by the server to obtain the next global parameter as follows:
  \begin{equation}\label{up:proofGlobSamp}
   \textbf{w}^{(k)}= \frac{|\mathcal{L}^{(k)}_{{1},1}| {\mathbf{w}}^{(k)}_{{{a'_1}}}}{D^{(k)}_s},
 \end{equation}
where $D^{(k)}_s=\sum_{n
    \in \mathcal{N}_{|\mathcal{L}|}} \mathbbm{1}^{(k)}_{ \left\{\mathcal{B}(n)\right\}}|\mathcal{D}_n|$, with $\mathcal{B}(n)$ referring to the cluster that the node $n$ belongs to, is the total number of data points available at the sampled devices at global aggregation $k$, which is assumed to be known to the server (in this case the server needs the knowledge of the number of data points available at the active clusters).
Following a similar procedure described in Appendix~\ref{app:ConsGen}, we obtain~\eqref{eq:sample1ww}.
\begin{table*}[t]
\begin{minipage}{0.99\textwidth}
\begin{equation}\label{eq:sample1ww}
  \hspace{-22mm}
  \begin{aligned}
    & {\textbf{w}}^{(k)}=\mathbf{w}^{(k-1)} - 
  \displaystyle  \sum_{{a_{1}}
    \in \mathcal{L}^{(k)}_{{1},{1}}} \sum_{a_{2}
    \in \mathcal{Q}^{(k)}(a_{1})} \sum_{a_{3}
    \in \mathcal{Q}^{(k)}(a_{2})}\cdots  \sum_{a_{|\mathcal{L}|}
    \in \mathcal{Q}^{(k)}({a_{|\mathcal{L}|-1}})}\mathbbm{1}^{(k)}_{ \left\{{S}({Q}({a_{|\mathcal{L}|-1}}))\right\}} \beta \frac{|\mathcal{D}_{{a_{|\mathcal{L}|}}}|}{D^{(k)}_s}\nabla f_{{a_{|\mathcal{L}|}}}(\mathbf{w}^{(k-1)}_{{a_{|\mathcal{L}|}}})
    \\&
    +\frac{1}{D^{(k)}_s}\Bigg[\sum_{{a_{1}}
    \in \mathcal{L}^{(k)}_{{1},{1}}} \sum_{a_{2}
    \in \mathcal{Q}^{(k)}(a_{1})} \sum_{a_{3}
    \in \mathcal{Q}^{(k)}(a_{2})}\cdots  \sum_{a_{|\mathcal{L}|-1}
    \in \mathcal{Q}^{(k)}({a_{|\mathcal{L}|-2}})} {\mathbbm{1}^{(k)}_{\left\{{{Q}(a_{|\mathcal{L}|-1})}\right\}}|\mathcal{Q}^{(k)}(a_{|\mathcal{L}|-1})|  \mathbf{c}^{(k)}_{a'_{|\mathcal{L}|}}}\\&
    +\sum_{{a_{1}}
    \in \mathcal{L}^{(k)}_{{1},{1}}} \sum_{a_{2}
    \in \mathcal{Q}^{(k)}(a_{1})} \sum_{a_{3}
    \in \mathcal{Q}^{(k)}(a_{2})}\cdots  \sum_{a_{|\mathcal{L}|-2}
    \in \mathcal{Q}^{(k)}({a_{|\mathcal{L}|-3}})} {\mathbbm{1}^{(k)}_{\left\{{{Q}(a_{|\mathcal{L}|-2})}\right\}}|\mathcal{Q}^{(k)}(a_{|\mathcal{L}|-2})|  \mathbf{c}^{(k)}_{a'_{|\mathcal{L}|-1}}}+
    \\&\vdots
    \\&+\sum_{a_1
    \in \mathcal{L}^{(k)}_{{1},1}} {\mathbbm{1}^{(k)}_{\left\{{{Q}(a_1)}\right\}}|\mathcal{Q}^{(k)}(a_1)|  \mathbf{c}^{(k)}_{a'_2}}+\mathbbm{1}^{(k)}_{\left\{{{L}_{{1},1}}\right\}} |\mathcal{L}^{(k)}_{{1},1}|\mathbf{c}^{(k)}_{a'_1}\Bigg]
       \end{aligned}
       \hspace{-20mm}
 \end{equation}
 \hrulefill
 \end{minipage}
 \end{table*}
 Let us define $\mathbf{\varpi}^{(k)}$ as follows:
 \begin{equation}
 \hspace{-10mm}
 \begin{aligned}
     &\mathbf{\varpi}^{(k)}
\triangleq\frac{1}{D^{(k)}_s}\Bigg[\sum_{{a_{1}}
    \in \mathcal{L}^{(k)}_{{1},{1}}} \sum_{a_{2}
    \in \mathcal{Q}^{(k)}(a_{1})} \sum_{a_{3}
    \in \mathcal{Q}^{(k)}(a_{2})}\cdots  \sum_{a_{|\mathcal{L}|-1}
    \in \mathcal{Q}^{(k)}({a_{|\mathcal{L}|-2}})} {\mathbbm{1}^{(k)}_{\left\{{{Q}(a_{|\mathcal{L}|-1})}\right\}}|\mathcal{Q}^{(k)}(a_{|\mathcal{L}|-1})|  \mathbf{c}^{(k)}_{a'_{|\mathcal{L}|}}}\\&
    +\sum_{{a_{1}}
    \in \mathcal{L}^{(k)}_{{1},{1}}} \sum_{a_{2}
    \in \mathcal{Q}^{(k)}(a_{1})} \sum_{a_{3}
    \in \mathcal{Q}^{(k)}(a_{2})}\cdots  \sum_{a_{|\mathcal{L}|-2}
    \in \mathcal{Q}^{(k)}({a_{|\mathcal{L}|-3}})} {\mathbbm{1}^{(k)}_{\left\{{{Q}(a_{|\mathcal{L}|-2})}\right\}}|\mathcal{Q}^{(k)}(a_{|\mathcal{L}|-2})|  \mathbf{c}^{(k)}_{a'_{|\mathcal{L}|-1}}}+
    \\&\vdots
    \\&+\sum_{a_1
    \in \mathcal{L}^{(k)}_{{1},1}} {\mathbbm{1}^{(k)}_{\left\{{{Q}(a_1)}\right\}}|\mathcal{Q}^{(k)}(a_1)|  \mathbf{c}^{(k)}_{a'_2}}+\mathbbm{1}^{(k)}_{\left\{{{L}_{{1},1}}\right\}} |\mathcal{L}^{(k)}_{{1},1}|\mathbf{c}^{(k)}_{a'_1}\Bigg]. \end{aligned}\hspace{-10mm}\end{equation}
By adding and subtracting a term, we rewrite \eqref{eq:sample1ww} as follows:
\begin{equation}\label{eq:sample1w}
  \hspace{-22mm}
  \begin{aligned}
    & {\textbf{w}}^{(k)}=\mathbf{w}^{(k-1)} -
  \displaystyle \sum_{{a_{1}}
    \in \mathcal{L}^{(k)}_{{1},{1}}} \sum_{a_{2}
    \in \mathcal{Q}^{(k)}(a_{1})} \sum_{a_{3}
    \in \mathcal{Q}^{(k)}(a_{2})}\cdots  \sum_{a_{|\mathcal{L}|}
    \in \mathcal{Q}^{(k)}({a_{|\mathcal{L}|-1}})}\mathbbm{1}^{(k)}_{ \left\{{S}({Q}({a_{|\mathcal{L}|-1}}))\right\}} \beta \frac{|\mathcal{D}_{{a_{|\mathcal{L}|}}}|}{D^{(k)}_s}\nabla f_{{a_{|\mathcal{L}|}}}(\mathbf{w}^{(k-1)}_{{a_{|\mathcal{L}|}}})\\&
    +\displaystyle \sum_{{a_{1}}
    \in \mathcal{L}^{(k)}_{{1},{1}}} \sum_{a_{2}
    \in \mathcal{Q}^{(k)}(a_{1})} \sum_{a_{3}
    \in \mathcal{Q}^{(k)}(a_{2})}\cdots  \sum_{a_{|\mathcal{L}|}
    \in \mathcal{Q}^{(k)}({a_{|\mathcal{L}|-1}})}\beta \frac{|\mathcal{D}_{{a_{|\mathcal{L}|}}}|}{D}\nabla f_{{a_{|\mathcal{L}|}}}(\mathbf{w}^{(k-1)}_{{a_{|\mathcal{L}|}}})
    \\&-\displaystyle \sum_{{a_{1}}
    \in \mathcal{L}^{(k)}_{{1},{1}}} \sum_{a_{2}
    \in \mathcal{Q}^{(k)}(a_{1})} \sum_{a_{3}
    \in \mathcal{Q}^{(k)}(a_{2})}\cdots  \sum_{a_{|\mathcal{L}|}
    \in \mathcal{Q}^{(k)}({a_{|\mathcal{L}|-1}})} \beta \frac{|\mathcal{D}_{{a_{|\mathcal{L}|}}}|}{D}\nabla f_{{a_{|\mathcal{L}|}}}(\mathbf{w}^{(k-1)}_{{a_{|\mathcal{L}|}}})
    +\mathbf{\varpi}^{(k)},
       \end{aligned}
       \hspace{-20mm}
 \end{equation}
 or equivalently
 \begin{equation}\label{eq:sample1w}
  \hspace{-25mm}
  \begin{aligned}
    & {\textbf{w}}^{(k)}=\mathbf{w}^{(k-1)} -\beta \displaystyle \nabla F(\mathbf{w}^{(k-1)})\\& 
  -\displaystyle \sum_{{a_{1}}
    \in \mathcal{L}^{(k)}_{{1},{1}}} \sum_{a_{2}
    \in \mathcal{Q}^{(k)}(a_{1})} \sum_{a_{3}
    \in \mathcal{Q}^{(k)}(a_{2})}\cdots  \sum_{a_{|\mathcal{L}|}
    \in \mathcal{Q}^{(k)}({a_{|\mathcal{L}|-1}})}\mathbbm{1}^{(k)}_{ \left\{{S}({Q}({a_{|\mathcal{L}|-1}}))\right\}} \beta \frac{|\mathcal{D}_{{a_{|\mathcal{L}|}}}|}{D^{(k)}_s}\nabla f_{{a_{|\mathcal{L}|}}}(\mathbf{w}^{(k-1)}_{{a_{|\mathcal{L}|}}})
    \\&+\displaystyle \sum_{{a_{1}}
    \in \mathcal{L}^{(k)}_{{1},{1}}} \sum_{a_{2}
    \in \mathcal{Q}^{(k)}(a_{1})} \sum_{a_{3}
    \in \mathcal{Q}^{(k)}(a_{2})}\cdots  \sum_{a_{|\mathcal{L}|}
    \in \mathcal{Q}^{(k)}({a_{|\mathcal{L}|-1}})} \beta \frac{|\mathcal{D}_{{a_{|\mathcal{L}|}}}|}{D}\nabla f_{{a_{|\mathcal{L}|}}}(\mathbf{w}^{(k-1)}_{{a_{|\mathcal{L}|}}})
    +\mathbf{\varpi}^{(k)}.
       \end{aligned}
       \hspace{-20mm}
 \end{equation}
 Let us define $\varrho^{({k})}$ as follows:
\begin{equation}\label{eq:sample1w}
  \hspace{-22mm}
  \begin{aligned}
  &\varrho^{({k})}\triangleq
  {\beta}\Bigg[-\displaystyle \sum_{{a_{1}}
    \in \mathcal{L}^{(k)}_{{1},{1}}} \sum_{a_{2}
    \in \mathcal{Q}^{(k)}(a_{1})} \sum_{a_{3}
    \in \mathcal{Q}^{(k)}(a_{2})}\cdots  \sum_{a_{|\mathcal{L}|}
    \in \mathcal{Q}^{(k)}({a_{|\mathcal{L}|-1}})}\mathbbm{1}^{(k)}_{ \left\{{S}({Q}({a_{|\mathcal{L}|-1}}))\right\}}  \frac{|\mathcal{D}_{{a_{|\mathcal{L}|}}}|}{D^{(k)}_s}\nabla f_{{a_{|\mathcal{L}|}}}(\mathbf{w}^{(k-1)}_{{a_{|\mathcal{L}|}}})
    \\&+\displaystyle \sum_{{a_{1}}
    \in \mathcal{L}^{(k)}_{{1},{1}}} \sum_{a_{2}
    \in \mathcal{Q}^{(k)}(a_{1})} \sum_{a_{3}
    \in \mathcal{Q}^{(k)}(a_{2})}\cdots  \sum_{a_{|\mathcal{L}|}
    \in \mathcal{Q}^{(k)}({a_{|\mathcal{L}|-1}})} \frac{|\mathcal{D}_{{a_{|\mathcal{L}|}}}|}{D}\nabla f_{{a_{|\mathcal{L}|}}}(\mathbf{w}^{(k-1)}_{{a_{|\mathcal{L}|}}})
    + \frac{1}{\beta}\mathbf{\varpi}^{(k)}\Bigg].
       \end{aligned}
       \hspace{-20mm}
 \end{equation}
 For global iteration $k$, let $\bar{\mathcal{S}}^{(k)}$ denotes the set of passive clusters, which is the complementary set of ${\mathcal{S}}^{(k)}$, i.e., $\bar{\mathcal{S}}^{(k)}\cup {\mathcal{S}}^{(k)} = \mathcal{L}_{|\mathcal{L}|}$, $\bar{\mathcal{S}}^{(k)}\cap {\mathcal{S}}^{(k)}= \emptyset$, where $\mathcal{L}_{|\mathcal{L}|}$ denotes the set of all clusters located in the bottom-most layer.
 Let $\mathbbm{1}^{(k)}_{ \left\{\bar{S}({C})\right\}}$ take the value of $1$ when cluster ${C}$ is in passive mode in global aggregation $k$, and $0$ otherwise.
Following the procedure described in the proof of Appendix~\ref{app:ConsGen}, we first aim to bound $\mathbb{E}[\Vert\varrho^{({k})}\Vert^2]$. The procedure is described in~\eqref{eq:longConvConsSmapl3}.  In that series of simplifications in~\eqref{eq:longConvConsSmapl3}, the triangle inequality is applied repeatedly. In inequality (a), we have used the fact that $(\Vert\mathbf{a}\Vert + \Vert\mathbf{b}\Vert)^2\leq 2( \Vert \mathbf{a}\Vert^2 + \Vert \mathbf{b}\Vert^2)  $,
in inequality (b) we have used the fact that $\frac{1}{D^{(k)}_s}=\frac{1}{D}-\frac{D^{(k)}_s-D}{(D) (D^{(k)}_s)}$, in (c) we have used~\eqref{eq:berts}, and in inequality (d) we have used the smoothness definition in Assumption~\ref{assum:genConv} that can also be written as
\begin{equation}
    \hspace{-4mm}   F(\mathbf{y})\leq F(\mathbf{x})+(\mathbf{y}-\mathbf{x})^\top \nabla F(\mathbf{x}) +\frac{\eta}{2}\norm{\mathbf{y}-\mathbf{x}}^2,~\forall \mathbf{x},\mathbf{y},
\end{equation}
 minimizing the both hand sides of which results in: $\norm{\nabla F(\mathbf{w})}^2 \leq  2\eta( F(\mathbf{w})- F(\mathbf{w}^*))$, $\forall \mathbf{w}$. Note that $\Vert \mathbf{\varpi}^{(k)}\Vert^2$ can be obtained similar to Appendix~\ref{app:ConsGen} as~\eqref{eq:samp11}.
\begin{table*}[t]
\begin{minipage}{0.99\textwidth}
\begin{equation}\label{eq:samp11}
\hspace{-9mm}
\begin{aligned}
   &\Vert \mathbf{\varpi}^{(k)}\Vert^2 \leq \frac{{\Phi}}{\left(D^{(k)}_s\right)^2} \Bigg[\\&\sum_{{a_{1}}
    \in \mathcal{L}^{(k)}_{{1},{1}}} \sum_{a_{2}
    \in \mathcal{Q}^{(k)}(a_{1})} \cdots \hspace{-3mm} \sum_{a_{|\mathcal{L}|-1}
    \in \mathcal{Q}^{(k)}({a_{|\mathcal{L}|-2}})} \mathbbm{1}^{(k)}_{ \left\{{Q}(a_{|\mathcal{L}|-1})\right\}}|\mathcal{Q}^{(k)}(a_{|\mathcal{L}|-1})|^3
    \left(\lambda^{(k)}_{{Q}(a_{|\mathcal{L}|-1})}\right)^{2\theta^{(k)}_{{Q}(a_{|\mathcal{L}|-1})}} \left(\Upsilon^{(k)}_{{Q}(a_{|\mathcal{L}|-1})}\right)^2
    \\&+\sum_{{a_{1}}
    \in \mathcal{L}^{(k)}_{{1},{1}}} \sum_{a_{2}
    \in \mathcal{Q}^{(k)}(a_{1})} \cdots  \hspace{-3mm} \sum_{a_{|\mathcal{L}|-2}
    \in \mathcal{Q}^{(k)}({a_{|\mathcal{L}|-3}})} \mathbbm{1}^{(k)}_{ \left\{{Q}(a_{|\mathcal{L}|-2})\right\}}|\mathcal{Q}^{(k)}(a_{|\mathcal{L}|-2})|^3  \left(\lambda^{(k)}_{{Q}(a_{|\mathcal{L}|-2})}\right)^{2\theta^{(k)}_{{Q}(a_{|\mathcal{L}|-2})}} \left(\Upsilon^{(k)}_{{Q}(a_{|\mathcal{L}|-2})}\right)^2 \\&+\cdots+\sum_{a_1
    \in \mathcal{L}^{(k)}_{{1},1}} \mathbbm{1}^{(k)}_{ \left\{{Q}(a_1)\right\}}|\mathcal{Q}^{(k)}(a_1)|^3 \left(\lambda^{(k)}_{{Q}(a_1)}\right)^{2\theta^{(k)}_{{Q}(a_1)}} \left(\Upsilon^{(k)}_{{Q}(a_1)}\right)^2+\mathbbm{1}^{(k)}_{\left\{{{L}_{{1},1}}\right\}}|\mathcal{L}^{(k)}_{{1},1}|^3 \left(\lambda^{(k)}_{{L}_{{1},1}}\right)^{2\theta^{(k)}_{{L}_{{1},1}}} \left(\Upsilon^{(k)}_{{L}_{{1},1}}\right)^2\Bigg]
    \end{aligned}
    \hspace{-15mm}
\end{equation}
\hrulefill
\end{minipage}
\end{table*}
Replacing this with $\beta=\frac{1}{\eta}$ in the bound in~\eqref{eq:longConvConsSmapl3}, and following the procedure of proof in Appendix~\ref{app:ConsGen}, we get~\eqref{eq:samp22}, which can be recursively expanded to get the bound in the proposition statement.
 \end{proof}
 \begin{remark} The methodology used to derive all the previous results regarding the convergence and the number of D2D can be studied for this scenario with cluster sampling, which we leave as future work. 
 One key observation from~\eqref{eq:ClustSamp} is that upon increasing the number of active clusters, often resulting in increasing $D_s^{(k)}$, $\forall k$, the right hand side of~\eqref{eq:ClustSamp} starts to decrease, which implies a higher training accuracy, and the similarity between the bounds \eqref{eq:ConsTh1} and~\eqref{eq:ClustSamp} increases. In the limiting case $D_s^{(k)}=D$, $\forall k$, bound~\eqref{eq:ClustSamp} can be written similarly to ~\eqref{eq:ClustSamp}, where $\frac{\eta{\Phi}}{2D^2}$ in~\eqref{eq:ConsTh1} would be replaced by a larger value $\frac{\eta{\Phi}}{D^2}$. 
 \end{remark}
\begin{table*}[t]
\begin{minipage}{0.99\textwidth}
\begin{equation}\label{eq:samp22}
\hspace{-7mm}
 \begin{aligned}
     &F(\mathbf{w}^{(k)})-F(\mathbf{w}^{*})\leq (1-\frac{\mu}{\eta})\left(F(\mathbf{w}^{(k-1)})- F(\mathbf{w}^*)\right)+ \frac{\eta}{2}\Bigg[\\&\frac{2{\Phi}}{\left(D^{(k)}_s\right)^2} \Big[\sum_{{a_{1}}
    \in \mathcal{L}^{(k)}_{{1},{1}}} \sum_{a_{2}
    \in \mathcal{Q}^{(k)}(a_{1})} \cdots \hspace{-3mm} \sum_{a_{|\mathcal{L}|-1}
    \in \mathcal{Q}^{(k)}({a_{|\mathcal{L}|-2}})} \mathbbm{1}^{(k)}_{ \left\{{Q}(a_{|\mathcal{L}|-1})\right\}}|\mathcal{Q}^{(k)}(a_{|\mathcal{L}|-1})|^3
    \left(\lambda^{(k)}_{{Q}(a_{|\mathcal{L}|-1})}\right)^{2\theta^{(k)}_{{Q}(a_{|\mathcal{L}|-1})}} \left(\Upsilon^{(k)}_{{Q}(a_{|\mathcal{L}|-1})}\right)^2
    \\&+\sum_{{a_{1}}
    \in \mathcal{L}^{(k)}_{{1},{1}}} \sum_{a_{2}
    \in \mathcal{Q}^{(k)}(a_{1})} \cdots  \hspace{-3mm} \sum_{a_{|\mathcal{L}|-2}
    \in \mathcal{Q}^{(k)}({a_{|\mathcal{L}|-3}})} \mathbbm{1}^{(k)}_{ \left\{{Q}(a_{|\mathcal{L}|-2})\right\}}|\mathcal{Q}^{(k)}(a_{|\mathcal{L}|-2})|^3  \left(\lambda^{(k)}_{{Q}(a_{|\mathcal{L}|-2})}\right)^{2\theta^{(k)}_{{Q}(a_{|\mathcal{L}|-2})}} \left(\Upsilon^{(k)}_{{Q}(a_{|\mathcal{L}|-2})}\right)^2 \\&+\cdots+\sum_{a_1
    \in \mathcal{L}^{(k)}_{{1},1}} \mathbbm{1}^{(k)}_{ \left\{{Q}(a_1)\right\}}|\mathcal{Q}^{(k)}(a_1)|^3 \left(\lambda^{(k)}_{{Q}(a_1)}\right)^{2\theta^{(k)}_{{Q}(a_1)}} \left(\Upsilon^{(k)}_{{Q}(a_1)}\right)^2+\mathbbm{1}^{(k)}_{\left\{{{L}_{{1},1}}\right\}}|\mathcal{L}^{(k)}_{{1},1}|^3 \left(\lambda^{(k)}_{{L}_{{1},1}}\right)^{2\theta^{(k)}_{{L}_{{1},1}}} \left(\Upsilon^{(k)}_{{L}_{{1},1}}\right)^2\Big]\\&+  \frac{8}{\eta^2}\left(\frac{D-D^{(k)}_s}{D}\right)^2 (c_1+2c_2 \eta(F(\textbf{w}^{(k-1)})-F(\textbf{w}^*))\Bigg]
     \end{aligned}
     \hspace{-18mm}
 \end{equation}
\hrulefill
\vspace{-1mm}
\end{minipage}
\end{table*}
\pagebreak
  \begin{table*}
 {\footnotesize
 \begin{equation}\label{eq:longConvConsSmapl3}
 \hspace{-25mm}
 \begin{aligned}
     &\frac{1}{\beta^2}\Vert\varrho^{({k})}\Vert^2=\Bigg\Vert-\displaystyle \sum_{{a_{1}}
    \in \mathcal{L}^{(k)}_{{1},{1}}} \sum_{a_{2}
    \in \mathcal{Q}^{(k)}(a_{1})} \sum_{a_{3}
    \in \mathcal{Q}^{(k)}(a_{2})}\cdots  \sum_{a_{|\mathcal{L}|}
    \in \mathcal{Q}^{(k)}({a_{|\mathcal{L}|-1}})}\mathbbm{1}^{(k)}_{ \left\{{S}({Q}({a_{|\mathcal{L}|-1}}))\right\}}  \frac{|\mathcal{D}_{{a_{|\mathcal{L}|}}}|}{D^{(k)}_s}\nabla f_{{a_{|\mathcal{L}|}}}(\mathbf{w}^{(k-1)}_{{a_{|\mathcal{L}|}}})
    \\&+\displaystyle \sum_{{a_{1}}
    \in \mathcal{L}^{(k)}_{{1},{1}}} \sum_{a_{2}
    \in \mathcal{Q}^{(k)}(a_{1})} \sum_{a_{3}
    \in \mathcal{Q}^{(k)}(a_{2})}\cdots  \sum_{a_{|\mathcal{L}|}
    \in \mathcal{Q}^{(k)}({a_{|\mathcal{L}|-1}})} \frac{|\mathcal{D}_{{a_{|\mathcal{L}|}}}|}{D}\nabla f_{{a_{|\mathcal{L}|}}}(\mathbf{w}^{(k-1)}_{{a_{|\mathcal{L}|}}})
    + \frac{1}{\beta}\mathbf{\varpi}^{(k)}\Bigg\Vert^2\\&\leq
   \Bigg(\norm{\frac{1}{\beta}\mathbf{\varpi}^{(k)}  }+\Bigg\Vert 
    \displaystyle \sum_{{a_{1}}
    \in \mathcal{L}^{(k)}_{{1},{1}}} \sum_{a_{2}
    \in \mathcal{Q}^{(k)}(a_{1})} \sum_{a_{3}
    \in \mathcal{Q}^{(k)}(a_{2})}\cdots  \sum_{a_{|\mathcal{L}|}
    \in \mathcal{Q}^{(k)}({a_{|\mathcal{L}|-1}})} \frac{|\mathcal{D}_{{a_{|\mathcal{L}|}}}|}{D}\nabla f_{{a_{|\mathcal{L}|}}}(\mathbf{w}^{(k-1)}_{{a_{|\mathcal{L}|}}})\\&-\displaystyle \sum_{{a_{1}}
    \in \mathcal{L}^{(k)}_{{1},{1}}} \sum_{a_{2}
    \in \mathcal{Q}^{(k)}(a_{1})} \sum_{a_{3}
    \in \mathcal{Q}^{(k)}(a_{2})}\cdots  \sum_{a_{|\mathcal{L}|}
    \in \mathcal{Q}^{(k)}({a_{|\mathcal{L}|-1}})}\mathbbm{1}^{(k)}_{ \left\{{S}({Q}({a_{|\mathcal{L}|-1}}))\right\}}  \frac{|\mathcal{D}_{{a_{|\mathcal{L}|}}}|}{D^{(k)}_s}\nabla f_{{a_{|\mathcal{L}|}}}(\mathbf{w}^{(k-1)}_{{a_{|\mathcal{L}|}}})\Bigg\Vert \Bigg)^2\\&\overset{(a)}{\leq}
        2 \norm{\frac{1}{\beta}\mathbf{\varpi}^{(k)}  }^2+2\Bigg\Vert\displaystyle \displaystyle \sum_{{a_{1}}
    \in \mathcal{L}^{(k)}_{{1},{1}}} \sum_{a_{2}
    \in \mathcal{Q}^{(k)}(a_{1})} \sum_{a_{3}
    \in \mathcal{Q}^{(k)}(a_{2})}\cdots  \sum_{a_{|\mathcal{L}|}
    \in \mathcal{Q}^{(k)}({a_{|\mathcal{L}|-1}})} \frac{|\mathcal{D}_{{a_{|\mathcal{L}|}}}|}{D}\nabla f_{{a_{|\mathcal{L}|}}}(\mathbf{w}^{(k-1)}_{{a_{|\mathcal{L}|}}})\\&-\displaystyle \sum_{{a_{1}}
    \in \mathcal{L}^{(k)}_{{1},{1}}} \sum_{a_{2}
    \in \mathcal{Q}^{(k)}(a_{1})} \sum_{a_{3}
    \in \mathcal{Q}^{(k)}(a_{2})}\cdots  \sum_{a_{|\mathcal{L}|}
    \in \mathcal{Q}^{(k)}({a_{|\mathcal{L}|-1}})}\mathbbm{1}^{(k)}_{ \left\{{S}({Q}({a_{|\mathcal{L}|-1}}))\right\}}  \frac{|\mathcal{D}_{{a_{|\mathcal{L}|}}}|}{D^{(k)}_s}\nabla f_{{a_{|\mathcal{L}|}}}(\mathbf{w}^{(k-1)}_{{a_{|\mathcal{L}|}}})\Bigg\Vert^2\\&\overset{(b)}{\leq}
     2\frac{1}{\beta^2}\norm{\mathbf{\varpi}^{(k)}  }^2+
       2\Bigg\Vert \displaystyle \displaystyle \sum_{{a_{1}}
    \in \mathcal{L}^{(k)}_{{1},{1}}} \sum_{a_{2}
    \in \mathcal{Q}^{(k)}(a_{1})} \sum_{a_{3}
    \in \mathcal{Q}^{(k)}(a_{2})}\cdots  \sum_{a_{|\mathcal{L}|}
    \in \mathcal{Q}^{(k)}({a_{|\mathcal{L}|-1}})} \mathbbm{1}^{(k)}_{ \left\{\bar{S}({Q}({a_{|\mathcal{L}|-1}}))\right\}} \frac{|\mathcal{D}_{{a_{|\mathcal{L}|}}}|}{D}\nabla f_{{a_{|\mathcal{L}|}}}(\mathbf{w}^{(k-1)}_{{a_{|\mathcal{L}|}}})\\&
    -\displaystyle \sum_{{a_{1}}
    \in \mathcal{L}^{(k)}_{{1},{1}}} \sum_{a_{2}
    \in \mathcal{Q}^{(k)}(a_{1})} \sum_{a_{3}
    \in \mathcal{Q}^{(k)}(a_{2})}\cdots  \sum_{a_{|\mathcal{L}|}
    \in \mathcal{Q}^{(k)}({a_{|\mathcal{L}|-1}})}\mathbbm{1}^{(k)}_{ \left\{{S}({Q}({a_{|\mathcal{L}|-1}}))\right\}}  \frac{(D-D^{(k)}_s)|\mathcal{D}_{a_{|\mathcal{L}|}}|}{(D)(D^{(k)}_s)}\nabla f_{{a_{|\mathcal{L}|}}}(\mathbf{w}^{(k-1)}_{{a_{|\mathcal{L}|}}})\Bigg\Vert^2
\\&{\leq} 
     2\frac{1}{\beta^2} \norm{\mathbf{\varpi}^{(k)}  }^2+
       2\Bigg( \displaystyle \displaystyle \sum_{{a_{1}}
    \in \mathcal{L}^{(k)}_{{1},{1}}} \sum_{a_{2}
    \in \mathcal{Q}^{(k)}(a_{1})} \sum_{a_{3}
    \in \mathcal{Q}^{(k)}(a_{2})}\cdots  \sum_{a_{|\mathcal{L}|}
    \in \mathcal{Q}^{(k)}({a_{|\mathcal{L}|-1}})} \mathbbm{1}^{(k)}_{ \left\{\bar{S}({Q}({a_{|\mathcal{L}|-1}}))\right\}} \frac{|\mathcal{D}_{{a_{|\mathcal{L}|}}}|}{D}\Big\Vert\nabla f_{{a_{|\mathcal{L}|}}}(\mathbf{w}^{(k-1)}_{{a_{|\mathcal{L}|}}})\Big\Vert\\&
    -\displaystyle \sum_{{a_{1}}
    \in \mathcal{L}^{(k)}_{{1},{1}}} \sum_{a_{2}
    \in \mathcal{Q}^{(k)}(a_{1})} \sum_{a_{3}
    \in \mathcal{Q}^{(k)}(a_{2})}\cdots  \sum_{a_{|\mathcal{L}|}
    \in \mathcal{Q}^{(k)}({a_{|\mathcal{L}|-1}})}\mathbbm{1}^{(k)}_{ \left\{{S}({Q}({a_{|\mathcal{L}|-1}}))\right\}}  \frac{(D-D^{(k)}_s)|\mathcal{D}_{a_{|\mathcal{L}|}}|}{(D)(D^{(k)}_s)}\Big\Vert\nabla f_{{a_{|\mathcal{L}|}}}(\mathbf{w}^{(k-1)}_{{a_{|\mathcal{L}|}}})\Big\Vert\Bigg)^2\\&{\leq}
        2\frac{1}{\beta^2} \norm{\mathbf{\varpi}^{(k)}  }^2+
       2\Bigg( \displaystyle \displaystyle \sum_{{a_{1}}
    \in \mathcal{L}^{(k)}_{{1},{1}}} \sum_{a_{2}
    \in \mathcal{Q}^{(k)}(a_{1})} \sum_{a_{3}
    \in \mathcal{Q}^{(k)}(a_{2})}\cdots  \sum_{a_{|\mathcal{L}|}
    \in \mathcal{Q}^{(k)}({a_{|\mathcal{L}|-1}})} \mathbbm{1}^{(k)}_{ \left\{\bar{S}({Q}({a_{|\mathcal{L}|-1}}))\right\}} \frac{|\mathcal{D}_{{a_{|\mathcal{L}|}}}|}{D}\max_{a_{|\mathcal{L}|}
    \in \mathcal{Q}^{(k)}({a_{|\mathcal{L}|-1}})}\Big(\big\Vert\nabla f_{{a_{|\mathcal{L}|}}}(\mathbf{w}^{(k-1)}_{{a_{|\mathcal{L}|}}})\big\Vert\Big)\\&+\displaystyle \sum_{{a_{1}}
    \in \mathcal{L}^{(k)}_{{1},{1}}} \sum_{a_{2}
    \in \mathcal{Q}^{(k)}(a_{1})} \sum_{a_{3}
    \in \mathcal{Q}^{(k)}(a_{2})}\cdots  \sum_{a_{|\mathcal{L}|}
    \in \mathcal{Q}^{(k)}({a_{|\mathcal{L}|-1}})}\mathbbm{1}^{(k)}_{ \left\{{S}({Q}({a_{|\mathcal{L}|-1}}))\right\}}  \frac{(D-D^{(k)}_s)|\mathcal{D}_{a_{|\mathcal{L}|}}|}{(D)(D^{(k)}_s)}\max_{a_{|\mathcal{L}|}
    \in \mathcal{Q}^{(k)}({a_{|\mathcal{L}|-1}})}\Big(\big\Vert\nabla f_{{a_{|\mathcal{L}|}}}(\mathbf{w}^{(k-1)}_{{a_{|\mathcal{L}|}}})\big\Vert\Big)\Bigg)^2\\&{=}
         2\frac{1}{\beta^2} \norm{\mathbf{\varpi}^{(k)}  }^2+
       2\Bigg( \frac{D-D^{(k)}_s}{D}\max_{a \in \mathcal{N}_{|\mathcal{L}|}}\Big(\big\Vert\nabla f_{a}(\mathbf{w}^{(k-1)}_{a})\Vert\Big)+ \frac{(D-D^{(k)}_s)D^{(k)}_s}{(D)(D^{(k)}_s)}\max_{a\in \mathcal{N}_{|\mathcal{L}|}}\Big(\big\Vert\nabla f_{a}(\mathbf{w}^{(k-1)}_{a})\big\Vert\Big)\Bigg)^2\\&\leq
          \frac{2}{\beta^2} \norm{\mathbf{\varpi}^{(k)}  }^2+
       2\Bigg[\Bigg( 2\frac{D-D^{(k)}_s}{D}\max_{a\in \mathcal{N}_{|\mathcal{L}|}}\Big(\big\Vert\nabla f_{a}(\mathbf{w}^{(k-1)}_{a})\big\Vert\Big)\Bigg)^2\Bigg]\leq
          \frac{2}{\beta^2} \norm{\mathbf{\varpi}^{(k)}  }^2+
      8\Bigg(\frac{D-D^{(k)}_s}{D}\max_{a\in \mathcal{N}_{|\mathcal{L}|}}\Big(\big\Vert\nabla f_{a}(\mathbf{w}^{(k-1)}_{a})\big\Vert\Big)\Bigg)^2\\&{\leq}
       \frac{2}{\beta^2}\norm{\mathbf{\varpi}^{(k)}  }^2+
      8\left(\frac{D-D^{(k)}_s}{D}\right)^2\Bigg(\max_{a\in \mathcal{N}_{|\mathcal{L}|}}\Big(\big\Vert\nabla f_{a}(\mathbf{w}^{(k-1)}_{a})\big\Vert\Big)\Bigg)^2\\&\overset{(c)}{\leq} \frac{2}{\beta^2}\norm{\mathbf{\varpi}^{(k)}  }^2+
      8\left(\frac{D-D^{(k)}_s}{D}\right)^2\left(c_1+c_2\Vert F(\mathbf{w}^{(k-1)})\Vert^2\right)\\&\overset{(d)}{\leq}
      2\frac{1}{\beta^2} \norm{\mathbf{\varpi}^{(k)}  }^2+8\left(\frac{D-D^{(k)}_s}{D}\right)^2\left(c_1+2c_2 \eta\left(F(\textbf{w}^{(k-1)})-F(\textbf{w}^*) \right)\right)
     \end{aligned}
     \hspace{-21mm}
 \end{equation}
 }
 \hrulefill
 \end{table*}

 \pagebreak
The following appendix is the last appendix of the paper concerned with theoretical analysis, which is followed by another appendix containing extensive numerical simulations.
\pagebreak
   \section{Aggregation Error upon using Algorithm~\ref{alg:tuneConsNN}}\label{app:aggrErr}
   \noindent According to~\eqref{eq:root}, the aggregation error at the $k$-th global aggregation is given by
   \begin{equation}
  \hspace{-24mm}
  \begin{aligned}
    & \mathbf{e}^{(k)}=\frac{1}{D}\Bigg(\sum_{{a_{1}}
    \in \mathcal{L}^{(k)}_{{1},{1}}} \sum_{a_{2}
    \in \mathcal{Q}^{(k)}(a_{1})} \sum_{a_{3}
    \in \mathcal{Q}^{(k)}(a_{2})}\cdots  \sum_{a_{|\mathcal{L}|-1}
    \in \mathcal{Q}^{(k)}({a_{|\mathcal{L}|-2}})} {\mathbbm{1}^{(k)}_{\left\{{{Q}(a_{|\mathcal{L}|-1})}\right\}}|\mathcal{Q}^{(k)}(a_{|\mathcal{L}|-1})|  \mathbf{c}^{(k)}_{a'_{|\mathcal{L}|}}}\\&
    +\sum_{{a_{1}}
    \in \mathcal{L}^{(k)}_{{1},{1}}} \sum_{a_{2}
    \in \mathcal{Q}^{(k)}(a_{1})} \sum_{a_{3}
    \in \mathcal{Q}^{(k)}(a_{2})}\cdots  \sum_{a_{|\mathcal{L}|-2}
    \in \mathcal{Q}^{(k)}({a_{|\mathcal{L}|-3}})} {\mathbbm{1}^{(k)}_{\left\{{{Q}(a_{|\mathcal{L}|-2})}\right\}}|\mathcal{Q}^{(k)}(a_{|\mathcal{L}|-2})|  \mathbf{c}^{(k)}_{a'_{|\mathcal{L}|-1}}}+
    \\&\vdots
    \\&+\sum_{a_1
    \in \mathcal{L}^{(k)}_{{1},1}} {\mathbbm{1}^{(k)}_{\left\{{{Q}(a_1)}\right\}}|\mathcal{Q}^{(k)}(a_1)|  \mathbf{c}^{(k)}_{a'_2}}+\mathbbm{1}^{(k)}_{\left\{{{L}_{{1},1}}\right\}} |\mathcal{L}^{(k)}_{{1},1}|\mathbf{c}^{(k)}_{a'_1}\Bigg).
       \end{aligned}
       \hspace{-20mm}
 \end{equation}
 Following a similar procedure described in Appendix~\ref{app:ConsGen}, we get
  \begin{equation} 
  \hspace{-24mm}
  \begin{aligned}
    & \Vert\mathbf{e}^{(k)}\Vert^2\leq\frac{\Phi}{D^2} \Bigg[\\& \sum_{{a_{1}}
    \in \mathcal{L}^{(k)}_{{1},{1}}} \sum_{a_{2}
    \in \mathcal{Q}^{(k)}(a_{1})} \cdots \hspace{-3mm} \sum_{a_{|\mathcal{L}|-1}
    \in \mathcal{Q}^{(k)}({a_{|\mathcal{L}|-2}})} \mathbbm{1}^{(k)}_{ \left\{{Q}(a_{|\mathcal{L}|-1})\right\}}|\mathcal{Q}^{(k)}(a_{|\mathcal{L}|-1})|^3
    \left(\lambda^{(k)}_{{Q}(a_{|\mathcal{L}|-1})}\right)^{2\theta^{(k)}_{{Q}(a_{|\mathcal{L}|-1})}} \left(\Upsilon^{(k)}_{{Q}(a_{|\mathcal{L}|-1})}\right)^2
    \\&+\sum_{{a_{1}}
    \in \mathcal{L}^{(k)}_{{1},{1}}} \sum_{a_{2}
    \in \mathcal{Q}^{(k)}(a_{1})} \cdots  \hspace{-3mm} \sum_{a_{|\mathcal{L}|-2}
    \in \mathcal{Q}^{(k)}({a_{|\mathcal{L}|-3}})} \mathbbm{1}^{(k)}_{ \left\{{Q}(a_{|\mathcal{L}|-2})\right\}}|\mathcal{Q}^{(k)}(a_{|\mathcal{L}|-2})|^3  \left(\lambda^{(k)}_{{Q}(a_{|\mathcal{L}|-2})}\right)^{2\theta^{(k)}_{{Q}(a_{|\mathcal{L}|-2})}} \left(\Upsilon^{(k)}_{{Q}(a_{|\mathcal{L}|-2})}\right)^2 \\&+\cdots+\sum_{a_1
    \in \mathcal{L}^{(k)}_{{1},1}} \mathbbm{1}^{(k)}_{ \left\{{Q}(a_1)\right\}}|\mathcal{Q}^{(k)}(a_1)|^3 \left(\lambda^{(k)}_{{Q}(a_1)}\right)^{2\theta^{(k)}_{{Q}(a_1)}} \left(\Upsilon^{(k)}_{{Q}(a_1)}\right)^2+\mathbbm{1}^{(k)}_{\left\{{{L}_{{1},1}}\right\}}|\mathcal{L}^{(k)}_{{1},1}|^3 \left(\lambda^{(k)}_{{L}_{{1},1}}\right)^{2\theta^{(k)}_{{L}_{{1},1}}} \left(\Upsilon^{(k)}_{{L}_{{1},1}}\right)^2\Bigg],
       \end{aligned}
       \hspace{-20mm}
 \end{equation}
 where $\Phi= N_{{|\mathcal{L}|-1}}+N_{{|\mathcal{L}|-2}}+\cdots+N_{{1}}+1$.
By tuning the number of D2D according  to~\eqref{eq:IterFogLConsAlg}, following a similar procedure as Appendix~\ref{app:boundedConsConv}, we get
  \begin{equation} 
  \hspace{-24mm}
  \begin{aligned}
    & \Vert\mathbf{e}^{(k)}\Vert^2\leq\frac{\Phi}{D^2} \Bigg[  \sum_{{a_{1}}
    \in \mathcal{L}^{(k)}_{{1},{1}}} \sum_{a_{2}
    \in \mathcal{Q}^{(k)}(a_{1})} \cdots \hspace{-3mm} \sum_{a_{|\mathcal{L}|-1}
    \in \mathcal{Q}^{(k)}({a_{|\mathcal{L}|-2}})} \mathbbm{1}^{(k)}_{ \left\{{Q}(a_{|\mathcal{L}|-1})\right\}}{\frac{\psi }{ \frac{{\Phi}}{ D^2}N_{|\mathcal{L}|-1}|\mathcal{L}|}}
    \\&+\sum_{{a_{1}}
    \in \mathcal{L}^{(k)}_{{1},{1}}} \sum_{a_{2}
    \in \mathcal{Q}^{(k)}(a_{1})} \cdots  \hspace{-3mm} \sum_{a_{|\mathcal{L}|-2}
    \in \mathcal{Q}^{(k)}({a_{|\mathcal{L}|-3}})} \mathbbm{1}^{(k)}_{ \left\{{Q}(a_{|\mathcal{L}|-2})\right\}}{\frac{\psi }{ \frac{{\Phi}}{ D^2}N_{|\mathcal{L}|-2}|\mathcal{L}|}} +\cdots
    \\&+\sum_{a_1
    \in \mathcal{L}^{(k)}_{{1},1}} \mathbbm{1}^{(k)}_{ \left\{{Q}(a_1)\right\}}{\frac{\psi }{ \frac{{\Phi}}{ D^2}N_{1}|\mathcal{L}| }}+\mathbbm{1}^{(k)}_{\left\{{{L}_{{1},1}}\right\}}{\frac{\psi }{ \frac{{\Phi}}{ D^2}N_{0}|\mathcal{L}| }}\Bigg]
    \\&\leq \frac{\Phi}{D^2} \Bigg[  \sum_{{a_{1}}
    \in \mathcal{L}^{(k)}_{{1},{1}}} \sum_{a_{2}
    \in \mathcal{Q}^{(k)}(a_{1})} \cdots \hspace{-3mm} \sum_{a_{|\mathcal{L}|-1}
    \in \mathcal{Q}^{(k)}({a_{|\mathcal{L}|-2}})} {\frac{\psi }{ \frac{{\Phi}}{ D^2}N_{|\mathcal{L}|-1}|\mathcal{L}|}}
    \\&+\sum_{{a_{1}}
    \in \mathcal{L}^{(k)}_{{1},{1}}} \sum_{a_{2}
    \in \mathcal{Q}^{(k)}(a_{1})} \cdots  \hspace{-3mm} \sum_{a_{|\mathcal{L}|-2}
    \in \mathcal{Q}^{(k)}({a_{|\mathcal{L}|-3}})} \mathbbm{1}^{(k)}_{ \left\{{Q}(a_{|\mathcal{L}|-2})\right\}} {\frac{\psi }{ \frac{{\Phi}}{ D^2}N_{|\mathcal{L}|-2}|\mathcal{L}|}} +\cdots
    \\&+\sum_{a_1
    \in \mathcal{L}^{(k)}_{{1},1}} {\frac{\psi }{ \frac{{\Phi}}{ D^2}N_{1}|\mathcal{L}| }}+{\frac{\psi }{ \frac{{\Phi}}{ D^2}N_{0} |\mathcal{L}|}}\Bigg]
    \\&= \frac{\Phi}{D^2} \Bigg[ \underbrace{ {\frac{\psi }{ \frac{{\Phi}}{ D^2}|\mathcal{L}|}}
   + {\frac{\psi }{ \frac{{\Phi}}{ D^2}|\mathcal{L}|}} \cdots
  + {\frac{\psi }{ \frac{{\Phi}}{ D^2}|\mathcal{L}| }}+{\frac{\psi }{ \frac{{\Phi}}{ D^2} |\mathcal{L}|}}}_{|\mathcal{L}| ~\textrm{terms}}\Bigg].
       \end{aligned}
       \hspace{-20mm}
 \end{equation}
 Thus, we have
 \begin{equation}
     \Vert\mathbf{e}^{(k)}\Vert^2\leq \psi.
 \end{equation}
\pagebreak
\section{Details of the Simulations Setting and Further Simulations}\label{app:extraSim}
In this section, we first present some details regarding  simulations settings and parameter tuning and then present a series of simulation results regarding the choice of different datasets and larger network size as compared to the main text. 
Our entire Python implementation, including the set of hyperparameters used in each experiment, can be found at the following Github repository: https://github.com/shams-sam/Federated2Fog".
\subsection{Simulation Setting}
\subsubsection{Setup}
 All simulations are performed on a single machine with 64GB RAM and 8GB GPU memory, which emulates the learning through a distributed learning framework \textit{PySyft} that helps spin off virtually disjoint nodes with mutually exclusive model parameters and datasets, working on top of \textit{PyTorch} machine learning library.
 \subsubsection{Classifiers} We consider two different classifiers - regularized Support Vector Machine (SVM) and fully-connected Neural Network (NN), initialized with a copy of global model before the learning process begins on each node participating in the learning process. 
 
 The regularized SVM is tuned to satisfy the strong convexity with $\mu=0.1$. We also use the estimated value of $\eta=10$ (similar values are observed in~\cite{8664630}). The NN classifier is a simple fully connected network with a single hidden layer and no convolutional units. \textit{Softmax} activation at the output layer gives the class \textit{logits} and the overall training optimizes negative log-likelihood loss function with L2 regularization. 
 
 Input size for both the models, SVM and NN is $28\times 28=784$, with output size $10$. The number of parameters optimized by the networks $M$ is given by $M=(784+1)\times 10 = 7850$.
\subsubsection{Datasets and Data Distribution among the Nodes} We consider two datasets MNIST and F-MNIST (Fashion MNIST)\footnote{https://github.com/zalandoresearch/fashion-mnist}, each of which contain $60000$ training samples and $10000$ testing samples. MNIST consists of handwritten digits $0-9$, while F-MNIST consists of images associated with $10$ classes in clothing domain. Each dataset consists of $28\times 28$ grayscale images.

The datasets are distributed over nodes such that all nodes have approximately equal number of training samples. However the training samples, maybe either be i.i.d or non-i.i.d distributed. For i.i.d distribution, each node participating in the learning process has samples from each class of the dataset, while under non-i.i.d distribution, each node has access to only one of the classes. These are the extreme ends of possible split of data among nodes in terms of class distribution, helping us evaluate the overall robustness as well as differences in characteristics of our technique under different settings.

\subsubsection{Network Formation} 
We consider two network configurations: (i) the network consists of $125$ edge devices; (ii) the network consists of $625$ edge devices. For the former case, we consider a fog network consisting of a main server and three sub-layers, to build our fog network we start with the $125$ worker nodes in the bottom-most layer (${L}_3$) and dedicated local datasets sampled as explained above. The worker nodes update the local models with a copy of parameters from latest global model at the start of each iteration. The worker nodes are then clustered in groups of $5$ to communicate with one of the $25$ aggregators in their upper layer (i.e., ${L}_2$), such that there is a $1$-to-$1$ mapping between the clusters and the aggregators. Similarly the nodes in layer (${L}_2$) are clustered and communicate with the $5$ aggregators in the layer ${L}_1$, followed by clustering and communicating the $5$ nodes with the main server. 

For the latter case, we consider a fog network consisting of a main server and four sub-layers, to build our fog network we start with the $625$ worker nodes in the bottom-most layer (i.e., ${L}_4$) and dedicated local datasets sampled as explained above. The worker nodes update the local models with a copy of parameters from latest global model at the start of each iteration. The worker nodes are then clustered in groups of $5$ to communicate with one of the $125$ aggregators in their upper layer (i.e., ${L}_3$), such that there is a 1-to-1 mapping between the clusters and aggregators. Similarly nodes in ${L}_3$ are again clustered and communicate to the $25$ aggregators in the upper layer (i.e., ${L}_2$). This is followed by clustering of these nodes in groups of $5$ and communicating with $5$ aggregators in layer ${L}_1$, which then communicate with the main server. 

The connectivity among the nodes within a cluster is simulated using random geometric graphs with increasing connectivity as we traverse from the bottom-most layer to the main server. In our random geometric graph construction, nodes are placed in a circle disc with radius $100$m uniformly at random, where the existence of edge (i.e., D2D link) between two nodes is assumed if the distance between the nodes is less than a threshold ($\varphi$).
For the case with $125$ edge device, layer ${L}_3$ has $\varphi=40$m, followed by layer ${L}_2$ with  $\varphi=50$m and layer ${L}_1$ with $\varphi=60$m. 
For the case with $625$ edge device, in layer ${L}_3$ and ${L}_4$ we have $\varphi=40$, followed by layer ${L}_2$ with $\varphi=50$ and layer ${L}_1$ with $\varphi=60$. We use \textit{NetworkX}\footnote{https://networkx.github.io} library of \textit{Python} for generating the graph. We adjust the radius parameter of the graph generator such that the average degree of the graph is within tolerance region of 0.2 from the desired degree of the graph.

For the D2D communications, we consider the common choice of the weights~\cite{xiao2004fast} that gives $\textbf{z}_{n}^{(t+1)} = \textbf{z}_{n}^{(t)}+d^{(k)}_{{C}}\sum_{m\in \mathcal{\zeta}^{(k)}(n)} (\textbf{z}_{m}^{(t)}-\textbf{z}_{n}^{(t)})$, $0 < d^{(k)}_{{C}} < 1 / D^{(k)}_{{C}}$, for any node $n$ in arbitrary cluster ${C}$, where $D^{(k)}_{{C}}$ is the maximum degree of the nodes in $G^{(k)}_{{C}}$. Using this implementation, the nodes inside LUT cluster ${C}$ only need to have the knowledge of the parameter $d^{(k)}_{C}$, which is broadcast by the respective parent node.

We summarize the simulation parameters in Table~\ref{table:11}.

\begin{table}[h]
\caption{Summary of parameter values employed in our simulations.\label{table:11}}
\centering
\begin{tabular}{|c| c|} 
 \hline
 Parameter & Value \\ 
 \hline
 Number of Edge Devices & $125$, $625$\\
  \hline
 Number of Layers of the Network & $4$, $5$\\
 \hline
 Number of Devices Per Cluster & $5$\\
 \hline
 Random Geometric Graph Threshold $\varphi$ & $[40,60]m$  \\ 
 \hline
 Smoothness $\eta$ & 10  \\
 \hline
 Strong Convexity $\mu$ & 0.1  \\
 \hline
Number of Data points $D$ & $60,000$  \\
 \hline
 Uplink Transmit Power of Devices & $24$dBm \\
 \hline
 D2D Transmit Power of Devices & $10$dBm \\
 \hline
 D2D/Uplink Delay of Transmission of Parameters& $0.25$ Sec\\
 \hline
\end{tabular}
\end{table}

\subsection{Further Simulation Results}
 This section presents the plots from complimentary experiments from Section~\ref{sec:num-res}. In the following, we explain the relationship between the figures presented in this appendix and the simulation results presented in the main text.



Fig.~\ref{fig:GenFigGoodIntuition_MNIST_125} from main text is repeated in Fig.~\ref{fig:GenFigGoodIntuition_mnist_625} for MNIST dataset distributed over 625 edge devices, Fig.~\ref{fig:GenFigGoodIntuition_FMNIST_125} for FMNIST dataset distributed over 125 edge devices and Fig.~\ref{fig:GenFigGoodIntuition_FMNIST_625} for FMNIST dataset distributed over 625 edge devices.

Fig.~\ref{fig:decaying_learning_rate_MNIST_125} from main text is repeated in Fig.~\ref{fig:decaying_learning_rate_mnist_625} for MNIST dataset distributed over 625 edge devices, Fig.~\ref{fig:decaying_learning_rate_FMNIST_125} for FMNIST dataset distributed over 125 edge devices and Fig.~\ref{fig:decaying_learning_rate_FMNIST_625} for FMNIST dataset distributed over 625 edge devices.

Fig.~\ref{fig:iidIncConSigma_MNIST_125} from main text is repeated in Fig.~\ref{fig:iidIncConSigma_mnist_625} for MNIST dataset distributed over 625 edge devices, Fig.~\ref{fig:iidIncConSigma_FMNIST_125} for FMNIST dataset distributed over 125 edge devices and Fig.~\ref{fig:iidIncConSigma_FMNIST_625} for FMNIST dataset distributed over 625 edge devices.

Fig.~\ref{fig:non_iidIncConSigma_MNIST_125} from main text is repeated in Fig.~\ref{fig:non_iidIncConSigma_mnist_625} for MNIST dataset distributed over 625 edge devices, Fig.~\ref{fig:non_iidIncConSigma_FMNIST_125} for FMNIST dataset distributed over 125 edge devices and Fig.~\ref{fig:non_iidIncConSigma_FMNIST_625} for FMNIST dataset distributed over 625 edge devices.

Fig.~\ref{fig:iidConsConDelta_MNIST_125} from main text is repeated in Fig.~\ref{fig:iidConsConDelta_mnist_625} for MNIST dataset distributed over 625 edge devices, Fig.~\ref{fig:iidConsConDelta_FMNIST_125} for FMNIST dataset distributed over 125 edge devices and Fig.~\ref{fig:iidConsConDelta_FMNIST_625} for FMNIST dataset distributed over 625 edge devices.

Fig.~\ref{fig:non_iidConsConDelta_MNIST_125} from main text is repeated in Fig.~\ref{fig:non_iidConsConDelta_mnist_625} for MNIST dataset distributed over 625 edge devices, Fig.~\ref{fig:non_iidConsConDelta_FMNIST_125} for FMNIST dataset distributed over 125 edge devices and Fig.~\ref{fig:non_iidConsConDelta_FMNIST_625} for FMNIST dataset distributed over 625 edge devices.

Fig.~\ref{fig:iidNNIncConec_MNIST_125} from main text is repeated in Fig.~\ref{fig:iidNNIncConec_mnist_625} for MNIST dataset distributed over 625 edge devices, Fig.~\ref{fig:iidNNIncConec_FMNIST_125} for FMNIST dataset distributed over 125 edge devices and Fig.~\ref{fig:iidNNIncConec_FMNIST_625} for FMNIST dataset distributed over 625 edge devices.

Fig.~\ref{fig:non_iidConsConDeltaNN_MNIST_125} from main text is repeated in Fig.~\ref{fig:non_iidConsConDeltaNN_mnist_625} for MNIST dataset distributed over 625 edge devices, Fig.~\ref{fig:non_iidConsConDeltaNN_FMNIST_125} for FMNIST dataset distributed over 125 edge devices and Fig.~\ref{fig:non_iidConsConDeltaNN_FMNIST_625} for FMNIST dataset distributed over 625 edge devices.

Fig.~\ref{fig:non_iidTheoPracSimDiff_MNIST_125} from main text is repeated in Fig.~\ref{fig:non_iidTheoPracSimDiff_mnist_625} for MNIST dataset distributed over 625 edge devices, Fig.~\ref{fig:non_iidTheoPracSimDiff_FMNIST_125} for FMNIST dataset distributed over 125 edge devices and Fig.~\ref{fig:non_iidTheoPracSimDiff_FMNIST_625} for FMNIST dataset distributed over 625 edge devices.

Fig.~\ref{fig:accum_energy_125_MNIST} from main text is repeated in Fig.~\ref{fig:accum_energy_625_MNIST} for MNIST dataset distributed over 625 edge devices, Fig.~\ref{fig:accum_energy_125_FMNIST} for FMNIST dataset distributed over 125 edge devices and Fig.~\ref{fig:accum_energy_625_FMNIST} for FMNIST dataset distributed over 625 edge devices.

Fig.~\ref{fig:accumData_MNIST_125} from main text is repeated in Fig.~\ref{fig:accumData_625_MNIST} for MNIST dataset distributed over 625 edge devices, Fig.~\ref{fig:accumData_FMNIST_125} for FMNIST dataset distributed over 125 edge devices and Fig.~\ref{fig:accumData_FMNIST_625} for FMNIST dataset distributed over 625 edge devices.

\subsection{Energy and Parameter Transmission Savings under Various D2D Control Parameters and Tolerable Aggregation Errors}\label{appendix:psiSigma}

\textbf{Varying $\sigma$:} To demonstrate the effect of $\sigma$ on the energy and data traffic savings, we set  $\sigma_j$ at layer ${L}_j$ as $\sigma_j=\sigma' \max_{i}{\Upsilon_{{L}_{j,i}}^{(1)}}$, where $\Upsilon_{{L}_{j,i}}^{(1)}$ is the divergence of parameters at the beginning of model training at $i$-th cluster of layer $j$, and change the value of $\sigma'$ in our experiments. Note that higher values of $\left\{\sigma_j\right\}_{j=1}^{|\mathcal{L}|}$ are associated with a looser condition on the D2D consensus formation error (see Proposition 1 and 2). This implies that increasing $\left\{\sigma_j\right\}_{j=1}^{|\mathcal{L}|}$ often results in performing fewer D2D rounds across the fog layers.
The results for varying values of $\sigma'$ are depicted in
Fig.~\ref{fig:pc_mnist_125_sig}
and~\ref{fig:pt_mnist_125_sig} (for MNIST dataset and $125$ nodes);  Fig.~\ref{fig:pc_mnist_625_sig} and~\ref{fig:pt_mnist_625_sig} (for MNIST dataset and $625$ nodes); Fig.~\ref{fig:pc_fmnist_125_sig} and~\ref{fig:pt_fmnist_125_sig} (for FMNIST dataset and $125$ nodes);
Fig.~\ref{fig:pc_fmnist_625_sig}
and~\ref{fig:pt_fmnist_625_sig} (for FMNIST dataset and $625$ nodes). 

\begin{itemize}
    \item Considering the energy consumption (i.e.,~Figs.~\ref{fig:pc_mnist_125_sig},\ref{fig:pc_mnist_625_sig},\ref{fig:pc_fmnist_125_sig},\ref{fig:pc_fmnist_625_sig}), increasing $\sigma'$ often results in more energy savings since the nodes conduct less D2D rounds. However, after a certain threshold, increasing $\sigma'$ may lead to slight increase in energy consumption for {\tt MH-FL} (e.g., increasing $\sigma'$ from $0.6$ to $0.9$ in Fig.~\ref{fig:pc_mnist_125_sig}). That is because decreasing the D2D communication rounds below a threshold may have a significantly negative impact on the convergence speed, where the model may need considerably higher number of global aggregation iterations to reach the desired accuracy. 
    \item Considering the parameter transmission savings (i.e.,~Figs.~\ref{fig:pt_mnist_125_sig},\ref{fig:pt_mnist_625_sig},\ref{fig:pt_fmnist_125_sig},\ref{fig:pt_fmnist_625_sig}), it can be noted that increasing $\sigma'$ often results in a slight increase in parameter transmission for {\tt MH-FL}, since the model may need a few more global aggregations to reach the desired accuracy when the D2D rounds conducted are decreased.     Note that in all the cases, {\tt MH-FL} outperforms the EUT baseline method in terms of both energy consumption and parameter transmissions.
\end{itemize}

\textbf{Varying $\psi$:} Note that parameter $\psi$ controls the 2-norm of aggregation errors when {\tt MH-FL} is used with non-convex loss functions. In particular, smaller values of $\psi$ impose a smaller tolerable error of aggregation at the server, which call for higher number of D2D rounds across the nodes to decrease the local aggregation errors. The results are depicted in Fig.~\ref{fig:pc_mnist_125}
and~\ref{fig:pt_mnist_125} (for MNIST dataset and $125$ nodes);  Fig.~\ref{fig:pc_mnist_625} and~\ref{fig:pt_mnist_625} (for MNIST dataset and $625$ nodes); Fig.~\ref{fig:pc_fmnist_125} and~\ref{fig:pt_fmnist_125} (for FMNIST dataset and $125$ nodes);
Fig.~\ref{fig:pc_fmnist_625}
and~\ref{fig:pt_fmnist_625} (for FMNIST dataset and $625$ nodes). 

\begin{itemize}
    \item Considering the energy consumption (i.e.,~Figs.~\ref{fig:pc_mnist_125},\ref{fig:pc_mnist_625},\ref{fig:pc_fmnist_125},\ref{fig:pc_fmnist_625}), increasing $\psi$  results in more energy savings since the nodes conduct fewer D2D rounds. Also, it can be seen that for small values of $\psi$ (e.g. $\psi=10$ in these figures), {\tt MH-FL} has a higher energy consumption as compared to EUT baseline since the number of D2D communication rounds become unreasonably high for such choices of $\psi$. However for moderate to high value of $\psi$ (e.g., $\psi\geq 10^3$ in these figures), {\tt MH-FL} always outperforms the EUT baseline in terms of energy consumption.
\item Considering the parameter transmission savings (i.e.,~Figs.~\ref{fig:pt_mnist_125},\ref{fig:pt_mnist_625},\ref{fig:pt_fmnist_125},\ref{fig:pt_fmnist_625}), increasing $\psi$ often results in increasing the number of parameter transmissions for {\tt MH-FL} since the model may need more time (i.e., higher number of global aggregations) to reach the desired accuracy. Nevertheless, {\tt MH-FL} outperforms the EUT baseline in all the scenarios due the sampling of a single node from each LUT cluster.
\end{itemize}

\newpage


\begin{figure}
\centering
\begin{minipage}{.48\textwidth}
     \centering
     \includegraphics[width=\linewidth]{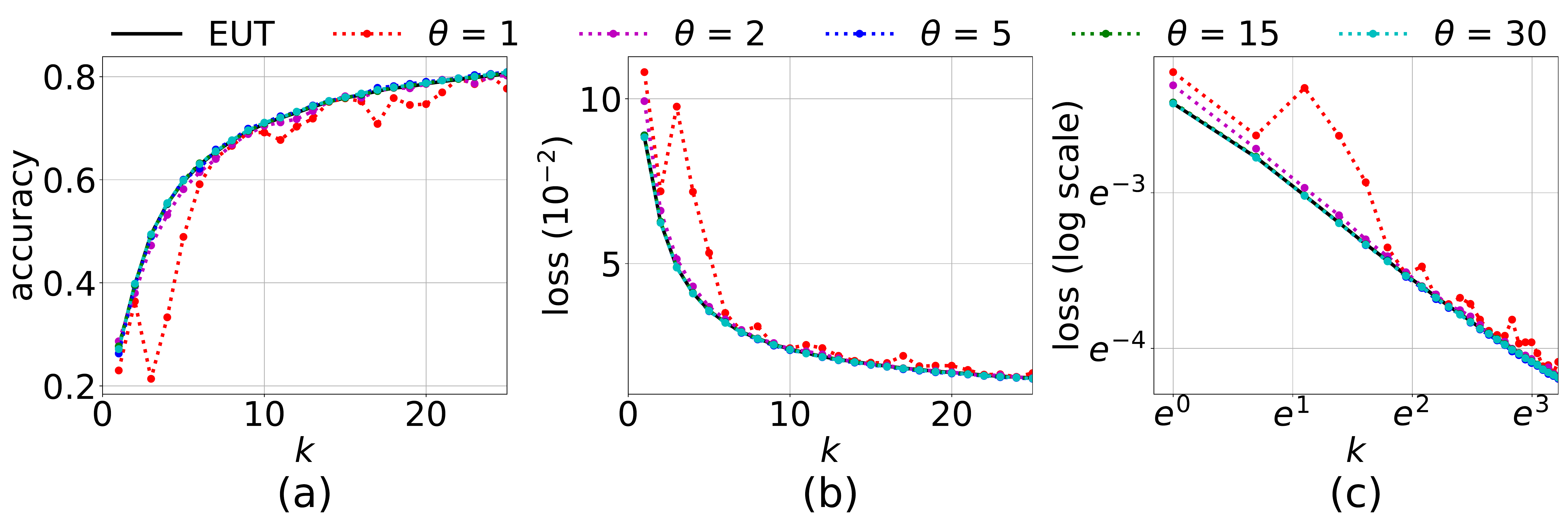}
     \caption{Performance comparison between baseline EUT and {\tt MH-MT} when a fixed number of D2D rounds $\theta$ is used at every cluster of the network, for non-i.i.d. As the number of D2D rounds increases, {\tt MH-MT} performs more similar to the EUT baseline and the learning is more stable. (MNIST, $625$ Edge Devices)}
     \label{fig:GenFigGoodIntuition_mnist_625}
\end{minipage}%
\hspace{2mm}
\begin{minipage}{.48\textwidth}
  \centering
     \includegraphics[width=\linewidth]{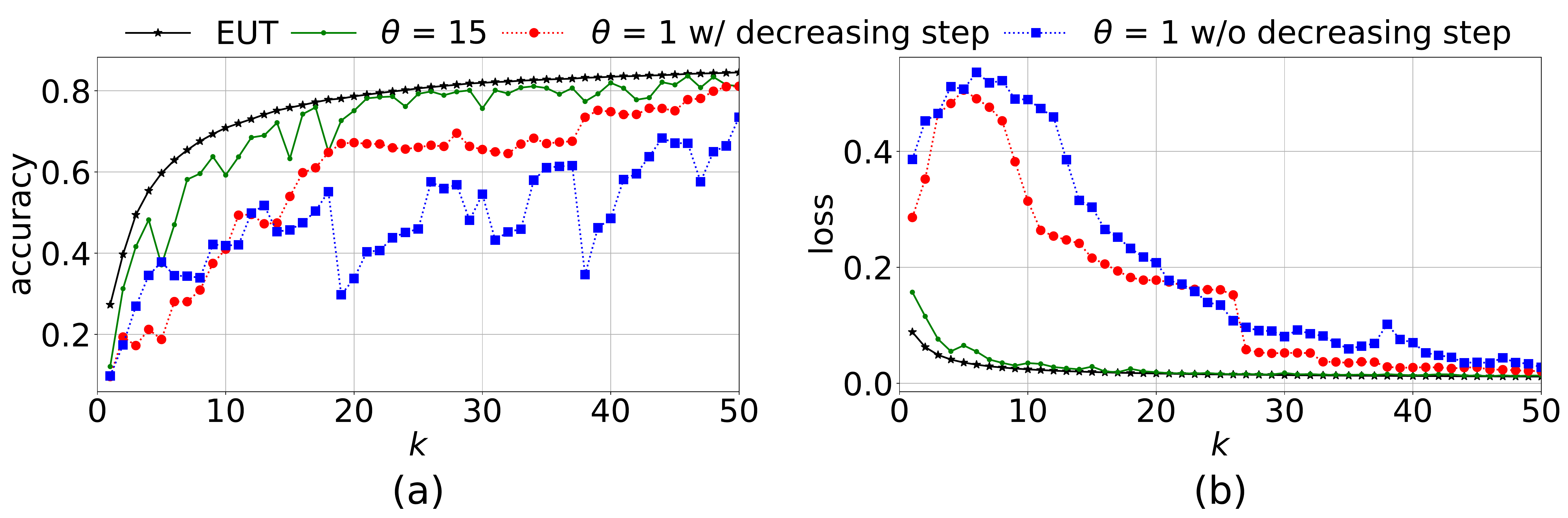}
     \caption{Performance comparison between baseline EUT, and {\tt MH-MT} with and without (w/o) decreasing the gradient descent step size. Decreasing the step size can provide convergence to the optimal solution in cases where a fixed step size is not capable, but also has a slower convergence speed. (MNIST, $625$ Edge Devices)}
     \label{fig:decaying_learning_rate_mnist_625}
\end{minipage}
\end{figure}

\begin{figure}
\centering
\begin{minipage}{.48\textwidth}
        \centering
     \includegraphics[width=\linewidth]{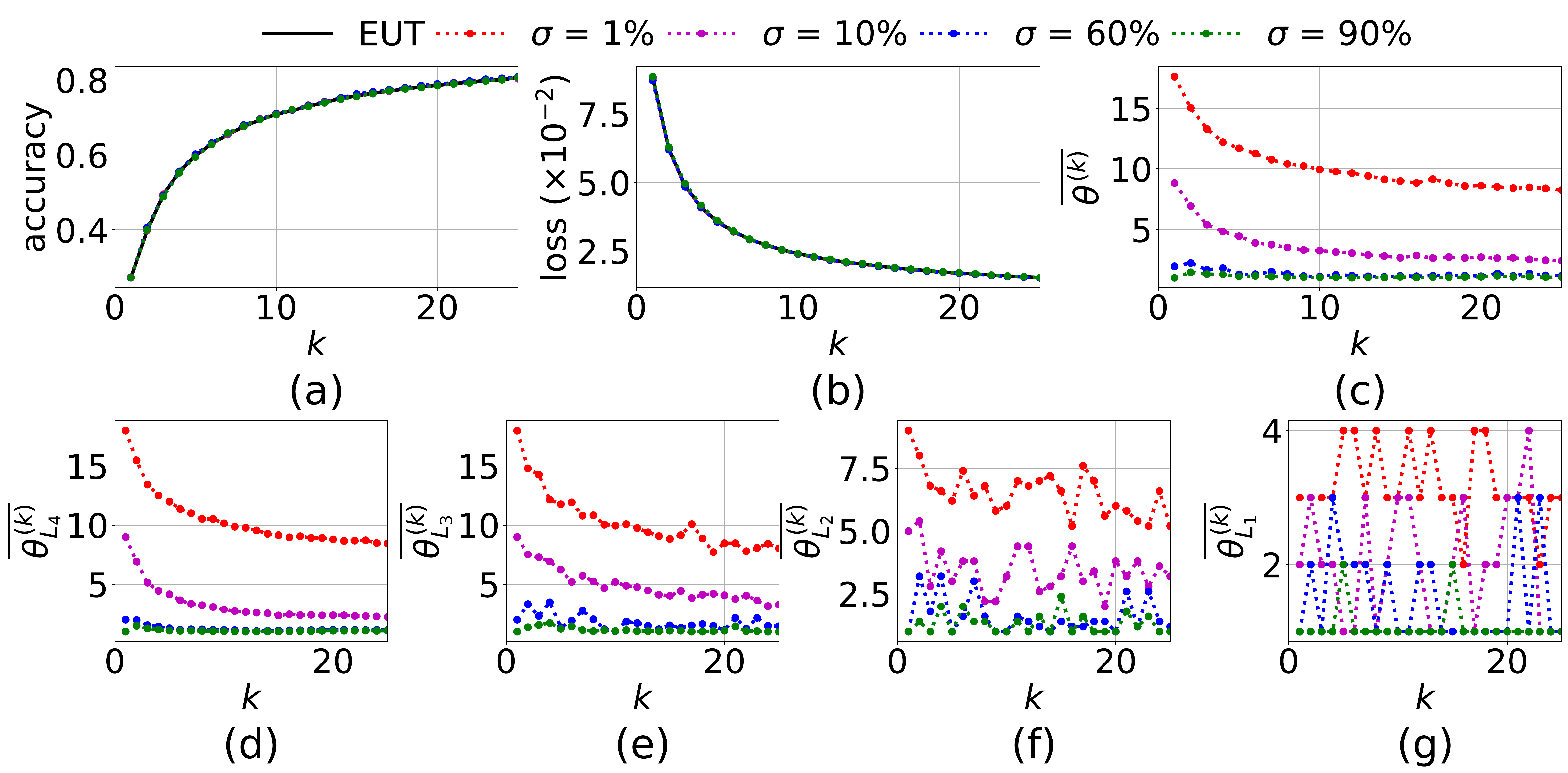}
     \caption{Performance comparison between baseline EUT and {\tt MH-MT} for i.i.d when a finite optimality gap is tolerable. $\sigma_j$ at ${L}_j$ is fixed as $\sigma_j=\sigma' \max_{i}{\Upsilon_{{L}_{j,i}}^{(1)}}$. Tapering of D2D rounds through time and space (layers) can be observed. (MNIST, $625$ Edge Devices)}
     \label{fig:iidIncConSigma_mnist_625}
\end{minipage}%
\hspace{2mm}
\begin{minipage}{.48\textwidth}
     \centering
     \includegraphics[width=\linewidth]{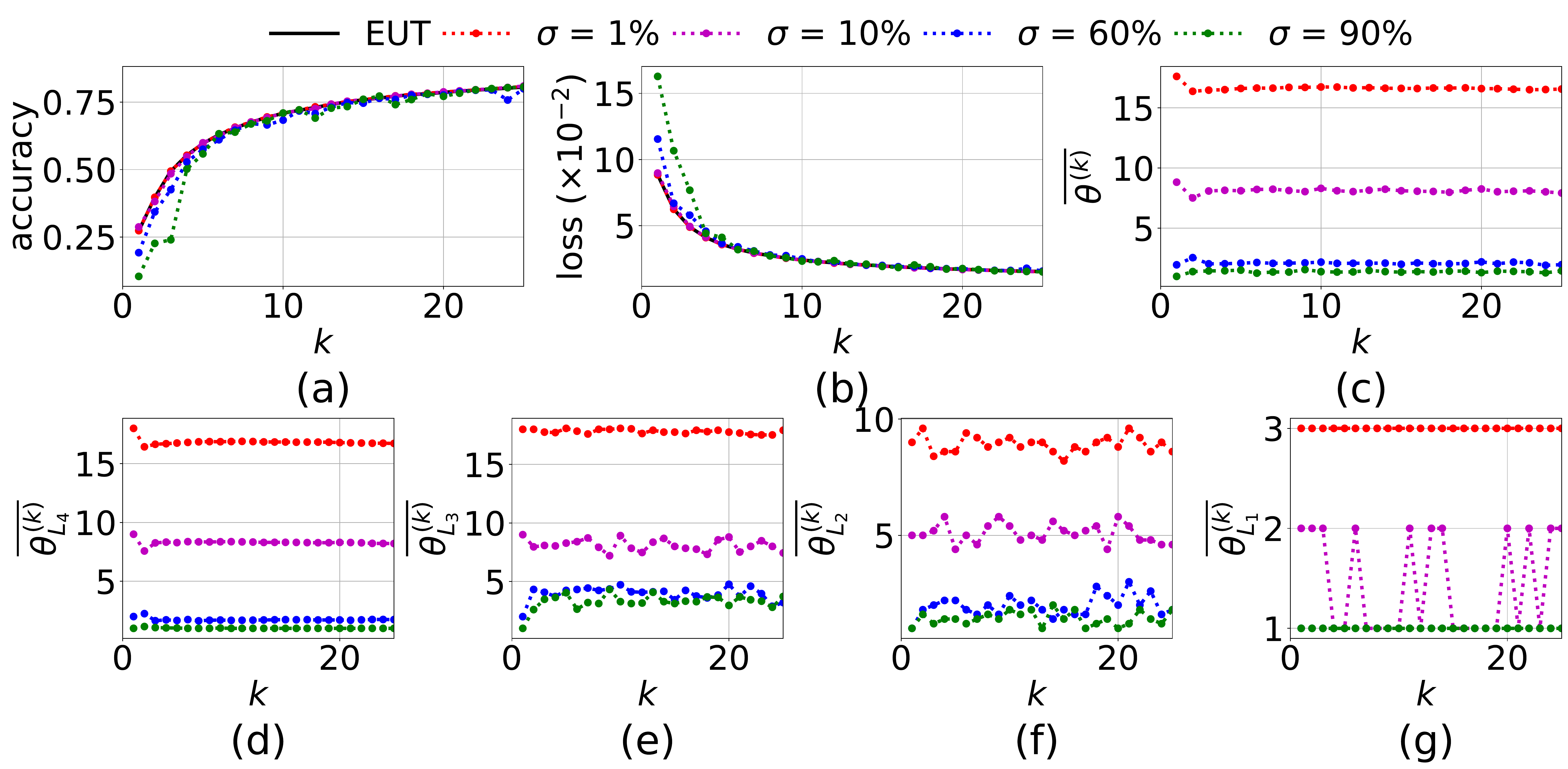}
     \caption{Performance comparison between baseline EUT and {\tt MH-MT} for non-i.i.d when a finite optimality gap is tolerable. $\sigma_i$ is set as in Fig.~\ref{fig:iidIncConSigma_mnist_625}. Smaller loss and higher accuracy are achieved with smaller $\sigma'$, implying more rounds of consensus. (MNIST, $625$ Edge Devices)}
     \label{fig:non_iidIncConSigma_mnist_625}
\end{minipage}
\end{figure}

\begin{figure}
\centering
\begin{minipage}{.48\textwidth}
       \centering
     \includegraphics[width=\linewidth]{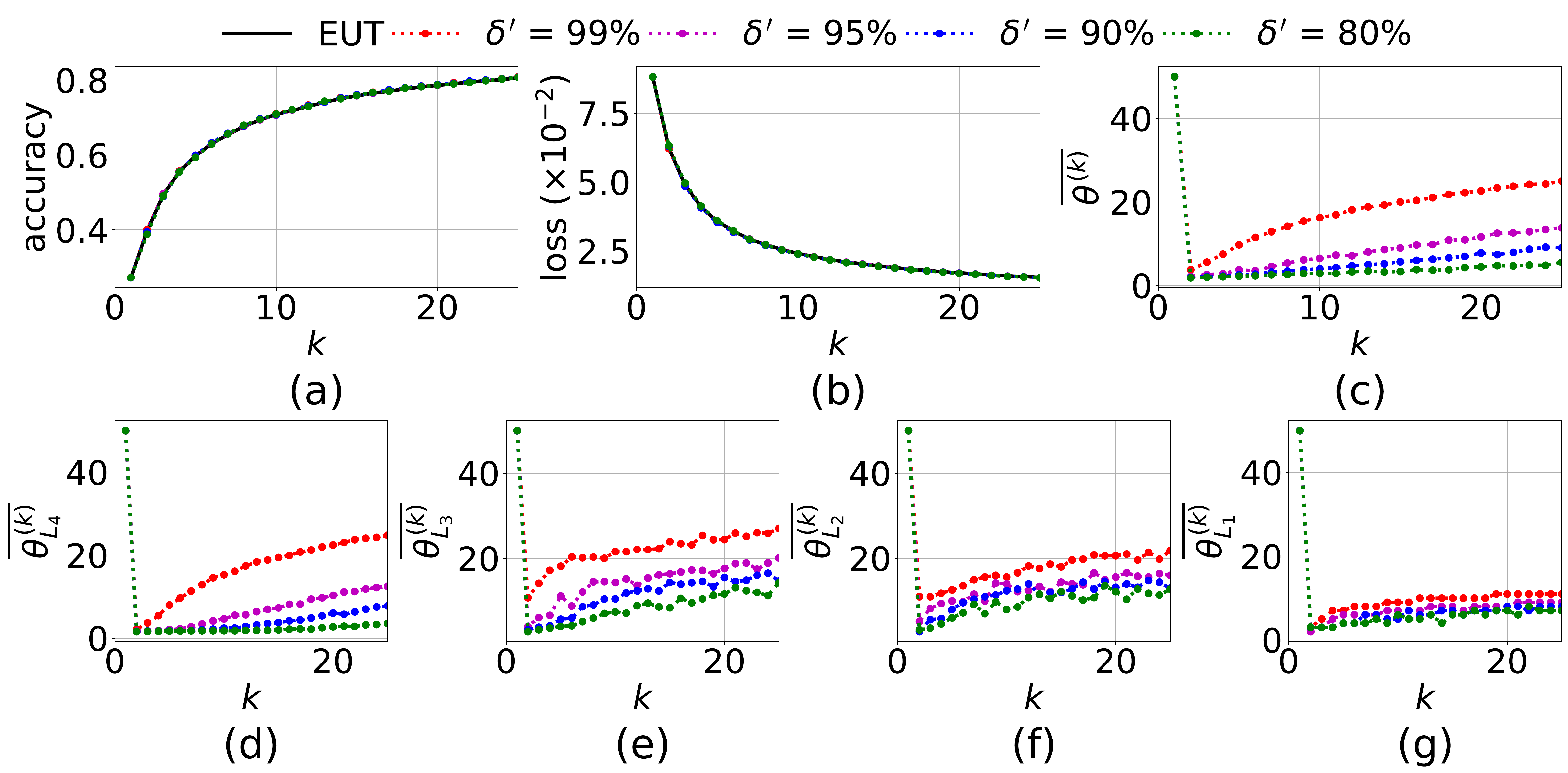}
     \caption{Performance comparison between baseline EUT and {\tt MH-MT} for i.i.d. when linear convergence to the optimal is desired. The value of $\delta$ is set at $\delta=\delta' \frac{\mu}{\eta}$. Boosting of the D2D rounds through time can be observed. Also, tapering through space can be observed by comparing the D2D rounds in the bottom subplots. (MNIST, $625$ Edge Devices)}
     \label{fig:iidConsConDelta_mnist_625}
\end{minipage}%
\hspace{2mm}
\begin{minipage}{.48\textwidth}
      \centering
     \includegraphics[width=\linewidth]{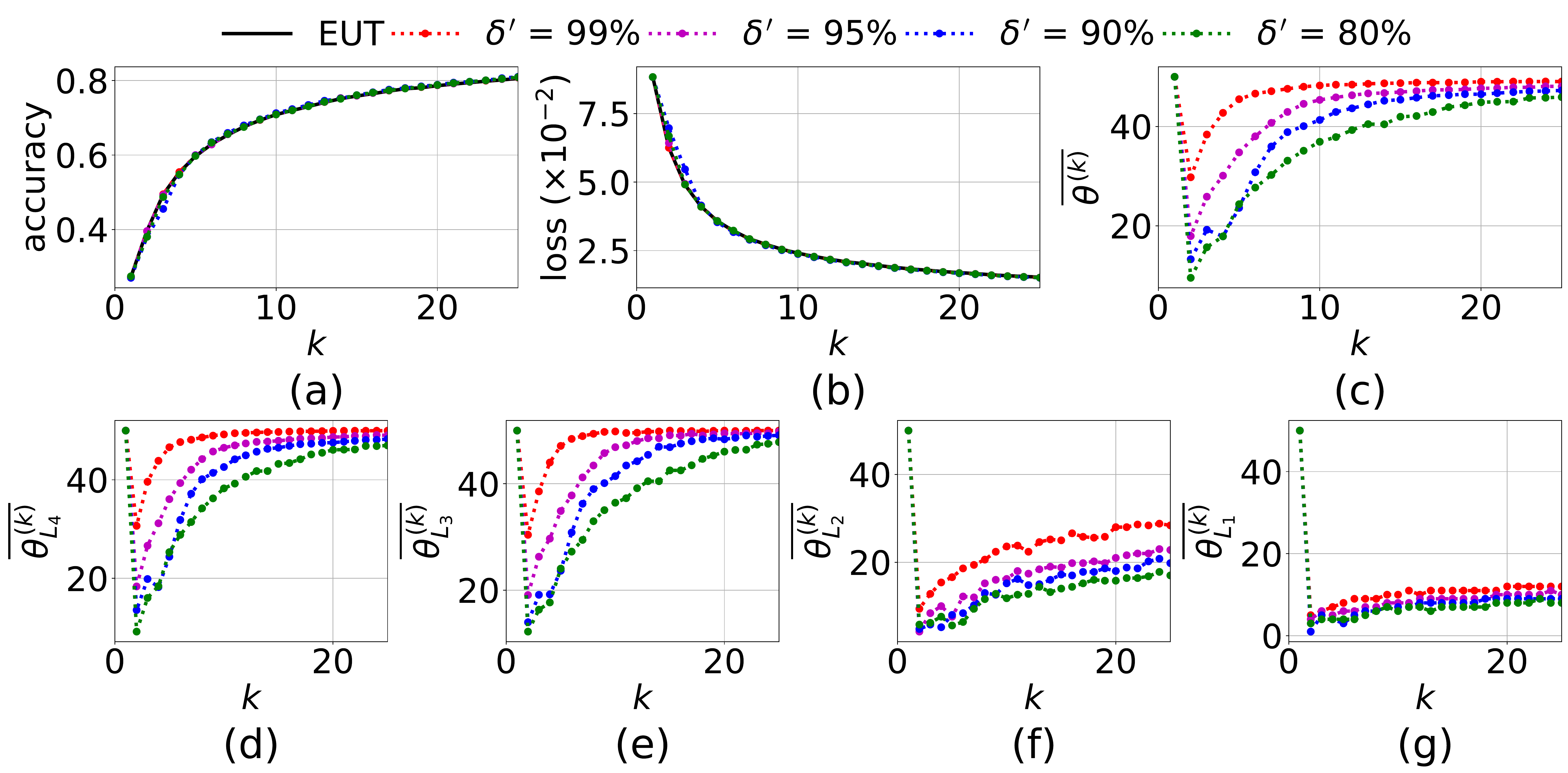}
     \caption{Performance comparison between baseline EUT and {\tt MH-MT} for non-i.i.d when linear convergence to the optimal is desired. The value of $\delta$ is set as in Fig.~\ref{fig:iidConsConDelta_mnist_625}.
     Smaller values of loss and higher accuracy are both associated with larger value of $\delta$, which results in lower error tolerance and more rounds of consensus. (MNIST, $625$ Edge Devices)}
     \label{fig:non_iidConsConDelta_mnist_625}
\end{minipage}
\end{figure}
 
 \begin{figure}
\centering
\begin{minipage}{.48\textwidth}
         \centering
     \includegraphics[width=\linewidth]{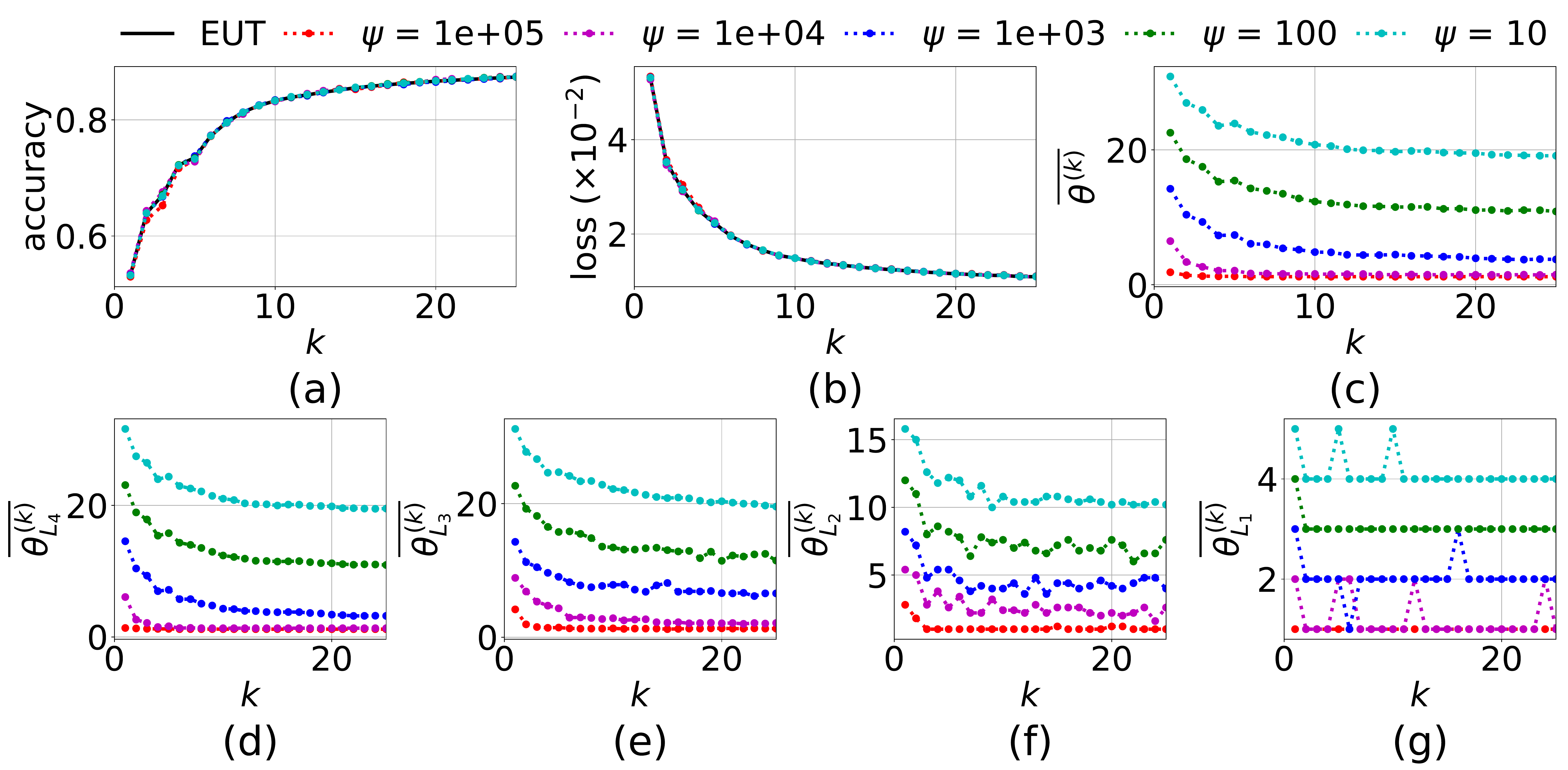}
     \caption{Performance comparison between baseline EUT and {\tt MH-MT}  under i.i.d using NNs with different values of $\psi$. Tapering the D2D rounds through time can be observed. Also, tapering through space can be observed by comparing the D2D rounds in the bottom subplots. (MNIST, $625$ Edge Devices)}
     \label{fig:iidNNIncConec_mnist_625}
\end{minipage}%
\hspace{2mm}
\begin{minipage}{.48\textwidth}
         \centering
     \includegraphics[width=\linewidth]{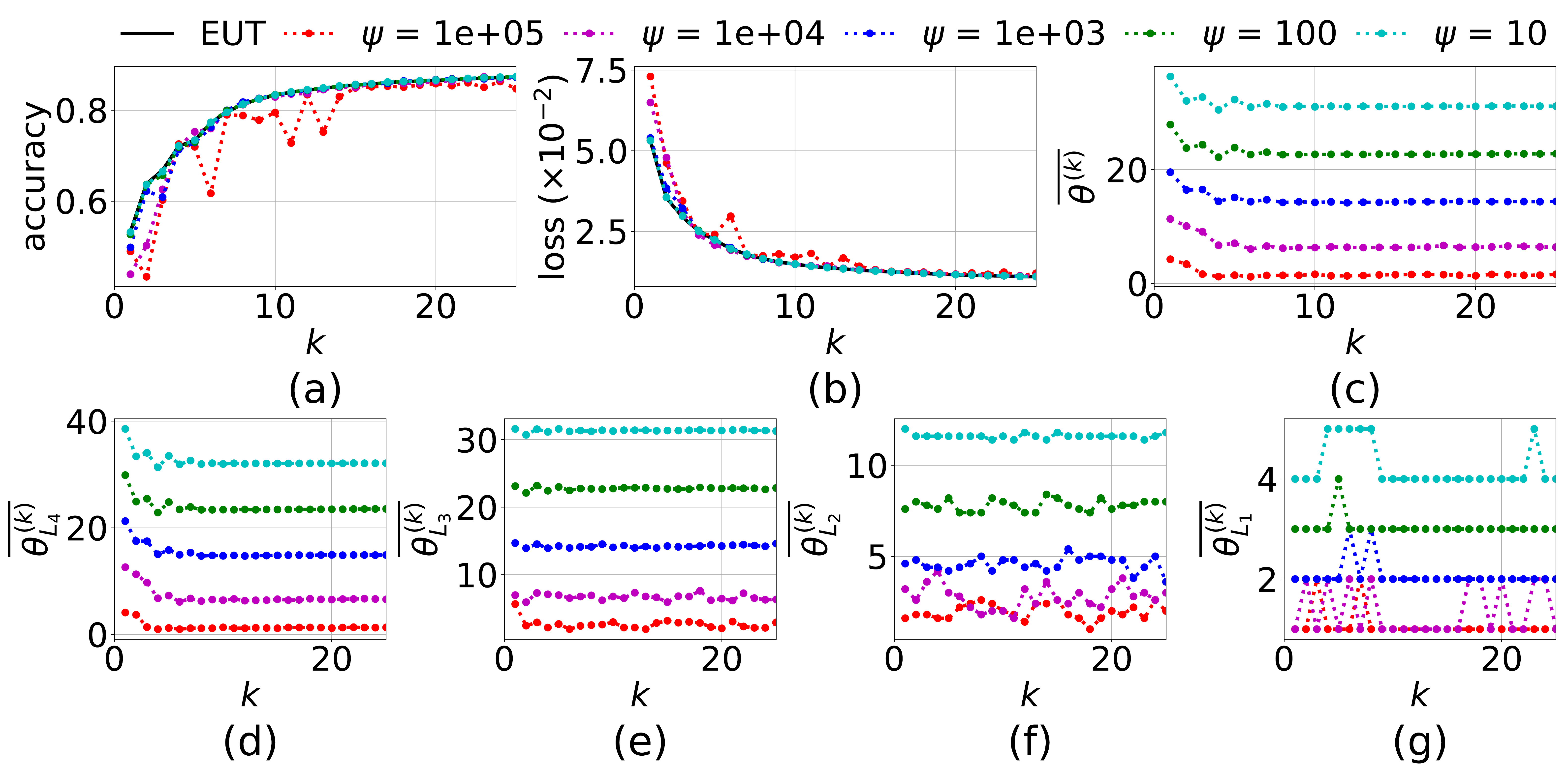}
     \caption{Performance comparison between baseline EUT and {\tt MH-MT} under non-i.i.d. using NNs with different values of $\psi$. Lower loss and higher accuracy are associated with smaller values of $\psi$, which result in lower error tolerance and larger values of D2D rounds over time. (MNIST, $625$ Edge Devices)}
     \label{fig:non_iidConsConDeltaNN_mnist_625}
\end{minipage}
\end{figure}

 \begin{figure}
\centering
\begin{minipage}{.32\textwidth}
     \centering
         \includegraphics[width=0.875\linewidth]{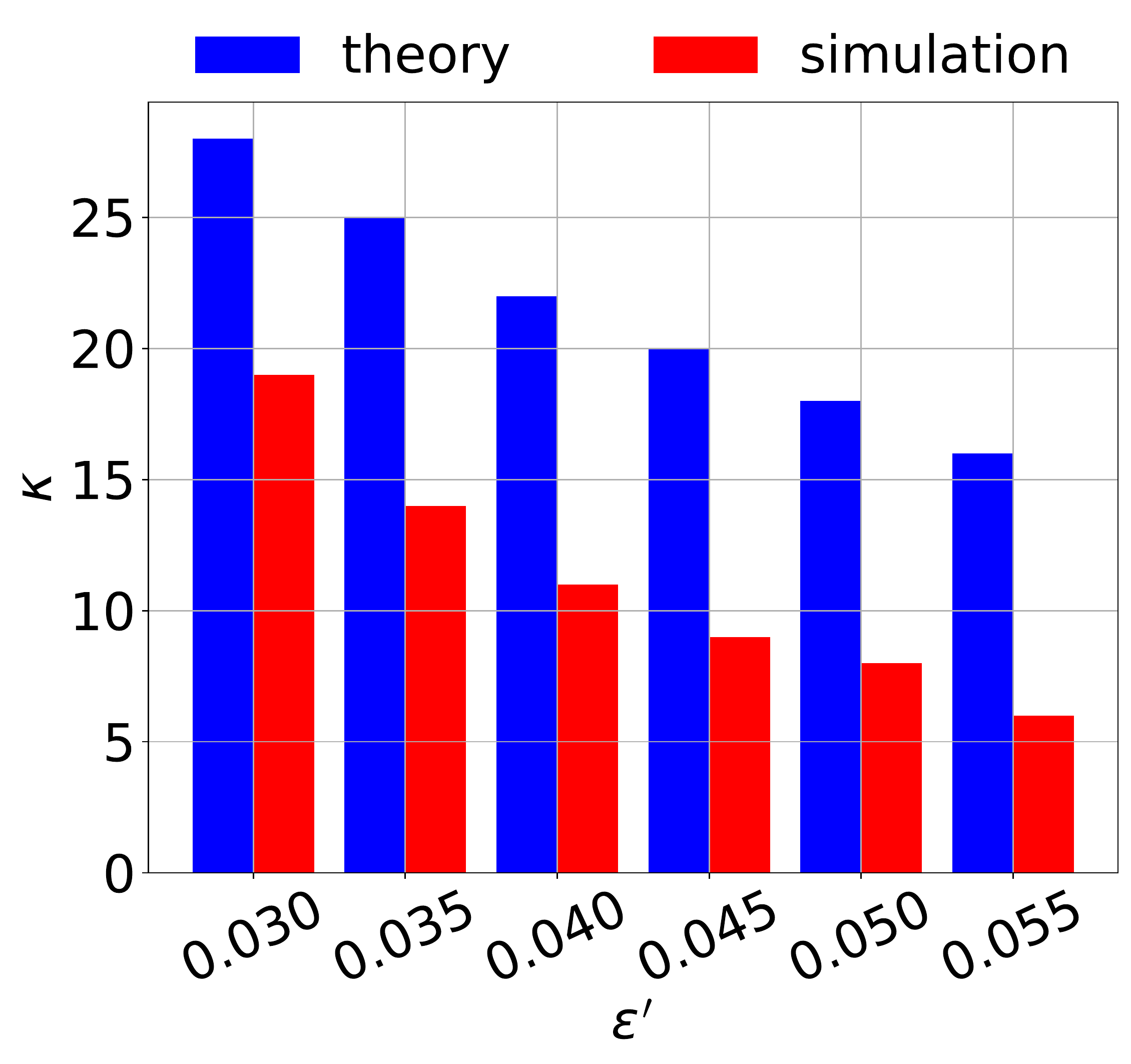}
     \caption{Comparison between the theoretical and simulation results regarding the number of global iterations to achieve an accuracy of $\epsilon' (F(\textbf{w}^{(0)})-F(\textbf{w}^*))$ for different $\epsilon'$. Convergence in practice is faster than the derived upper bound. (MNIST, $625$ Edge Devices)}
     \label{fig:non_iidTheoPracSimDiff_mnist_625}
\end{minipage}%
\hspace{2mm}
\begin{minipage}{.32\textwidth}
     \centering
     \includegraphics[width=\linewidth]{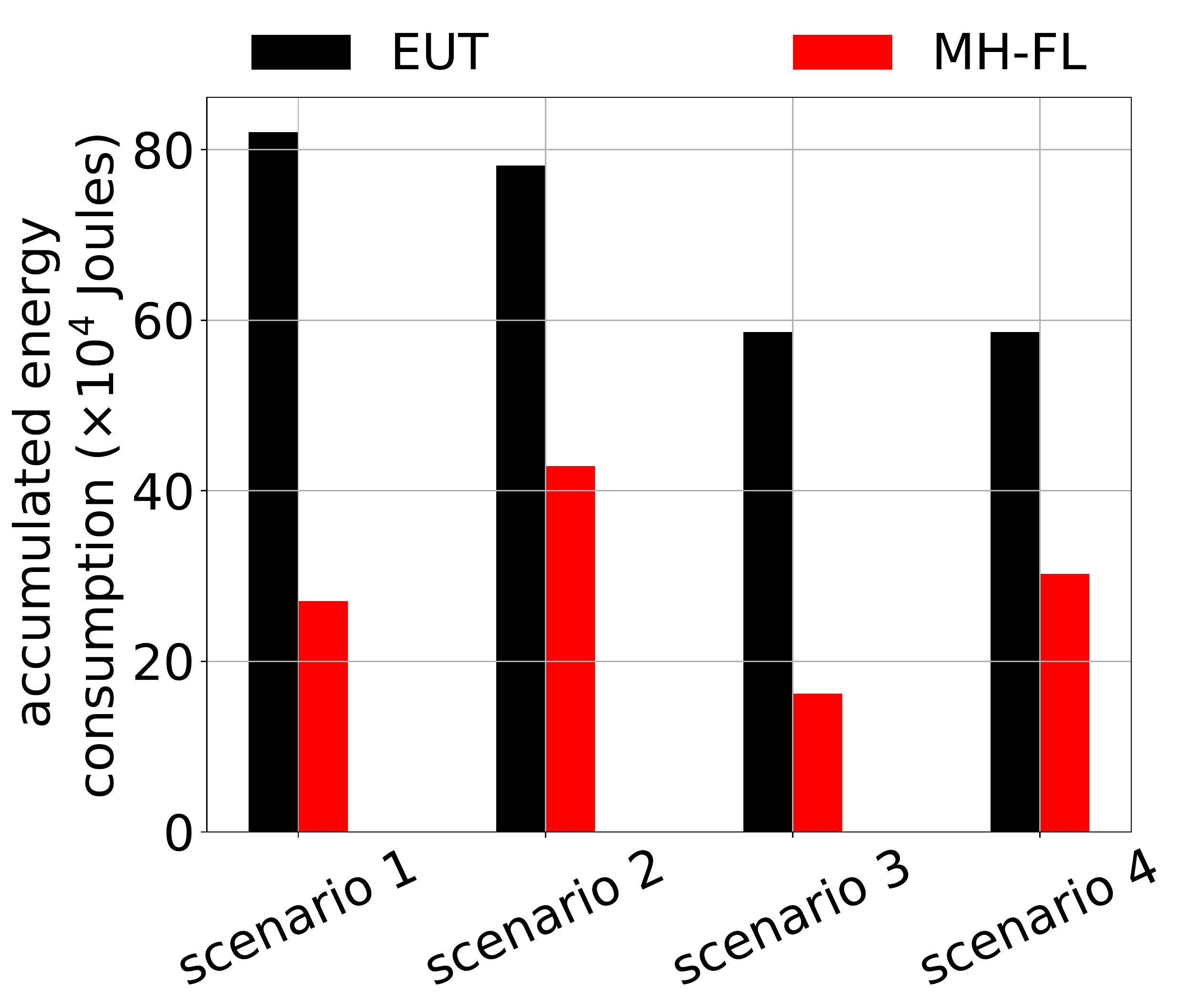}
     \caption{Comparison of accumulated energy consumption between EUT and {\tt MH-MT} over scenario 1: $\sigma' = 0.1$ from Fig.~\ref{fig:iidIncConSigma_mnist_625}, scenario 2: $\sigma' = 0.1$ from Fig.~\ref{fig:non_iidIncConSigma_mnist_625}, scenario 3: $\psi = 10^4$ from Fig.~\ref{fig:iidNNIncConec_mnist_625}, and scenario 4: $\psi = 10^4$ from Fig.~\ref{fig:non_iidConsConDeltaNN_mnist_625}. (MNIST, $625$ Edge Devices)}
     \label{fig:accum_energy_625_MNIST}
\end{minipage}%
\hspace{2mm}
\begin{minipage}{.32\textwidth}
     \centering
     \includegraphics[width=\linewidth]{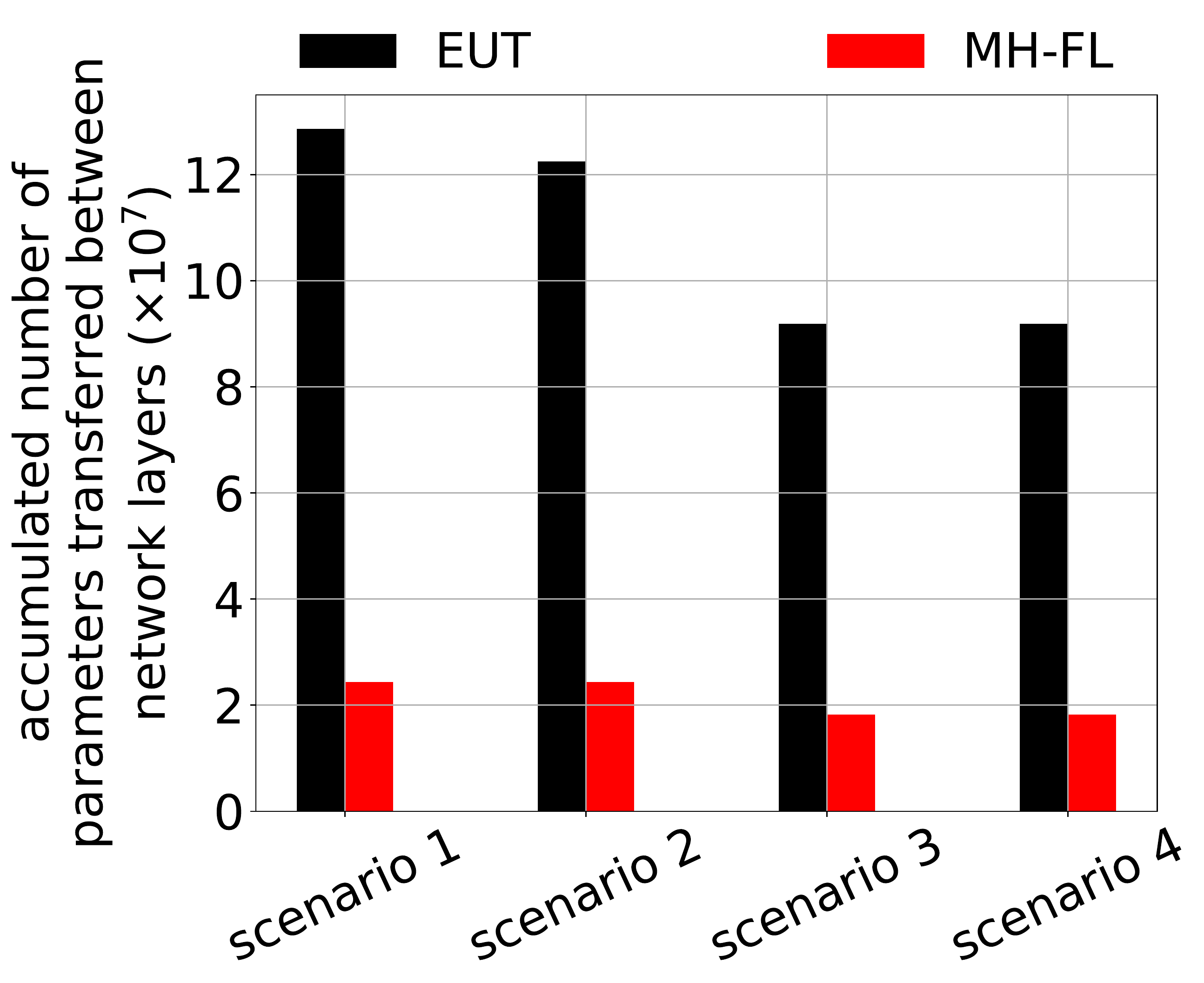}
     \caption{Comparison of parameters transferred among layers in EUT vs {\tt MH-MT} over scenario 1: $\sigma' = 0.1$ from Fig.~\ref{fig:iidIncConSigma_mnist_625}, scenario 2: $\sigma' = 0.1$ from Fig.~\ref{fig:non_iidIncConSigma_mnist_625}, scenario 3: $\psi = 10^4$ from Fig.~\ref{fig:iidNNIncConec_mnist_625}, and scenario 4: $\psi = 10^4$ from Fig.~\ref{fig:non_iidConsConDeltaNN_mnist_625}. (MNIST, $625$ Edge Devices)}
     \label{fig:accumData_625_MNIST}
\end{minipage}
\end{figure}

  \begin{figure}
     
 \end{figure}




\begin{figure}
\vspace{-.05mm}
\centering
\begin{minipage}{.48\textwidth}
     \centering
     \includegraphics[width=\linewidth]{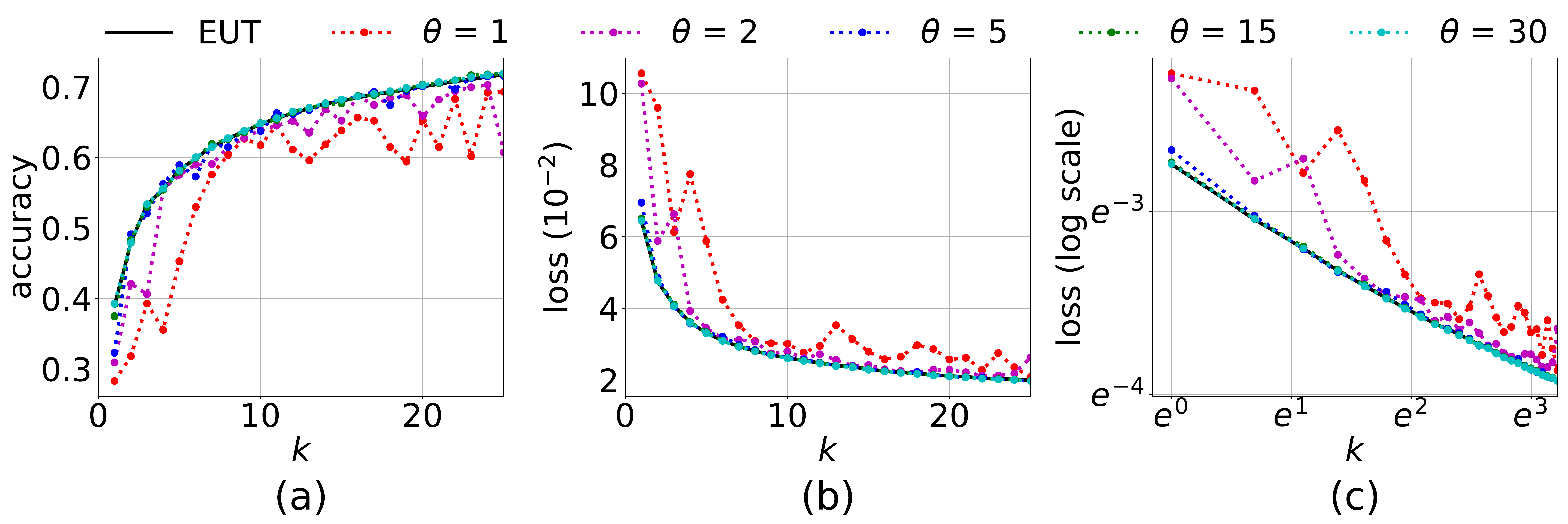}
     \caption{Performance comparison between baseline EUT and {\tt MH-MT} when a fixed number of D2D rounds $\theta$ is used at every cluster of the network, for non-i.i.d. As the number of D2D rounds increases, {\tt MH-MT} performs more similar to the EUT baseline and the learning is more stable. (FMNIST, $125$ Edge Devices)}
     \label{fig:GenFigGoodIntuition_FMNIST_125}
\end{minipage}%
\hspace{2mm}
\begin{minipage}{.48\textwidth}
  \centering
         \includegraphics[width=\linewidth]{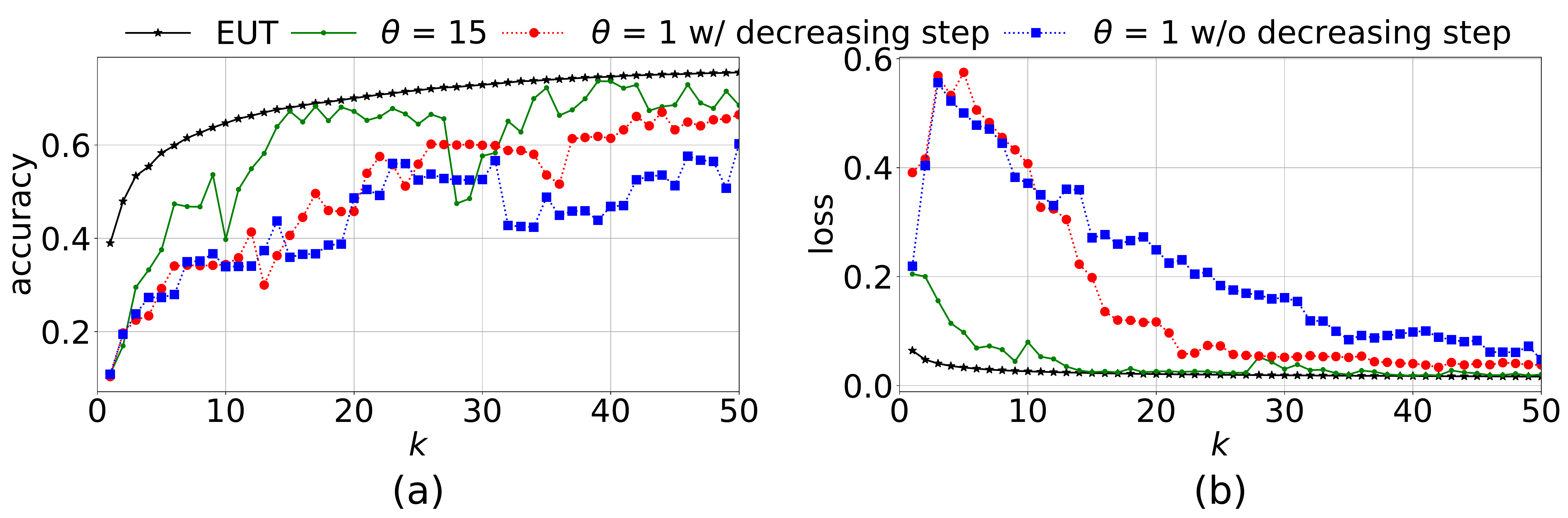}
     \caption{Performance comparison between baseline EUT, and {\tt MH-MT} with and without (w/o) decreasing the gradient descent step size. Decreasing the step size can provide convergence to the optimal solution in cases where a fixed step size is not capable, but also has a slower convergence speed. (FMNIST, $125$ Edge Devices)}
     \label{fig:decaying_learning_rate_FMNIST_125}
\end{minipage}
\end{figure}

\begin{figure}
\centering
\begin{minipage}{.48\textwidth}
        \centering
     \includegraphics[width=\linewidth]{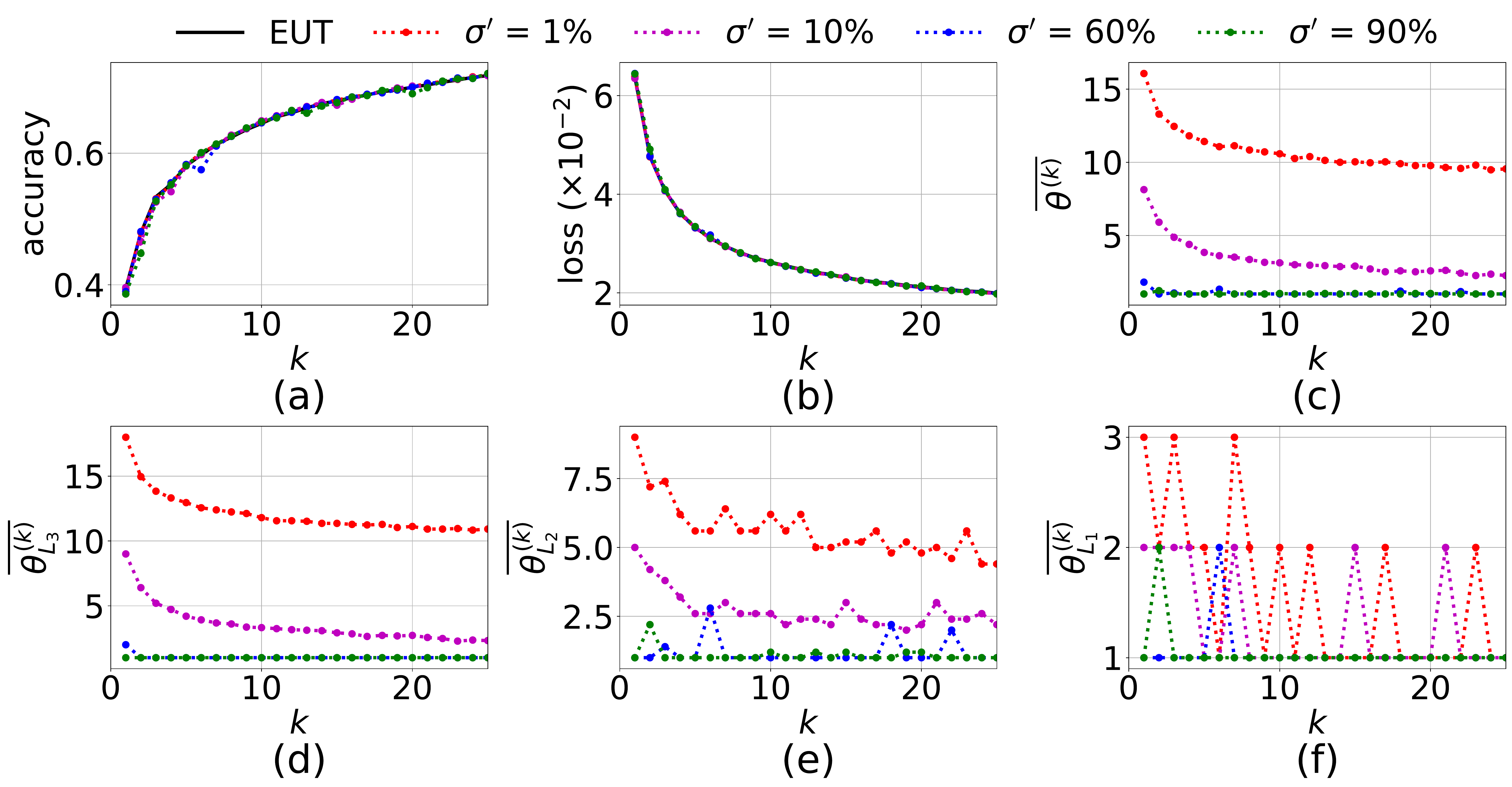}
     \caption{Performance comparison between baseline EUT and {\tt MH-MT} for i.i.d when a finite optimality gap is tolerable. $\sigma_j$ at ${L}_j$ is fixed as $\sigma_j=\sigma' \max_{i}{\Upsilon_{{L}_{j,i}}^{(1)}}$. Tapering of D2D rounds through time and space (layers) can be observed. (FMNIST, $125$ Edge Devices)}
     \label{fig:iidIncConSigma_FMNIST_125}
\end{minipage}%
\hspace{2mm}
\begin{minipage}{.48\textwidth}
     \centering
     \includegraphics[width=\linewidth]{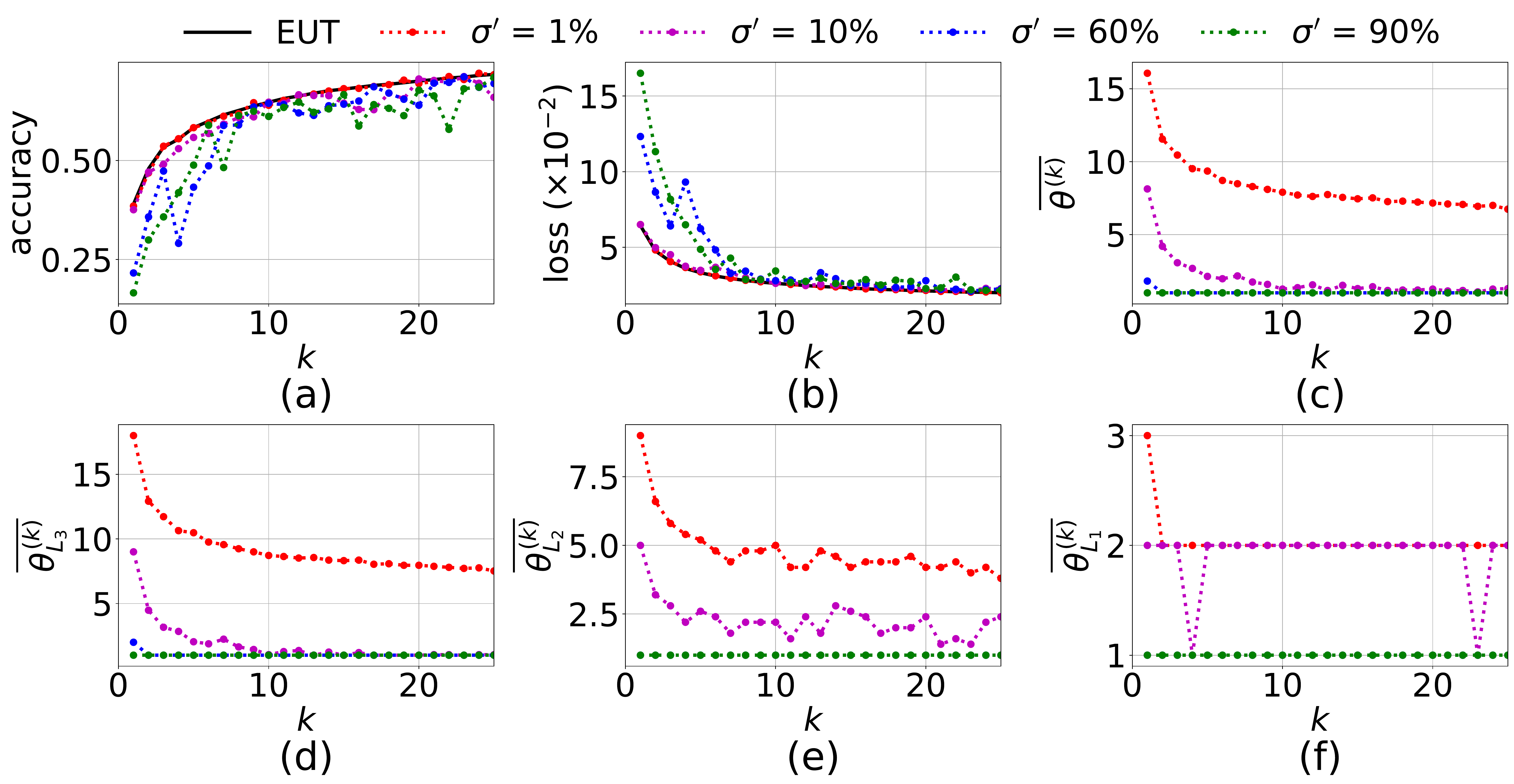}
     \caption{Performance comparison between baseline EUT and {\tt MH-MT} for non-i.i.d when a finite optimality gap is tolerable. $\sigma_i$ is set as in Fig.~\ref{fig:iidIncConSigma_FMNIST_125}. Smaller loss and higher accuracy are achieved with smaller $\sigma'$, implying more rounds of consensus. (FMNIST, $125$ Edge Devices)}
     \label{fig:non_iidIncConSigma_FMNIST_125}
\end{minipage}
\end{figure}

\begin{figure}
\centering
\begin{minipage}{.48\textwidth}
       \centering
     \includegraphics[width=\linewidth]{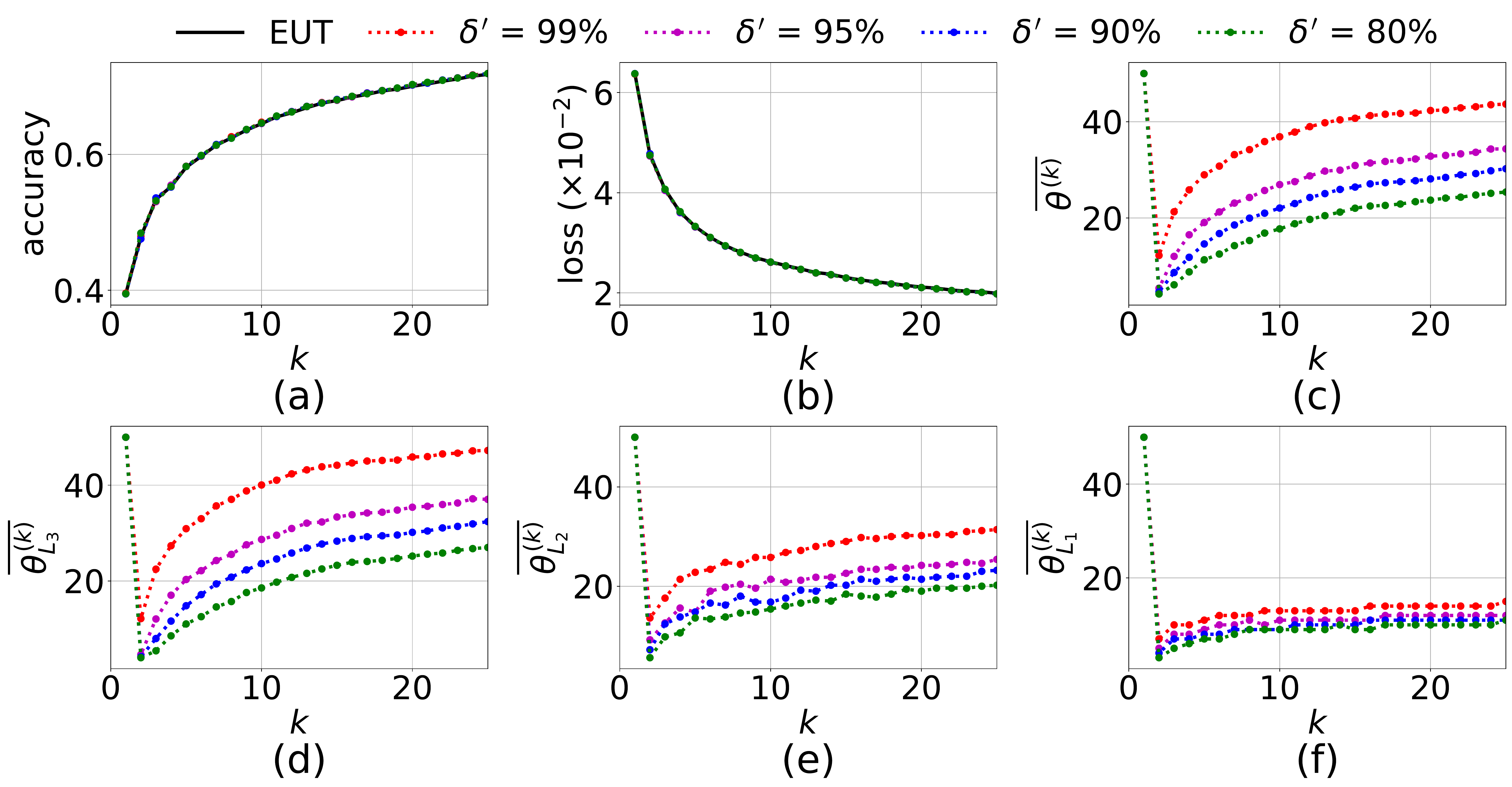}
     \caption{Performance comparison between baseline EUT and {\tt MH-MT} for i.i.d. when linear convergence to the optimal is desired. The value of $\delta$ is set at $\delta=\delta' \frac{\mu}{\eta}$. Boosting of the D2D rounds through time can be observed. Also, tapering through space can be observed by comparing the D2D rounds in the bottom subplots. (FMNIST, $125$ Edge Devices)}
     \label{fig:iidConsConDelta_FMNIST_125}
\end{minipage}%
\hspace{2mm}
\begin{minipage}{.48\textwidth}
      \centering
     \includegraphics[width=\linewidth]{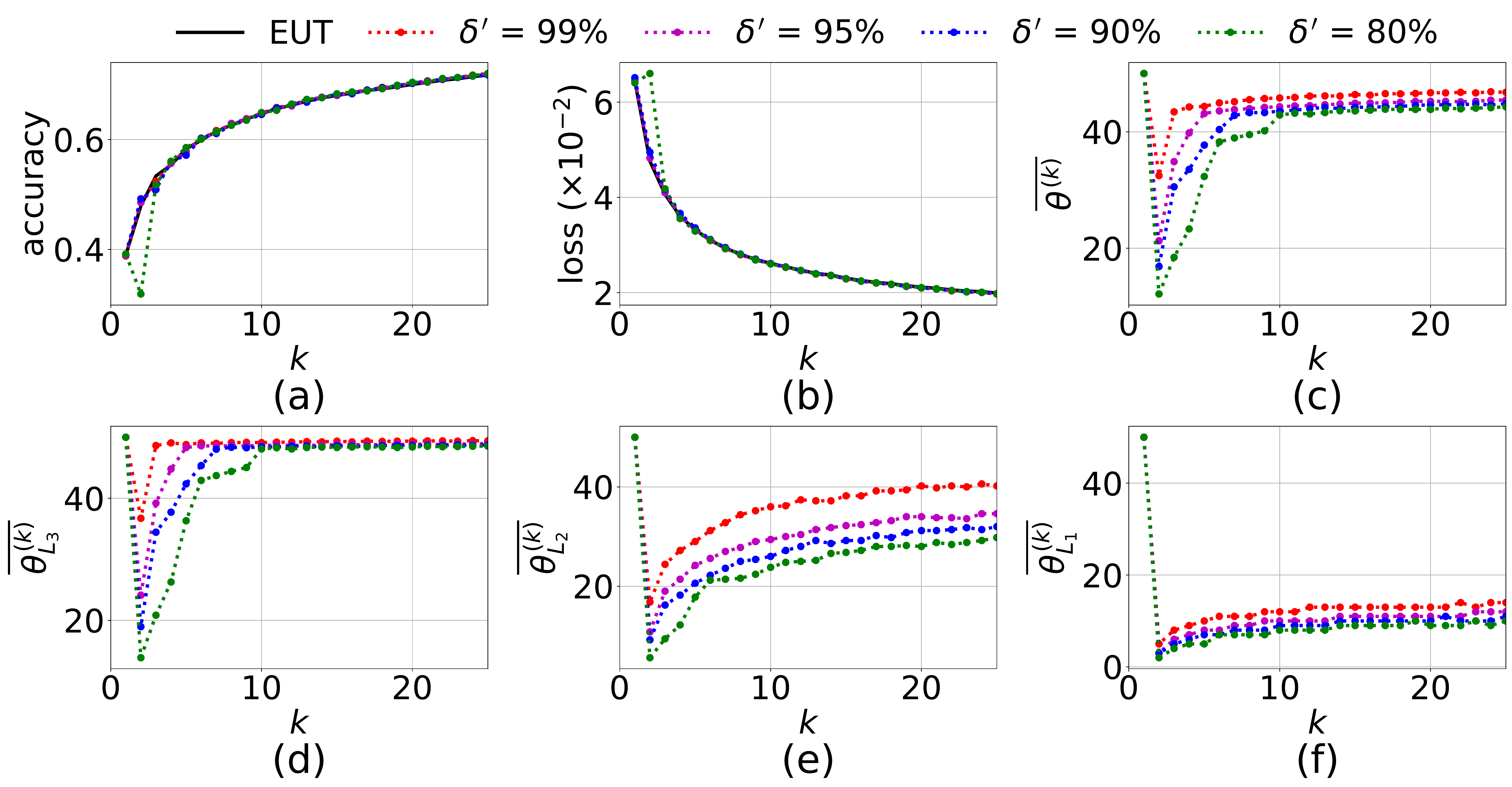}
     \caption{Performance comparison between baseline EUT and {\tt MH-MT} for non-i.i.d when linear convergence to the optimal is desired. The value of $\delta$ is set as in Fig.~\ref{fig:iidConsConDelta_FMNIST_125}.
     Smaller values of loss and higher accuracy are both associated with larger value of $\delta$, which results in lower error tolerance and more rounds of consensus. (FMNIST, $125$ Edge Devices)}
     \label{fig:non_iidConsConDelta_FMNIST_125}
\end{minipage}
\end{figure}
 
 \begin{figure}
\centering
\begin{minipage}{.48\textwidth}
         \centering
     \includegraphics[width=\linewidth]{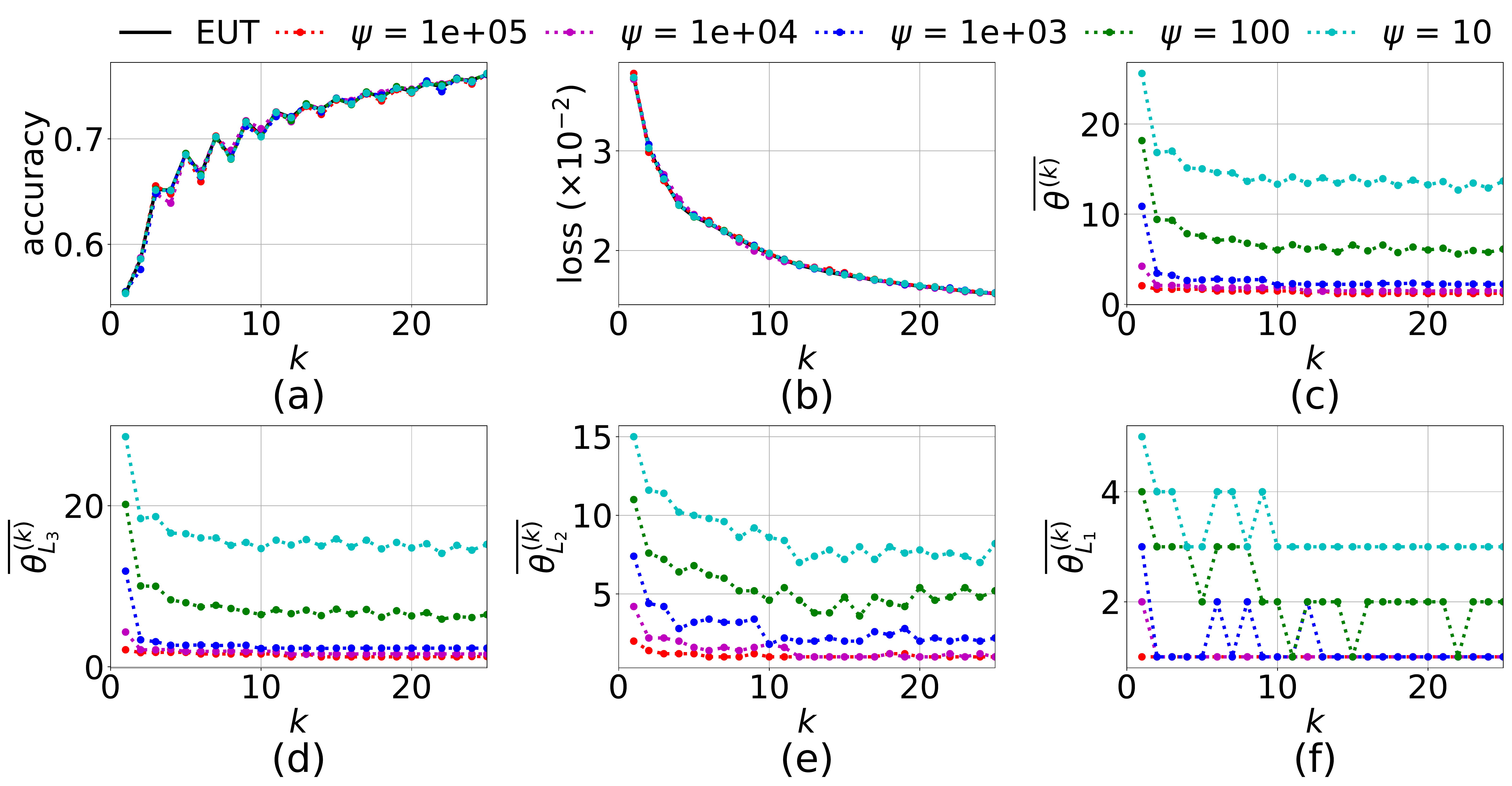}
     \caption{Performance comparison between baseline EUT and {\tt MH-MT}  under i.i.d using NNs with different values of $\psi$. Tapering the D2D rounds through time can be observed. Also, tapering through space can be observed by comparing the D2D rounds in the bottom subplots. (FMNIST, $125$ Edge Devices)}
     \label{fig:iidNNIncConec_FMNIST_125}
\end{minipage}%
\hspace{2mm}
\begin{minipage}{.48\textwidth}
         \centering
     \includegraphics[width=\linewidth]{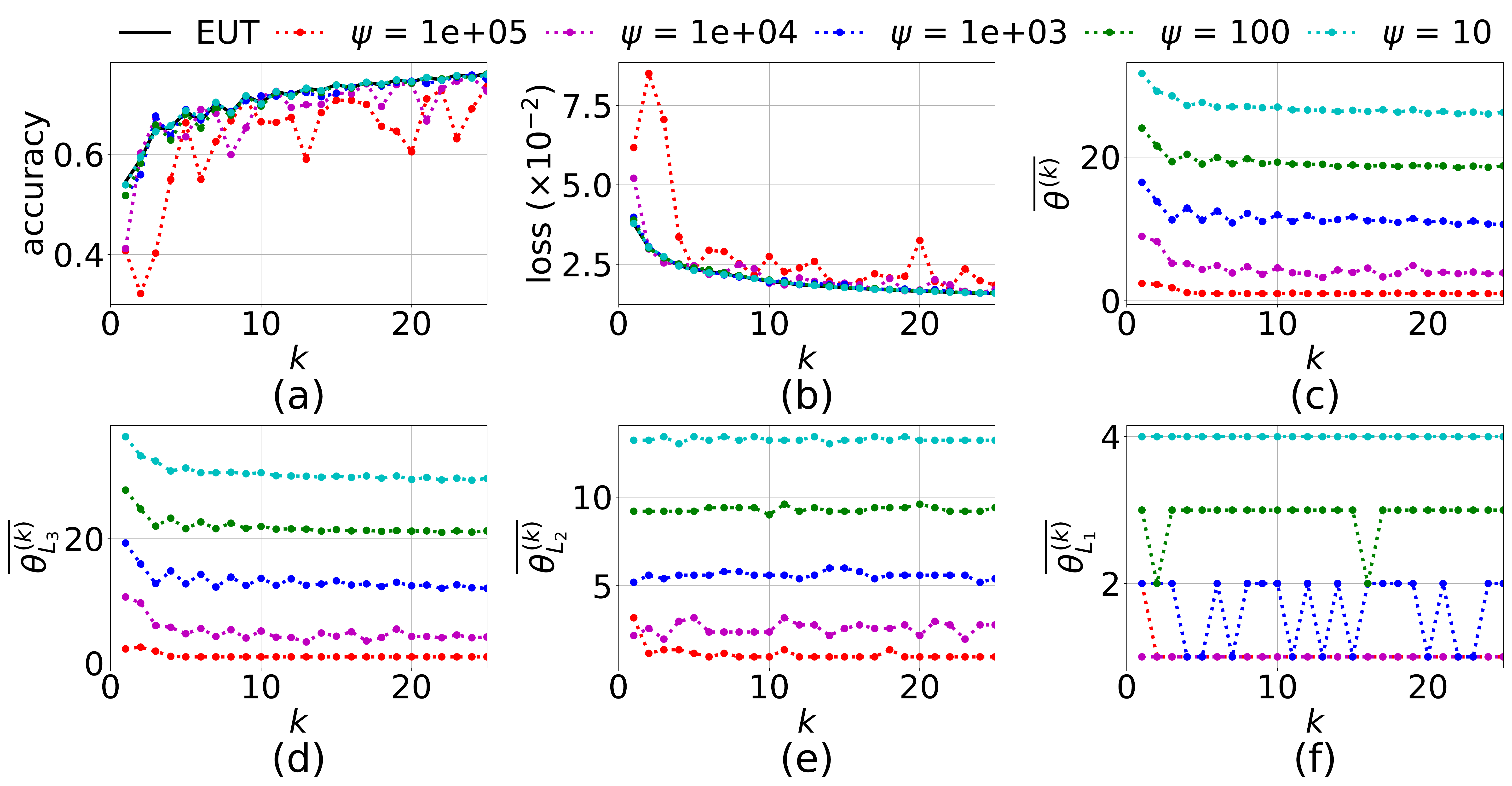}
     \caption{Performance comparison between baseline EUT and {\tt MH-MT} under non-i.i.d. using NNs with different values of $\psi$. Lower loss and higher accuracy are associated with smaller values of $\psi$, which result in lower error tolerance and larger values of D2D rounds over time. (FMNIST, $125$ Edge Devices)}
     \label{fig:non_iidConsConDeltaNN_FMNIST_125}
\end{minipage}
\end{figure}

 
 \begin{figure}
\centering
\begin{minipage}{.32\textwidth}
     \centering
         \includegraphics[width=0.875\linewidth]{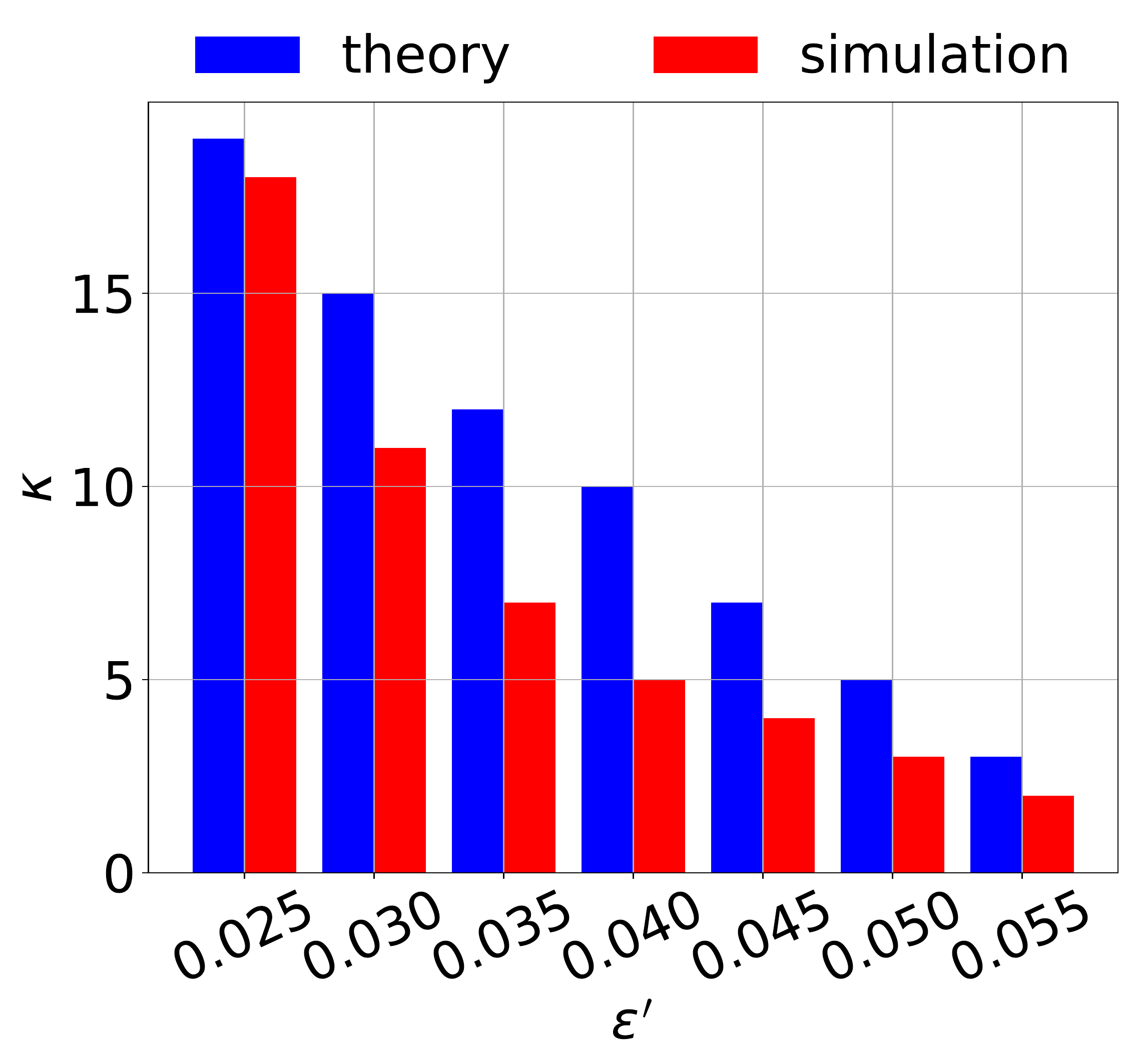}
     \caption{Comparison between the theoretical and simulation results regarding the number of global iterations to achieve an accuracy of $\epsilon' (F(\mathbf{w}^{(0)})-F(\mathbf{w}^*))$ for different $\epsilon'$. Convergence in practice is faster than the derived upper bound. (FMNIST, $125$ Edge Devices)}
     \label{fig:non_iidTheoPracSimDiff_FMNIST_125}
\end{minipage}%
\hspace{2mm}
\begin{minipage}{.32\textwidth}
     \centering
     \includegraphics[width=\linewidth]{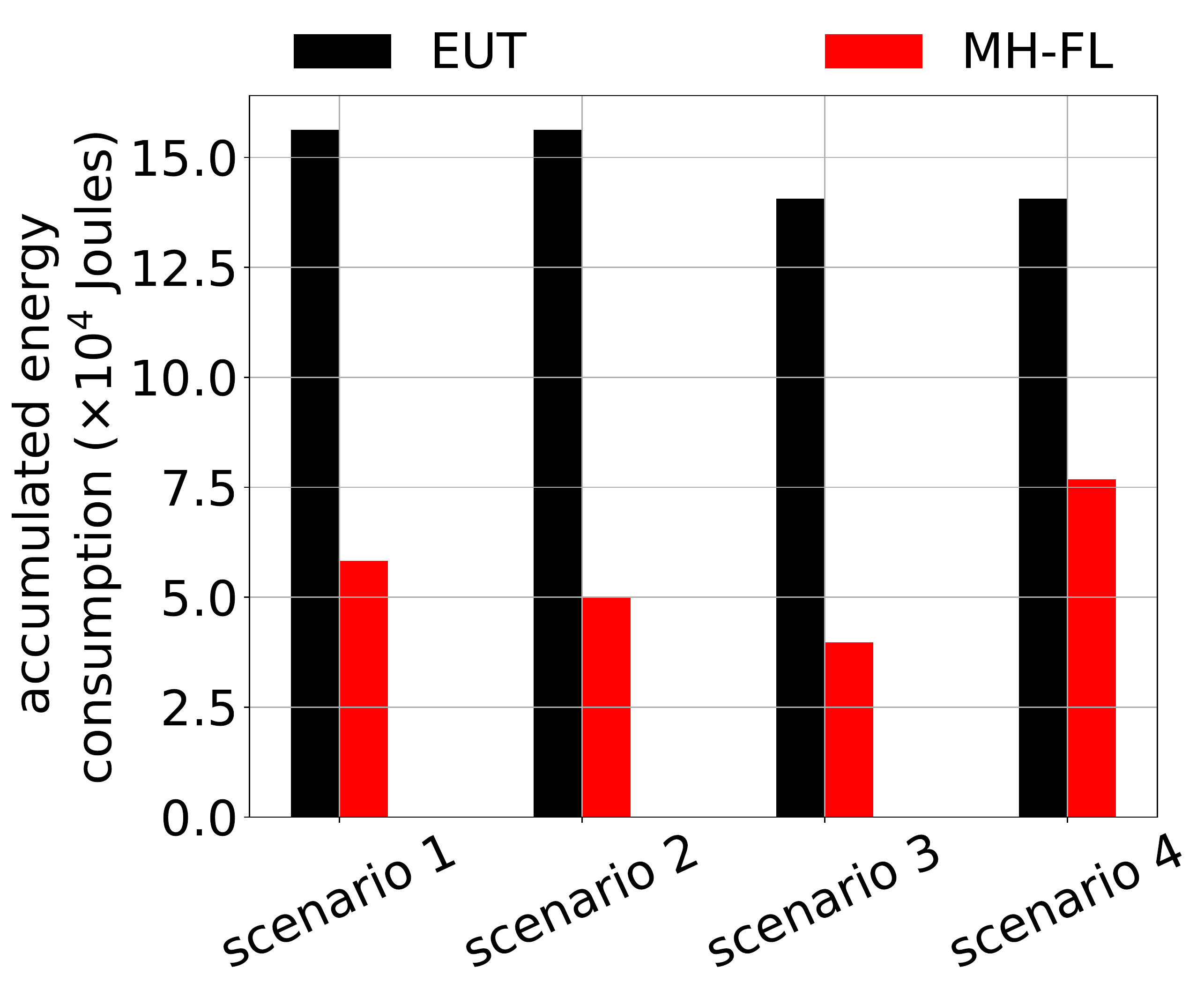}
     \caption{Comparison of accumulated energy consumption between EUT and {\tt MH-MT} over scenario 1: $\sigma' = 0.1$ from Fig.~\ref{fig:iidIncConSigma_FMNIST_125}, scenario 2: $\sigma' = 0.1$ from Fig.~\ref{fig:non_iidIncConSigma_FMNIST_125}, scenario 3: $\psi = 10^4$ from Fig.~\ref{fig:iidNNIncConec_FMNIST_125}, and scenario 4: $\psi = 10^4$ from Fig.~\ref{fig:non_iidConsConDeltaNN_FMNIST_125}. (FMNIST, $125$ Edge Devices)}
     \label{fig:accum_energy_125_FMNIST}
\end{minipage}%
\hspace{2mm}
\begin{minipage}{.32\textwidth}
     \centering
     \includegraphics[width=\linewidth]{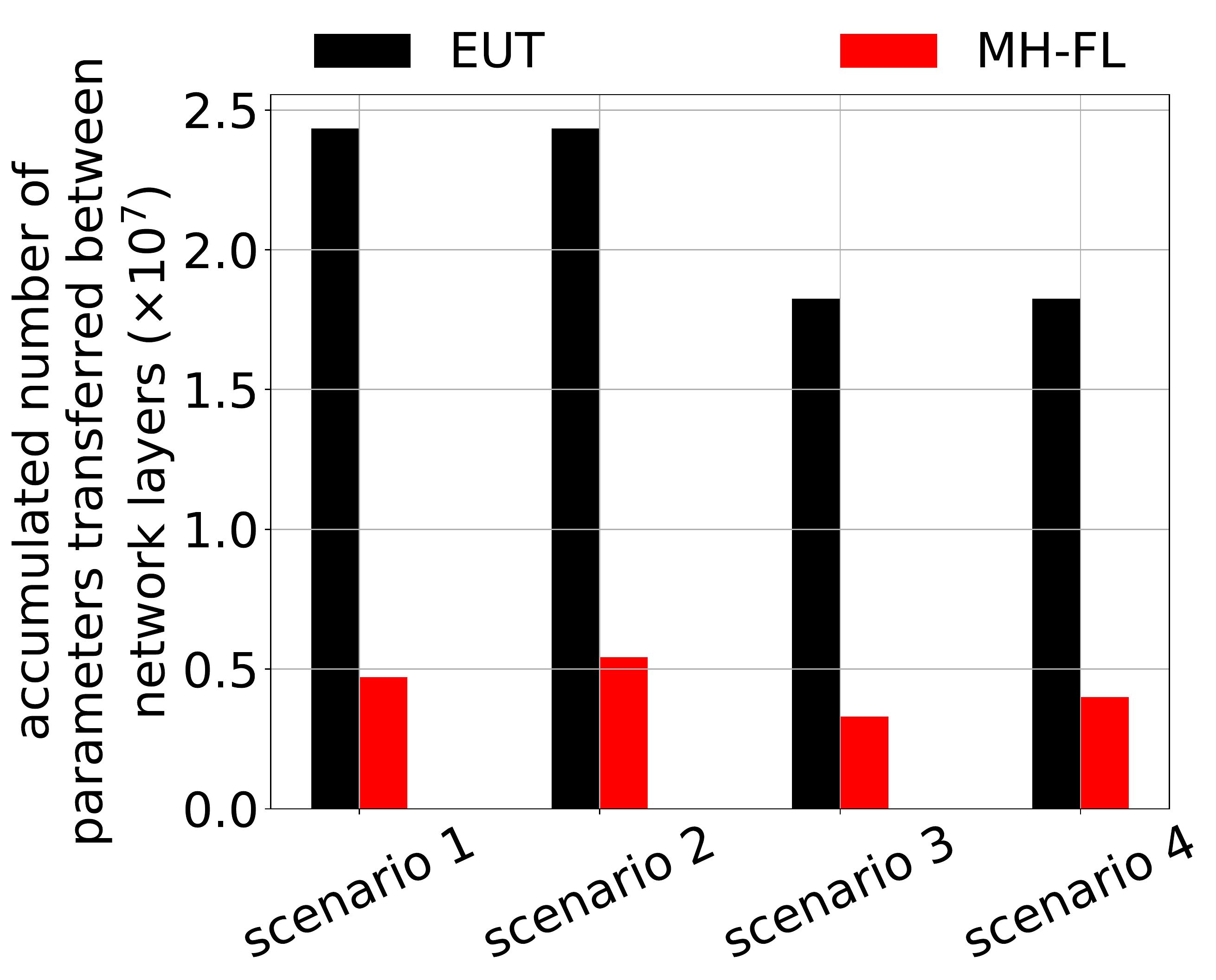}
     \caption{Comparison of parameters transferred among layers in EUT vs {\tt MH-MT} over scenario 1: $\sigma' = 0.1$ from Fig.~\ref{fig:iidIncConSigma_FMNIST_125}, scenario 2: $\sigma' = 0.1$ from Fig.~\ref{fig:non_iidIncConSigma_FMNIST_125}, scenario 3: $\psi = 10^4$ from Fig.~\ref{fig:iidNNIncConec_FMNIST_125}, and scenario 4: $\psi = 10^4$ from Fig.~\ref{fig:non_iidConsConDeltaNN_FMNIST_125}. (FMNIST, $125$ Edge Devices)}
     \label{fig:accumData_FMNIST_125}
\end{minipage}
\end{figure}



\begin{figure}
\centering
\begin{minipage}{.48\textwidth}
     \centering
     \includegraphics[width=\linewidth]{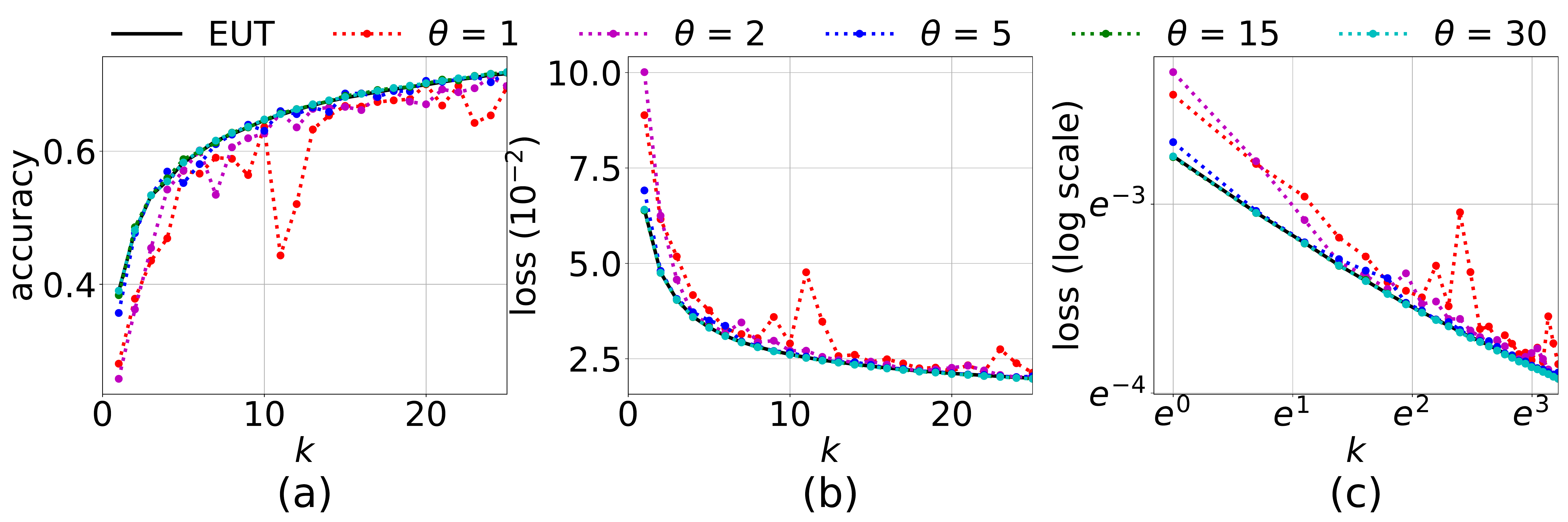}
     \caption{Performance comparison between baseline EUT and {\tt MH-MT} when a fixed number of D2D rounds $\theta$ is used at every cluster of the network, for non-i.i.d. As the number of D2D rounds increases, {\tt MH-MT} performs more similar to the EUT baseline and the learning is more stable. (FMNIST, $625$ Edge Devices)}
         \label{fig:GenFigGoodIntuition_FMNIST_625}
\end{minipage}%
\hspace{2mm}
\begin{minipage}{.48\textwidth}
  \centering
         \includegraphics[width=\linewidth]{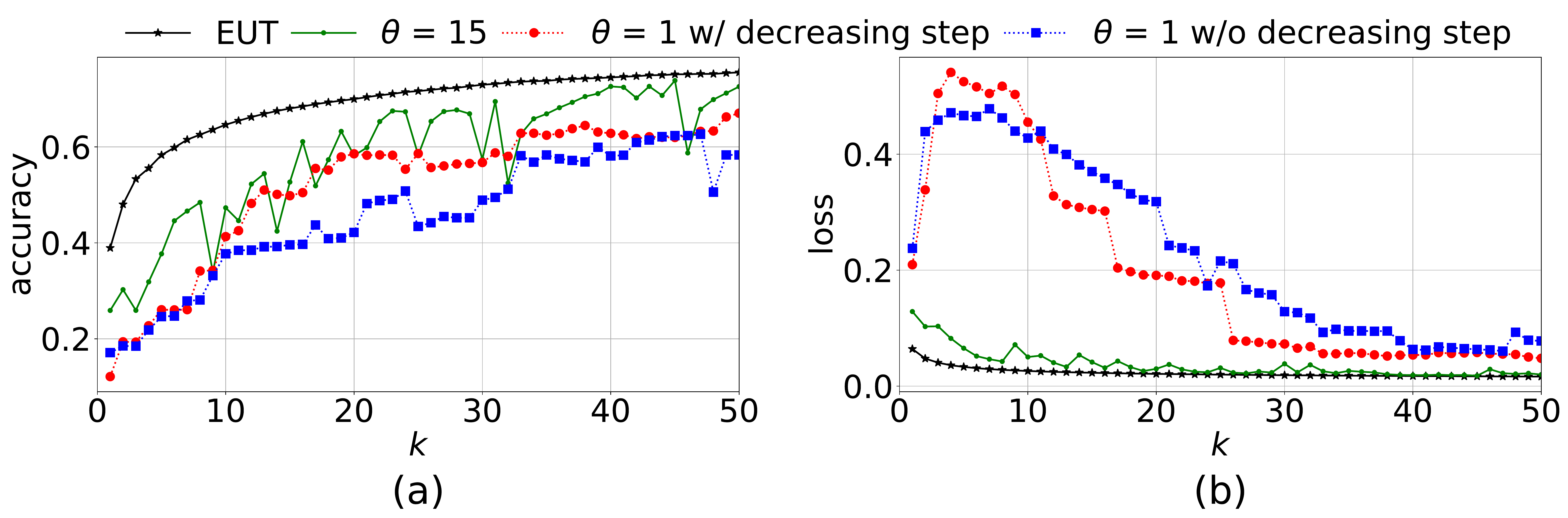}
     \caption{Performance comparison between baseline EUT, and {\tt MH-MT} with and without (w/o) decreasing the gradient descent step size. Decreasing the step size can provide convergence to the optimal solution in cases where a fixed step size is not capable, but also has a slower convergence speed. (FMNIST, $625$ Edge Devices)}
     \label{fig:decaying_learning_rate_FMNIST_625}
\end{minipage}
\end{figure}

\begin{figure}
\centering
\begin{minipage}{.48\textwidth}
        \centering
     \includegraphics[width=\linewidth]{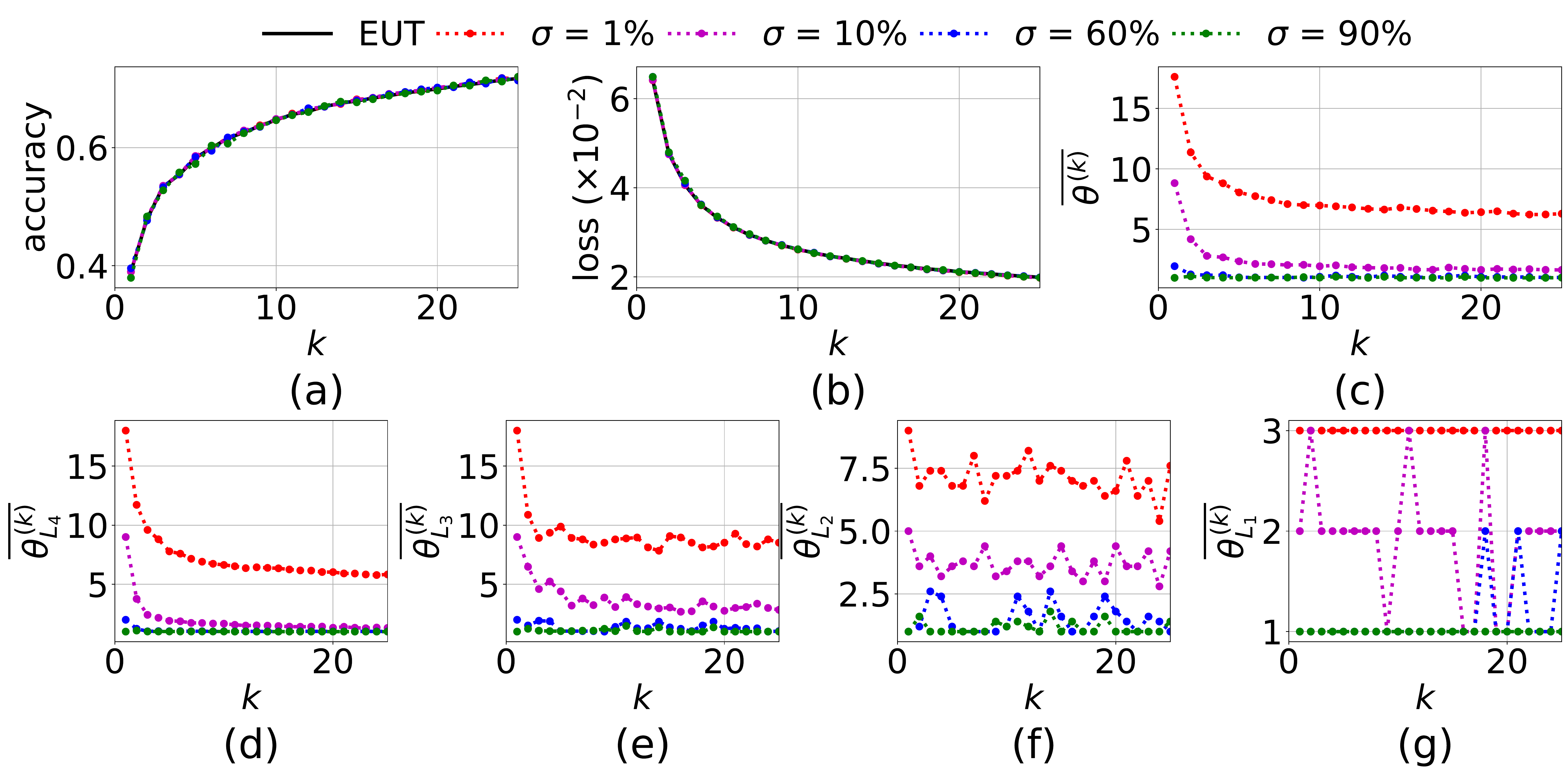}
     \caption{Performance comparison between baseline EUT and {\tt MH-MT} for i.i.d when a finite optimality gap is tolerable. $\sigma_j$ at ${L}_j$ is fixed as $\sigma_j=\sigma' \max_{i}{\Upsilon_{{L}_{j,i}}^{(1)}}$. Tapering of D2D rounds through time and space (layers) can be observed. (FMNIST, $625$ Edge Devices)}
     \label{fig:iidIncConSigma_FMNIST_625}
\end{minipage}%
\hspace{2mm}
\begin{minipage}{.48\textwidth}
     \centering
     \includegraphics[width=\linewidth]{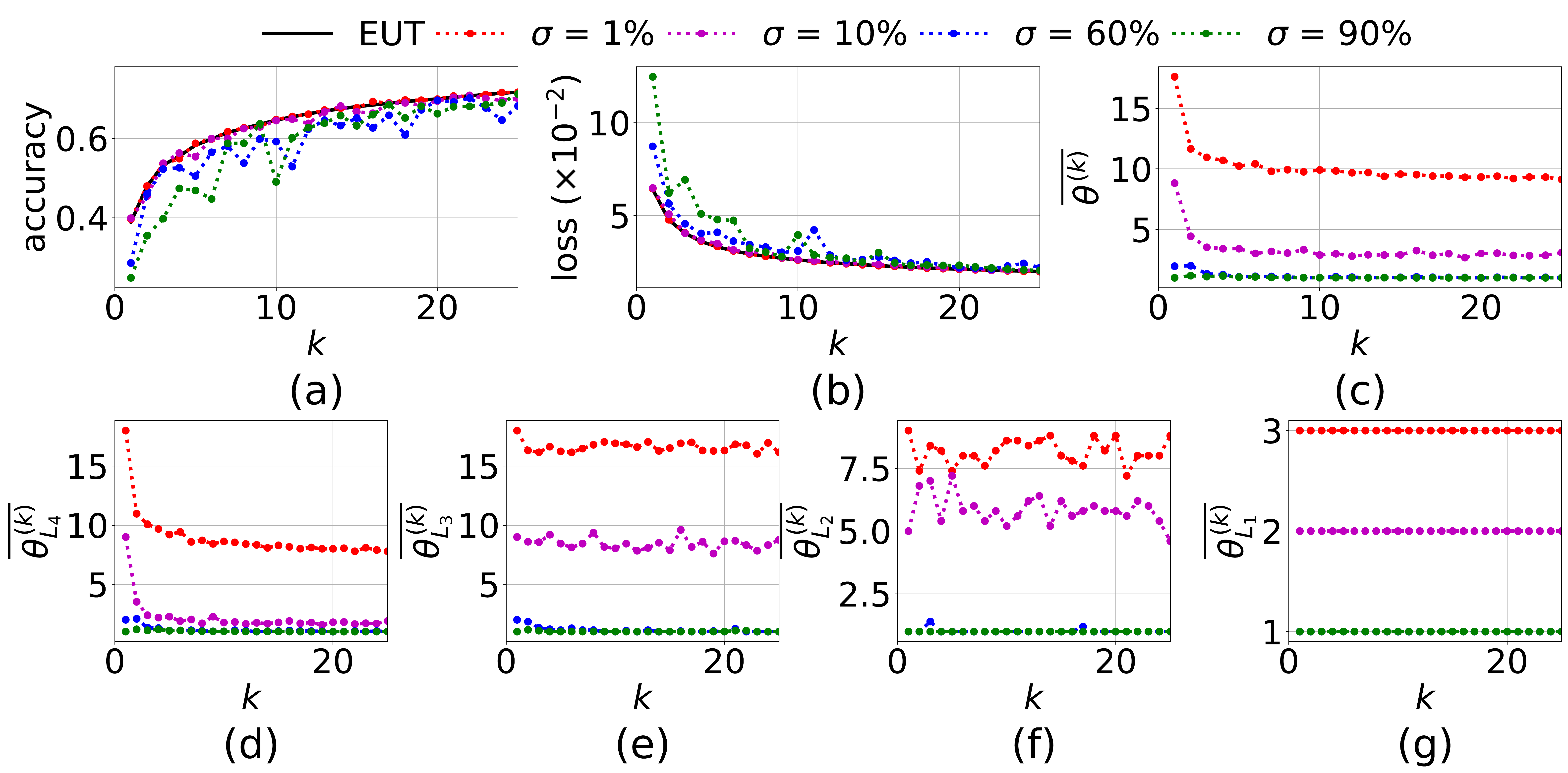}
     \caption{Performance comparison between baseline EUT and {\tt MH-MT} for non-i.i.d when a finite optimality gap is tolerable. $\sigma_i$ is set as in Fig.~\ref{fig:iidIncConSigma_FMNIST_625}. Smaller loss and higher accuracy are achieved with smaller $\sigma'$, implying more rounds of consensus. (FMNIST, $625$ Edge Devices)}
     \label{fig:non_iidIncConSigma_FMNIST_625}
\end{minipage}
\end{figure}

\begin{figure}
\centering
\begin{minipage}{.48\textwidth}
       \centering
     \includegraphics[width=\linewidth]{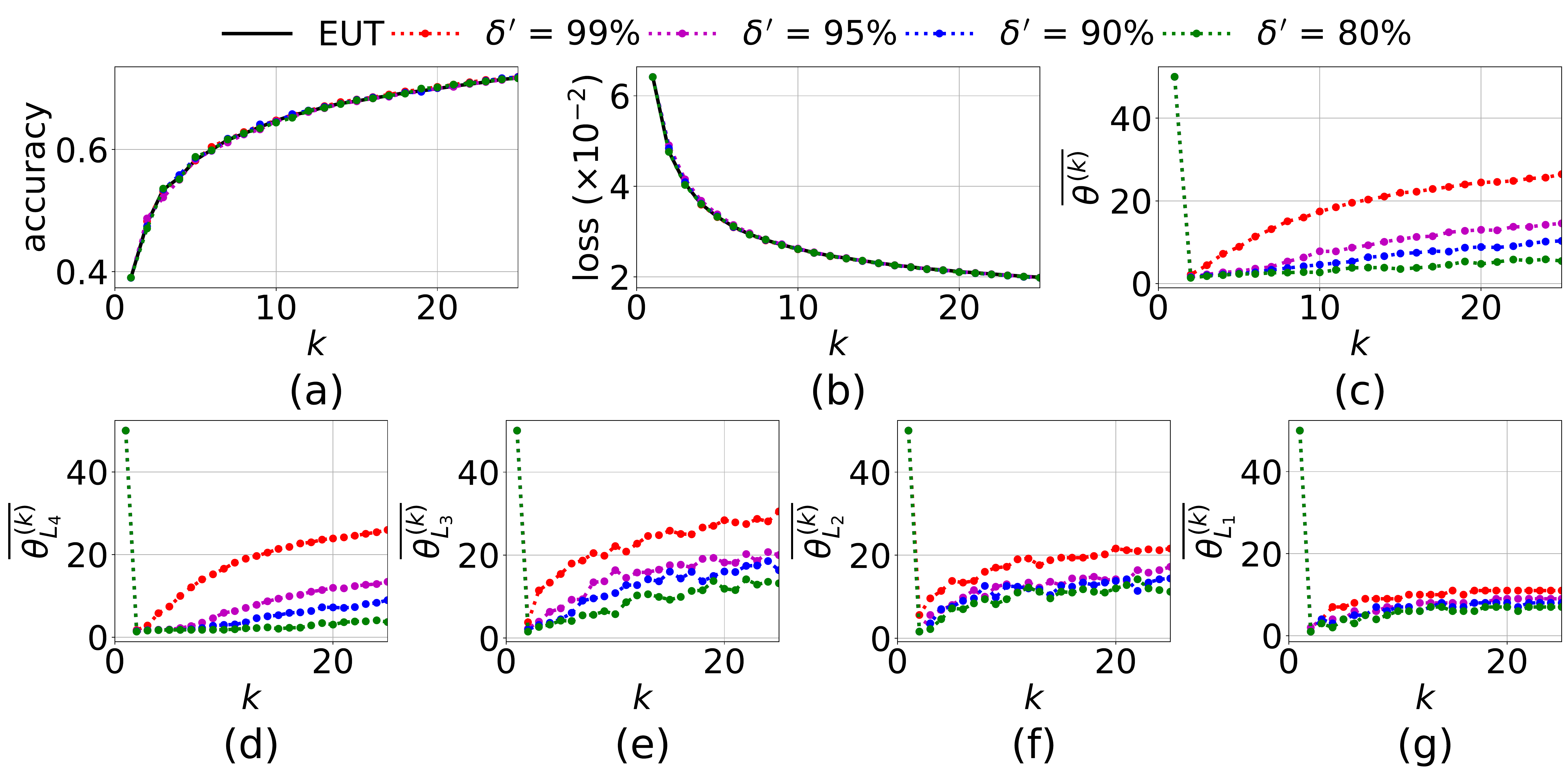}
     \caption{Performance comparison between baseline EUT and {\tt MH-MT} for i.i.d. when linear convergence to the optimal is desired. The value of $\delta$ is set at $\delta=\delta' \frac{\mu}{\eta}$. Boosting of the D2D rounds through time can be observed. Also, tapering through space can be observed by comparing the D2D rounds in the bottom subplots. (FMNIST, $625$ Edge Devices)}
     \label{fig:iidConsConDelta_FMNIST_625}
\end{minipage}%
\hspace{2mm}
\begin{minipage}{.48\textwidth}
      \centering
     \includegraphics[width=\linewidth]{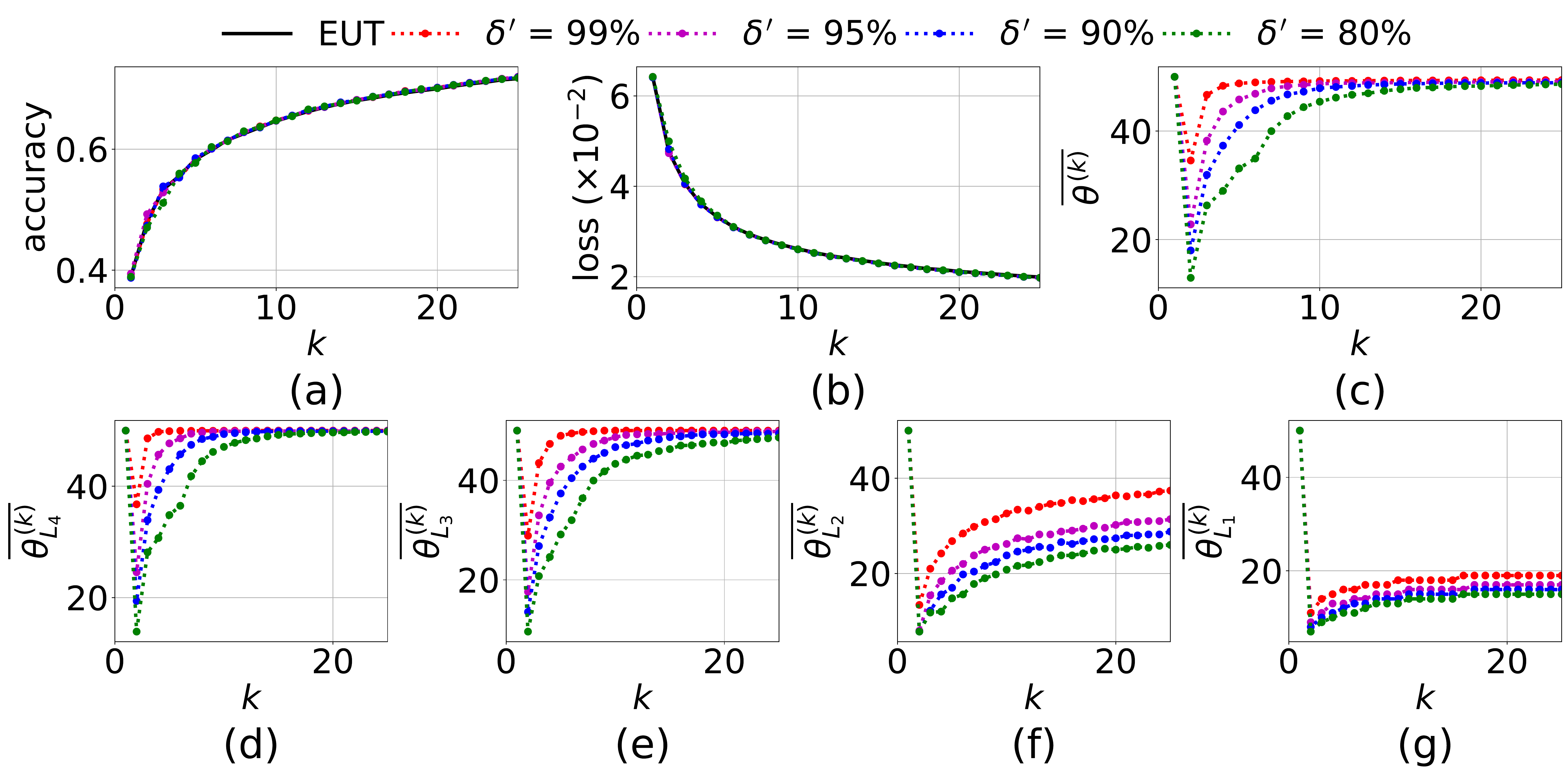}
     \caption{Performance comparison between baseline EUT and {\tt MH-MT} for non-i.i.d when linear convergence to the optimal is desired. The value of $\delta$ is set as in Fig.~\ref{fig:iidConsConDelta_FMNIST_625}.
     Smaller values of loss and higher accuracy are both associated with larger value of $\delta$, which results in lower error tolerance and more rounds of consensus. (FMNIST, $625$ Edge Devices)}
     \label{fig:non_iidConsConDelta_FMNIST_625}
\end{minipage}
\end{figure}
 
 \begin{figure}
\centering
\begin{minipage}{.48\textwidth}
         \centering
     \includegraphics[width=\linewidth]{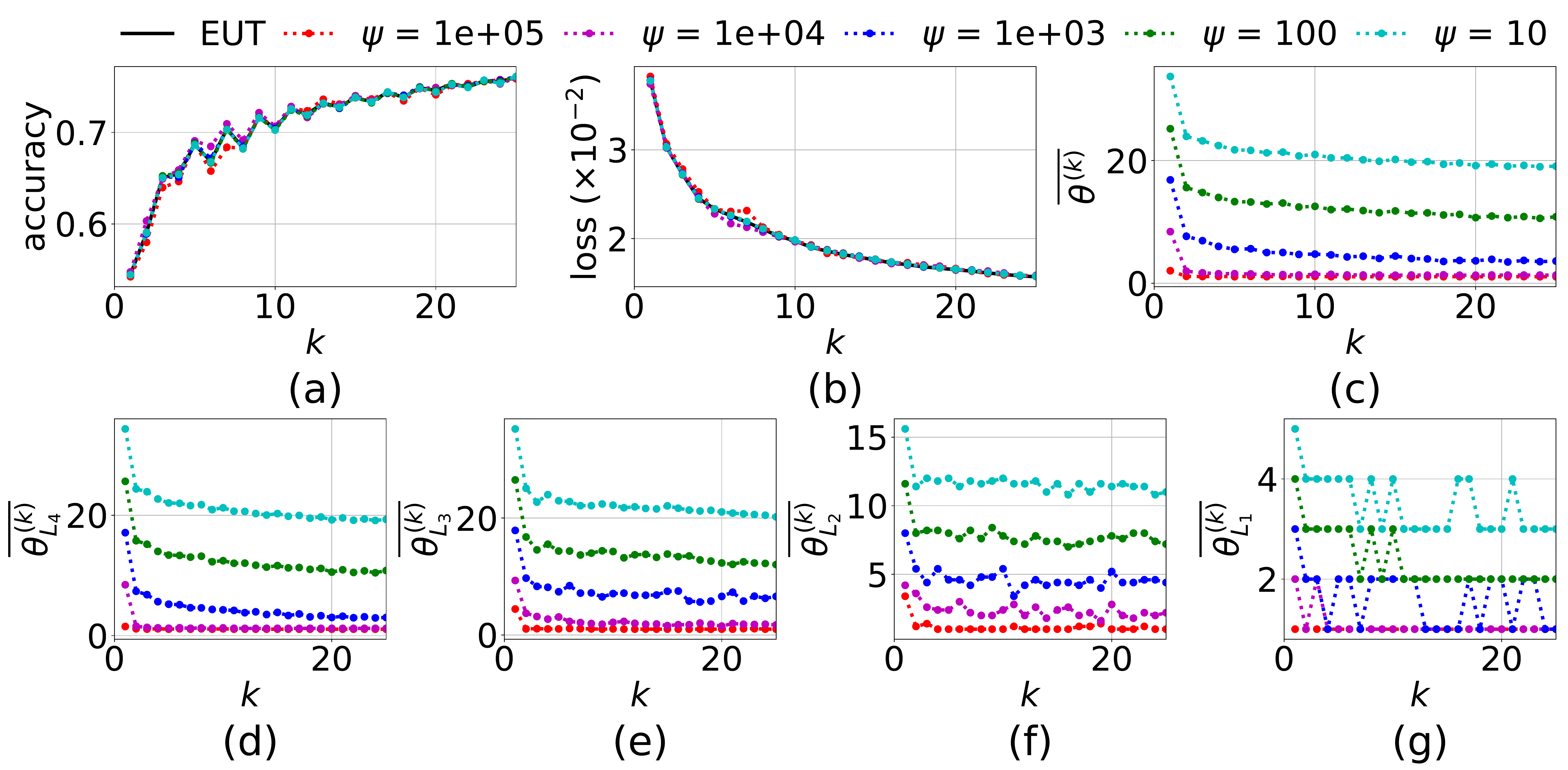}
     \caption{Performance comparison between baseline EUT and {\tt MH-MT}  under i.i.d using NNs with different values of $\psi$. Tapering the D2D rounds through time can be observed. Also, tapering through space can be observed by comparing the D2D rounds in the bottom subplots. (FMNIST, $625$ Edge Devices)}
     \label{fig:iidNNIncConec_FMNIST_625}
\end{minipage}%
\hspace{2mm}
\begin{minipage}{.48\textwidth}
         \centering
     \includegraphics[width=\linewidth]{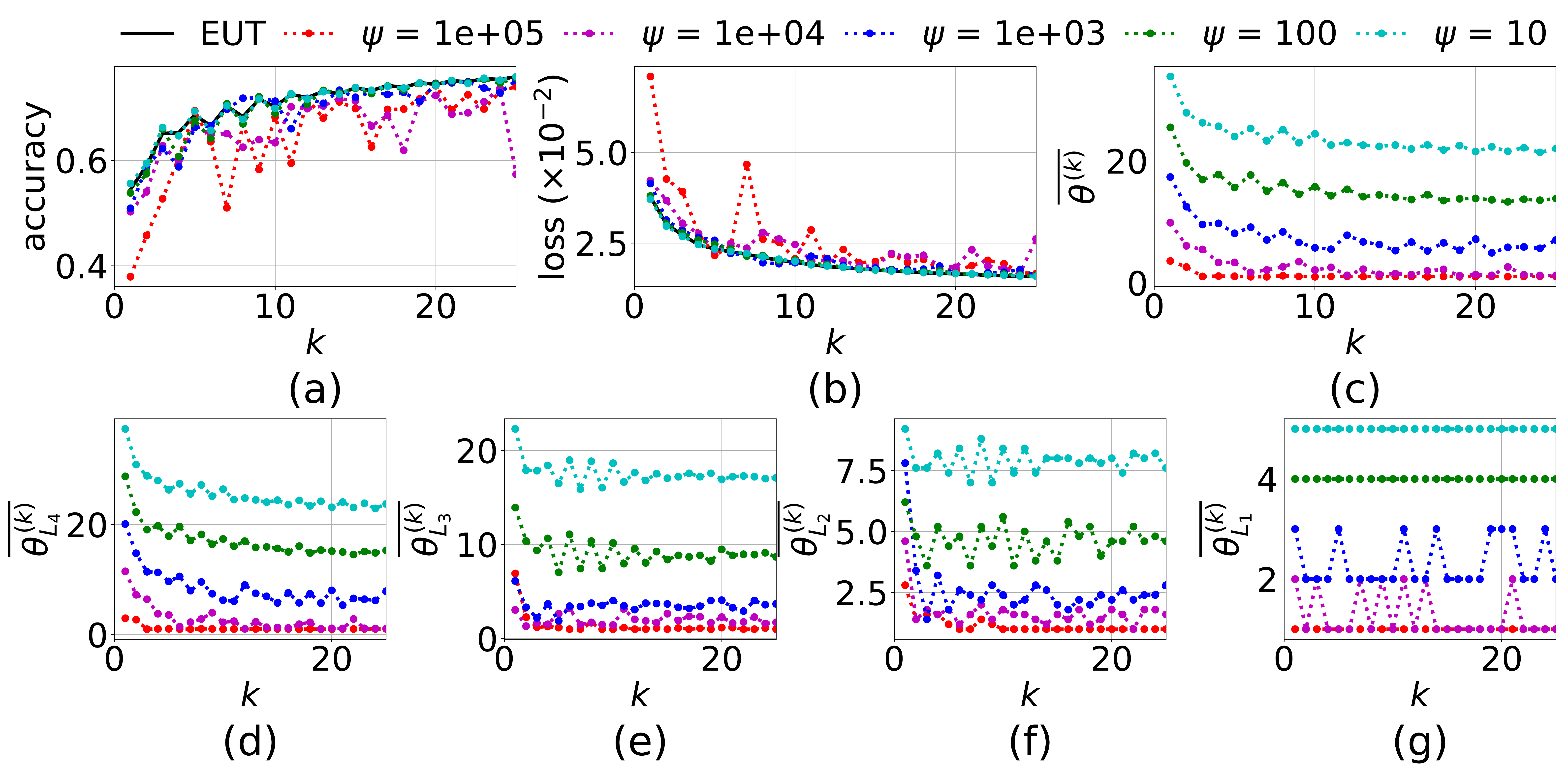}
     \caption{Performance comparison between baseline EUT and {\tt MH-MT} under non-i.i.d. using NNs with different values of $\psi$. Lower loss and higher accuracy are associated with smaller values of $\psi$, which result in lower error tolerance and larger values of D2D rounds over time. (FMNIST, $625$ Edge Devices)}
     \label{fig:non_iidConsConDeltaNN_FMNIST_625}
\end{minipage}
\end{figure}

 
 \begin{figure}
\centering
\begin{minipage}{.32\textwidth}
     \centering
         \includegraphics[width=\linewidth]{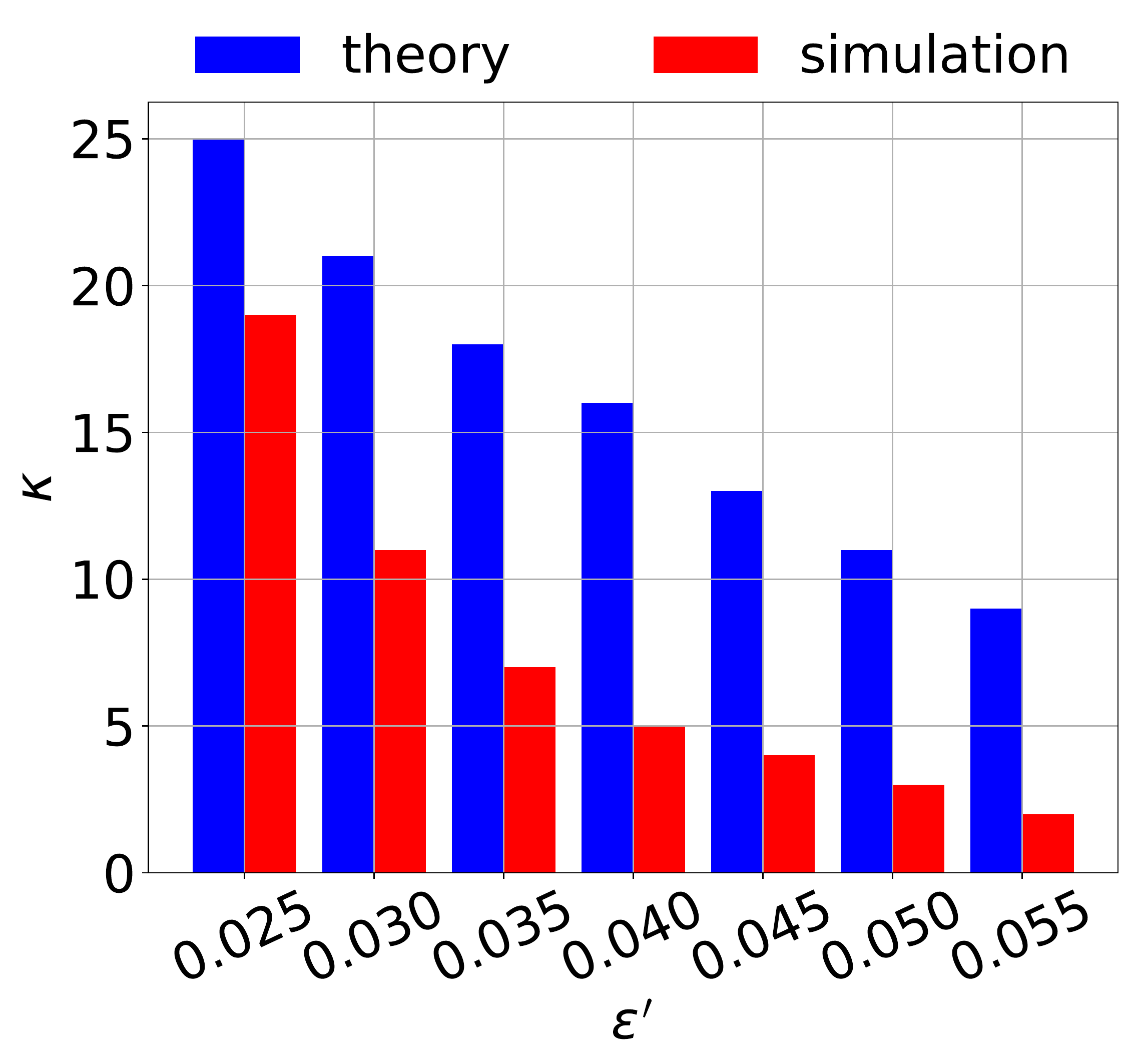}
     \caption{Comparison between the theoretical and simulation results regarding the number of global iterations to achieve an accuracy of $\epsilon' (F(\mathbf{w}^{(0)})-F(\mathbf{w}^*))$ for different $\epsilon'$. Convergence in practice is faster than the derived upper bound. (FMNIST, $625$ Edge Devices)}
     \label{fig:non_iidTheoPracSimDiff_FMNIST_625}
\end{minipage}%
\hspace{2mm}
\begin{minipage}{.32\textwidth}
     \centering
     \includegraphics[width=\linewidth]{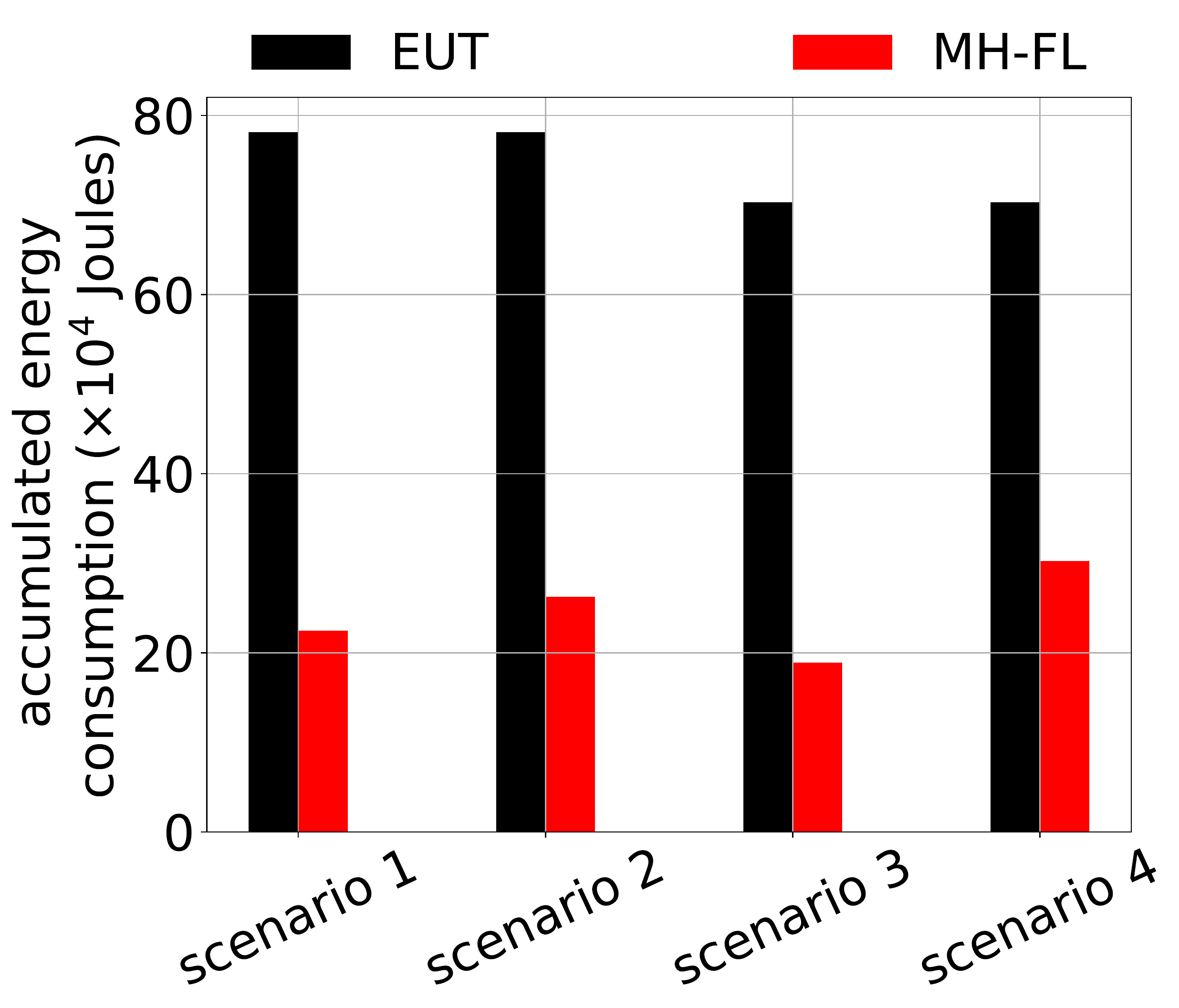}
     \caption{Comparison of accumulated energy consumption between EUT and {\tt MH-MT} over scenario 1: $\sigma' = 0.1$ from Fig.~\ref{fig:iidIncConSigma_FMNIST_625}, scenario 2: $\sigma' = 0.1$ from Fig.~\ref{fig:non_iidIncConSigma_FMNIST_625}, scenario 3: $\psi = 10^4$ from Fig.~\ref{fig:iidNNIncConec_FMNIST_625}, and scenario 4: $\psi = 10^4$ from Fig.~\ref{fig:non_iidConsConDeltaNN_FMNIST_625}. (FMNIST, $625$ Edge Devices)}
     \label{fig:accum_energy_625_FMNIST}
\end{minipage}%
\hspace{2mm}
\begin{minipage}{.32\textwidth}
     \centering
     \includegraphics[width=\linewidth]{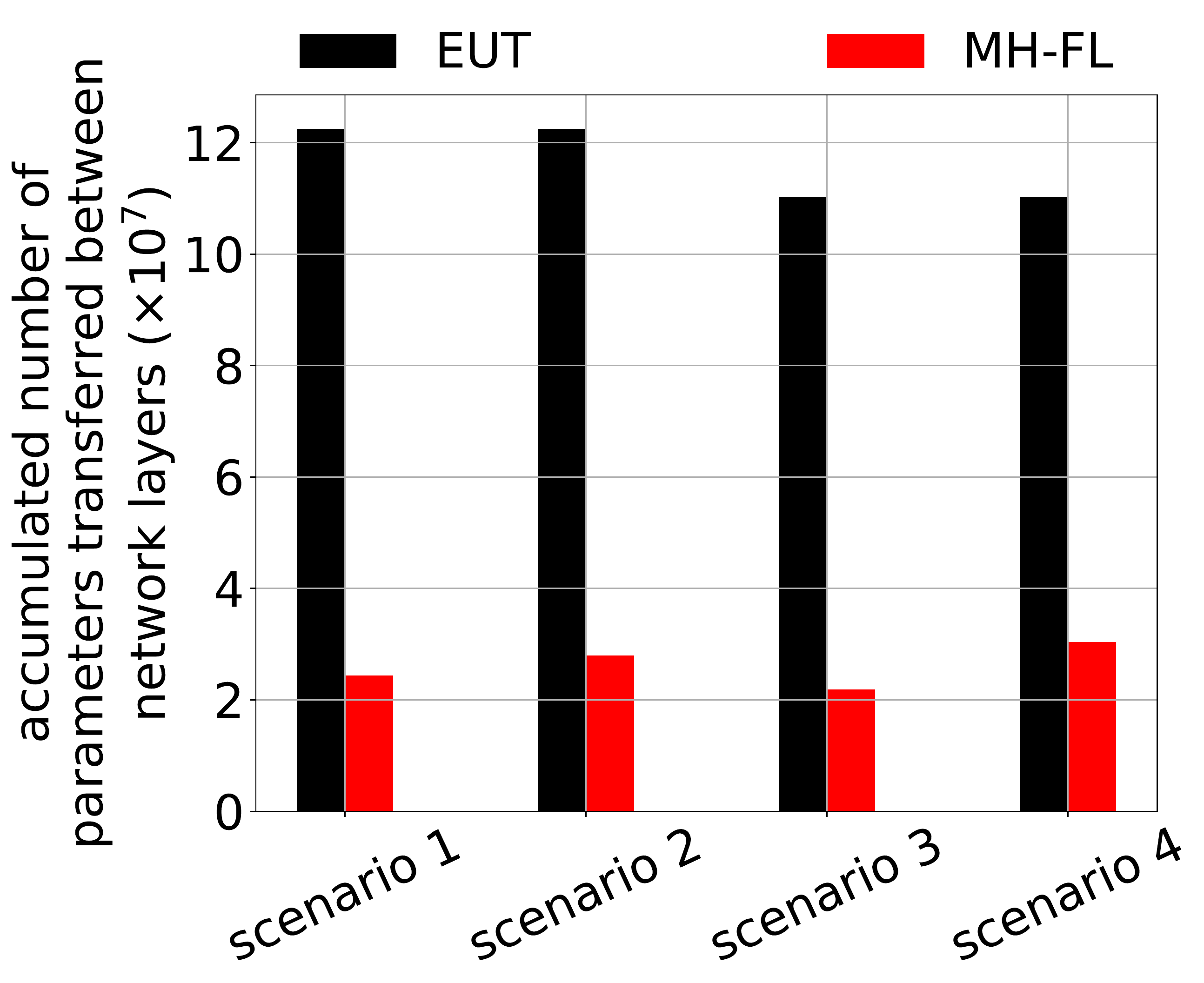}
     \caption{Comparison of parameters transferred among layers in EUT vs {\tt MH-MT} over scenario 1: $\sigma' = 0.1$ from Fig.~\ref{fig:iidIncConSigma_FMNIST_625}, scenario 2: $\sigma' = 0.1$ from Fig.~\ref{fig:non_iidIncConSigma_FMNIST_625}, scenario 3: $\psi = 10^4$ from Fig.~\ref{fig:iidNNIncConec_FMNIST_625}, and scenario 4: $\psi = 10^4$ from Fig.~\ref{fig:non_iidConsConDeltaNN_FMNIST_625}. (FMNIST, $625$ Edge Devices)}
     \label{fig:accumData_FMNIST_625}
\end{minipage}\vspace{-9mm}
\end{figure}


\begin{figure}
\centering
\begin{minipage}{.45\textwidth}
  \centering
    \includegraphics[width=\textwidth]{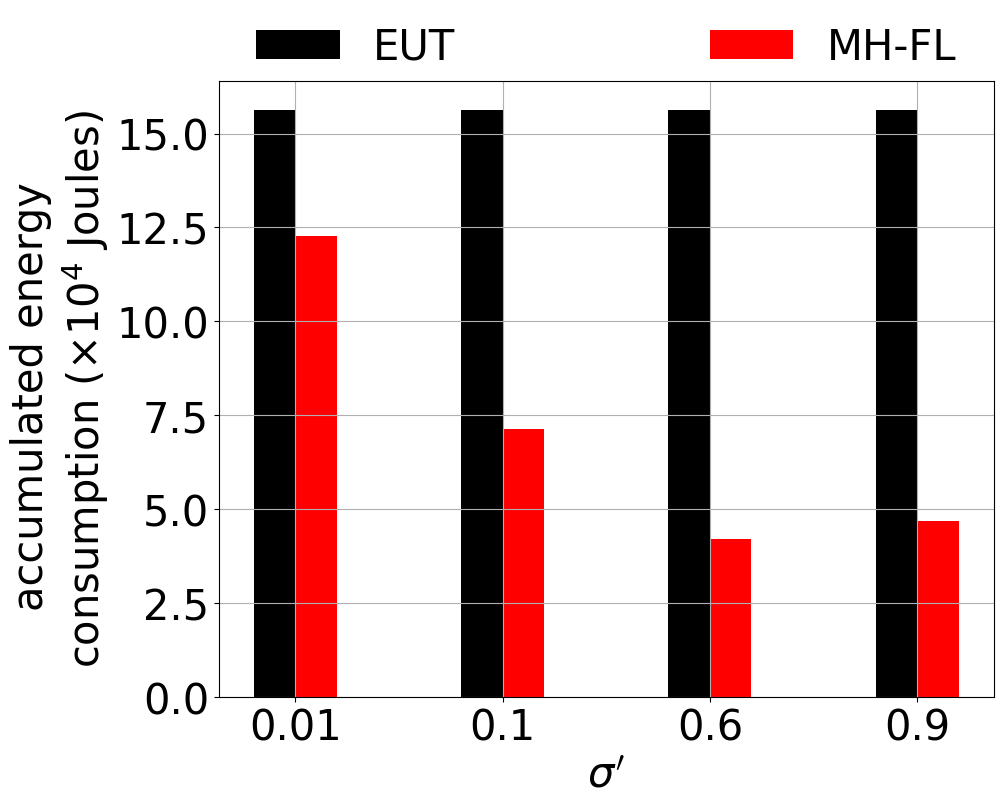}
    \caption{Energy consumption on MNIST w/ 125 nodes for varying $\sigma'$.}
    \label{fig:pc_mnist_125_sig}
\end{minipage}\quad\quad
\begin{minipage}{.45\textwidth}
  \centering
    \includegraphics[width=1.04\textwidth]{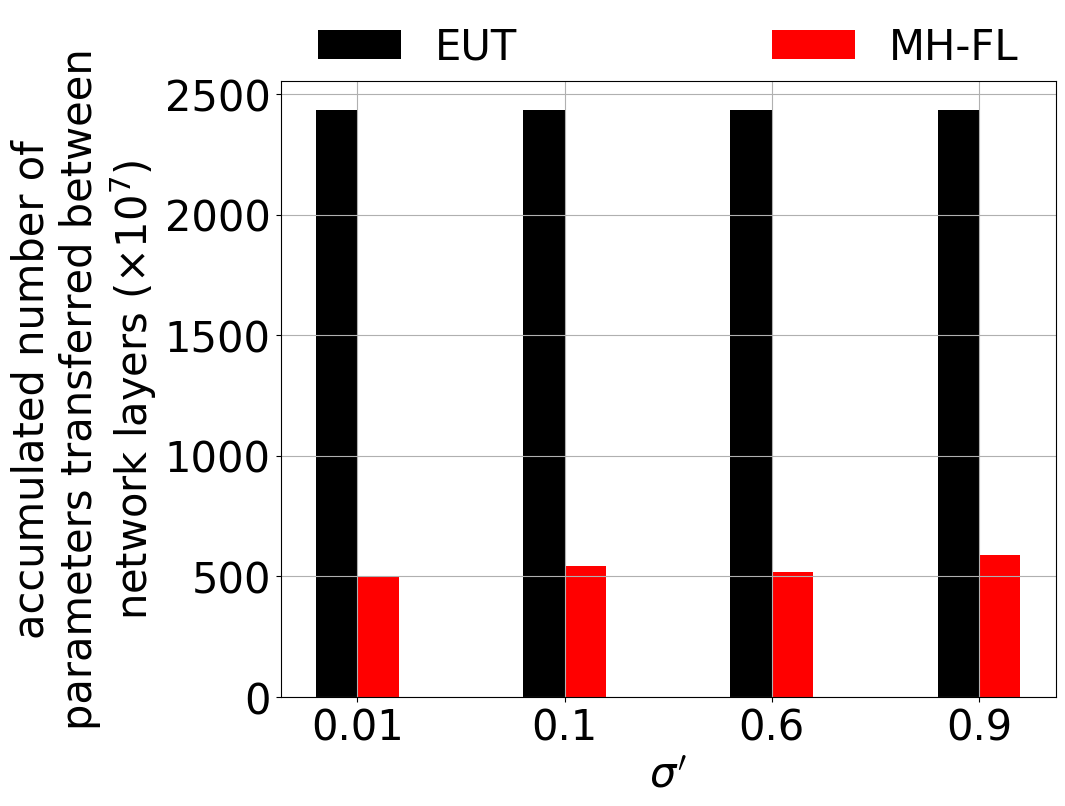}
    \caption{Parameters transferred on MNIST w/ 125 nodes for varying $\sigma'$.}
    \label{fig:pt_mnist_125_sig}
\end{minipage}
\end{figure}

\begin{figure}
\centering
\begin{minipage}{.45\textwidth}
    \centering
    \includegraphics[width=\textwidth]{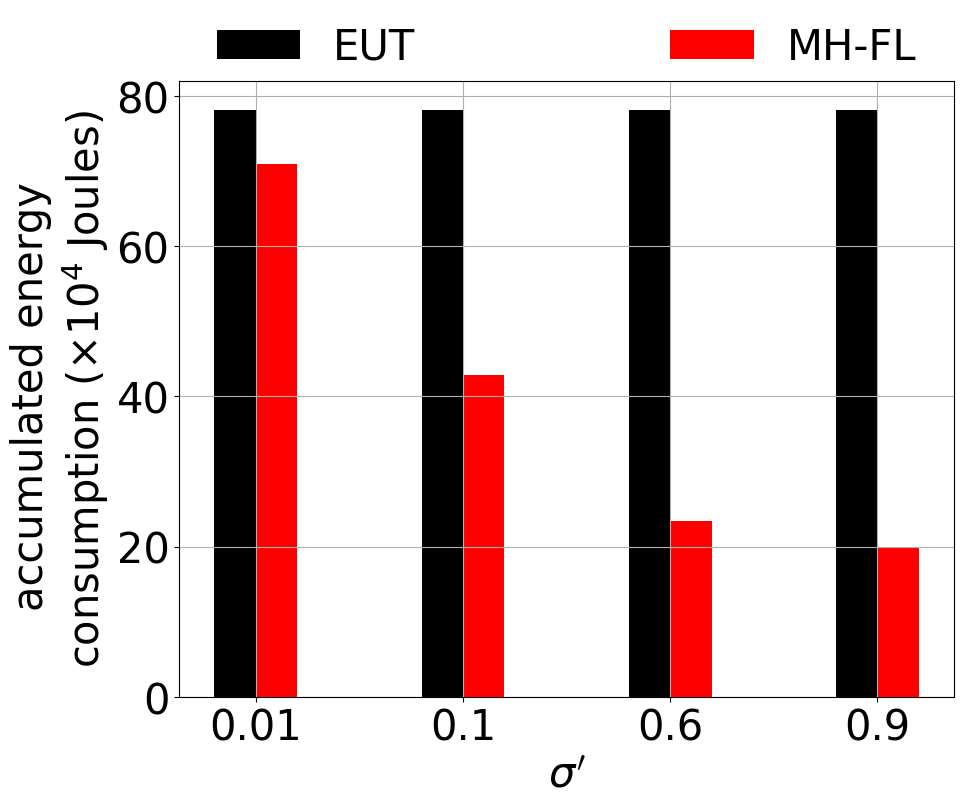}
    \caption{Energy consumption on MNIST w/ 625 nodes for varying $\sigma'$.}
    \label{fig:pc_mnist_625_sig}
\end{minipage}\quad\quad
\begin{minipage}{.45\textwidth}
    \centering
    \includegraphics[width=1.12\textwidth]{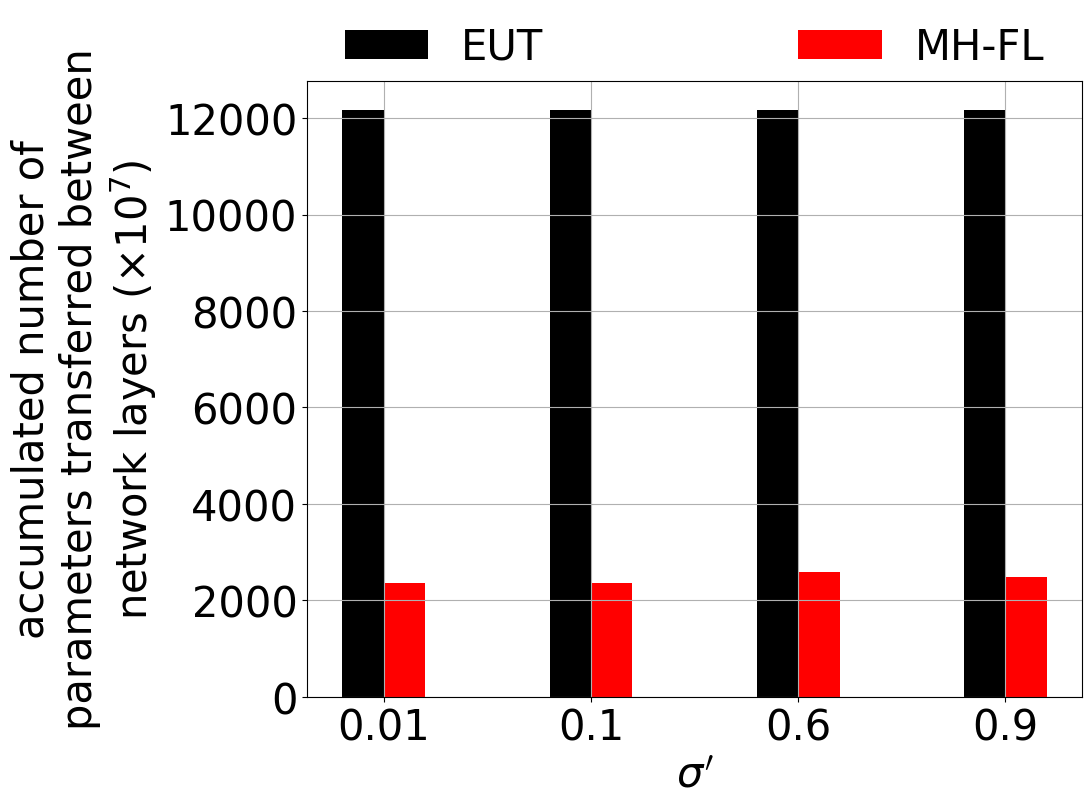}
    \caption{Parameters transferred on MNIST w/ 625 nodes for varying $\sigma'$.}
    \label{fig:pt_mnist_625_sig}
\end{minipage}
\end{figure}

\begin{figure}
\centering
\begin{minipage}{.45\textwidth}
    \centering
    \includegraphics[width=\textwidth]{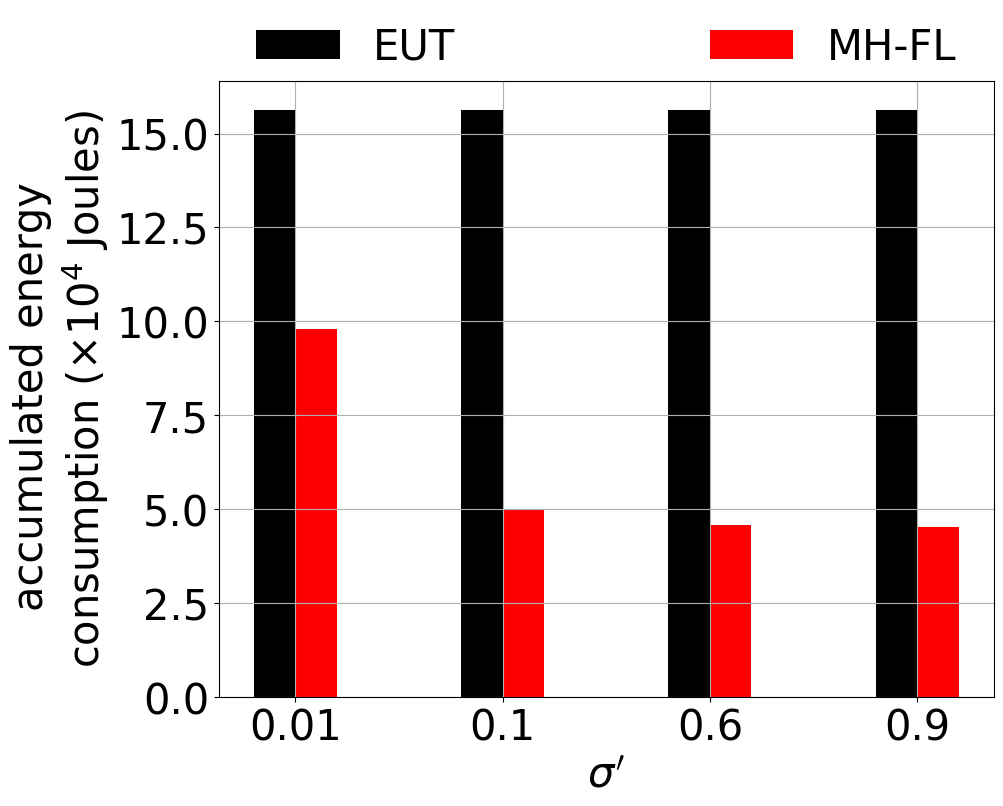}
    \caption{Energy consumption on FMNIST w/ 125 nodes for varying $\sigma'$.}
    \label{fig:pc_fmnist_125_sig}
\end{minipage}\quad\quad
\begin{minipage}{.45\textwidth}
    \centering
    \includegraphics[width=1.06\textwidth]{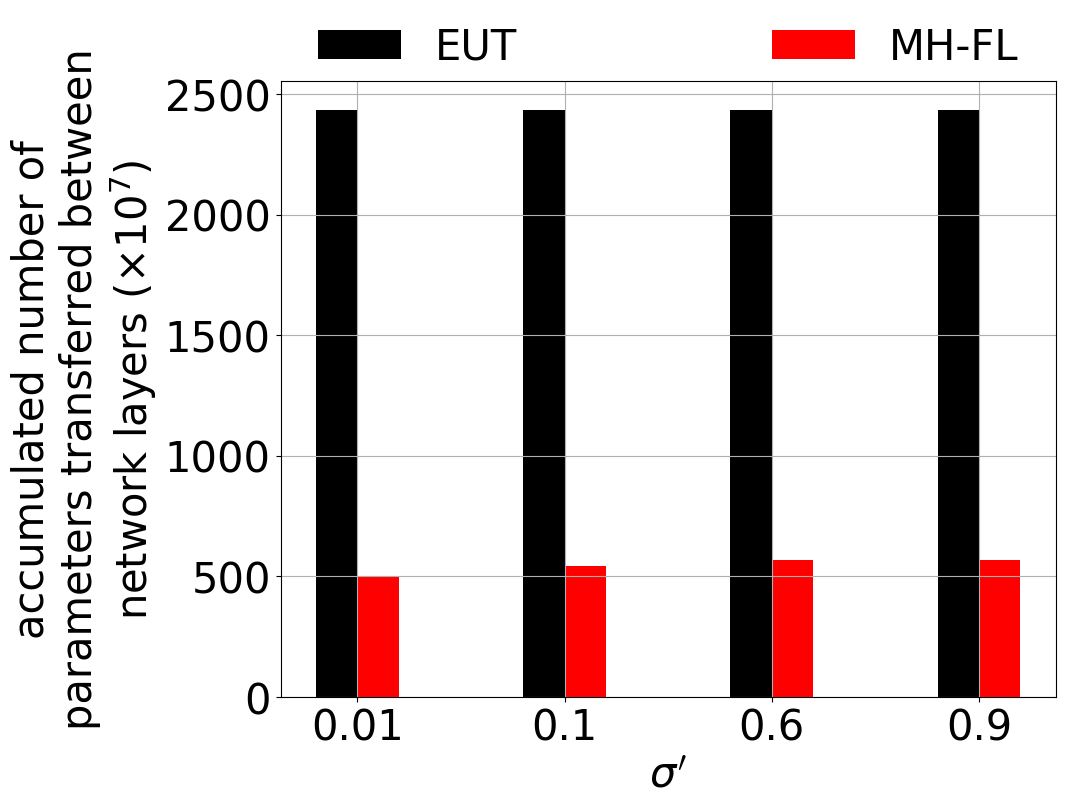}
    \caption{Parameters transferred on FMNIST w/ 125 nodes for varying $\sigma'$.}
    \label{fig:pt_fmnist_125_sig}
\end{minipage}
\end{figure}

\begin{figure}
\centering
\begin{minipage}{.45\textwidth}
    \centering
    \includegraphics[width=\textwidth]{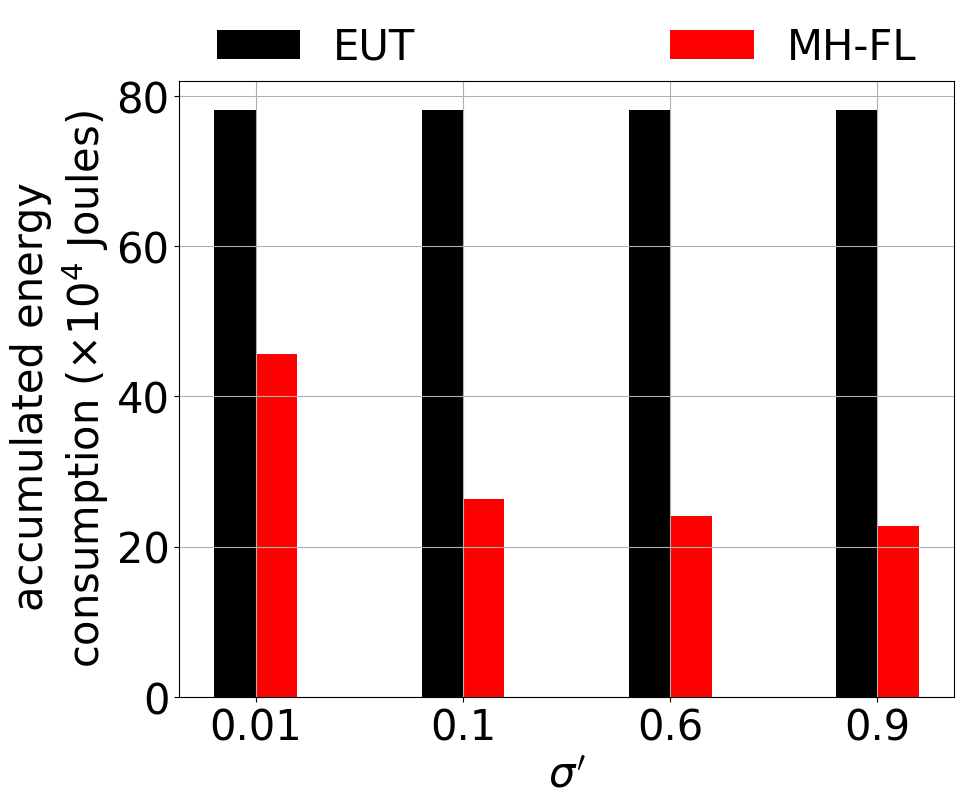}
    \caption{Energy consumption on FMNIST w/ 625 nodes for varying $\sigma'$.}
    \label{fig:pc_fmnist_625_sig}
\end{minipage}\quad\quad
\begin{minipage}{.45\textwidth}
    \centering
    \includegraphics[width=1.12\textwidth]{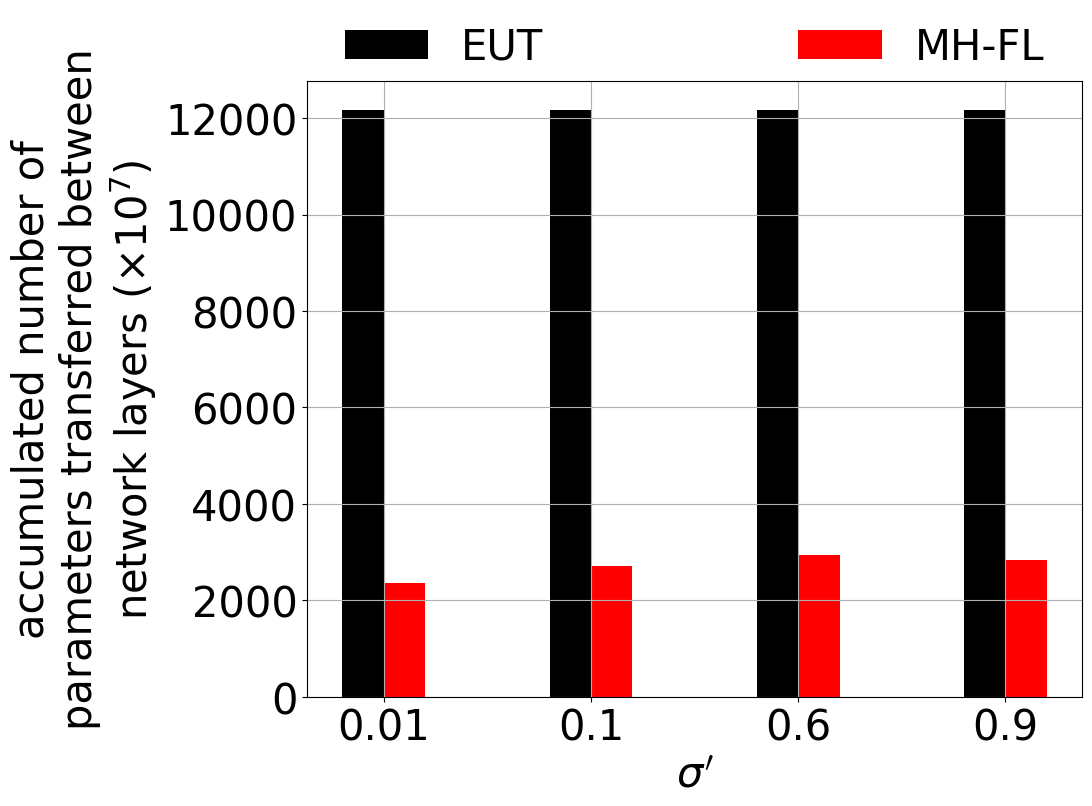}
    \caption{Parameters transferred on FMNIST w/ 625 nodes for varying $\sigma'$.}
    \label{fig:pt_fmnist_625_sig}
\end{minipage}
\end{figure}


\begin{figure}
\centering
\begin{minipage}{.45\textwidth}
    \centering
    \includegraphics[width=\textwidth]{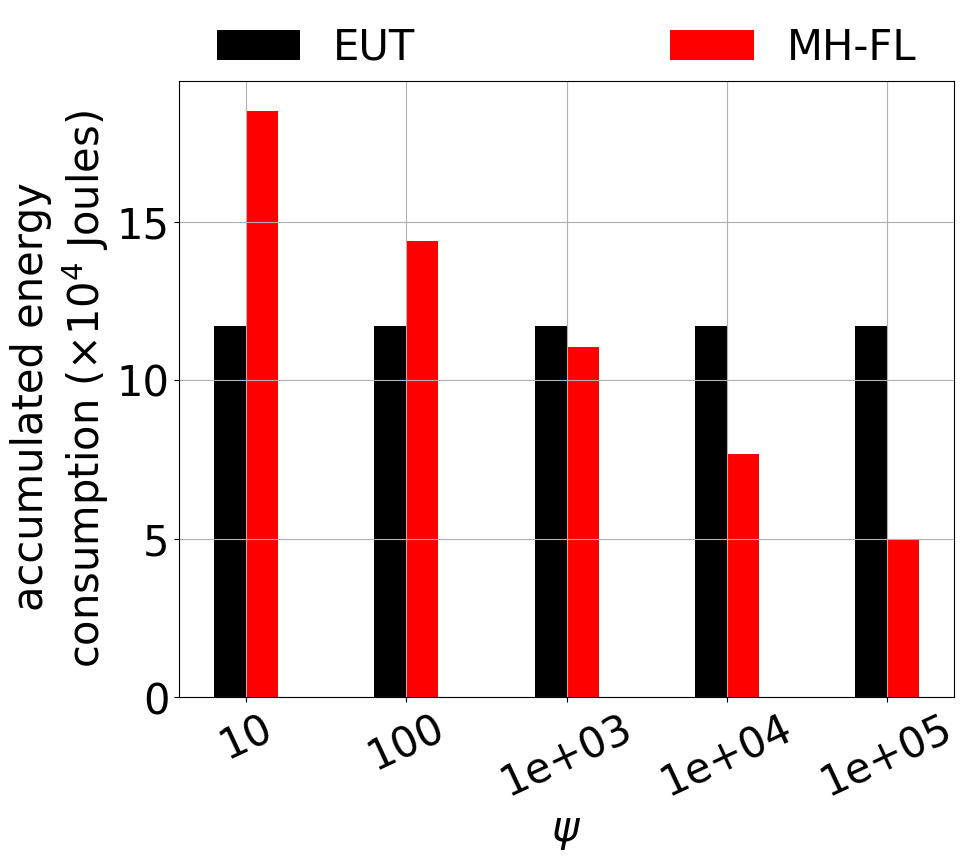}
    \caption{Energy consumption on MNIST w/ 125 nodes for varying $\psi$.}
    \label{fig:pc_mnist_125}
\end{minipage}\quad\quad
\begin{minipage}{.45\textwidth}
    \centering
    \includegraphics[width=1.12\textwidth]{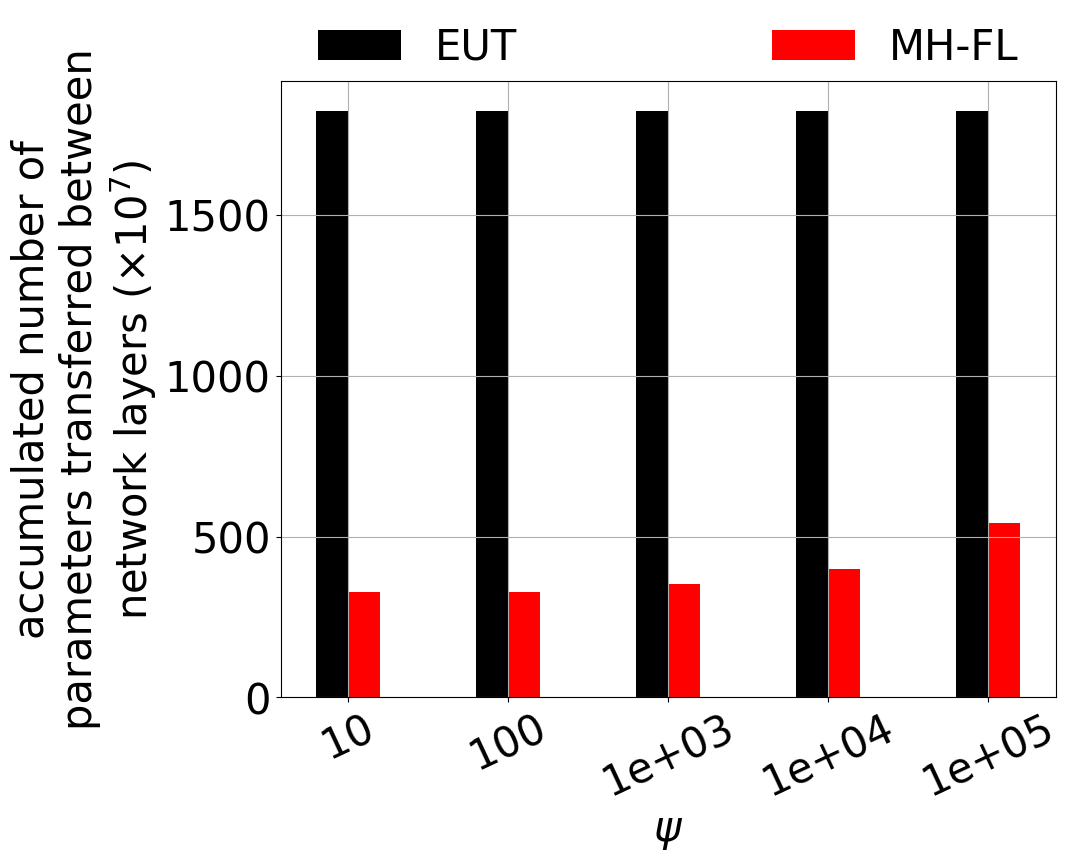}
    \caption{Parameters transferred on MNIST w/ 125 nodes for varying $\psi$.}
    \label{fig:pt_mnist_125}
\end{minipage}
\end{figure}

\begin{figure}
\centering
\begin{minipage}{.45\textwidth}
    \centering
    \includegraphics[width=\textwidth]{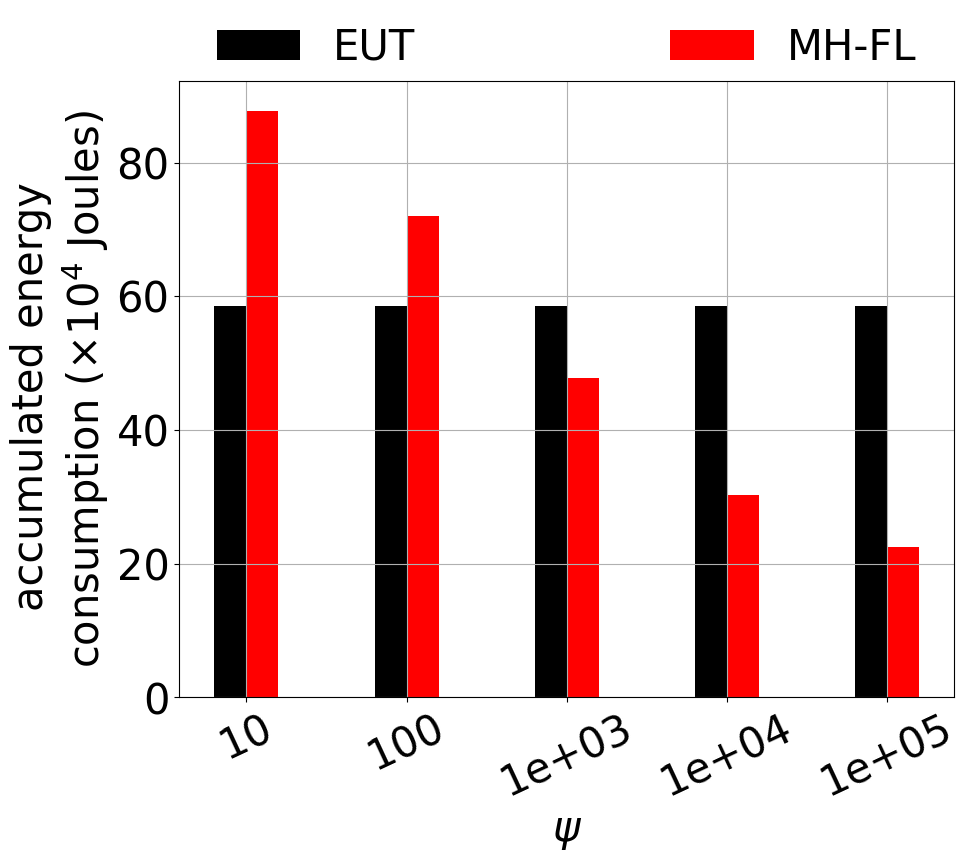}
    \caption{Energy consumption on MNIST w/ 625 nodes for varying $\psi$.}
    \label{fig:pc_mnist_625}
\end{minipage}\quad\quad
\begin{minipage}{.45\textwidth}
    \centering
    \includegraphics[width=1.12\textwidth]{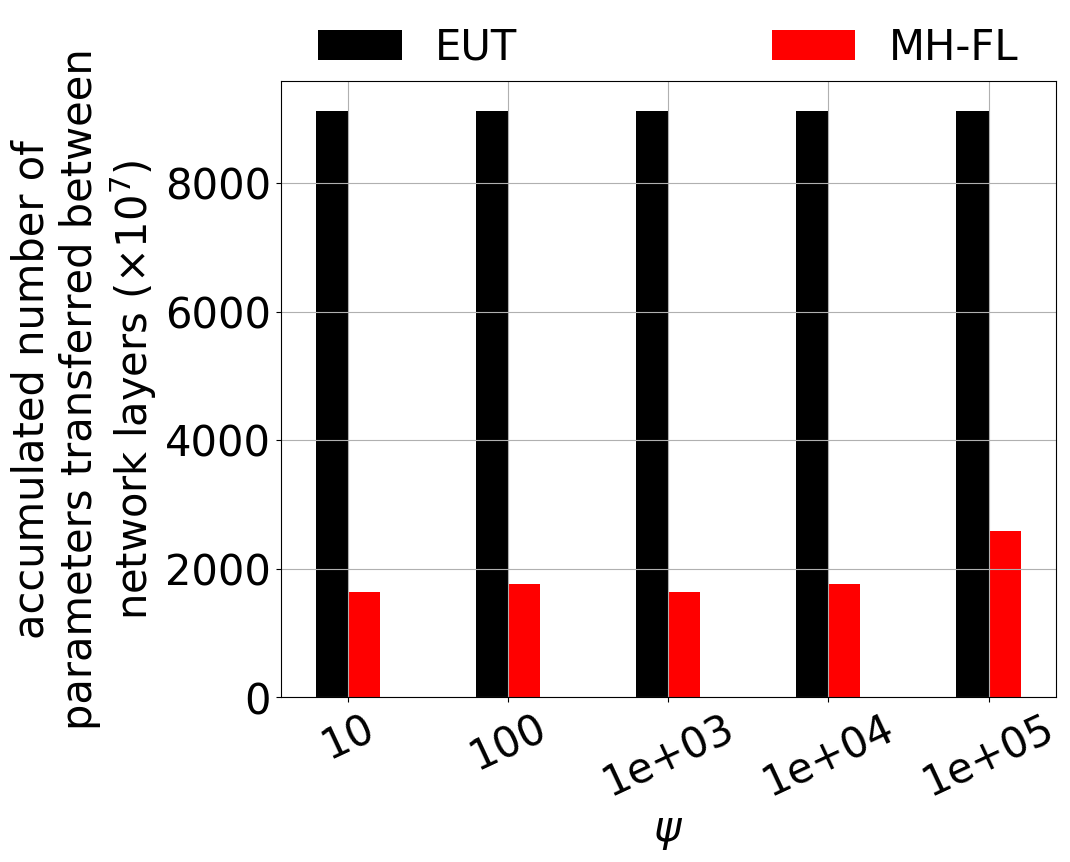}
    \caption{Parameters transferred on MNIST w/ 625 nodes for varying $\psi$.}
    \label{fig:pt_mnist_625}
\end{minipage}
\end{figure}

\begin{figure}
\centering
\begin{minipage}{.45\textwidth}
    \centering
    \includegraphics[width=\textwidth]{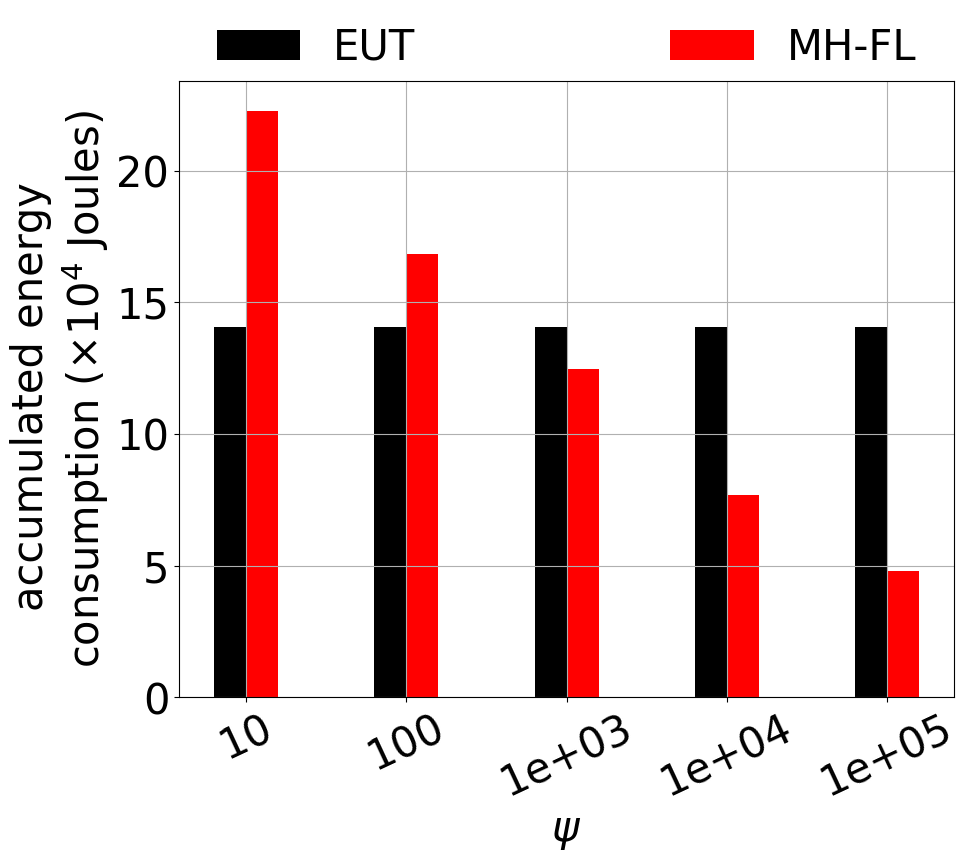}
    \caption{Energy consumption on FMNIST w/ 125 nodes for varying $\psi$.}
    \label{fig:pc_fmnist_125}
\end{minipage}\quad\quad
\begin{minipage}{.45\textwidth}
    \centering
    \includegraphics[width=1.12\textwidth]{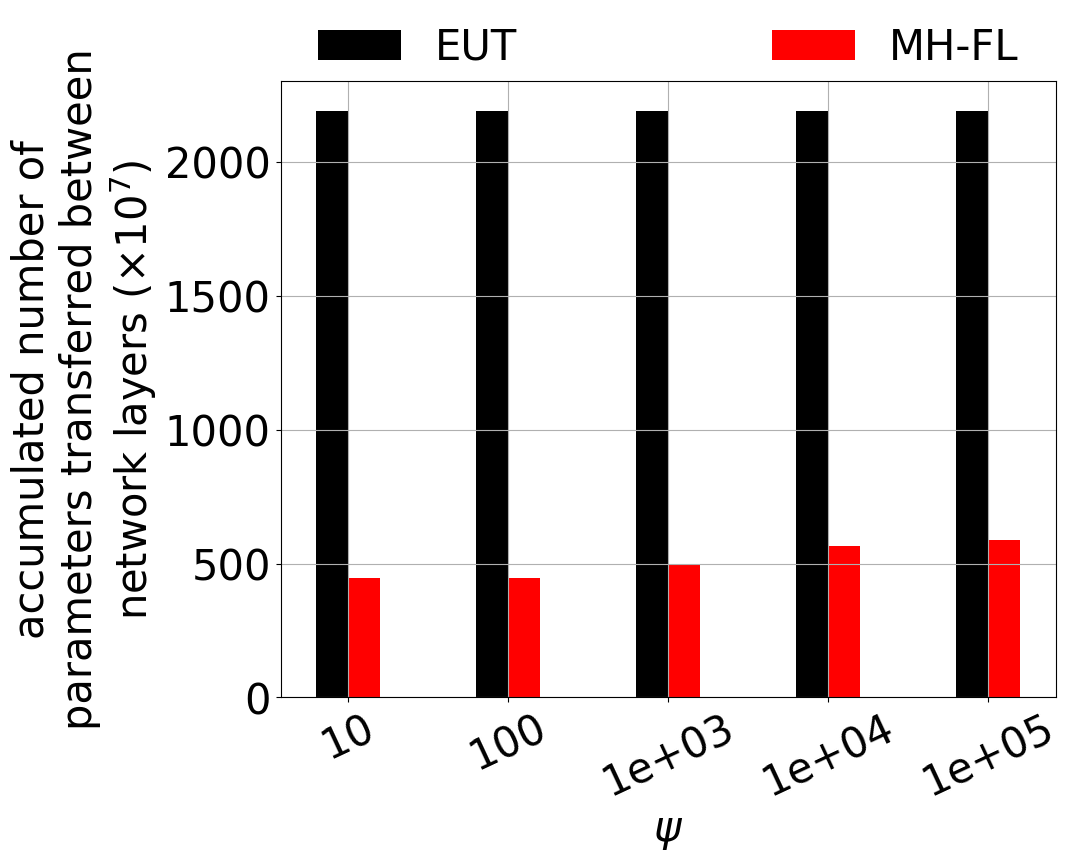}
    \caption{Parameters transferred on FMNIST w/ 125 nodes for varying $\psi$.}
    \label{fig:pt_fmnist_125}
\end{minipage}
\end{figure}

\begin{figure}
\centering
\begin{minipage}{.45\textwidth}
    \centering
    \includegraphics[width=\textwidth]{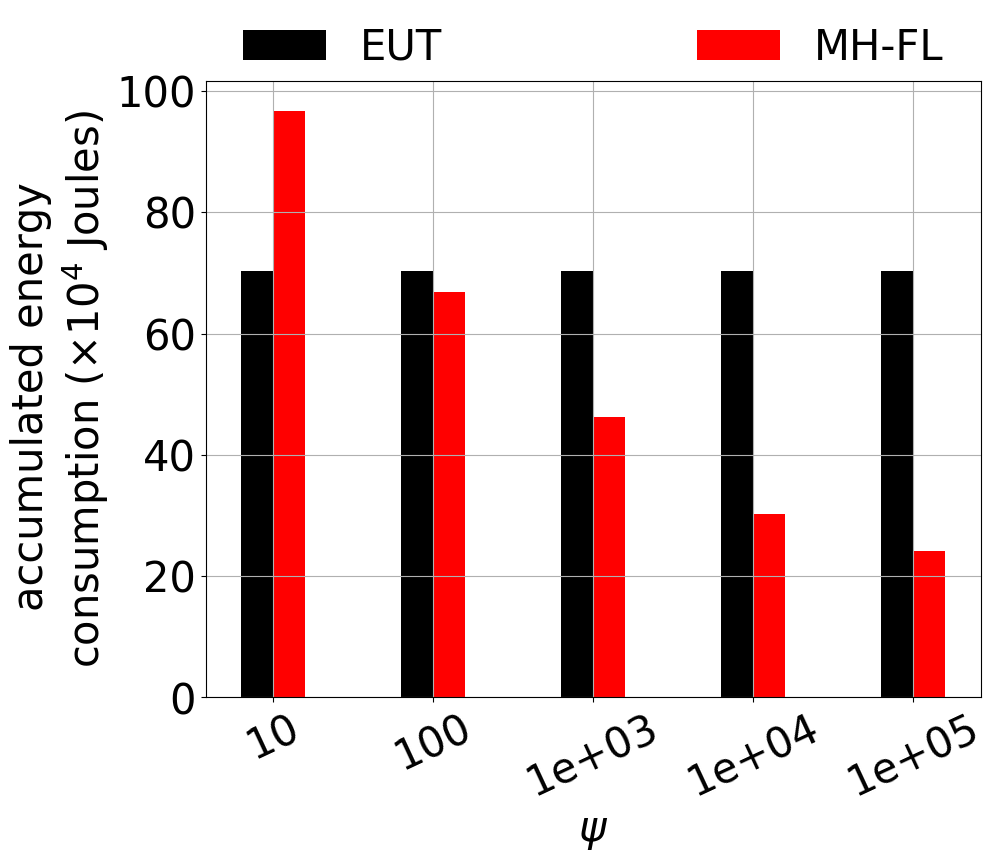}
    \caption{Energy consumption on FMNIST w/ 625 nodes for varying $\psi$.}
    \label{fig:pc_fmnist_625}
\end{minipage}\quad\quad
\begin{minipage}{.45\textwidth}
    \centering
    \includegraphics[width=1.12\textwidth]{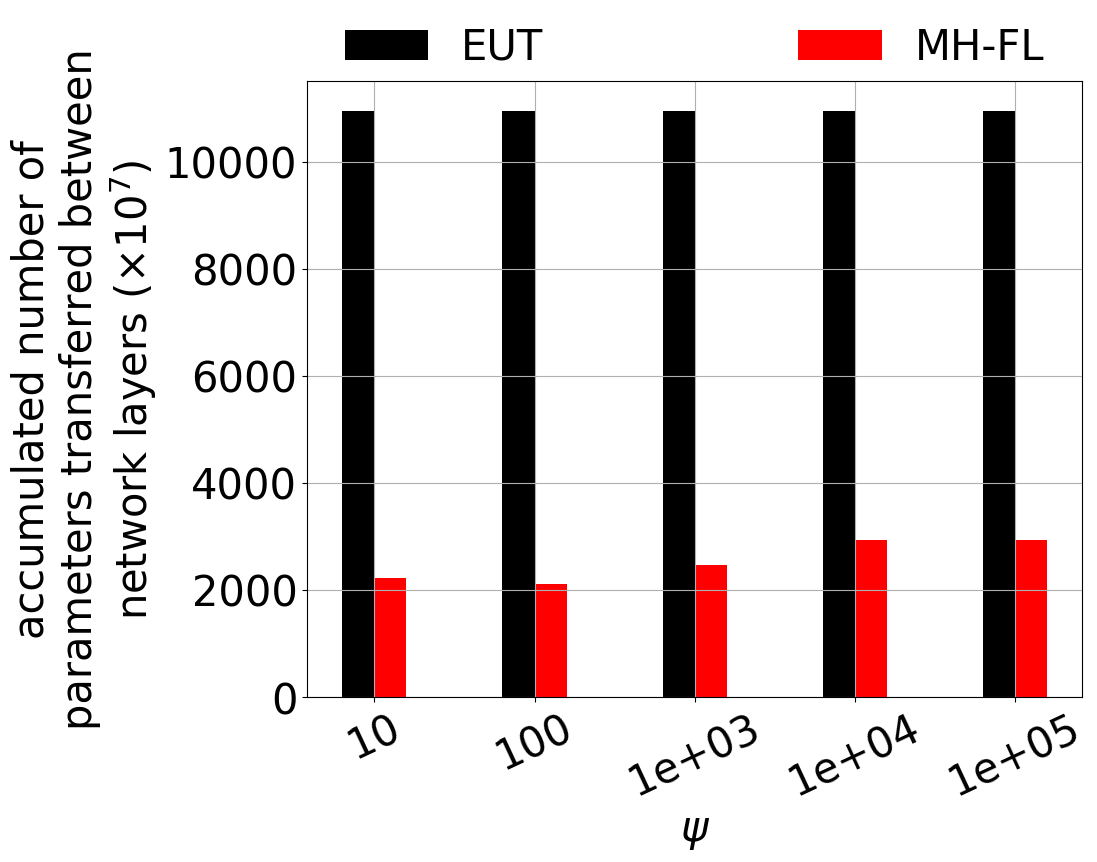}
    \caption{Parameters transferred on FMNIST w/ 625 nodes for varying $\psi$.}
    \label{fig:pt_fmnist_625}
\end{minipage}
\end{figure}

\endgroup 

\end{document}